\newtheorem{Lemma}{Lemma}
\newtheorem{corol}{Corollary}
\def\be{\begin{equation}}
\def\ee{\end{equation}}
\def\ba{\begin{eqnarray}}
\def\ea{\end{eqnarray}}
\newcommand\q{\quad}
\def\Nl{{\mathchoice
{\setbox0=\hbox{$\displaystyle\rm N$}\hbox{\hbox to0pt
{\kern0.4\wd0\vrule height0.9\ht0\hss}\box0}}
{\setbox0=\hbox{$\textstyle\rm N$}\hbox{\hbox to0pt
{\kern0.4\wd0\vrule height0.9\ht0\hss}\box0}}
{\setbox0=\hbox{$\scriptstyle\rm N$}\hbox{\hbox to0pt
{\kern0.4\wd0\vrule height0.9\ht0\hss}\box0}}
{\setbox0=\hbox{$\scriptscriptstyle\rm N$}\hbox{\hbox to0pt
{\kern0.4\wd0\vrule height0.9\ht0\hss}\box0}}}}
\def\Zl{{\mathchoice
{\setbox0=\hbox{$\displaystyle\rm Z$}\hbox{\hbox to0pt
{\kern0.4\wd0\vrule height0.9\ht0\hss}\box0}}
{\setbox0=\hbox{$\textstyle\rm Z$}\hbox{\hbox to0pt
{\kern0.4\wd0\vrule height0.9\ht0\hss}\box0}}
{\setbox0=\hbox{$\scriptstyle\rm Z$}\hbox{\hbox to0pt
{\kern0.4\wd0\vrule height0.9\ht0\hss}\box0}}
{\setbox0=\hbox{$\scriptscriptstyle\rm Z$}\hbox{\hbox to0pt
{\kern0.4\wd0\vrule height0.9\ht0\hss}\box0}}}}
\def\Ql{{\mathchoice
{\setbox0=\hbox{$\displaystyle\rm Q$}\hbox{\hbox to0pt
{\kern0.4\wd0\vrule height0.9\ht0\hss}\box0}}
{\setbox0=\hbox{$\textstyle\rm Q$}\hbox{\hbox to0pt
{\kern0.4\wd0\vrule height0.9\ht0\hss}\box0}}
{\setbox0=\hbox{$\scriptstyle\rm Q$}\hbox{\hbox to0pt
{\kern0.4\wd0\vrule height0.9\ht0\hss}\box0}}
{\setbox0=\hbox{$\scriptscriptstyle\rm Q$}\hbox{\hbox to0pt
{\kern0.4\wd0\vrule height0.9\ht0\hss}\box0}}}}
\def\Rl{{\mathchoice
{\setbox0=\hbox{$\displaystyle\rm R$}\hbox{\hbox to0pt
{\kern0.4\wd0\vrule height0.9\ht0\hss}\box0}}
{\setbox0=\hbox{$\textstyle\rm R$}\hbox{\hbox to0pt
{\kern0.4\wd0\vrule height0.9\ht0\hss}\box0}}
{\setbox0=\hbox{$\scriptstyle\rm R$}\hbox{\hbox to0pt
{\kern0.4\wd0\vrule height0.9\ht0\hss}\box0}}
{\setbox0=\hbox{$\scriptscriptstyle\rm R$}\hbox{\hbox to0pt
{\kern0.4\wd0\vrule height0.9\ht0\hss}\box0}}}}
\def\Cl{{\mathchoice
{\setbox0=\hbox{$\displaystyle\rm C$}\hbox{\hbox to0pt
{\kern0.4\wd0\vrule height0.9\ht0\hss}\box0}}
{\setbox0=\hbox{$\textstyle\rm C$}\hbox{\hbox to0pt
{\kern0.4\wd0\vrule height0.9\ht0\hss}\box0}}
{\setbox0=\hbox{$\scriptstyle\rm C$}\hbox{\hbox to0pt
{\kern0.4\wd0\vrule height0.9\ht0\hss}\box0}}
{\setbox0=\hbox{$\scriptscriptstyle\rm C$}\hbox{\hbox to0pt
{\kern0.4\wd0\vrule height0.9\ht0\hss}\box0}}}}
\def\Hl{{\mathchoice
{\setbox0=\hbox{$\displaystyle\rm H$}\hbox{\hbox to0pt
{\kern0.4\wd0\vrule height0.9\ht0\hss}\box0}}
{\setbox0=\hbox{$\textstyle\rm H$}\hbox{\hbox to0pt
{\kern0.4\wd0\vrule height0.9\ht0\hss}\box0}}
{\setbox0=\hbox{$\scriptstyle\rm H$}\hbox{\hbox to0pt
{\kern0.4\wd0\vrule height0.9\ht0\hss}\box0}}
{\setbox0=\hbox{$\scriptscriptstyle\rm H$}\hbox{\hbox to0pt
{\kern0.4\wd0\vrule height0.9\ht0\hss}\box0}}}}
\def\Ol{{\mathchoice
{\setbox0=\hbox{$\displaystyle\rm O$}\hbox{\hbox to0pt
{\kern0.4\wd0\vrule height0.9\ht0\hss}\box0}}
{\setbox0=\hbox{$\textstyle\rm O$}\hbox{\hbox to0pt
{\kern0.4\wd0\vrule height0.9\ht0\hss}\box0}}
{\setbox0=\hbox{$\scriptstyle\rm O$}\hbox{\hbox to0pt
{\kern0.4\wd0\vrule height0.9\ht0\hss}\box0}}
{\setbox0=\hbox{$\scriptscriptstyle\rm O$}\hbox{\hbox to0pt
{\kern0.4\wd0\vrule height0.9\ht0\hss}\box0}}}}
\newcommand{\ca}{\mathcal A}
\newcommand{\cb}{\mathcal B}
\newcommand{\cc}{\mathcal C}
\newcommand{\cg}{\mathcal G}
\newcommand{\ch}{\mathcal H}
\newcommand{\cl}{\mathcal L}
\newcommand{\cm}{\mathcal M}
\newcommand{\cp}{\mathcal P}
\newcommand{\calr}{\mathcal R}
\newcommand{\cs}{\mathcal S}
\newcommand{\ct}{\mathcal T}
\newcommand{\intsum}{\mathclap{\displaystyle\int}\mathclap{\textstyle\sum}}
\def\nn{\nonumber}
\newcommand{\eqa}{\begin{eqnarray}}
\newcommand{\neqa}{\end{eqnarray}}
\def\la{\langle}
\def\f{\frac}
\def\q{{\quad}}
\begin{document}

%========================================
\title{The Trinity of Relational Quantum Dynamics}

\author{Philipp A. H\"{o}hn}
\email[]{hoephil@gmail.com}
\thanks{shared first authorship.}
\affiliation{Department of Physics and Astronomy, University College London, London, United Kingdom}\affiliation{Okinawa Institute of Science and Technology Graduate University, Onna, Okinawa 904 0495, Japan}

\author{Alexander R. H. Smith}
\email[]{alexander.r.smith@dartmouth.edu}
\thanks{shared first authorship.}
 \affiliation{Department of Physics, Saint Anselm College, Manchester, New Hampshire 03102, USA} \affiliation{Department of Physics and Astronomy, Dartmouth College, Hanover, New Hampshire 03755, USA}

\author{Maximilian P. E. Lock}
\affiliation{Institute for Quantum Optics and Quantum Information (IQOQI), Austrian Academy of Sciences, A-1090 Vienna, Austria}

\date{\today}

\begin{abstract}
The problem of time in quantum gravity calls for a relational solution. Using quantum reduction maps, we establish a previously unknown equivalence between three approaches to relational quantum dynamics: 1) relational observables in the clock-neutral picture of Dirac quantization, 2) Page and Wootters' (PW) Schr\"odinger picture formalism, and 3) the relational Heisenberg picture obtained via symmetry reduction. Constituting three faces of the same dynamics, we call this equivalence the trinity.
In the process, we develop a quantization procedure for relational Dirac observables using covariant POVMs which encompass non-ideal clocks and resolve the non-monotonicity issue of realistic quantum clocks reported by Unruh and Wald. The quantum reduction maps reveal this procedure as the quantum analog of gauge-invariantly extending gauge-fixed quantities. We establish algebraic properties of these relational observables. We extend a recent `clock-neutral' approach to changing temporal reference frames, transforming relational observables and states, and demonstrate a clock dependent temporal nonlocality effect.
We show that Kucha\v{r}'s criticism, alleging that the conditional probabilities of the PW formalism violate the constraint, is incorrect. They are a quantum analog of a gauge-fixed description of a gauge-invariant quantity and equivalent to the manifestly gauge-invariant evaluation of relational observables in the physical inner product. The trinity furthermore resolves a previously reported normalization ambiguity and clarifies the role of entanglement in the PW formalism.
The trinity finally permits us to resolve Kucha\v{r}'s criticism that the PW formalism yields wrong propagators by showing how conditional probabilities of relational observables give the correct transition probabilities. Unlike previous proposals, our resolution does not invoke approximations, ideal clocks or ancilla systems, is manifestly gauge-invariant, and easily extends to an arbitrary number of conditionings.
\end{abstract}

\maketitle
%========================================
\tableofcontents

\section{Introduction}

Background independence is the lesson of general relativity: a physical theory should not depend on external structures. In pre-relativistic physics, space and time are external entities with respect to which the dynamics of matter unfolds. In contrast, general relativity unites space and time into a single object, spacetime, which is dynamical and interacts with matter as described by Einstein's field equations. 

However, standard quantization techniques often rely on background structures, such as imposing the canonical commutation relations on constant-time hypersurfaces.  These techniques cannot be applied unaltered in a quantum theory of gravity where the aim is to quantize spacetime itself, rather than to quantize matter in spacetime. New tools that allow for a background independent quantization scheme are thus required~\cite{rovelliQuantumGravity2004,ashtekarLecturesNonPerturbativeCanonical1991,thiemannModernCanonicalQuantum2008}.  

Often the external structures in a theory appear as reference frames with respect to which matter and motion is described. Recognizing that any employed reference frame is itself  a physical system, it too must be subject to dynamics and interact with the degrees of freedom it wishes to describe. In particular, the famous `rods and clocks' that formed Einstein's conception of a reference frame must be quantized. This insight has long been recognized in the quantum gravity community~\cite{dewittQuantumTheoryGravity1967,rovelliQuantumGravity2004,rovelliThereIncompatibilityWays1991,ashtekarLecturesNonPerturbativeCanonical1991,thiemannModernCanonicalQuantum2008,Rovelli:1990jm, Rovelli:1989jn,Rovelli:1990ph,Rovelli:1990pi,rovelliQuantumGravity2004, kucharTimeInterpretationsQuantum2011a,Isham1993,Marolf:1994nz,Marolf:1994wh,dittrichPartialCompleteObservables2007,Dittrich:2005kc, Dittrich:2006ee,  Dittrich:2007jx,Tambornino:2011vg,Giddings:2005id,Ashtekar:2006uz,Gambini:2008ke,Pons:2009cz,Kaminski:2008td,Kaminski:2009qb,hoehnHowSwitchRelational2018,Hoehn:2018whn,Bojowald:2010xp, Bojowald:2010qw, Hohn:2011us,Chataignier:2019kof,Giesel:2007wn,Domagala:2010bm,Husain:2011tk,Giesel:2016gxq}, by those interested in foundational issues aimed at removing the background structure inherent in standard quantum theory~\cite{Aharonov:1967,Aharonov:1984,Bartlett:2007zz, giacominiQuantumMechanicsCovariance2019, Vanrietvelde:2018pgb,Vanrietvelde:2018dit, giacominiRelativisticQuantumReference2019, Palmer:2013zza, Hoehn:2014vua, pageEvolutionEvolutionDynamics1983, woottersTimeReplacedQuantum1984,Gambini:2006ph, Gambini:2006yj, angeloPhysicsQuantumReference2011a,giovannettiQuantumTime2015,Smith:2017pwx,Smith:2019,Gambini:2010ut,loveridgeSymmetryReferenceFrames2018a, hardyImplementationQuantumEquivalence2019,ruiz2017entanglement}, and more recently applied in the context of quantum information science~\cite{Bartlett:2007zz,bartlett2009quantum,ahmadiWignerArakiYanase2013a, marvianTheoryManipulationsPure2013a,marvianExtendingNoetherTheorem2014a, safranekQuantumParameterEstimation2015, smithCommunicatingSharedReference2019}.

Background independence leads to a  dynamical conundrum in the context of canonical quantum gravity: the Hamiltonian of a generally covariant theory, such as general relativity, is constrained to vanish in the absence of boundaries \cite{Arnowitt:1962hi,rovelliQuantumGravity2004,thiemannModernCanonicalQuantum2008}. {As a consequence, } in the quantum theory it {appears} as if one obtains a `frozen formalism' and physical states (of the spatial geometry and matter) do not evolve in time. This is known as the \emph{problem of time} in quantum gravity \cite{kucharTimeInterpretationsQuantum2011a,Isham1993,andersonProblemTime2017}. However, upon closer inspection, it is clear that the quantum theory is not `timeless' as {often stated}. The problem of time is rather a manifestation of background independence and means that physical states do not evolve relative to an external background time. Instead, one must extract a time evolution in a relational manner, i.e.\  pick some quantized degrees of freedom to serve as an {\it internal} time\,---\,a temporal quantum reference frame\,---\,relative to which the remaining quantum degrees of freedom evolve \cite{dewittQuantumTheoryGravity1967,rovelliQuantumGravity2004,rovelliThereIncompatibilityWays1991,ashtekarLecturesNonPerturbativeCanonical1991,thiemannModernCanonicalQuantum2008,Rovelli:1990jm, Rovelli:1989jn,Rovelli:1990ph,Rovelli:1990pi,rovelliQuantumGravity2004, kucharTimeInterpretationsQuantum2011a,Isham1993,Marolf:1994nz,Marolf:1994wh,dittrichPartialCompleteObservables2007,Dittrich:2005kc, Dittrich:2006ee,  Dittrich:2007jx,Tambornino:2011vg,Giddings:2005id,Ashtekar:2006uz,Gambini:2008ke,Pons:2009cz,Kaminski:2008td,Kaminski:2009qb, hoehnHowSwitchRelational2018,Hoehn:2018whn,Bojowald:2010xp, Bojowald:2010qw, Hohn:2011us,Chataignier:2019kof,Giesel:2007wn,Domagala:2010bm,Husain:2011tk,Giesel:2016gxq,pageEvolutionEvolutionDynamics1983, woottersTimeReplacedQuantum1984,Gambini:2006ph, Gambini:2008ke,Gambini:2006yj, Gambini:2010ut,giovannettiQuantumTime2015,Smith:2017pwx,Smith:2019}. In this regard, given the {\it a priori} many possible {choices of} internal time, we shall extend arguments that it is more appropriate to consider the ensuing quantum theory as being `clock-neutral' \cite{hoehnHowSwitchRelational2018,Hoehn:2018whn} rather than `timeless'; it is a description of physics prior to having chosen a temporal reference frame relative to which the other degrees of freedom evolve.

We will refer to such temporal reference frames loosely as `clocks'. We emphasize that, depending on the concrete model at hand, they may represent clocks in an operational laboratory situation or  describe global degrees of freedom, such as the dynamical `size' of the Universe {in a cosmological setting}, which can serve as a cosmic time standard. 

We focus on three of the main approaches to solving the problem of time through a relational notion of quantum dynamics. The first approach (Dynamics I), is formulated in terms of gauge invariant relational Dirac observables that correspond to the simultaneous reading of a clock and observable of interest \cite{dewittQuantumTheoryGravity1967,rovelliQuantumGravity2004,rovelliThereIncompatibilityWays1991,ashtekarLecturesNonPerturbativeCanonical1991,thiemannModernCanonicalQuantum2008,Rovelli:1990jm, Rovelli:1989jn,Rovelli:1990ph,Rovelli:1990pi,rovelliQuantumGravity2004, kucharTimeInterpretationsQuantum2011a,Isham1993,Marolf:1994nz,Marolf:1994wh,dittrichPartialCompleteObservables2007,Dittrich:2005kc, Dittrich:2006ee,  Dittrich:2007jx,Tambornino:2011vg,Giddings:2005id,Ashtekar:2006uz,Gambini:2008ke,Pons:2009cz,Kaminski:2008td,Kaminski:2009qb, hoehnHowSwitchRelational2018,Hoehn:2018whn,Bojowald:2010xp, Bojowald:2010qw, Hohn:2011us,Chataignier:2019kof} {within} the {\it a priori} `clock-neutral' picture of Dirac quantization (`first quantize, then constrain'). A second approach (Dynamics II), put forward by Page and Wootters~\cite{pageEvolutionEvolutionDynamics1983, woottersTimeReplacedQuantum1984} and further developed in \cite{Gambini:2006ph, Gambini:2006yj, giovannettiQuantumTime2015,Smith:2017pwx,Smith:2019,Dolby:2004ak, castro-ruizTimeReferenceFrames2019,Boette:2018uix,Diaz:2019xie, leonPauliObjection2017, marlettoEvolutionEvolutionAmbiguities2017,Nikolova:2017huj, baumann2019generalized}, describes relational quantum dynamics in terms of quantum correlations between a clock and system and yields a relational Schr\"odinger picture. Finally, a third approach  known as quantum symmetry reduction (Dynamics III), draws its inspiration from, and in some cases is equivalent to, reduced phase space quantization, which singles out a time observable at the classical level that is then used to construct a quantum theory~\cite{ashtekarLecturesNonPerturbativeCanonical1991,kucharTimeInterpretationsQuantum2011a,Isham1993,hoehnHowSwitchRelational2018,Hoehn:2018whn,Thiemann:2004wk} (see \cite{Giesel:2007wn,Domagala:2010bm,Husain:2011tk,Giesel:2016gxq} for a discussion in loop quantum gravity/cosmology). This yields a relational Heisenberg picture. Each of these approaches is a manifestation of the relational paradigm in physics: localization in both space and time is only meaningful in relation to other physical systems, and not relative to absolute or external structures. 

Due to their different motivations and dissimilar  ways in which dynamics arises in each, these three proposals for a relational quantum dynamics were long thought to be distinct~\cite{kucharTimeInterpretationsQuantum2011a, Isham1993}. A main contribution of this article is to show that under certain conditions, amounting to  the requirement that a physical clock be ``well-behaved'' and does not couple to the evolving degrees of freedom, these three proposals are actually a manifestation of the same relational quantum theory, as summarized in Fig.~\ref{Fig1}. We thus refer to these relational quantum theories as the \emph{trinity of relational quantum dynamics}, or simply the \emph{trinity}. This equivalence enables us to prove a number of  further results in the context of relational dynamics and resolve past issues reported in the literature.

\begin{figure}[t]
\includegraphics[width= 245pt]{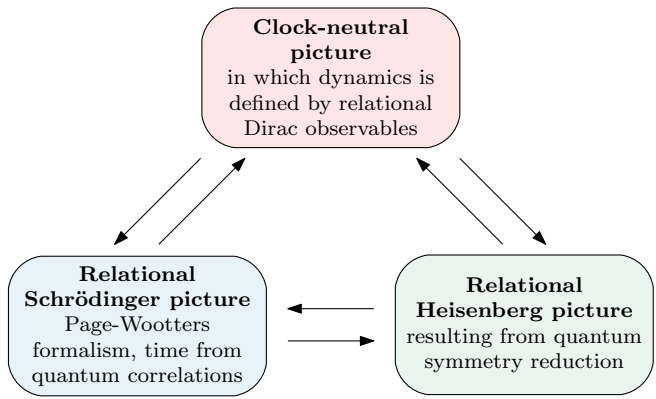}
\caption{The trinity of relational quantum dynamics posits that the dynamics described by relational Dirac observables in the clock-neutral picture of Dirac quantization, the relational Schr\"odinger picture of the Page-Wootters formalism, and the relational Heisenberg picture obtained upon a quantum symmetry reduction of the clock-neutral theory are three manifestations of the same relational quantum theory.  }\label{Fig1}
\end{figure}

We list here the further contributions of this article:
\begin{itemize}
\item Using  {the most}  general notion of a quantum observable defined as a positive operator-valued measure (POVM)~\cite{holevoProbabilisticStatisticalAspects1982, buschOperationalQuantumPhysics,busch2016quantum}, we construct a novel quantization of classical (kinematical) time {observables} associated with a clock using only the quantization of the clock Hamiltonian $\hat H_C$. This extends the discussion of clock POVMs in the context of relational dynamics in \cite{Brunetti:2009eq,Smith:2017pwx,Smith:2019}.
This procedure does not rely on a self-adjoint time operator that is canonically conjugate to $\hat H_C$, which in general situations of physical interest does not exist. This elegantly sidesteps pathologies of some classical time functions. Furthermore, by appealing to the more general notion of an observable as characterized by a POVM, this allows for a resolution of the apparent non-monotonicity of realistic quantum clocks, as used by Unruh and Wald \cite{Unruh:1989db} to argue against the viability of a relational approach to the problem of time. Indeed, our POVM-based time observable will be monotonic for bounded Hamiltonians and  admits a consistent probability interpretation.

\item Employing such clock POVMs, we construct a systematic quantization procedure for relational Dirac observables. This amounts to a $G$-twirl, i.e.\ an averaging over the group generated by the constraint, of the (kinematical) observable of interest and a projection onto a chosen reading of the quantum clock. This extends the use of $G$-twirling techniques, often used in the literature on spatial quantum reference frames without constraints, e.g.\ see \cite{Bartlett:2007zz,smithCommunicatingSharedReference2019,smithQuantumReferenceFrames2016}, to the context of Hamiltonian constraints and temporal quantum reference frames. We prove various algebraic properties of the thus constructed relational quantum observables.

\item The quantum reductions which map the clock-neutral Dirac quantized theory into either the relational Schr\"odinger picture of the Page-Wootters formalism or the relational Heisenberg picture of the symmetry reduced theory, reveal our procedure of quantizing relational observables as the quantum analog of so-called gauge-invariant extensions of gauge-fixed quantities \cite{Henneaux:1992ig,dittrichPartialCompleteObservables2007,Dittrich:2005kc, Dittrich:2006ee,  Dittrich:2007jx}.

\item We place the Page-Wootters formalism on a more rigorous foundation and bring it into conversation with the modern techniques of quantum gravity. The trinity implies that the dynamics arising in the Page-Wootters formalism should be regarded as the quantum analog of the dynamics defined on a classical reduced phase space resulting from choosing a specific gauge related to the choice of clock.  

\item We fully resolve Kucha\v{r}'s criticism that the conditional probabilities of the Page-Wootters formalism violate the constraints \cite{kucharTimeInterpretationsQuantum2011a}. We show that they coincide with expectation values of relational observables in the clock-neutral picture and thus can be viewed as quantum analogs of gauge-fixed expressions of gauge-invariant quantities. This also clarifies that the alleged normalization ambiguity reported in \cite{giovannettiQuantumTime2015} does not arise.

\item We  generalize the clock-neutral approach to changing temporal quantum reference frames developed in \cite{hoehnHowSwitchRelational2018,Hoehn:2018whn} to the case where clocks are described using POVMs. This extends the perspective-neutral approach to quantum reference frames \cite{Vanrietvelde:2018pgb,Vanrietvelde:2018dit,hoehnHowSwitchRelational2018,Hoehn:2018whn,Hoehn:2017gst}, which identifies the gauge-invariant quantum theory obtained through Dirac quantization as a description of the physics prior to having chosen a quantum  frame  from which to describe the  remaining degrees of freedom. Here, the quantum reduction maps of the trinity are associated with a choice of clock, and assume the role of `quantum coordinate' maps to the `perspective' of that clock. In analogy to coordinate changes on a manifold, one can then change clock perspective by concatenating reduction maps associated to different clocks. This procedure always passes through the clock-neutral picture, and transforms both states and observables between different clock perspectives. 

\item Using {this} temporal frame change method, we demonstrate a clock dependent temporal nonlocality effect. When a clock is in a superposition reading different times, the dynamics of a system of interest with respect to that clock will be in a superposition of time evolutions. This complements a similar effect reported in \cite{castro-ruizTimeReferenceFrames2019}, and is the temporal analog of the quantum frame dependent spatial correlations observed in \cite{giacominiQuantumMechanicsCovariance2019, Vanrietvelde:2018pgb}. Using {this} clock change method, we also find a new `self-reference' phenomenon of quantum clocks.

\item The trinity  allows us to completely resolve Kucha\v{r}'s criticism that the Page-Wootters formalism   yields {the} wrong propagators \cite{kucharTimeInterpretationsQuantum2011a}. We introduce a new two-time conditional probability using relational observables at the level of the {\it a priori} clock-neutral picture. Upon quantum reduction, this always yields the correct transition probabilities in the relational Schr\"odinger picture of the Page-Wotters formalism as expected from standard quantum mechanics. In contrast to previous proposals \cite{Gambini:2008ke,giovannettiQuantumTime2015}, our resolution does not rely on approximations, ideal clocks or auxiliary ancilla systems,  and automatically extends to an arbitrary number of conditionings.

\item We clarify the role entanglement plays in giving rise to relational dynamics in the Page-Wootters formalism by emphasizing that this entanglement is kinematical and demonstrating that the same dynamics can arise in the absence of this kinematical entanglement.

\end{itemize}

We begin in Sec.~\ref{Sec2} by reviewing the classical theory of Hamiltonian constrained systems and relational observables, and subsequently specializing to a direct sum of a phase spaces describing a clock and a system whose dynamics the clock will track. 
 In Sec.~\ref{Covariant time observables} the quantization of kinematical time observables as so-called covariant POVMs is described. 
 In Sec.~\ref{sec_Dirac}, we introduce Dynamics I defined in terms of quantum relational Dirac observables and also discuss reduced phase space quantization, which, while not comprising an element of the trinity, will be of conceptual importance. 
In Sec.~\ref{sec_Unifying} the equivalence of the relational dynamics comprising the trinity is established.  
We then clarify the role entanglement plays in the Page-Wootters formalism in Sec.~\ref{sec_reinterpret}. Next, we 
construct  temporal frame change maps between clock perspectives  
and illustrate a novel time nonlocality effect in~Sec.~\ref{SubSec_SwitchingFrames}. In Sec.~\ref{sec_applications} we discuss the quantum analog of the gauge-invariant extension of gauge fixed quantities,  resolve Kucha\v{r}'s  criticisms (pointing out differences with past attempts at resolutions), and explain why there is no normalization ambiguity in the Page-Wootters formalism. We conclude in Sec.~\ref{Sec_Conclusion}.

Classical phase space functions and their quantum operator equivalent will be distinguished with hats, and throughout we work in units such that $\hbar =1$.

%========================================
%========================================
%========================================
%========================================
\section{Phase space structure and relational Dirac observables}
\label{Sec2}

\subsection{Classical relational dynamics}
\label{Sec_ClassicalRelationalDynamics}

The diffeomorphism-invariance of general relativity leads to a so-called Hamiltonian constraint, i.e.\ a Hamiltonian that is constrained to vanish (in the absence of boundaries) \cite{Arnowitt:1962hi,rovelliQuantumGravity2004,thiemannModernCanonicalQuantum2008}. The Hamiltonian of general relativity thereby not only generates the dynamics, but also temporal diffeomorphisms, which are gauge transformations. However, a gauge-invariant form of dynamics can be encoded in so-called relational observables \cite{Rovelli:1990jm, Rovelli:1989jn,Rovelli:1990ph,Rovelli:1990pi,rovelliQuantumGravity2004, dittrichPartialCompleteObservables2007,Dittrich:2005kc, Dittrich:2006ee,  Dittrich:2007jx,Tambornino:2011vg}. We review here the concept of relational observables for finite-dimensional models subject to a Hamiltonian constraint.

Consider a system on an $N$-dimensional configuration space described by the action ${S=\int_{\cm=\mathbb{R}} \,ds\,L(q^a,\dot{q}^a)}$, where $\dot q^a$ denotes differentiation with respect to $s$ and $a = 0,1,\ldots, N$. Suppose the action is reparametrization-invariant (i.e.\ invariant under one-dimensional diffeomorphisms), meaning the Lagrangian transforms as a scalar density $L(q^a,\dot q^a)\mapsto L(q^a,dq^a/d\tilde s)\,d\tilde s/ds$ under a reparametrization $s\mapsto\tilde s(s)$. It follows that the Legendre transformation will then produce a Hamiltonian $H=N(s)\,C_H$, where $N(s)$ is an arbitrary (lapse) function and $C_H$ is a so-called Hamiltonian constraint
\begin{align}
C_H=\sum_{a=1}^N\,p_a\,\dot q^a-L(q^a,\dot q^a)\approx0,\nn
\end{align}
which has to vanish {due to the reparametrization invariance of  $L(q^a,\dot q^a)$}. This condition defines a $(2N-1)$-dimensional submanifold~\mbox{$\mathcal{C} \subset \cp_{\rm kin}$}, referred to as the constraint hypersurface, in the $2N$-dimensional kinematical phase space $\cp_{\rm kin}$, which is  parametrized by the canonical coordinates $q^a$ and $p_a$ satisfying $\{q^a, p_b\} =\delta^a_{b}$. The image of the Legendre transformation is thus a lower-dimensional subset of  $\cp_{\rm kin}$. In this context, $\approx$ denotes a weak equality meaning that the equality only holds on $\mathcal{C}$~\cite{diracLecturesQuantumMechanics1964, Henneaux:1992ig}. Such a setting is schematically depicted in Fig.~\ref{ConstraintGeometryFig}.\footnote{The familiar Hamiltonian mechanics of a system without constraints can be recovered from the special case $C_H = p_0 + H_S(q_i,p_i)$, where $H_S(q_i,p_i)$ is the Hamiltonian for a system described by the coordinates $q_i,p_i$ with $i = 1,\ldots, N-1$~\cite{Henneaux:1992ig}.}

Setting henceforth $N(s)=1$, the Hamiltonian $H$ coincides with the constraint function $C_H$ and generates dynamical equations on the kinematical phase space
\begin{align}
\f{d f}{d s} \ce \{f,C_H\} , \nn
%\label{eom1} 
\end{align}
where $f: \cp_{\rm kin} \to \mathbb{R}$ is an arbitrary phase space function.
This defines a dynamical flow on the phase space $\cp_{\rm kin}$, ${\alpha_{C_H}^s:\mathbb{R}\rightarrow\cp_{\rm kin}}$, with flow parameter $s$ that transforms any function $f$ as
\begin{align}
\label{alpha}
f \mapsto \alpha_{C_H}^s\cdot f &\ce
% f(\alpha_{C_H}^s(x)) \nn\\
%&=\sum_{n=0}^{\infty}\,\f{1}{n!}\,\f{d^n f(x)}{ds^n}\,s^n\nn\\
\sum_{n=0}^{\infty}\,\f{s^n}{n!}\,\{f,C_H \}_n ,
\end{align}
where $\{f,C_H\}_{n+1} \ce \{ \{f,C_H \}_{n},C_H\}$ is the iterated Poisson bracket with the convention $\{f,C_H\}_0\ce f$.\footnote{More precisely, this is a pull-back. Let $x$ denote a point in $\cp_{\rm kin}$. Then $\alpha_{C_H}^s\cdot f (x)\ce
 f(\alpha_{C_H}^s(x))= \sum_{n=0}^{\infty}\,\f{s^n}{n!}\,\f{d^n f(x)}{ds^n}=
\sum_{n=0}^{\infty}\,\f{s^n}{n!}\,\{f,C_H \}_n(x)$. For notational simplicity, we henceforth drop reference to the points $x\in\cp_{\rm kin}$, which are specified by the coordinates $(q^a,p_a)_{a=1}^N$.} The dynamical orbits, corresponding to solutions to the equations of motion, must lie on the constraint surface $\cc$. Being the only constraint, $C_H$ is first class and its action on $\cc$ corresponds to  (active) temporal diffeomorphisms on the manifold $\cm=\mathbb{R}$ underlying the action $S$, which are equivalent to reparametrizations (passive diffeomorphisms) $s\mapsto \tilde s(s)$. Since the action $S$ is invariant under reparametrizations, the evolution with respect to the flow parameter $s$ is not physical; it is a gauge transformation on $\cc$. This mimics the situation in general relativity. Indeed, general relativistic cosmological models satisfy all the structure introduced here \cite{bojobuch}.

Physical observables are represented by functions $F$ on the constraint surface $\cc$ that are invariant under the flow generated by the constraint $C_H$ and  known as \emph{Dirac observables}. This requirement amounts to the condition  
\begin{align}
\{F,C_H\}\approx0 .
\label{Dirac}
\end{align}

\begin{figure}[t]
\includegraphics[width= 245pt]{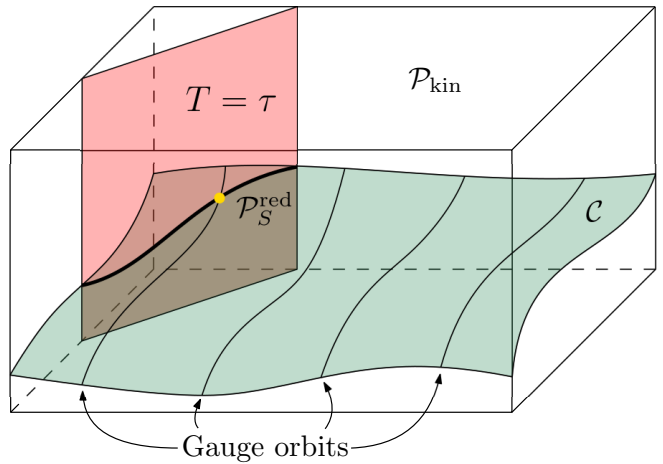}
\caption{Depicted is the unconstrained phase space $\mathcal{P}_{\rm kin}$ (rectangular prism),  the constraint surface $\mathcal{C}$ (green surface), gauge orbits/dynamical trajectories %$\mathfrak{g}_x$ passing through the point $x \in 
in $\mathcal{C}$ generated by $C_H$ (black curves on $\mathcal{C}$), the gauge-fixing surface $T= \tau$ (red plane), and the reduced phase space $\mathcal{P}^{\rm red}_{S}$ (thick black line, see Sec.~\ref{Heisenberg}). The relational Dirac observable $F_{f,T}(\tau)$ is a gauge-invariant function on $\cc$ corresponding to the question ``what is the value of the function $f$ when the clock $T$ reads $\tau$?" Hence, it corresponds to  the value of the function $f$ on the intersection of the gauge-fixing surface $T=\tau$ with $\cc$. Letting the parameter $\tau$ run unfolds the relational dynamics and thus corresponds to `scanning' $\cc$ with the family of gauge-fixing surfaces $T=\tau$.
  }\label{ConstraintGeometryFig}
\end{figure}

Using so-called {\it relational Dirac observables} (aka evolving constants of motion)~\cite{Rovelli:1990jm, Rovelli:1989jn,Rovelli:1990ph, Rovelli:1990pi,rovelliQuantumGravity2004,dittrichPartialCompleteObservables2007,Dittrich:2005kc, Dittrich:2006ee,  Dittrich:2007jx, Tambornino:2011vg,thiemannModernCanonicalQuantum2008}, it is possible to establish a gauge-invariant dynamics. Relational Dirac observables encode how one observable evolves relative to another along the flow generated by $C_H$. That is, they are Dirac observables $F_{f,T}(\tau)$ (in this case also known as complete observables) corresponding to the value a phase space function $f$ (a partial observable) takes on $\cc$ when the phase space function $T$ (another partial observable) takes the value $\tau$. Hence, the partial observable $T$ assumes  the role of a {\it dynamical} reference degree of freedom, which we can choose to parametrize the flow~$\alpha_{C_H}$ instead of the original non-dynamical parameter $s$. Such a choice of $T$ is therefore often called an  {\it internal time} or \emph{clock function}  in the gravity literature. % from the set $\mathcal{S}_T \subseteq \mathbb{R}$. 
This suggests that we construct $F_{f,T}(\tau)$ by solving $\alpha_{C_H}^s \cdot T = \tau$ for $s$, using the expansion in Eq.~\eqref{alpha} and denoting the solution as $s_T(\tau)$, and then evaluating the flow of $f$ at $s=s_T(\tau)$, which yields
\begin{align}
F_{f,T}(\tau) &\ce \alpha_{C_H}^{s}\cdot f\, \Big|_{s=s_T(\tau)} \nn \\
&\approx\sum_{n=0}^{\infty}\, \frac{\left(\tau-T\right)^n}{n!}  \left\{ f, \frac{C_H}{\{T,C_H\}} \right\}_n.
\label{RelationalDiracObservable1}
\end{align}
The expansion in the second equality was first derived (as a special case of a general framework) in~\cite{dittrichPartialCompleteObservables2007,Dittrich:2005kc, Dittrich:2006ee,  Dittrich:2007jx}. As shown in these works, it is a simple exercise to demonstrate that the functions $F_{f,T}(\tau)$ satisfy Eq.~\eqref{Dirac} and are thus  Dirac observables. Notice that $F_{f,T}(\tau)$ is only defined where $\{T,C_H\}\neq0$, i.e.\ where $T$ defines a good parametrization of the flow $\alpha_{C_H}$.

For later purposes, we note that this construction of $F_{f,T}(\tau)$ constitutes a so-called {\emph{gauge-invariant extension of a gauge-fixed quantity}} \cite{Henneaux:1992ig,dittrichPartialCompleteObservables2007,Dittrich:2005kc, Dittrich:2006ee,  Dittrich:2007jx}. Since $C_H$ generates not only the dynamics, but also gauge transformations, every dynamical trajectory in $\cc$ is also a gauge orbit. In any region of $\cc$ where $\{T,C_H\}\neq0$, $T$ defines a good clock and the gauge-fixing condition $T=\tau$ singles out a point on each gauge orbit in this region (which later will be all of $\cc$). $F_{f,T}(\tau)$ is a gauge-invariant quantity defined in this entire region of $\cc$ and it encodes a gauge-fixed quantity, namely the value of $f$ at the point {on} the gauge orbit fixed by the condition that $T=\tau$. This construction is schematically represented in Fig.~\ref{ConstraintGeometryFig}. 

Notice that $\tau$ is now an evolution parameter and so $F_{f,T}(\tau)$ in Eq.~\eqref{RelationalDiracObservable1} {really} is a one-parameter family of Dirac observables. Letting $\tau$ run over its set of permissible values then describes the relational evolution of $f$ relative to the clock $T$. We stress that this construction holds for an arbitrary phase space function $f$.

While relational Dirac observables can in principle be quantized once their classical form is known, the quantum analog of this systematic construction procedure, to gauge-invariantly extend gauge-fixed quantities, has thus far not been established in the literature. The reason is that Dirac quantization immediately yields a gauge-invariant Hilbert space (cf.\ Sec.~\ref{sec_Dirac}), so that a gauge-fixing as above is not feasible in the quantum theory and one has to proceed differently. One of our results below and in \cite{HLSrelativistic,HLS2} is to develop precisely the quantum analog of the gauge-invariant extension of gauge-fixed quantities procedure for a class of models.

\subsection{Decomposition of the phase space into a clock and system of interest}

As just described, Hamiltonian constraints force us to consider dynamical degrees of freedom as time variables. 
While the above considerations hold true for general systems with a single Hamiltonian constraint on finite dimensional phase spaces, we shall henceforth work under further restrictions, which will considerably simplify the subsequent analysis. The reason is that these restrictions will permit us to go beyond the formal level in the quantum theory and to exhibit the links between three  \textit{a priori} distinct approaches to quantum relational dynamics.

For the remainder of this article, we consider theories, which permit us to globally partition the degrees of freedom into a clock $C$ and a system $S$. More precisely, we shall assume for simplicity that the kinematical phase space can be globally decomposed into a product  $\cp_{\rm kin} \simeq \cp_C \times \cp_S$, where $\cp_C$ and $\cp_S$ denote the clock and system phase space, respectively. While a general phase space may not globally decompose in this form (e.g.\ if it is compact), locally this can always be achieved. We shall also assume that $\dim\cp_C=2$, while $\dim\cp_S$ can be arbitrary but finite. The reason is that a single Hamiltonian constraint requires only a single clock function to parametrize its orbits.\footnote{The assumption $\dim\cp_C=2$ is not in conflict with the clock system possibly being a composite system of many degrees of freedom. In that case, the clock function $T$ may be a collective degree of freedom that is chosen as a time standard, relative to which all other degrees of freedom (including the remaining ones in the clock system) evolve. That is, with a choice of time standard one effectively decomposes the clock system phase space into the time standard part, $\cp_C$, and its other degrees of freedom, which here we simply think of as being contained in the system phase space $\cp_S$.}
 The clock function $T$ will be used as one coordinate on $\cp_C$.

Based on this partition, we shall henceforth further restrict to classical theories described by an autonomous (i.e. independent of flow parameter $s$) Hamiltonian constraint of the form
\begin{align}
C_H = H_C + H_S \approx 0,
\label{Constraint}
\end{align}
where $H_C$ is a function on $\cp_C$, which we refer to as the clock Hamiltonian, and $H_S$ is a function on $\cp_S$, which we refer to as the system Hamiltonian. That is, we assume that the clock and system do {\it not} interact.

This is an assumption usually made in the literature on the Page-Wootters formalism \cite{pageEvolutionEvolutionDynamics1983,woottersTimeReplacedQuantum1984}, which is why we shall likewise adopt this assumption in order to prove equivalence with other approaches (see \cite{Smith:2017pwx} in which this assumption is relaxed in the context of the Page-Wootters formalism). We emphasize that Eq.~\eqref{Constraint} is, of course, an idealization. If the constraint modelled a laboratory situation, one might interpret this as a reasonable situation in which the clock and system are so far apart that their interaction may be neglected. However, in general relativity, Eq.~(\ref{Constraint}) is a strong restriction. Being a field theory, finite dimensional general relativistic systems correspond to models with symmetry, such as homogenous cosmological models or certain black hole spacetimes. In this case, the phase space variables correspond to global and therefore not localized degrees of freedom, such as the scale factor or certain anisotropy parameters. In this case, one cannot conceive of an absence of interactions between `clock' and `system' as corresponding to them being far removed from one another. In fact, generic general relativistic systems do not satisfy the idealization Eq.~\eqref{Constraint} \cite{kucharTimeInterpretationsQuantum2011a,Isham1993,thiemannModernCanonicalQuantum2008,andersonProblemTime2017,Marolf:1994nz,Giddings:2005id,Hohn:2011us}. Nonetheless, important examples of relativistic systems satisfying Eq.~\eqref{Constraint} exist, such as homogenous vacuum cosmologies \cite{Ashtekar:1993wb} or homogeneous cosmologies with a massless scalar field \cite{Ashtekar:2011ni,Banerjee:2011qu,martinbuch,Hoehn:2018whn}, which are often studied in quantum cosmology.

In Appendix~\ref{app_chaos}, we argue in more detail why the absence of interactions between clock and system as in Eq.~\eqref{Constraint} are, in fact, untenable in generic models, featuring a non-integrable dynamics. This is also to highlight that the resolution of the `clock ambiguity problem' (related to the `multiple choice problem' in quantum gravity \cite{kucharTimeInterpretationsQuantum2011a,Isham1993}) proposed in \cite{marlettoEvolutionEvolutionAmbiguities2017} does not apply to generic models. Instead, a quantum clock change method, such as the one introduced in \cite{hoehnHowSwitchRelational2018,Hoehn:2018whn,Bojowald:2010xp, Bojowald:2010qw, Hohn:2011us} and further developed in Sec.~\ref{SubSec_SwitchingFrames} and in \cite{HLSrelativistic,castro-ruizTimeReferenceFrames2019}, will become indispensable for addressing the `clock ambiguity problem'.

%========================================
%========================================
\section{Covariant time observables}
\label{Covariant time observables}

In the spirit of Misner, Thorne, and Wheeler~\cite{Misner:1973}, who remarked 
``\emph{Time is defined so that motion looks simple!}'', we will suppose  that the partial time observable $T$ is covariant (simple) with respect to the group generated by the Hamiltonian $H_C$. This will amount to $T$ essentially being canonically conjugate to $H_C$ and thus being \emph{monotonic} along the orbits generated by the latter. Such time observables are first described in the classical theory as clock functions and then in the quantum theory as positive operator-valued measures (POVMs). In all cases, they capture what we intuitively have in mind when thinking of a clock, and will be employed in the following sections when discussing the trinity of relational quantum dynamics. Henceforth, we will simply refer to $T$ as a time observable. However, we emphasize that $T$ is a partial observable, not a complete Dirac observable, since by construction $T$ is not gauge invariant. 

This section will resolve an apparent monotonicity issue of relational time observables reported in \cite{Unruh:1989db}. As is well-known, and originally observed by Pauli \cite{pauli1958allgemeinen}, there cannot exist a \emph{self-adjoint} time operator $\hat T$ that is canonically conjugate to a bounded, self-adjoint Hamiltonian $\hat H_C$. This observation was refined somewhat by Unruh and Wald in \cite{Unruh:1989db} who showed that for a bounded Hamiltonian $\hat H_C$ there cannot exist a \emph{self-adjoint} time operator $\hat T$ which satisfies the following monotonicity (``Heraclitian") property in Schr\"odinger quantum mechanics: 
\begin{itemize}
    \item[(i)] There exists an infinite sequence of states $\ket{T_0},\ket{T_1},\ket{T_2},\ldots$ with $T_0<T_1<T_2<\ldots$ such that $\ket{T_n}$ is an eigenstate of the projection operator onto the spectral interval centered around $T_n$.
    \item[(ii)] For each $n$ there exists $m>n$ such that the transition amplitude $f_{mn}(t)=\braket{T_m|\exp(-it\hat H_C)|T_n}$ to go from $T_n$ to the larger $T_m$ is non-vanishing for \emph{some} $t>0$, so that the clock has a nonvanishing probability to run forward.
    \item[(iii)] For each $n$ and \emph{all} $t>0$, $f_{mn}(t)=0$ for all $m<n$ so that the clock cannot run backward.
\end{itemize}
Unruh and Wald \cite{Unruh:1989db} then interpreted their result as saying that
\begin{quote}
    ... any realistic ``clock" [...] which can run forward in time must have a nonvanishing probability to run backward in time.
\end{quote}
They therefore raised concern that other observables would thereby appear to be multivalued at a given reading of a realistic quantum clock and used this as an argument against a relational approach to the problem of time (including  the Page-Wootters formalism) that is based on using dynamical time observables.\footnote{For this reason, Unruh and Wald then proposed a quantization of unimodular gravity in \cite{Unruh:1989db} as an alternative to canonical quantum geometrodynamics.}

As we will now show, it is possible to sidestep the issue raised by Unruh and Wald by relaxing the requirement that observables in quantum theory have to be self-adjoint operators. Instead we will adopt the notion of a generalized observable defined by a POVM, which is standard in quantum information \cite{Nielsen:2010} and quantum metrology \cite{busch2016quantum}. In particular, this will permit us to define \emph{monotonic} (covariant) time observables with a well-defined probability interpretation even for bounded clock Hamiltonians. However, the set of possible clock readings over which the probability distribution is defined need not be perfectly distinguishable. Nonetheless, this is common to many quantum measurements and not a fundamental obstruction.

We consider this a resolution of the issue raised by Unruh and Wald: by appealing to a more general notion of an observable characterized by a POVM, the relational approach to the problem of time is viable also in the presence of realistic Hamiltonians (see also the follow-up work \cite{HLSrelativistic,HLS2}).

%========================================
%========================================
\subsection{Classical time observables}
\label{Sec3a}

An (autonomous) Hamiltonian system on a two-dimensional phase space $\cp_C$ is  completely integrable. Assuming that the phase space flow generated by the clock Hamiltonian $H_C$ is complete,\footnote{By this we mean that the flow $\alpha_{H_C}^s$ on $\mathcal{P}_C$ generated by $H_C$ through the equations of motion exists for all $s \in \mathbb{R}$.} it follows from Liouville's integrability theorem (e.g.\ see  \cite{Libermann1987})  that we can always find some clock function $\tilde T$ on $\cp_C$, such that $\{\tilde T,H_C\}=u(H_C)$ is a constant of motion for some function $u$. Accordingly, the clock  $\tilde T$ changes at a constant rate along the dynamical trajectories (or remains static for $u(H_C)=0$). 
In this case, we can always choose another clock function $T \ce \tilde T/u(H_C)$, which is canonically conjugate to the clock Hamiltonian $\{T,H_C\}=1$ on $\cp_C$. {This is what we mean classically by simplicity of the clock, i.e.\ its covariance with respect to $H_C$.} Since $u$ may vanish for some trajectories, such a choice $T$ may not be globally valid on $\cp_C$ {(e.g. see \cite{hoehnHowSwitchRelational2018,Hoehn:2018whn,HLSrelativistic}), although usually one can find a $T$ with such properties on the (owing to its integrability) dense subset of $\cp_C$ where $dH_C\neq0$}.\footnote{If one defers this discussion to the constraint surface $\cc\subset\cp_{\rm kin}$, rather than $\cp_C$, we note that it is always possible to find conjugate clock and constraint pairs locally on $\cc$. Indeed, rescaling the constraint (rather than the clock function) yields a new constraint $\tilde C_H\ce C_H/\{T,C_H\}$, which locally defines the same $\cc$ and gauge invariant dynamics on it wherever $\{T,C_H\}\neq0$. It is easy to convince oneself that $\{T,\tilde C_H\}\approx1$~\cite{Dittrich:2005kc,dittrichPartialCompleteObservables2007,Dittrich:2006ee,Dittrich:2007jx}. However, $C_H$ being of the form in Eq.~\eqref{Constraint}, $\tilde C_H$ does not satisfy this condition, which is why we do not consider this option further.}  
The choice of $T$ is clearly not unique since $T+h(H_C)$ enjoys the same properties for an arbitrary differentiable function $h$.

Using such a `simple' $T$ and Eq.~\eqref{Constraint}, the power series {expansion} of relational Dirac observables in Eq.~\eqref{RelationalDiracObservable1} simplifies for a phase space function $f_S$ on $\cp_S$:
\begin{align}\label{F2}
F_{f_{S},T}(\tau)\approx\sum_{n=0}^{\infty}\,\f{( \tau -T)^n}{n!}\,\{f_S,H_S\}_n.
\end{align}

For our discussion it will be relevant whether the clock has non-degenerate or degenerate energy levels. Classically, this means that constant energy surfaces are connected in the former case and comprised of disconnected pieces in the latter case, such that each connected piece contains a single dynamical orbit. Liouville's integrability theorem, together with our assumption that the flow of $H_C$ is complete, further implies that the connected components of the constant energy surfaces of the clock Hamiltonian $H_C$ on $\cp_C$ are diffeomorphic to either $S^1$ or $\mathbb{R}$ \cite{Libermann1987}. Consequently, the clock function $T$, being conjugate to $H_C$, will be periodic in the former case and run monotonically over an infinite range in the latter case. While for periodic clocks $T$ will only take values in a finite interval $[0,t_{\rm max})$, one still has to keep track of the clock's `winding numbers' in order to monitor the evolution of $S$'s degrees of freedom, {which may not be periodic resulting in Eq.~\eqref{F2} being multivalued~\cite{HLS2}}.\footnote{Clocks in everyday life are also periodic, but through calendar days we keep track of the clocks' `winding numbers' to monitor a monotonic passage of time.} 

Simple examples of non-degenerate clock Hamiltonians with orbits diffeomorphic to $\mathbb{R}$  are $H_C=c\,p$, with a dimensionful constant $c$, and $H_C = p^2/2m+a_1\,e^{a_2\,q}$, with positive dimensionful constants $a_i$ and $q\in\mathbb{R}$. {In the former case, a covariant time observable is given by $T=q/c$, and in the latter, by  ${T=-\sqrt{\frac{2m}{p^{2}}}\frac{p}{a_2 \sqrt{H_{C}}} \, \text{coth}^{-1} \! \! \left( \sqrt{\frac{2m}{p^2}H_C} \right)}$ (for $p \neq 0$).} Noncompact clocks of this kind will be considered in sections~\ref{sec_nondegtrinity} and \ref{sec_Unifying}. By contrast, an obvious example of a clock with a non-degenerate  Hamiltonian with orbits {diffeomorphic} to $S^1$ is the harmonic oscillator, $H_C = p^2/2m + m\omega^2\,q^2/2$. In this case, the periodic clock function is simply the phase observable $T\ce \phi(q,p) = 1/\omega\arctan(\f{-p}{m\omega q})$, which satisfies $\{T,H_C\}=1$ (so-called action-angle variables \cite{arnold1989mathematical}). Such periodic clocks will be {discussed in the following subsection and explored in} greater depth in a follow-up article \cite{HLS2}. An example of a  degenerate clock Hamiltonian, $H_C = p^2/2m$, with orbits diffeomorphic to $\mathbb{R}$, is studied in the context of the trinity in a  companion article \cite{HLSrelativistic}.

%========================================
%========================================
\subsection{Quantum time observables}
\label{Sec3b}

In the quantum theory, by a `simple' time observable we mean a POVM that is covariant with respect to the group generated by the clock Hamiltonian $\hat{H}_C$~\cite{holevoProbabilisticStatisticalAspects1982,buschOperationalQuantumPhysics,chiribella2006optimal}. We describe here such covariant POVMs and the relation between their properties and the spectrum of $\hat{H}_C$. Covariant clock POVMs were introduced into relational dynamics in \cite{Brunetti:2009eq,Smith:2017pwx,Smith:2019}, and also recently considered in \cite{Loveridge:2019phw}. Here we expound their properties.

Since $\hat{H}_C$ is assumed to be a self-adjoint operator, by Stone's theorem~\cite{stoneLinearTransformationsHilbert1930} it generates a one-dimensional group $\mathbf{G}$ whose unitary representation on the clock Hilbert space $\mathcal{H}_C$ is $U_C(t) \ce e^{-i\hat{H}_Ct}$ for all $t\in G \subseteq \mathbb{R}$, where $G$ denotes the set of values necessary to parametrize $\mathbf{G}$. The group $\mathbf{G}$ can either be compact or noncompact. In the former case, this implies that for some group element, parametrized by $t_{\rm max} \in G$,
\begin{align}
U_C(t_{\rm max}) = e^{i\,\varphi}\,I_C ,\q\q\varphi\in[0,2\pi).
\label{CompactCondition}
\end{align}
The phase $\varphi$ takes into account that the quantum state of a system is a ray in Hilbert space. As such, Eq.~\eqref{CompactCondition} is the condition that $U_C(t)$ yields a projective unitary representation of $\mathbf{G}$, i.e., a representation up to phase. 

Let $\mathcal{B}(G)$ denote the Borel $\sigma$-algebra of $G$, so that $(G,\mathcal{B}(G))$ is a measurable space, and let $\mathcal{L}_B(\mathcal{H}_C)$ denote the set of bounded operators on $\mathcal{H}_C$. A POVM $E_T:\mathcal{B}(G) \to \mathcal{L}_B(\mathcal{H}_C)$ is defined through the following three  measure properties  (e.g. see \cite{buschOperationalQuantumPhysics})
\begin{enumerate}
\item Positivity: $E_T(X) \geq 0$ for all $X \in \mathcal{B}(G)$;
\item Normalization: $E_T(G) = I_C$;
\item $\sigma$-additivity: $E_T(\cup_i X_i) = \sum_i E_T(X_i)$ for any sequence ${X_i}$ of disjoint sets in $\mathcal{B}(G)$.
\end{enumerate}
A POVM $E_T$ is said to be \emph{covariant} with respect to $\mathbf{G}$ if the self-adjoint effect operators $E_T(X)$ satisfy the covariance condition
\begin{align}
E_T(X+t) = U_C(t) E_T(X) U_C^\dagger(t) , \label{covariance}
\end{align}
for all $X\in\mathcal{B}(G)$ and  $t \in G$. If a POVM $E_T$ is covariant with respect to $\mathbf{G}$, the group generated by $\hat{H}_C$, then we will refer to $E_T$ as a \emph{time observable} of the clock $C$.

We restrict our attention to time observables described by effect densities  proportional to one-dimensional `projection operators' onto what we will refer to as (possibly unnormalizable) \emph{clock states} $\ket{t}$,
\begin{align}
E_T(dt ) =   \mu\, dt \, \ket{t}  \! \bra{t},\label{covet}
\end{align}
where $\mu\in \mathbb{R}$ is a constant. We will explain shortly how the clock states are constructed using the eigenstates of $\hat H_C$.  The constant $\mu$ is fixed by the normalization condition 
\begin{align}
E_T(G) =  \int_{G} E_T(dt) = I_C  ,
\label{completeness}
\end{align}
and $dt$ denoting the $\mathbf{G}$ invariant Haar measure on $G$. The motivation for the above assumption is that effect {densities} not described by one-dimensional `projectors' have less resolution \cite{braunsteinGeneralizedUncertaintyRelations1996,buschOperationalQuantumPhysics}. Furthermore, the effect operator for any $X\in\cb(G)$ is now given by $E_T(X) = \int_X\,E_T(dt)$.

From Eq.~\eqref{completeness} it follows that the clock states form a resolution of the identity and thus a basis for $\mathcal{H}_C$. However, the clock states need not be orthogonal, and if they are not, then this basis is overcomplete.  The covariance condition in Eq.~\eqref{covariance} then implies that the clock states transform under the action of $\mathbf{G}$ as
\begin{align}\label{covariancestate}
\ket{t'} = U_C(t'-t) \ket{t}.
\end{align}
The $n^\text{th}$ moment operator of the time observable $E_T$ is
\begin{align}
\hat T^{(n)}\ce \mu \int_G  dt \, t^n\, \ket{t}\!\bra{t} .\label{timemoments}
\end{align}
We define a \emph{time operator} $\hat{T} \ce \hat{T}^{(1)}$ as the first moment operator of the time observable $E_T$
\begin{align}
\hat{T} = \mu \int_G  dt\, t\,\ket{t}\!\bra{t}.\label{timeop}
\end{align}
This time operator $\hat{T}$ is symmetric but not necessarily self-adjoint~\cite{buschOperationalQuantumPhysics}, a property we shall revisit shortly. We emphasize that the quantization of a classical clock function $T$ should not be associated with the time operator $\hat{T}$. Instead, the quantum analog of $T$ is the covariant time observable $E_T$, which is a POVM,  and therefore fully characterized by all of its moment operators $\hat T^{(n)}$. Nonetheless, {considering} the time operator $\hat{T}$  allows us to compare the covariant time observable with previous work. 

The possible non-self-adjointness notwithstanding, the moment operators $\hat T^{(n)}$ are viable quantum observables with a consistent probability interpretation; however,  measurement outcomes $t\in X$ may not be perfectly distinguishable because the clock states need not be orthogonal. This resolves the issue raised by Unruh and Wald \cite{Unruh:1989db}: thanks to the covariance property in Eqs.~\eqref{covariance} and~\eqref{covariancestate}, we have a viable \emph{monotonic} time observable which we will now describe in more detail.

In general, the spectrum of the clock Hamiltonian { $\sigma_C \ce \spec(\hat{H}_C) = \sigma_c \cup \sigma_p$} is the union of its continuous spectrum $\sigma_c$ and point (discrete) spectrum $\sigma_p$. For simplicity, we will only consider non-degenerate clock Hamiltonians with spectra that are either entirely continuous ${\sigma_C = \sigma_c}$ or entirely discrete ${\sigma_C= \sigma_p}$ in the following two subsections. In~\cite{HLSrelativistic}, we describe the analogous properties for an example of a degenerate, continuous spectrum clock Hamiltonian.

%========================================
%========================================
\subsubsection{Continuous spectrum clocks}\label{sssec_contspec}

For non-degenerate continuous spectrum clocks,  $\sigma_C = \sigma_c$, the spectral decomposition of the clock Hamiltonian is
\begin{align}
\hat{H}_C =   \int_{\sigma_c} d\varepsilon \, \varepsilon \ket{\varepsilon} \! \bra{\varepsilon} ,
\label{continuousClock}
\end{align}
where $\ket{\varepsilon}$ denotes an eigenstate of the clock Hamiltonian with eigenvalue $\varepsilon$. The covariance condition in Eq.~\eqref{covariancestate} implies that the clock states take the form
\begin{align}
\ket{t} = \int_{\sigma_c} d \varepsilon \, e^{ig(\varepsilon)} e^{-i\varepsilon t} \ket{\varepsilon}, \label{continuousClockState}
\end{align}
where $g(\varepsilon)$ is an arbitrary real function encoding a freedom in the choice of clock states. This freedom is the quantum incarnation of the classical freedom in defining a clock that is canonically conjugate to $H_C$ (see Appendix~\ref{GaugeishFreedom}). 
 The overlap of two clock states is given by
\begin{align}\label{clockStateOverlap}
\braket{t|t'} &\!= \int_{\sigma_c} d \varepsilon \,  e^{i\varepsilon (t-t')} = \chi(t-t'),
\end{align}
where we have defined the function
\begin{align}\label{clockStateOverlap2}
\chi(x)
\ce\!  \begin{cases}
2 \pi \delta(x)  &   \sigma_c \! = \!  \mathbb{R}, \\
e^{i\varepsilon_{\rm min} x} \! \left[ \pi \delta(x) + i  {\rm P} \frac{1}{x} \right] &    \sigma_c   \!= \! (\varepsilon_{\rm min}, \infty), \\
 i \frac{e^{i\varepsilon_{\rm min} x}- e^{i\varepsilon_{\rm max} x} }{x} & \sigma_c  \!=\!  (\varepsilon_{\rm min}, \varepsilon_{\rm max}),
\end{cases}
\end{align}
and ${\rm P}$ denotes the Cauchy principal value.  From Eq.~\eqref{clockStateOverlap} it follows that the clock states have infinite norm and thus are not elements of the clock Hilbert space,\footnote{More precisely~\cite{Ballentine:1998}, one considers a rigged Hilbert space defined by the triplet $\Phi \subset \mathcal{H}_C \subset \Phi'$, where $\Phi$ is a proper subset dense in $\mathcal{H}_C$ and $\Phi'$ is the dual of $\Phi$, defined through the inner product on $\mathcal{H}_C$. In this case, $\Phi$ is the Schwarz space of smooth rapidly decreasing functions on $\mathbb{R}$ and $\Phi'$ is the space of tempered distributions on $\mathbb{R}$. The clock states are tempered distributions, $\ket{t}\in \Phi'$.} unless $\sigma_c$ is  bounded above and below.  Further, only for $\sigma_c = \mathbb{R}$ are the clock states  orthogonal. In this case, the POVM corresponds to a projective measurement, and Eq.~(\ref{timeop}) is then simply the spectral decomposition of the time operator. {Such clocks are often considered in the literature and represent an idealization in which the clock states are in principle perfectly distinguishable. We henceforth refer to such clocks as {\it ideal}.} That the spectrum of the clock Hamiltonian is unbounded below in this case is the content of Pauli's famous remark on the (apparent) impossibility of a physically meaningful time operator~\cite{pauli1958allgemeinen}.  On the other hand, when $\sigma_c \subsetneq \mathbb{R}$ the clock states are not orthogonal.

The group $\mathbf{G}$ generated by a clock Hamiltonian with continuous spectrum is noncompact and $G = \mathbb{R}$. This is because if $\mathbf{G}$ were compact, then from Eqs.~\eqref{CompactCondition} and \eqref{continuousClock}, it would follow that $e^{i\varepsilon t_{\rm max}} =e^{i\,\varphi}$ for all $\varepsilon \in \sigma_c$. However, this condition cannot be satisfied since $\sigma_c$ contains irrational numbers. This is, of course, the quantum analog of the classical discussion above: a classical Hamiltonian $H_C$ generating a noncompact flow on a two-dimensional phase space (usually) leads to a quantum Hamiltonian $\hat H_C$ with continuous spectrum. Having established that $G = \mathbb{R}$, and given that the energy eigenstates form a resolution of the identity, $I_C = \int_{\sigma_c} d \varepsilon \, \ket{\varepsilon} \! \bra{\varepsilon}$, and Eqs.~\eqref{completeness} and \eqref{continuousClock}, it follows that the normalization constant appearing in Eq.~\eqref{covet} is fixed to be $\mu = \frac{1}{2 \pi}$.

Using Eqs.~\eqref{timeop} and~\eqref{clockStateOverlap}, one can verify that the clock states in Eq.~\eqref{continuousClockState} are eigenstates of the first moment $\hat T$, i.e.\ $\hat{T} \ket{t} = t \ket{t}$,  \emph{only} in the special case of the ideal clock, where {$\spec(\hat H_C)=\mathbb{R}$}. In this special case, we also have that $\hat T$ is self-adjoint and that $\hat T^n$ is equal to the $n^\text{th}$ moment operator of the clock POVM $\hat T^{(n)}$ given in Eq.~\eqref{timemoments}.

Differentiating $U_C(s)\,\hat T\,U_C^\dag(s) = \hat T-s\,I_C$ (which follows from Eq.~\eqref{covariancestate}, the invariance of the Haar measure, and $G=\mathbb{R}$) with respect to $s$ and setting $s=0$, one finds that the time operator and clock Hamiltonian (formally) satisfy the canonical commutation relation
\begin{align}
\left[\hat{T}, \hat{H}_C \right] = i I_C.\label{contCCR}
\end{align}
While this holds for any continuous (non-degenerate) $\hat H_C$,  we note that the time operator and clock Hamiltonian form a \emph{Heisenberg pair} (which requires both to be self-adjoint~\cite{garrisonCanonicallyConjugatePairs1970}) {\it only}  in the case of the ideal clock, in accordance with Pauli's remark noted above. This point has been discussed in another context in~\cite{khandelwal2019general}.

Finally, using Eq.~\eqref{continuousClockState}, one also finds that
\begin{align}
\ket{\varepsilon} = {\frac{1}{2\pi}} \int_{\mathbb{R}}dt\,e^{-i\,g(\varepsilon)}\,e^{i\,\varepsilon\,t}\,\ket{t},\label{fourier}
\end{align}
which generalizes the Fourier transform to a canonical pair with a not necessarily self-adjoint $\hat T$ in Eq.~\eqref{contCCR}.

%========================================
%========================================
\subsubsection{Discrete spectrum clocks}

The spectral decomposition of a clock Hamiltonian with non-degenerate discrete spectrum, $\sigma_C = \sigma_p$, is 
\begin{align}
\hat{H}_C =    \sum_{\varepsilon_j \in \sigma_p} \varepsilon_j \ket{\varepsilon_j} \! \bra{\varepsilon_j}, \nn
%\label{discreteClock} 
\end{align} 
where  $\ket{\varepsilon_j}$ denotes an eigenstate of the clock Hamiltonian with eigenvalue $\varepsilon_j$. The covariance condition in Eq.~\eqref{covariancestate} implies that the clock states take the form
\begin{align}
\ket{t} = \sum_{\varepsilon_j \in \sigma_p}  e^{ig(\varepsilon_j)} e^{-i\varepsilon_j t} \ket{\varepsilon_j }. \label{discreteClockState}
\end{align}
where again $g(\varepsilon_j)$ is an arbitrary real function encoding a  freedom in the choice of clock states. The overlap of two clock states is
\begin{align}
\braket{t|t'} = \sum_{\varepsilon_j \in \sigma_p}   e^{i\varepsilon_j (t-t')}. \nn %\label{discreteClockStateOverlap}
\end{align}
It follows that the clock states are orthogonal if e.g.\ $\sigma_p = \mathbb{Z}$~\cite{braunsteinGeneralizedUncertaintyRelations1996}.

As noted above, if the group $\mathbf{G}$ generated by the clock Hamiltonian is noncompact, then $G=\mathbb{R}$. Inserting Eq.~\eqref{discreteClockState} into the normalization condition, Eq.~\eqref{completeness}, one finds that the result diverges in this case. We therefore cannot construct a covariant time observable in the manner described above when $\spec(\hat{H}_C) = \sigma_p$ and $\mathbf{G}$ is noncompact. For $\mathbf{G}$ to be compact, so that $G = [0,t_{\rm max}) \subset \mathbb{R}$, it follows from  Eq.~\eqref{CompactCondition} that 
\begin{align}
e^{i\varepsilon_j t_{\rm max}} = e^{i\,\varphi},\quad  \forall \varepsilon_j \in \sigma_p.
\label{point}
\end{align}
For Eq.~\eqref{point} to be satisfied it must be the case that
\begin{align}
\exists\, n_j \in \mathbb{Z} \quad \mbox{ s.t. } \quad \varepsilon_j t_{\rm max} = 2 \pi n_j +\varphi, \quad \forall \varepsilon_j \in \sigma_p, \nn
\end{align}
which implies  that the spectrum of $\hat H_C$ reads
\begin{align}
\varepsilon_j  = \frac{2 \pi n_j+\varphi}{t_{\rm max}}, \quad \forall \varepsilon_j \in \sigma_p.
\nn %\label{discreteSpectrum}
\end{align}
Hence, for $\mathbf{G}$ to be compact, the spectrum of $\hat H_C$ must also be rational (see~\cite{braunsteinGeneralizedUncertaintyRelations1996} for a related discussion).\footnote{Generic Hamiltonians featuring an irrational spectrum, however, usually correspond either to complex many body systems or to classically non-integrable systems. As such, they typically do not arise in the quantization of two-dimensional phase spaces, like that of the clock. Nevertheless, it is interesting to note what would happen for Hamiltonians with irrational spectrum. The evolution of states on $\ch_C$ could be written as
$
\ket{\psi_C(t)} = \sum_k\,c_k\,e^{-i\,\varepsilon_k\,t}\,\ket{\varepsilon_k}
$
and since the ratios of eigenvalues $\varepsilon_k$ are not rational numbers, it is impossible to satisfy Eq.~\eqref{point} for any finite $t\neq0$. Hence, the clock has infinite range and the state will never exactly return to its initial state $\ket{\psi_C(0)}$. However, in aperiodic intervals, the state may get arbitrarily close to $\ket{\psi_C(0)}$ in the sense that their difference gets arbitrarily close to the zero-vector. This is the content of the quantum recurrence theorems~\cite{PhysRev.107.337,percival,schulman}. } This is again the quantum analog of the classical discussion above: a classical Hamiltonian generating a flow homeomorphic to $S^1$ in a two-dimensional phase space (usually) leads to a quantum Hamiltonian with discrete, rational spectrum.
Note that the global phase $\varphi$ is only unique up to multiples of $2\pi$.

Once more, the normalization condition in Eq.~\eqref{covet} fixes the constant $\mu = t_{\rm max}^{-1}$ and Eq.~\eqref{discreteClockState} allows for the time operator $\hat{T}$ to be expressed as 
\begin{align} \label{discreteClockT}
\hat{T} = \frac{t_{\rm max}}{2} I_C +   i\sum_{\substack{\varepsilon_j,\varepsilon_k \in \sigma_p \\ j\neq k}}  \frac{e^{i[g(\varepsilon_j) -g(\varepsilon_k) ]}}{\varepsilon_j- \varepsilon_k}\ket{\varepsilon_j}\!\bra{\varepsilon_k}.
\end{align}
The action of the time operator on a clock state is
\begin{align} \nn 
\hat{T} \ket{t} 
&=  \frac{t_{\rm max}}{2} \ket{t} +   i\sum_{\substack{\varepsilon_j,\varepsilon_k \in \sigma_p \\ j\neq k}}  \frac{e^{ig(\varepsilon_j) }}{\varepsilon_j- \varepsilon_k} e^{-i\varepsilon_kt}\ket{\varepsilon_j}
\end{align}
from which it is seen the clock states are {\it not} eigenstates of the time operator. Note that the time observable is a POVM with measurement outcomes $t\in G$ and the time operator $\hat{T}$ is defined as its first moment. Thus one should not expect the clock states to necessarily be eigenstates of $\hat{T}$; see also \cite{buschOperationalQuantumPhysics} for a related discussion.

Using Eq.~\eqref{discreteClockT}, the commutator of the time operator and clock Hamiltonian can be evaluated,\footnote{This result can also be derived by differentiating $U_C(s)\,\hat T\,U_C^\dag(s) = \hat T-s\,I_C+\int_0^s\,dt\,\ket{t}\!\bra{t}$, which follows from the invariance of the Haar measure, adjusting integration labels and limits, and noting that $U_C(t_{\rm max})\,\ket{t}\!\bra{t}\,U_C^\dag(t_{\rm max})= \ket{t}\!\bra{t}$.}
\begin{align}
\left[ \hat{T}, \hat{H}_C\right] &= i I_C - i \sum_{\varepsilon_j,\varepsilon_k \in \sigma_p }  e^{i[g(\varepsilon_j) -g(\varepsilon_k) ]} \ket{\varepsilon_j}\!\bra{\varepsilon_k} \nn \\
&= i \big( I_C -  \ket{t_{\rm max}} \! \bra{t_{\rm max}} \big). \nn
%\label{HeisenbergPair}
\end{align}
Thus $\hat{T}$ and $\hat{H}_C$ form a Heisenberg pair on the subspace 
\begin{align}
\mathcal{D} \ce \{ \ket{\psi} \in \mathcal{H}_C \ \big| \ \braket{t_{\rm max} | \psi} =0 \} \subset \mathcal{H}_C, \nn
\end{align}
which is dense in the clock Hilbert space $\mathcal{H}_C$ when its dimensionality is infinite~\cite{garrisonCanonicallyConjugatePairs1970,lahtiCovariantPhaseObservables1999}. Despite this domain restriction, the eigenstates of $\hat{H}_C$ can be   expressed via a Fourier transform of the clock states
 \begin{align}
\ket{\varepsilon_j} &= \frac{1}{t_{\rm max}} \int_G dt \, e^{-ig(\varepsilon_j)} e^{i\varepsilon_j t} \ket{t}.   \nn
\end{align}

\subsubsection{Examples of non-degenerate quantum clocks}

To illustrate the quantum time observables discussed above, we now consider some examples. For clocks governed by a non-degenerate Hamiltonian with a continuous spectrum, we construct the time operator $\hat{T}$ via a wavefunction representation. Denoting the set of energy eigenfunctions with respect to observable $\hat{q}$ by $\lbrace \psi_{\varepsilon}(q) \rbrace_\varepsilon$, one can then use Eq.~\eqref{continuousClockState} to find the wavefunctions of the clock states via the Fourier transform $\phi_{t}(q)\ce  \int_{\sigma_c} d \varepsilon \, e^{-i\varepsilon t} \psi_{\varepsilon}(q) $, where for simplicity we have chosen $g(\varepsilon)=0$. The time operator is then given by $\hat{T}= \int dq \, dq' \, T(q,q') \ket{q}\!\bra{q'}$, with ${T(q,q')\ce {\frac{1}{2\pi}} \int_{\mathbb{R}}dt\, t \, \phi_{t}(q) \phi_{t}(q')}$. 

We now give three examples of non-degenerate, continuous-spectrum clocks. First, in analogy to the classical examples discussed in Sec.~\ref{Sec3a}, consider the clock governed by $\hat{H}_C = c\,\hat{p}$ on $\mathcal{H}_C \simeq L_2(\mathbb{R})$, with $[\hat{q},\hat{p}]=i$. Such a clock Hamiltonian has a non-degenerate spectrum $\spec(\hat H_C) = \mathbb{R}$ (i.e.\ an ideal clock). In this case, we have $\psi_{\varepsilon}(q)= \frac{1}{\sqrt{2 \pi}} e^{i \frac{\varepsilon q}{c}}$, so the clock states $\phi_{t}(q)=\sqrt{2 \pi} \delta (t-q/c)$ are orthogonal, as anticipated, and $T(q,q')=\frac{q}{c} \delta(q-q')$, i.e. $\hat{T}=\frac{\hat{q}}{c}$. Clearly $\hat{T}$ is self-adjoint in this case, being isomorphic to the position operator on the real line, and $\hat{H}_C$ is unbounded below~\cite{pauli1958allgemeinen}. As a second example, we consider a Hamiltonian whose spectrum is bounded below, namely $\hat H_C=\hat p^2/2m+a_1\,e^{a_2\,\hat q}$ on $\mathcal{H}_C \simeq L_2(\mathbb{R})$, with $a_1 , a_2 >0$ and the boundary condition that energy eigenstates vanish for $q\rightarrow\infty$ where the potential diverges. Defining $\nu(\varepsilon)\ce 2 \frac{\sqrt{2 m \varepsilon}}{a_2}$, the energy eigenfunctions are then given by $\psi_{\varepsilon}(q)=K_{i \nu(\varepsilon)}(\frac{2 \sqrt{2 m a_{1}}}{a_2} e^{a_{2} q / 2})$, where $K_{\nu}(z)$ are the modified Bessel functions of the second kind, from which $\phi_{t}(q)$ and then $\hat{T}$ can be constructed as described above. Since $\sigma_c = \mathbb{R}^+$ is not equal to $\mathbb{R}$ in this case, the clock wavefunctions $\lbrace \phi_{t}(q) \rbrace_{t}$ are not orthogonal. As a third example, consider the Hamiltonian $\hat{H}_C = \frac{\hat{q}}{c}$, with the position operator acting on $\mathcal{H}_C \simeq L_2(0,a)$. This Hamiltonian therefore has a doubly-bounded spectrum $\sigma_c=(0,a/c)$. We have energy eigenfunctions $\psi_{\varepsilon}(q)=\delta(q - c \varepsilon )$, and (again, non-orthogonal) clock states $\phi_{t}(q)=e^{-i \frac{q}{c}t}$, and hence $T(q,q')=i c^{2} \delta'(q-q')$. This example was considered in~\cite{garrisonCanonicallyConjugatePairs1970}, though with restrictions on the domain of what we have called $T(q,q')$.

On the other hand, an obvious example of a rational, non-degenerate clock spectrum is the Harmonic oscillator. In this case, the quantization of the phase observable mentioned above serves as the (self-adjoint) clock operator $\hat T=\hat \phi$~\cite{Galindo1984,buschOperationalQuantumPhysics}. The clock states given in Eq.~\eqref{discreteClockState} then fail to be orthogonal. For completeness we have included here a discussion of discrete spectrum clock Hamiltonians and  discuss such clocks in detail in the context of relational quantum dynamics in ~\cite{HLS2}, henceforth considering only noncompact clocks.

%========================================
%========================================
%========================================
%========================================
\section{Relational quantum dynamics in Dirac and reduced quantization}
\label{sec_nondegtrinity}

%========================================
%========================================

Prior to describing the trinity in Sec.~\ref{sec_Unifying}, we first introduce the formulation of relational quantum dynamics in the language of relational observables in Dirac quantization 
 (`first quantize, then constrain'). This formulation will produce the clock-neutral element of the trinity.  The word relational is used because  the formulation defines the quantum dynamics of the system $S$ with respect to the dynamical clock $C$, which is described in terms of a covariant time observable (POVM)  as {discussed} in Sec.~\ref{Covariant time observables}. For simplicity, we henceforth restrict our consideration  to clocks which possess a non-degenerate, continuous spectrum Hamiltonian $\hat H_C$, and discuss the trinity for degenerate clock Hamiltonians in a companion article~\cite{HLSrelativistic}, and postpone the discussion of the trinity for discrete spectrum Hamiltonians to~\cite{HLS2}.

We also introduce an alternative formulation of relational quantum dynamics obtained through phase space reduction and subsequent quantization (`first constrain, then quantize'), although this will not {\it a priori} be an element of the trinity. The other two formulations of relational quantum dynamics which complete the trinity in Sec.~\ref{sec_Unifying} are obtained through the quantum analog of phase space reduction. The relation among these latter three formulations will  be studied in Sec.~\ref{sec_Unifying}.

\subsection{Dynamics I: Relational Dirac observables}
\label{sec_Dirac}

Dirac's constraint quantization algorithm\footnote{The precise technical formulation of the algorithm has evolved over time~\cite{diracLecturesQuantumMechanics1964,Kuchar:1986jj,Henneaux:1992ig,ashtekarLecturesNonPerturbativeCanonical1991,Marolf:2000iq,thiemannModernCanonicalQuantum2008}. Here, we implement the algorithm using  group averaging techniques \cite{Marolf:2000iq,thiemannModernCanonicalQuantum2008}.} begins by quantizing the kinematical phase space $\cp_{\rm kin}\simeq\cp_{C}\times\cp_S$, by promoting suitable phase space coordinates to operators on what is known as the \emph{kinematical Hilbert space}~$\ch_{\rm kin}$. The direct sum structure of the classical phase space suggests a preferred partitioning of the kinematical Hilbert space  $\ch_{\rm kin} \simeq \ch_C \otimes \ch_S$, where $\ch_C$ and $\ch_S$ are the Hilbert spaces describing  the clock and system degrees of freedom, which here are simply quantizations of $\cp_C$ and $\cp_S$, respectively. We assume that this quantization leads to a self-adjoint and non-degenerate {clock Hamiltonian} $\hat H_C$ {acting} on $\ch_C$  with continuous spectrum. The clock variable is then quantized via the covariant clock POVM ${E_T}$, defined through the clock  states in Eq.~\eqref{continuousClockState}, yielding a canonical pair $[\hat T,\hat H_C]=i$ thanks to Eq.~\eqref{contCCR}. Recall that $\hat T$ need not necessarily be self-adjoint. Similarly, we assume that a suitable Poisson subalgebra $\ca_S$ of phase space observables on $\cp_S$ are promoted to a quantum representation $\ca_S^Q$ on $\ch_S$,\footnote{{$\ca_S^Q$ is in general a small subset of the linear operators $\cl(\mathcal{H}_S)$ due to the Groenewold-van-Hove theorem which implies that one cannot map the full Poisson algebra of classical phase space functions homomorphically into a quantum commutator algebra \cite{Guillemin:1990ew}.}} from which the full set of self-adjoint system observables on $\ch_S$, assumed to include the quantum Hamiltonian $\hat H_S$, can be constructed (usually involving a choice of factor ordering). For our purposes, it will not be necessary to specify the properties of $\ca_S^Q$ any further.

Under our assumptions, an arbitrary\footnote{If the spectrum of $\hat H_S$ were degenerate, we would have to introduce additional degeneracy labels, but this would not change the subsequent discussion.} kinematical state can  expanded as
\begin{align}
\ket{\psi_{\rm kin}} = \int_{\sigma_c}d\varepsilon\,\,\,\,\,{{\intsum}_E}\,\psi_{\rm kin}(\varepsilon,E)\,\ket{\varepsilon}_C\,\ket{E}_S,\label{kinstate}
\end{align}
where the sum-integral notation here and below accounts for the discrete or continuous nature of the system Hamiltonian's spectrum.

The constraint in Eq.~\eqref{Constraint} is implemented by demanding that physical states of the quantum theory are annihilated by the associated constraint operator, assumed to be self-adjoint on $\ch_{\rm kin}$, resulting in a Wheeler-DeWitt type equation
\begin{align}
\hat{C}_H \ket{\psi_{\rm phys}} =  \big( \hat{H}_C \otimes I_S + I_C \otimes \hat{H}_S\big) \ket{\psi_{\rm phys}} = 0,
\label{WheelerDeWitt}
\end{align}
where $I_C$ and $I_S$ denote the identity operators acting on $\mathcal{H}_C$ and $\mathcal{H}_S$, respectively. 

Assuming this equation has a non-trivial solution, by assumption zero will lie in the continuous spectrum of $\hat C_H$ since $\hat H_C$ has a continuous spectrum.\footnote{Usually, this means that the flow generated by the classical constraint $C_H$ is non-compact in $\cp_{\rm kin}$.}  
Accordingly, solutions to Eq.~\eqref{WheelerDeWitt} will be improper eigenstates of $\hat C_H$ and so {\it not} be normalizable in $\ch_{\rm kin}$. That is, $\ket{\psi_{\rm phys}}\notin\ch_{\rm kin}$. Using group averaging \cite{marolfRefinedAlgebraicQuantization1995,Hartle:1997dc,Marolf:2000iq, thiemannModernCanonicalQuantum2008}, we can project an arbitrary kinematical state onto a physical state,
\begin{align}
\ket{\psi_{\rm phys}}&= \delta(\hat{C}_H)\,\ket{\psi_{\rm kin}} \nn \\
&= \frac{1}{2\pi} \int_\mathbb{R}\,ds \, e^{is\hat{C}_H}\,\ket{\psi_{\rm kin}}\nn\\
&= \, \,\,\,{{\intsum}_{E \in \sigma_{SC} }}\,\psi_{\rm kin}(-E,E)\,\ket{-E}_C\,\ket{E}_S,\label{GAP}
\end{align}
where 
\ba
\sigma_{SC} \ce \spec (\hat{H}_S) \cap \spec(- \hat{H}_C).\label{sigmasc}
\ea

In order to normalize physical states, we define a new inner product on the space of solutions to Eq.~(\ref{WheelerDeWitt}), using the group averaging projector and the kinematical inner product $\la\cdot\ket{\cdot}_{\rm kin}$ on $\ch_{\rm kin}$,
\begin{align}
 \label{PIP}
\langle\psi_{\rm phys}|\phi_{\rm phys}\rangle_{\rm phys}&\ce\la\psi_{\rm kin}|\delta(\hat{C}_H)\ket{\phi_{\rm kin}}_{\rm kin}\\
&= \ \, {{\intsum}_{E \in \sigma_{SC}}}
\psi^*_{\rm kin}(-E,E)\,\phi_{\rm kin}(-E,E).\nn
\end{align}
Here, $\ket{\phi_{\rm kin}}$ is any representative of the equivalence class of states in $\ch_{\rm kin}$, which project under Eq.~\eqref{GAP} onto the {\it same} physical state $\ket{\phi_{\rm phys}}$, and similarly for $\bra{\psi_{\rm kin}}$. This defines an inner product on the space of solutions to Eq.~\eqref{WheelerDeWitt}. Modulo subtleties irrelevant for the present discussion, the space of solutions can then be Cauchy completed to a Hilbert space of physical states $\ch_{\rm phys}$  \cite{marolfRefinedAlgebraicQuantization1995,Hartle:1997dc,Marolf:2000iq, thiemannModernCanonicalQuantum2008}. We stress that $\ch_{\rm phys}\not\subset\ch_{\rm kin}$.

We can think of the physical Hilbert space $\ch_{\rm phys}$ as the `quantum constraint surface'. Note, however, that physical states are gauge invariant since $U_{CS}(s)\,\ket{\psi_{\rm phys}}=\ket{\psi_{\rm phys}}$, where $U_{CS}(s) \ce e^{-i\,s\,\hat C_H}= e^{-is \hat{H}_C} \otimes e^{-is \hat{H}_S}$. In other words, physical states do not change under the evolution generated by $\hat{C}_H$.
This is in contrast with the classical case, where $C_H$ generates a non-trivial flow on~$\cc$. In the context of quantum gravity, this leads to what is known as the {\it problem of time} or the `frozen formalism' \cite{kucharTimeInterpretationsQuantum2011a,Isham1993,andersonProblemTime2017}. As such, physical states  are often considered as `timeless'. However, we argue, in line with \cite{hoehnHowSwitchRelational2018,Hoehn:2018whn}, that it is more appropriate to regard physical states as `clock-neutral'; they correspond to a global description of physics, prior to choosing a temporal reference system.

In Dirac quantization, one usually attempts to solve the problem of time relationally by promoting  a choice of relational Dirac observables to operators acting on $\ch_{\rm phys}$. This involves a choice of clock, of which there are {\it a priori} many among the kinematical operators on $\ch_{\rm kin}$. The physical Hilbert space encodes simultaneously a multitude of these different choices and their associated relational quantum dynamics because the choice of clock is made after constructing $\ch_{\rm phys}$~\cite{hoehnHowSwitchRelational2018,Hoehn:2018whn} (``tempus post quantum''~\cite{Isham1993}). Accordingly, we consider  Dirac quantization as producing an a priori \emph{clock-neutral picture}. 

Here we choose as a \emph{temporal reference system} the clock $C$ associated with the Hilbert space $\mathcal{H}_C$, Hamiltonian $\hat{H}_C$, and covariant time observable $E_T$. Using the $n^\text{th}$ moment operator of $E_T$ given  in Eq.~\eqref{timemoments},  about $t=\tau$, we  define the quantization of the (formal) power series  in Eq.~\eqref{F2} of relational Dirac observables as\footnote{As usual, the Groenewold-van-Hove-theorem \cite{Guillemin:1990ew} implies that only a strict subset of the Poisson-algebra of Dirac observables on $\cc$ will be homomorphically mapped to a commutator algebra of quantum Dirac observables under this quantization prescription. We assume that a suitable choice of such a subalgebra  has been made. This is combined with the choice of $\ca_S$ above, its quantum representation $\ca_S^Q$ and may involve a choice of factor ordering in the quantization $f_S\mapsto\hat f_S$. }
\begin{align}
\hat{F}_{f_S,T}(\tau)  &\ce  \frac{1}{2 \pi}\int_\mathbb{R} dt\, \ket{t}\!\bra{t} \otimes\sum_{n=0}^{\infty}\, \frac{i^n}{n!} \left( t-\tau\right)^n  \left[ \hat{f}_S , \hat{H}_S \right]_n \nn \\
&= \frac{1}{2 \pi} \int_\mathbb{R} dt\, \ket{t}\!\bra{t} \otimes e^{i(\tau -t) \hat{H}_S}\hat{f}_S\, e^{-i(\tau -t) \hat{H}_S} \nn \\
&=  \frac{1}{2 \pi} \int_\mathbb{R} dt\, \ket{t+\tau}\!\bra{t+\tau} \otimes e^{-it \hat{H}_S}\hat{f}_S \,e^{i t \hat{H}_S} \nn \\
&= \frac{1}{2 \pi} \int_\mathbb{R} dt\, U_{CS}(t) \left( \ket{\tau}\!\bra{\tau} \otimes \hat{f}_S \right)  U_{CS}^\dagger(t)\nn\\
&=: \mathcal{G}\left( \ket{\tau}\!\bra{\tau} \otimes \hat{f}_S\right),
\label{RelationalDiracObservable}
\end{align}
where $[\hat f_S,\hat H_S]_n \ce [ [\hat f_S,\hat H_S]_{n-1}, \hat H_S]$ is the $n^\text{th}$-order nested commutator, with the convention ${[\hat f_S,\hat H_S]_0 \ce  \hat f_S}$, and where $\hat f_S$ is the quantization of the classical function $f_S$. The second equality is obtained from the Baker-Campbell-Hausdorff formula, the third equality follows from changing integration variables $t \to t+\tau$ and noting that the Haar measure $dt$ is invariant under the action of~$\mathbf{G}$, and the last equality makes use of the definition of $U_{CS}(t)$. The fourth line makes clear that this construction can be viewed as a group averaging of the kinematical {operator} $\ket{\tau}\!\bra{\tau} \otimes \hat{f}_S$. Such a group averaging is known as a $G$-twirl operation $\cg$ of $\ket{\tau}\!\bra{\tau} \otimes \hat{f}_S$ over the noncompact one-parameter unitary group generated by $\hat{C}_H$ (see \cite{Bartlett:2007zz,smithCommunicatingSharedReference2019,smithQuantumReferenceFrames2016} for a discussion of $G$-twirl operations in the context of spatial quantum reference frames). 

An expression similar to the one in the second line of Eq.~\eqref{RelationalDiracObservable} was also recently proposed in the context of covariant clock POVMs as a ``relative time observable'' in \cite{Loveridge:2019phw}. However, the interpretation in \cite{Loveridge:2019phw} is very different: a constraint is not considered and the `relative time observable' is therefore not recognized as a Dirac observable.\footnote{Instead, the authors of \cite{Loveridge:2019phw} propose to use it to describe how a clock evolves relative to some other reference system. The invariant `relative time observable' is then evaluated in non-invariant states (kinematical states in the language of constraint quantization), which we consider undesirable.} Furthermore, while completing this work we noticed that a similar expression to the fourth line in Eq.~\eqref{RelationalDiracObservable}  was recently carefully constructed as a quantization of relational Dirac observables in \cite{Chataignier:2019kof}.  The starting point of \cite{Chataignier:2019kof} is different: it begins with integral techniques for relational observables \cite{Marolf:1994wh,Giddings:2005id}, rather than the power-series expansions \cite{dittrichPartialCompleteObservables2007,Dittrich:2005kc, Dittrich:2006ee,  Dittrich:2007jx} used here, and it also does not employ covariant clock POVMs. We will further discuss the relation with our work in Sec.~\ref{sec_giegf}.

The following theorem shows that $\hat{F}_{f_S,T}(\tau)$ is (formally) a family of Dirac observables and thus gauge invariant.

\begin{theorem}\label{theorem1Eq}
$\hat{F}_{f_S,T}(\tau)$ is a (strong) Dirac observable, that is, $\hat{F}_{f_S,T}(\tau)$ commutes {algebraically} with the constraint operator of $\hat{C}_H$
\begin{align}
\left[ \hat{C}_H , \hat{F}_{f_S,T}(\tau)\right] = 0.\label{RDOcommute1}
\end{align}
\end{theorem}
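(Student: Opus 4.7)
The plan is to work from the $G$-twirl representation in the fourth line of Eq.~\eqref{RelationalDiracObservable}, which exhibits $\hat{F}_{f_S,T}(\tau)$ as the group-average of the kinematical operator $\ket{\tau}\!\bra{\tau} \otimes \hat{f}_S$ over the one-parameter unitary group $U_{CS}(t) = e^{-it\hat{C}_H}$ generated by the constraint. The idea is that any such group-averaged object is, by construction, invariant under the group action, and that this invariance is equivalent to commuting algebraically with the group generator $\hat C_H$; the power-series form of $\hat F_{f_S,T}(\tau)$ is secondary and should not be differentiated term-by-term.

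Concretely, I would conjugate $\hat{F}_{f_S,T}(\tau)$ by $U_{CS}(s)$ for an arbitrary $s\in\mathbb{R}$, use the composition law $U_{CS}(s)\,U_{CS}(t) = U_{CS}(s+t)$, and then translate the integration variable $t \to t-s$. Since the Haar measure $dt$ on $\mathbf{G}\simeq(\mathbb{R},+)$ is translation-invariant and the integration domain is all of $\mathbb{R}$, this shift is free and gives
\[
U_{CS}(s)\,\hat{F}_{f_S,T}(\tau)\,U_{CS}^\dagger(s) = \hat{F}_{f_S,T}(\tau),\quad \forall\, s\in\mathbb{R}.
\]
Differentiating at $s=0$ produces $-i[\hat{C}_H,\hat{F}_{f_S,T}(\tau)] = 0$, which is Eq.~\eqref{RDOcommute1}. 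This argument never touches the details of $\hat f_S$ or $\hat H_S$, so it applies uniformly to the whole family of relational observables.

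The main subtlety, rather than a deep obstacle, is that the $t$-integral in Eq.~\eqref{RelationalDiracObservable} is over the noncompact group $\mathbf{G}=\mathbb{R}$ and therefore does not converge in the operator-norm topology on $\mathcal{H}_{\rm kin}$. Like the group-averaging projector $\delta(\hat{C}_H)$ in Eq.~\eqref{GAP}, $\hat{F}_{f_S,T}(\tau)$ is really defined as a sesquilinear form or as a map into (a space containing) $\mathcal{H}_{\rm phys}$, so the identity \eqref{RDOcommute1} is to be read algebraically, i.e.\ as an equality of matrix elements on a suitable dense test-state domain of the kind used in refined algebraic quantization~\cite{marolfRefinedAlgebraicQuantization1995,Marolf:2000iq,thiemannModernCanonicalQuantum2008}. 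This is exactly what is meant by ``(strong) Dirac observable'' in the statement, and the interchange of derivative and integral required at $s=0$ is justified on this domain by the translation invariance of the Haar measure, so no further analytic input beyond the construction already given in Sec.~\ref{sec_Dirac} is needed.
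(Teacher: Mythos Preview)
Your proposal is correct and follows essentially the same route as the paper: both use the $G$-twirl form of $\hat F_{f_S,T}(\tau)$, shift the integration variable by $s$ using the translation invariance of the Haar measure on $\mathbb{R}$ to establish invariance under (equivalently, commutation with) $U_{CS}(s)$ for all $s$, and then differentiate at $s=0$ to obtain Eq.~\eqref{RDOcommute1}. The only cosmetic difference is that the paper writes the intermediate identity as $U_{CS}(s)\,\hat F_{f_S,T}(\tau)=\hat F_{f_S,T}(\tau)\,U_{CS}(s)$ rather than in conjugation form, and does not spell out the formal/distributional caveats you add.
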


\begin{proof}
The proof is in Appendix \ref{Sec_theorems}.
\end{proof}

While the operator families in Eq.~\eqref{RelationalDiracObservable} are thus strong quantum Dirac observables, we will only be interested in their weak action, i.e.\ their action on $\ch_{\rm phys}$. To simplify notation, we introduce the notion of a \emph{quantum weak equality} between operators in analogy to the classical case, indicating their equality on the `quantum constraint surface' $\ch_{\rm phys}$:
\ba
\hat O_1&\approx&\hat O_2\label{qweak}\\
&&\Leftrightarrow\, \hat O_1\,\ket{\psi_{\rm phys}} = \hat O_2\,\ket{\psi_{\rm phys}}\,,\q\forall\,\ket{\psi_{\rm phys}}\in\ch_{\rm phys}\,.\nn
\ea

Furthermore, Let $\Pi_{\sigma_{SC}}$ be the projector from $\ch_S$ to its subspace spanned by all system energy eigenstates $\ket{E}_S$ with $E\in\sigma_{SC}$, with $\sigma_{SC}$ given in Eq.~\eqref{sigmasc}, i.e.\ those permitted upon solving the constraint. As such, we will denote this system Hilbert subspace ${\ch_S^{\rm phys}:=\Pi_{\sigma_{SC}}(\ch_S)}$ and refer to it as the \emph{physical system Hilbert space.} For later purpose, let us denote by 
\begin{align}
\hat f_S^{\rm phys} \ce \Pi_{\sigma_{SC}}\,\hat f_S\,\Pi_{\sigma_{SC}}, \nn
\end{align}
 the projection of an arbitrary $\hat f_S\in\cl\left(\ch_S\right)$ to $\cl\left(\ch_S^{\rm phys}\right)$.

We are now in a position to see that the quantum relational Dirac observables in Eq.~\eqref{RelationalDiracObservable} form weak equivalence classes, as shown by the following result.

\begin{Lemma}\label{lem_1}
The quantum relational Dirac observables $\hat F_{f_S,T}(\tau)$ and $\hat F_{f_S^{\rm phys},T}(\tau)$ are weakly equal, i.e.\ coincide on $\ch_{\rm phys}$. Hence, the relational Dirac observables associated to system observables   form equivalence classes where $\hat F_{f_S,T}(\tau)$ and $\hat F_{g_S,T}(\tau)$ are equivalent if   $\Pi_{\sigma_{SC}} \hat{f}_S\, \Pi_{\sigma_{SC}}=\Pi_{\sigma_{SC}} \hat{g}_S\, \Pi_{\sigma_{SC}}$.
\end{Lemma}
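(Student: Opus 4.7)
The strategy is to show that the difference $\hat F_{f_S,T}(\tau)-\hat F_{f_S^{\rm phys},T}(\tau)$ annihilates every physical state. Decompose
$$\hat f_S = \hat f_S^{\rm phys} + \hat r, \qquad \hat r \ce \hat f_S - \Pi_{\sigma_{SC}}\hat f_S\,\Pi_{\sigma_{SC}},$$
where $\hat r$ only contains ``off-physical'' pieces $\Pi\hat f_S(I-\Pi)$, $(I-\Pi)\hat f_S\Pi$, $(I-\Pi)\hat f_S(I-\Pi)$. Because the $G$-twirl $\cg$ is linear, it suffices to prove $\hat F_{r,T}(\tau)\ket{\psi_{\rm phys}}=0$ for every $\ket{\psi_{\rm phys}}\in\ch_{\rm phys}$.

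Using the last line of Eq.~\eqref{RelationalDiracObservable} together with the gauge-invariance $U_{CS}(t)\ket{\psi_{\rm phys}}=\ket{\psi_{\rm phys}}$, one has
$$\hat F_{r,T}(\tau)\ket{\psi_{\rm phys}}=\frac{1}{2\pi}\!\int_\mathbb{R}\!dt\,U_{CS}(t)\big(\ket{\tau}\!\bra{\tau}\otimes\hat r\big)\ket{\psi_{\rm phys}}.$$
Expanding $\ket{\psi_{\rm phys}}$ in the energy basis via Eq.~\eqref{GAP}, one observes that the system factor lies entirely in $\ch_S^{\rm phys}=\Pi_{\sigma_{SC}}(\ch_S)$, so only the component $(I-\Pi_{\sigma_{SC}})\hat f_S\Pi_{\sigma_{SC}}$ of $\hat r$ contributes. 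The resulting vector has system-energy support strictly outside $\sigma_{SC}$, i.e.\ on eigenvalues $E'$ with $-E'\notin\spec(\hat H_C)$.

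The last step is to perform the $t$-integration on each term of the form $\ket{\varepsilon}_C\ket{E'}_S$ (expanded using Eq.~\eqref{continuousClockState}). The temporal integral produces $\delta(\varepsilon+E')$, whose support requires $-E'\in\spec(\hat H_C)$; since this condition fails by construction, every term vanishes and hence $\hat F_{r,T}(\tau)\ket{\psi_{\rm phys}}=0$. This establishes $\hat F_{f_S,T}(\tau)\approx\hat F_{f_S^{\rm phys},T}(\tau)$ in the sense of Eq.~\eqref{qweak}. The equivalence-class statement then follows immediately: if $\Pi_{\sigma_{SC}}\hat f_S\Pi_{\sigma_{SC}}=\Pi_{\sigma_{SC}}\hat g_S\Pi_{\sigma_{SC}}$, then $\hat f_S^{\rm phys}=\hat g_S^{\rm phys}$, so by two applications of the first part $\hat F_{f_S,T}(\tau)\approx\hat F_{f_S^{\rm phys},T}(\tau)=\hat F_{g_S^{\rm phys},T}(\tau)\approx\hat F_{g_S,T}(\tau)$.

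The main subtlety I anticipate is the rigorous handling of the delta-function/$G$-twirl integral when $\spec(\hat H_C)$ is only a proper subset of $\mathbb{R}$, since the clock states are then non-orthogonal tempered distributions rather than elements of $\ch_C$; care is needed that ``the integral vanishes'' really means the resulting distribution has zero pairing with every state in $\ch_{\rm phys}$ (equivalently, that the overlap $\chi$ in Eq.~\eqref{clockStateOverlap2} suppresses the unwanted contribution after pairing with $\bra{\phi_{\rm phys}}$). Everything else is essentially bookkeeping.
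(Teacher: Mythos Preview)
Your proof is correct, but the paper takes a shorter and more structural route. Rather than decomposing $\hat f_S$ and explicitly killing the remainder via the $t$-integral and the resulting $\delta(\varepsilon+E')$, the paper simply sandwiches: since $(I_C\otimes\Pi_{\sigma_{SC}})\ket{\psi_{\rm phys}}=\ket{\psi_{\rm phys}}$ and, by Theorem~\ref{theorem1Eq}, $\hat F_{f_S,T}(\tau)\ket{\psi_{\rm phys}}$ is again a physical state, one may insert $I_C\otimes\Pi_{\sigma_{SC}}$ on \emph{both} sides of $\hat F_{f_S,T}(\tau)$. Then, because $[\Pi_{\sigma_{SC}},\hat H_S]=0$, the projectors commute through $U_{CS}(t)$ and land directly on $\hat f_S$, yielding $\hat F_{\Pi_{\sigma_{SC}}f_S\Pi_{\sigma_{SC}},T}(\tau)\ket{\psi_{\rm phys}}$ in three lines. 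Your approach is more self-contained (it does not invoke Theorem~\ref{theorem1Eq}) and makes the spectral mechanism explicit, at the cost of having to worry about the distributional nature of the $G$-twirl integral when $\spec(\hat H_C)\subsetneq\mathbb{R}$; the paper's argument sidesteps that entirely by never performing the $t$-integral. Both routes arrive at the same weak equality, and your derivation of the equivalence-class statement at the end is identical to the paper's.
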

\begin{proof}
The proof is given in Appendix~\ref{Sec_theorems}.
\end{proof}
When $\Pi_{\sigma_{SC}}$ is non-trivial, the set of relational Dirac observables $\hat F_{f_S,T}(\tau)$ associated to system observables $\hat f_S$ evolving relative to $E_T$ is therefore ``not as big" on the physical Hilbert space as the set of system observables $\hat f_S$ on $\ch_S$ itself.
The operators $\hat f_S^{\rm phys}$ thus label the weak equivalence classes of relational Dirac observables with respect to $E_T$.  This will become crucial when showing equivalence with the other approaches to relational quantum dynamics below. In particular, $\hat f_S^{\rm phys}$ will turn out to be the system operators of the Page-Wootters formalism.
 
In \cite{dittrichPartialCompleteObservables2007} it was shown that classically the relational Dirac observables in Eq.~\eqref{RelationalDiracObservable1} define weakly an algebra homomorphism $f\mapsto F_{f,T}(\tau)$ with respect to addition, multiplication, and the Poisson bracket.  The following theorem proves that the appropriate analog is also true in the quantum theory: the equivalence classes of relational Dirac observables inherit their algebraic properties on the physical Hilbert space directly from the algebraic properties of the operators $\hat f_S^{\rm phys}$ acting on $\ch_S^{\rm phys}$.

\begin{theorem}\label{thm_2a}
{The map 
\ba
 \mathbf{F}_T(\tau):\cl\left(\ch_S^{\rm phys}\right)&\rightarrow&\cl\left(\ch_{\rm phys}\right)\nn\\
\hat f^{\rm phys}_S\,&\mapsto&\hat F_{\hat f^{\rm phys}_S,T}(\tau)\nn
\ea
is weakly an algebra homomorphism with respect to addition, multiplication and the commutator. That is, the following holds:
\begin{align}
&\hat F_{f^{\rm phys}_S+g^{\rm phys}_S\cdot h^{\rm phys}_S,T}(\tau) \nn\\
&\hspace{3em}  \approx  \hat F_{f^{\rm phys}_S,T}(\tau)+\hat F_{g^{\rm phys}_S,T}(\tau)\cdot\hat F_{h^{\rm phys}_S,T}(\tau) ,\nn
\end{align}
and
\begin{align}
\left[\hat F_{f_S^{\rm phys},T}(\tau),\hat F_{g_S^{\rm phys},T}(\tau)\right] \approx\hat F_{[f^{\rm phys}_S,g^{\rm phys}_S],T}(\tau),\nn
\end{align}
where $\approx$ is the quantum weak equality of Eq.~\eqref{qweak}.
}
\end{theorem}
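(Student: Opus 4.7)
\emph{Proof plan.} The strategy is to dispatch additivity and the commutator identity by easy formal arguments, and to focus the main work on multiplicativity, which I would prove by direct computation on a basis of $\ch_{\rm phys}$. Additivity of $\mathbf F_T(\tau)$ follows at once from the linearity of the $G$-twirl $\cg$ appearing in Eq.~\eqref{RelationalDiracObservable}, giving $\hat F_{f^{\rm phys}_S+g^{\rm phys}_S,T}(\tau)=\hat F_{f^{\rm phys}_S,T}(\tau)+\hat F_{g^{\rm phys}_S,T}(\tau)$ as a strict operator identity on $\ch_{\rm kin}$ and hence a fortiori weakly on $\ch_{\rm phys}$. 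Once multiplicativity is in hand, the commutator identity is an immediate corollary via bilinearity: $[\hat F_{f^{\rm phys}_S,T}(\tau),\hat F_{g^{\rm phys}_S,T}(\tau)]\approx \hat F_{f^{\rm phys}_S\cdot g^{\rm phys}_S,T}(\tau)-\hat F_{g^{\rm phys}_S\cdot f^{\rm phys}_S,T}(\tau)=\hat F_{[f^{\rm phys}_S,g^{\rm phys}_S],T}(\tau)$.

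For multiplicativity, my plan is to evaluate both sides on the improper physical basis vectors $\ket{-E_0}_C\ket{E_0}_S$ with $E_0\in\sigma_{SC}$, which span $\ch_{\rm phys}$ by Eq.~\eqref{GAP}. Acting with $\hat F_{g^{\rm phys}_S,T}(\tau)$ on such a vector using the third line of Eq.~\eqref{RelationalDiracObservable} together with the clock-state expansion in Eq.~\eqref{continuousClockState}, the $t$-integral produces a Dirac delta that forces the output clock energy to equal minus the output system energy; combined with the requirement that $-E\in\spec(\hat H_C)=\sigma_c$, this restricts the intermediate system energies to $E\in\sigma_{SC}$ and reproduces the action of $\Pi_{\sigma_{SC}}$ already present in $\hat g^{\rm phys}_S$. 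The result is a physical state whose amplitudes are proportional to $\bra{E}\hat g^{\rm phys}_S\ket{E_0}$. Iterating by applying $\hat F_{f^{\rm phys}_S,T}(\tau)$ and performing the analogous clock integral collapses the intermediate energy sum to $\sum_{E\in\sigma_{SC}}\bra{E'}\hat f^{\rm phys}_S\ket{E}\bra{E}\hat g^{\rm phys}_S\ket{E_0}=\bra{E'}\hat f^{\rm phys}_S\cdot \hat g^{\rm phys}_S\ket{E_0}$, where the implicit $\Pi_{\sigma_{SC}}$ in the middle is absorbed into either factor since both already act trivially outside $\ch_S^{\rm phys}$. This coincides with the amplitudes obtained by applying $\hat F_{f^{\rm phys}_S\cdot g^{\rm phys}_S,T}(\tau)$ to the same basis vector directly from Eq.~\eqref{RelationalDiracObservable}.

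The main obstacle, and the reason the theorem is phrased in terms of $\hat f^{\rm phys}_S,\hat g^{\rm phys}_S\in\cl(\ch_S^{\rm phys})$ rather than arbitrary operators in $\cl(\ch_S)$, lies precisely in this intermediate collapse: for generic $\hat f_S,\hat g_S\in\cl(\ch_S)$ the intermediate energy sum would pick up contributions from $E\notin\sigma_{SC}$ which have no clock partner among the physical basis vectors, so the composition $\hat F_{f_S,T}\hat F_{g_S,T}$ does not reduce to the $G$-twirl of $\hat f_S\hat g_S$. The built-in projectors $\Pi_{\sigma_{SC}}$ inside $\hat f^{\rm phys}_S$ and $\hat g^{\rm phys}_S$ are exactly what is required to suppress these unphysical intermediate contributions, so that the product in $\cl(\ch_S^{\rm phys})$ matches the composition on $\ch_{\rm phys}$. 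This also explains why the identity is only a weak equality: as strict operators on $\ch_{\rm kin}$, $\hat F_{f^{\rm phys}_S,T}(\tau)\hat F_{g^{\rm phys}_S,T}(\tau)$ and $\hat F_{f^{\rm phys}_S\cdot g^{\rm phys}_S,T}(\tau)$ generically differ by an operator that annihilates $\ch_{\rm phys}$.
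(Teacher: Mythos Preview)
Your proposal is correct and reaches the same conclusion as the paper, but the route for the multiplicativity step differs in a way worth noting. The paper proceeds abstractly at the operator level: it writes $\hat F_{g^{\rm phys}_S,T}(\tau)\cdot\hat F_{h^{\rm phys}_S,T}(\tau)$ as a double group-averaging integral, uses the clock-state overlap $\braket{\tau|s+\tau}=\chi(s)^*$, invokes $U_{CS}^\dag(s)\ket{\psi_{\rm phys}}=\ket{\psi_{\rm phys}}$ on an \emph{arbitrary} physical state, and then reduces the remaining $s$-integral via the operator identity $\tfrac{1}{2\pi}\int_{\mathbb R}ds\,\chi^*(s)\,U_S(s)=\Pi_{\sigma_{SC}}$, after which $\Pi_{\sigma_{SC}}\hat h_S^{\rm phys}=\hat h_S^{\rm phys}$ finishes the argument. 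Your approach instead fixes the improper basis $\ket{-E_0}_C\ket{E_0}_S$ and lets the $t$-integral produce a delta in the clock energy, which enforces $-E\in\sigma_c$ and hence $E\in\sigma_{SC}$ directly. Both mechanisms implement exactly the same constraint on the intermediate system energies; yours is more elementary and transparent for this single theorem, while the paper's version has the advantage of isolating the reusable projector identity $\Pi_{\sigma_{SC}}=\tfrac{1}{2\pi}\int ds\,\chi^*(s)U_S(s)$, which it then exploits repeatedly in the proofs of Lemma~\ref{lem_3} and Theorems~\ref{thm_2}--\ref{thm_5}. Your explanation of why the statement must be restricted to $\cl(\ch_S^{\rm phys})$ and why only weak equality holds is also on point and matches the paper's reasoning.
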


\begin{proof}
The proof is given in Appendix~\ref{Sec_theorems}.
\end{proof}

Usually,  $\hat F_{f_S,T}(\tau)$ is required to be self-adjoint on $\ch_{\rm phys}$.\footnote{Alternatively, in line with the sprit of this paper, one could explore the generalization of observables on $\mathcal{H}_{\rm phys}$ defined in terms of POVMs rather that self-adjoint operators.} However, at this formal level, we shall not address this issue here, but only comment on it later in Sec.~\ref{sec_Unifying}. By utilizing covariant  POVMs, our procedure permits us to extend the construction of quantum relational Dirac observables to covariant time observables $E_T$ not necessarily described by a self-adjoint time operator. We shall discuss this further  in Sec.~\ref{sec_applications}.

As an aside, we note that {\it only} in the special case of an ideal clock ($\spec(\hat H_C)=\mathbb{R}$), in which case $\hat T\,\ket{t}=t\,\ket{t}$ and $\hat T$ is self-adjoint, can we simplify Eq.~\eqref{RelationalDiracObservable} to
\begin{align}
\hat{F}_{f_S,T}(\tau) &= e^{i \left(\tau-\hat{T}\right) \otimes \hat{H}_S}\,I_C\otimes \hat{f}_S\, e^{-i \left(\tau-\hat{T}\right) \otimes \hat{H}_S}. \nn %\label{RDOsimple}
\end{align}
Relational Dirac observables in this form have previously appeared  in the context of homogeneous quantum cosmology, e.g.\ see \cite{Ashtekar:1993wb}.

The relational quantum dynamics on $\ch_{\rm phys}$ then amount to letting the parameter $\tau$ run, which corresponds to the values that the time observable  $\hat T$ can take. In particular, one can evaluate the relational Dirac observables {on physical states}  using the physical inner product, $\langle \psi_{\rm phys}|\,\hat F_{f_S,T}(\tau)\,|\psi_{\rm phys}\rangle_{\rm phys}$, as defined in  Eq.~\eqref{PIP}. This provides a sense of evolution, despite physical states not evolving under the action of $\hat C_H$.

\subsection{Reduced phase space quantization\footnote{The following subsection is not strictly necessary for understanding the trinity in Sec.~\ref{sec_Unifying} and may be skipped on a first reading. We include it here for completeness as this method is an often employed formulation of relational dynamics. We will later discuss the relation between reduced phase space quantization and the trinity. It will also become useful for understanding the quantum analog of `gauge-invariant extensions of gauge-fixed quantities' in Sec.~\ref{sec_giegf}. }}\label{Heisenberg}

We separate this discussion into two parts, the first deals with classical phase space reduction and the second with  the quantization of the reduced phase space (see also~\cite{Thiemann:2004wk} for general comments on this topic in the context of relational Dirac observables).

\subsubsection{Classical phase space reduction}\label{sssec_psred}

The clock-neutral constraint surface $\cc$ is not a phase space, but rather a presymplectic manifold of dimension $\dim\cp_{\rm kin}-1$. However, the description of the dynamics relative to our choice of temporal reference system $T$ will lead to a phase space description. This is achieved through a gauge fixing procedure. Since $F_{f_S,T}(\tau)$ is constant along the gauge orbits while nevertheless fully encoding the dynamics through the parameter $\tau$, we are free to gauge fix to remove the now-redundant clock degrees of freedom without losing information. (We do not want to evolve the clock relative to itself \cite{hoehnHowSwitchRelational2018,Hoehn:2018whn}.) Since $\{T,C_H\}=1$, one can choose for simplicity $T=0$. For unbounded clocks with $G=\mathbb{R}$, to which we have restricted, this singles out exactly one point on each gauge orbit for which $T=\tilde T/h(H_C)$  is well-defined (cf.\ Sec.~\ref{Sec3a}). In line with the integrability property of the clock, we shall assume this to be the case on a dense subset of orbits, so that $T=0$ constitutes a good gauge fixing condition for almost all orbits. In the special case that $H_C=c\,p$, setting $T=q/c$ to zero is in fact valid for all orbits.

The reduced phase space is the space of gauge orbits, i.e., the quotient space $\cc/\!\!\sim$, where $\sim$ identifies points on a given orbit generated by $C_H$. With our (possibly only almost) globally valid gauge fixing condition at hand, $\cc\cap\cs_{T=0}$, where $\cs_{T=0}$ is the gauge fixing surface in $\cp_{\rm kin}$ defined by $T=0$ (see Fig.~\ref{ConstraintGeometryFig}), is equivalent to  $\cc/\!\!\sim$ (or a dense subset thereof). The Dirac bracket~\cite{diracLecturesQuantumMechanics1964,Henneaux:1992ig}, inducing the Poisson structure on this gauge fixed reduced phase space from that on $\cp_{\rm kin}$,  reads in this case
\begin{align}
\{F,G\}_D \ce \{F,G\}-\{F,C_H\}\{T,G\}+\{F,T\}\{C_H,G\},\nn
\end{align}
for all $F,G\text{ on }\cc$.
All Dirac brackets involving the now redundant clock variable $T$ and the constraint $C_H$ vanish, while $\{f_S,g_S\}_D\equiv \{f_S,g_S\}$ for $f_S,g_S$ functions on $\cp_S$. We can thus simply drop the redundant and fixed clock variables $(T=0,H_C=-H_S)$ and are left with a gauge fixed reduced phase space \cite{diracLecturesQuantumMechanics1964,Henneaux:1992ig}, henceforth denoted by $\cp_S^{\rm red}\simeq\cc\cap\cs_{T=0}$. 

To emphasize that the functions corresponding to system degrees of freedom now live on the phase space $\cp_S^{\rm red}$, we equip them with the label ${}^{\rm red}$, although as functions of the phase space coordinates they will be the same. Note that $\cp_S^{\rm red}$ need not necessarily be isomorphic to $\cp_S$ (see \cite{Ashtekar:1982wv,hoehnHowSwitchRelational2018} for simple examples). 
Indeed, {the $S$ degrees of freedom may be further restricted on $\cp_S^{\rm red}$:} due to having solved the constraints, $\text{image}_{\cp_S^{\rm red}}(H_S^{\rm red})=\text{image}_{\cp_S}(H_S)\cap\text{image}_{\cp_C}(-H_C)$, where $\text{image}_X(f)$ denotes the image of function $f$ on domain $X$. Notice also that here we are making use of the non-degeneracy condition on the clock Hamiltonian $H_C$. Since the $H_C=const$ surfaces are connected by assumption, the procedure yields a single reduced phase space. This will no longer be the case when the Hamiltonian is degenerate (e.g.\ \cite{Ashtekar:1982wv,hoehnHowSwitchRelational2018,Hoehn:2018whn,HLSrelativistic}).

This reduced phase space $\cp_S^{\rm red}$ is interpreted as the dynamics described relative to the temporal reference system $T$ \cite{hoehnHowSwitchRelational2018,Hoehn:2018whn}. Indeed, under the gauge fixing condition $T=0$, the relational Dirac observables in Eq.~\eqref{F2} reduce to 
\begin{align}\label{F3}
F^{\rm red}_{f_S}(\tau)=\sum_{n=0}^{\infty}\,\f{ \tau^n}{n!}\,\{f^{\rm red}_S H^{\rm red}_S\}_n,
\end{align}
where we made use of $\{f_S,g_S\}_D\equiv \{f_S,g_S\}$, as noted above. It is clear that they satisfy the standard equations of motion of the system $S$
\begin{align}
\f{d F^{\rm red}_{f_S}}{d\tau}=\{F^{\rm red}_{f_S}, H^{\rm red}_S\}_D\equiv\{F^{\rm red}_{f_S}, H^{\rm red}_S\},\label{redEoM}
\end{align}
but now interpreted relative to the dynamical clock $T$. In particular, given that the evolution parameter $\tau$ runs over all the possible values of $T$, we have $\tau\in G=\mathbb{R}$.

In the context of relational dynamics, this reduction procedure is often called a classical `deparametrization' with respect to the clock choice $T$. We construct the quantum analog in Sec.~\ref{DynamicsIII}.

\subsubsection{Reduced quantization}

We proceed with the quantization of the gauge fixed reduced phase space $\cp_S^{\rm red}$ on a suitable Hilbert space $\ch_S^{\rm red}$. %, i.e.\ of the physics relative to the reference system $T$. 
Given that $\cp_S^{\rm red}$ may not be isomorphic to $\cp_S$, $\ch_S^{\rm red}$ may differ from the system Hilbert space $\ch_S$ used in Dirac quantization. On $\cp_S^{\rm red}$ we choose a suitable Poisson subalgebra of functions $\tilde\ca_S$ and promote it to a quantum representation $\tilde\ca_S^Q$ on $\ch_S^{\rm red}$, from which the self-adjoint system observables, including the reduced system Hamiltonian $\hat{H}_S^{\rm red}$, are constructed. 
Arbitrary states of the system can then be expanded in the eigenbasis\footnote{Again, if $\hat H_S^{\rm red}$ had a degenerate spectrum, we would have to introduce additional  labels.} of $\hat{H}_S^{\rm red}$
\begin{align}
\ket{\psi^{\rm red}_S}=\q {\intsum}_{E\in\sigma_S^{\rm red}}\,\psi^{\rm red}_S(E)\,\ket{E}_S,\label{redstate}
\end{align}
where $\sigma_S^{\rm red} = \spec (\hat H_S^{\rm red})$.
 Assuming as usual that $\q {\intsum}_{E\in\sigma_S^{\rm red}}\,f(E)\,\braket{E|E'}=f(E')$ for an arbitrary complex function $f$,
 their inner product reads
\begin{align}
\la\psi^{\rm red}_S\ket{\phi^{\rm red}_S}=\q {\intsum}_{E\in\sigma_S^{\rm red}}\,\psi^*_S(E)\,\phi_S(E).\label{RIP}
\end{align}

{We emphasize that $\tilde \ca^Q_S$ may differ from $\ca^Q_S$ used in Dirac quantization (e.g. see~\cite{hoehnHowSwitchRelational2018,Ashtekar:1982wv,Loll:1990rx}), leading to possibly different spectral properties of self-adjoint observables.  {In general, the reduced Hamiltonian $\hat H_S^{\rm red}$ may not have the same spectrum as $\hat H_S$ does on $\ch_{\rm phys}$, that is,}  $\spec(\hat H_S^{\rm red})\equiv\spec(\hat{H}_S) \cap \spec(- \hat{H}_C)=\sigma_{SC}$ may not hold. }
{Firstly, our gauge-fixed phase space $\cp_S^{\rm red}$, which we are quantizing, may only be a dense subset of the full reduced phase space $\cc/\!\!\sim$, as discussed above. The latter may thus actually require a parametrization in terms of different coordinates than those used on $\cp_S^{\rm red}$. This is relevant as the procedure of Cauchy completion leading to $\mathcal{H}_S^{\rm red}$ should render the quantization of $\cc/\!\!\sim$ equivalent to the quantization of a dense subset thereof. Secondly, the value set of $H_S^{\rm red}$ on $\cp_S^{\rm red}$ may only be a strict subset of that of $H_S$ on $\cp_{\rm kin}$ due to having solved the constraints. Thus, while locally the structures of $\cp_S^{\rm red}$ and $\cp_S$ may agree, it is well known that global phase space properties severely influence which observables can be promoted to self-adjoint operators at all and, if they can, what their domain is, thereby directly affecting their spectral properties \cite{isham2}. }

{This entails repercussions for the relation between Dirac and reduced quantization, which, for the systems considered here, we can not always expect to be  exactly equivalent. Our work thereby adds to the previous literature on the relation of the two quantization methods (e.g.\ \cite{Ashtekar:1982wv,Kuchar:1986jj,Schleich:1990gd,Romano:1989zb,Loll:1990rx,Kunstatter:1991ds,Giesel:2016gxq}). There are, however, models for which we will be able to establish an exact equivalence. A sufficient condition is $\spec(\hat H_S^{\rm red})=\sigma_{SC}$, which clearly holds for arbitrary $\hat H_S$ in the simple case  where  $\hat H_C=c\,\hat p$ on $\ch_C=L^2(\mathbb{R})$. The equivalence will also  hold when $\hat H_C=\hat p^2/2+a_1\,e^{a_2\,\hat q}$  and $\hat H_S$ is (minus)  an arbitrary positive Hamiltonian}.

On $\ch^{\rm red}_S$ we can now define the quantization of the {\it reduced evolving observables} in Eq.~\eqref{F3} as
\begin{align}
\hat{F}^{\rm red}_{f_S}(\tau)&=\sum_{n=0}^{\infty}\,\f{(-i\tau)^n}{n!}\,[\hat f^{\rm red}_S \hat H_S^{\rm red}]_n\,\nn\\
&= e^{i\,\hat{H}_S^{\rm red}\,\tau}\,\hat{f}^{\rm red}_S\,e^{-i\,\hat{H}_S^{\rm red}\,\tau}\equiv \hat f^{\rm red}_S(\tau) , \label{F4}
\end{align}
where in the last line we have made use of the Baker-Campbell-Hausdorff formula.
For $\hat F^{\rm red}_{f_S}(\tau)$ to be self-adjoint on $\ch^{\rm red}_S$, the classical function $f_S$ must  be promoted to a self-adjoint operator, which may require a choice of factor ordering.

It is clear that the reduced evolving observables in Eq.~\eqref{F4} satisfy the quantum analog of Eq.~\eqref{redEoM}, namely the Heisenberg equations of motion with respect to $\tau$
\begin{align}
\f{d \hat{F}^{\rm red}_{f_S}}{d \tau}=i\,[\hat H_S^{\rm red},\hat{F}^{\rm red}_{f_S}]=i\,[\hat H_S^{\rm red},\hat f^{\rm red}_S]. \nn %\label{HeisenbergEoM}
\end{align}
In terms of expectation values, relational evolution takes the form $\braket{\psi^{\rm red}_S|\,\hat F^{\rm red}_{f_S}(\tau)\,|\psi^{\rm red}_S} $. Recall again that $\tau\in G=\mathbb{R}$.

Altogether, the states in the reduced quantum theory do not evolve in $\tau$, while observables do. Hence the result of reduced phase space quantization\footnote{The reduced quantum theory obtained through a clock gauge fixing is often also called a quantum theory that is `deparametrized' with respect to a clock choice.} yields a relational Heisenberg picture. Another relational Heisenberg picture will be obtained through quantum symmetry reduction in Sec.~\ref{sec_dirac2red}, which will be shown to be equivalent to the one above under certain conditions.

%========================================
%========================================

%========================================
%========================================
\section{The trinity of relational quantum dynamics}
\label{sec_Unifying}

Having introduced Dynamics I in Sec.~\ref{sec_Dirac}, defined in terms of relational Dirac observables, we now describe two additional {\it a priori} distinct formulations of relational quantum dynamics: the Page-Wootters formalism (Dynamics II) and the relational Heisenberg picture obtained from a quantum symmetry reduction procedure, which constitutes a \emph{quantum deparametrization} (Dynamics III). We establish the equivalence between these three relational  dynamics  under the condition that the clock Hamiltonian $\hat H_C$ has a continuous non-degenerate spectrum (this is generalized to a doubly degenerate spectrum in~\cite{HLSrelativistic} and to periodic i.e.\ discrete-spectrum clocks in \cite{HLS2}).  This is accomplished by formulating  Dynamics II and III  in terms of invertible quantum reduction maps from the physical Hilbert space $\ch_{\rm phys}$, defined by the constraint in Eq.~\eqref{WheelerDeWitt}, to reduced Hilbert spaces associated with the relational Schr\"{o}dinger picture of Dynamics II and the relational Heisenberg picture of Dynamics III. The relation between these three relational dynamics  is summarized in Fig.~\ref{picture1}.

While this immediately establishes the equivalence between quantum relational Dirac observables, the Page-Wootters formalism,  and the relational Heisenberg picture obtained through quantum reduction, it will not always be the case that the latter coincides with the relational Heisenberg picture of reduced phase space quantization described in Sec.~\ref{Heisenberg}. 

Moreover, the quantum reduction maps referenced above are isometries that can be used to map  observables in both the relational Schr\"{o}dinger and Heisenberg pictures to  quantum relational Dirac observables on $\ch_{\rm phys}$ of the form given Eq.~\eqref{RelationalDiracObservable}, and vice versa. As a by-product, this provides a new construction procedure for quantum relational Dirac observables from observables on the reduced Hilbert spaces associated with Dynamics~II and~III.

To help keep track of the numerous Hilbert spaces involved in establishing the trinity, we summarize them in Table~\ref{tab_table}.

\begin{table}[h]
    \centering
     \begin{tabular}{c c} 
 \hline
   Hilbert space  &  Description    \\
 \hline 
   $\mathcal{H}_C$ & Clock $C$ Hilbert space \\ 
$\mathcal{H}_S$ & System $S$ Hilbert space \\
$\mathcal{H}_{\rm kin} \simeq \mathcal{H}_C \otimes \mathcal{H}_S$  & Kinematical Hilbert space \\ 
$\mathcal{H}_{\rm phys} \simeq \delta(\hat{C}_H) (\mathcal{H}_{\rm kin})$  & Physical Hilbert space \\
$\mathcal{H}^{\rm phys}_S =\Pi_{\sigma_{SC}}(\ch_S) \subseteq\mathcal{H}_S$  & Physical system Hilbert space \\
\multirow{2}{*}{$\mathcal{H}^{\rm red}_S \subseteq \mathcal{H}_S$}  & \ \ System Hilbert space obtained \\
& by reduced quantization  \\ 
    \hline
\end{tabular}
\caption{The various Hilbert spaces used in the following discussion are summarized here. While $\ch_{\rm phys}\simeq\ch_S^{\rm phys}$ are \emph{always} isometric, they are only isometric to the Hilbert space $\ch_S^{\rm red}$ of reduced phase space quantization when Eq.~\eqref{samespec} is satisfied.} \label{tab_table} 
\end{table}

%========================================
%========================================
\subsection{Dynamics II: The Page-Wootters formalism}

The proposal of Page and Wootters~\cite{pageEvolutionEvolutionDynamics1983,woottersTimeReplacedQuantum1984,pageTimeInaccessibleObservable1989,pageClockTimeEntropy1994} also begins by quantizing the constraint in Eq.~\eqref{WheelerDeWitt}, and from a physical state $\ket{\psi_{\rm phys}}$ seeks to recover a relational quantum dynamics between the clock and system.\footnote{In most of the literature on the Page-Wootters formalism the physical state $\ket{\psi_{\rm phys}}$ is denoted as $\kket{\psi}$.} This is accomplished by phrasing any statement we would normally make about the time dependence of a system as a question conditional on the clock: What is the probability of an observable $\hat f_S$  associated with the system $S$ giving a particular outcome $f$, if the a clock measurement of the time observable $E_T$ yields the time $\tau$? 

We first introduce the Page-Wootters formalism and subsequently show its equivalence to the relational dynamics in terms of quantum relational Dirac observables defined in Sec.~\ref{sec_Dirac}.

\subsubsection{Introducing the Page-Wootters formalism}\label{Sec_PageWootters}

The clock states $\ket{\tau}$ are again taken to be the covariant ones defined in Eq.~\eqref{continuousClockState} (recall that we  have restricted to noncompact clocks for the remainder). Let $e_T(\tau)\ce \ket{\tau}\!\bra{\tau}$ be the `effect operator' corresponding to the clock reading~$\tau$. Similarly, suppose that the effect operator $e_{f_S}(f)$ is associated with the system observable $\hat f_S$  taking the value $f$. It is standard in the literature on the Page and Wootters approach to then compute the probability of $f$ given that the clock reads the time $\tau$ by postulating the Born rule in the following {form:\footnote{We highlight here the labels `kin' and `phys' to clarify the relation to the structures in Dirac quantization. This is not usually done in the literature on the Page and Wootters approach where  subtleties of Dirac quantization are often ignored.}
\begin{align}
\prob\left(f \ \mbox{when} \ \tau \right) &= \frac{\prob \left(f \ \mbox{and} \ \tau \right) }{\prob \left(\tau \right) } \label{AwhenT} \\
&=
\frac{\bra{\psi_{\rm phys} }e_T(\tau) \otimes e_{f_S}(f) \ket{\psi_{\rm phys} }_{\rm kin}}{\bra{\psi_{\rm phys} }e_T(\tau)\otimes I_S  \ket{\psi_{\rm phys} }_{\rm kin}}. \nn
\end{align}
We write here `postulate' as it has so far not been clarified in the literature whether these expectation values are actually gauge-invariant. In Sec.~\ref{sec_nondegI2III}, we shall show that these expectation values can be equivalently written in terms of the quantum relational Dirac observables and the physical inner product on $\ch_{\rm phys}$ of Sec.~\ref{sec_Dirac}. Since these structures are manifestly gauge-invariant, {this shows that indeed the conditional probability above is gauge-invariant}.

From a physical perspective, the conditional probabilities in Eq.~(\ref{AwhenT}) are usually justified as follows. To recover the Schr\"{o}dinger equation, let us define the conditional state of the system given that the clock reads $\tau$ as \begin{align}
\ket{\psi_{S}(\tau)} \ce \big( \bra{\tau} \otimes I_S  \big) \ket{\psi_{\rm phys}} .
\label{ConditionalState}
\end{align}
As shown in Refs.~\cite{pageEvolutionEvolutionDynamics1983,woottersTimeReplacedQuantum1984}, these conditional states satisfy the  Schr\"odinger equation in the clock time $\tau$:
\begin{align}
i\frac{d}{d\tau} \ket{\psi_S(\tau)} &= {i \frac{d}{d\tau} \bra{\tau'} e^{i\hat{H}_C(\tau-\tau')}}\otimes I_S\ket{\psi_{\rm phys}} \nn\\
&= - \bra{\tau} \hat{H}_C \otimes I_S \ket{\psi_{\rm phys}} \nn \\
&= - \bra{\tau} \big( \hat{C}_H -   I_C \otimes \hat{H}_S \big)\ket{\psi_{\rm phys}} \nn \\
&=  \hat{H}_S \ket{\psi_S(\tau)},\label{PaWSchrod}
\end{align}
where we have used Eq.~\eqref{covariancestate} to write the first equality and Eq.~\eqref{WheelerDeWitt} to moving from the second to third equality. 

Let us suppose that the physical state is normalized such that $\braket{\psi_S(\tau) | \psi_S(\tau)} = 1$ for all $\tau \in G$. This can be related to the normalization of physical states if we define the following inner product, first introduced in~\cite{Smith:2017pwx}, on the space of solutions to the quantum  constraint in Eq.~\eqref{WheelerDeWitt},
\begin{align}
\braket{\psi_{\rm phys} | \psi_{\rm phys}}_{\rm PW} &\ce \bra{\psi_{\rm phys}} \big( \ket{\tau}\!\bra{\tau} \otimes I_S \big) \ket{\psi_{\rm phys}}_{\rm kin} \nn \\
&=\braket{\psi_S(\tau) | \psi_S(\tau)} = 1,
\label{PWinnerproduct}
\end{align}
for all $\tau \in G = \mathbb{R}$. Notice that this defines {\it a priori} a different normalization on the space of solutions to Eq.~(\ref{WheelerDeWitt}) than the physical inner product in Eq.~(\ref{PIP}) obtained via group averaging. As such, the two inner products could {\it a priori} lead to two different Cauchy completions of the space of solutions to Eq.~\eqref{WheelerDeWitt}. However, in Sec.~\ref{sec_nondegI2III} we shall show that the physical inner product and the Page-Wootters inner product in Eq.~\eqref{PWinnerproduct} are, in fact, equivalent, and thereby do {\it not} give rise to two different physical Hilbert spaces. 

The definition of the Page-Wootters inner product in Eq.~(\ref{PWinnerproduct}) allows us to express the probability in Eq.~\eqref{AwhenT} purely in terms of the conditional state
\begin{align}
\prob \left(f \ \mbox{when} \ \tau\right ) = \braket{\psi_S(\tau) | e_{f_S}(f) | \psi_S(\tau)}.\label{SIP}
\end{align}
Given that the conditional state $\ket{\psi_S(t)}$ satisfies the Schr\"odinger equation \eqref{PaWSchrod}, this agrees with the standard time-dependent probability for the outcome $f$ of the system observable $\hat f_S$. In particular, the expectation value of $\hat f_S$ evolves as usual $\langle\hat f_S\rangle(\tau)=\braket{\psi_S(\tau)|\hat f_S|\psi_S(\tau)}$. Accordingly, it is justified to henceforth call the conditional state formulation of Page and Wootters the {\it relational Schr\"odinger picture}.

We mention an often neglected subtlety: since the conditional states in Eq.~\eqref{ConditionalState} come from conditioning the physical states in Eq.~\eqref{GAP}, it is clear that the space spanned by them is simply the physical system Hilbert space $\ch_S^{\rm phys}$, which may be a proper subspace of the system Hilbert space $\ch_S$ used in kinematical quantization. For consistency, we will thus restrict the permissible set of system observables in the conditional state formulation to any observable $\hat f_S$, acting on $\ch_S$, which leaves its subspace $\ch_S^{\rm phys}$ invariant. This will become relevant when showing equivalence with quantum relational observables below. Note that in the often considered special case $\hat H_C= c\,\hat p$, $\ch_S=\ch_S^{\rm phys}$ so that no such restriction applies.

\subsubsection{Equivalence of Dynamics I and  II}\label{sec_nondegI2III}

The central ingredient in the Page-Wootters formalisim is the definition of the conditional state in Eq.~\eqref{ConditionalState}, which defines what we will call the \emph{Page-Wootters reduction map} $\calr_{\mathbf S} : \mathcal{H}_{\rm phys} \to \mathcal{H}^{\rm phys}_S$, defined as
\begin{align}
\calr_{\mathbf S}(\tau) \ce \bra{\tau}\otimes I_S\label{rsp}.
\end{align}
The label $\mathbf{S}$ on the reduction map stands for `Schr\"odinger picture' to distinguish it from the Heisenberg picture reduction map of the following subsection. We write this label in bold face in order to also distinguish it from the italic $S$ which stands for `system'.
This map has a left inverse $\mathcal{H}^{\rm phys}_S \to \mathcal{H}_{\rm phys}$ from solutions $\ket{\psi_S(t=\tau)}$ of the Schr\"odinger equation Eq.~\eqref{PaWSchrod} {\it at the fixed time}\footnote{The input to the inverse map has to be a state $\ket{\psi_S(t=\tau)}$, not a family of states $\ket{\psi_S(t)}$.} $t=\tau$ 
\begin{align}
\calr_{\mathbf S}^{-1}(\tau)&\ce   \frac{1}{2\pi}\int_\mathbb{R} dt\,\ket{t}\otimes U_S(t-\tau) \nn \\
&=\delta(\hat C_H)\,(\ket{\tau}\otimes I_S).
\label{rsp1}
\end{align}
Indeed, 
\begin{align}
\calr_{\mathbf S}^{-1}(\tau)\calr_{\mathbf S}(\tau) \ket{\psi_{\rm phys}}\!
&=\!  \int_\mathbb{R} \frac{dt}{2 \pi} \ket{t}\!\bra{\tau}\otimes\!U_S(t-\tau)\!\ket{\psi_{\rm phys}} \nn \\
&=\int_\mathbb{R} \frac{dt}{2 \pi} \,\ket{t}\!\braket{t|\psi_{\rm phys}}\nn\\
&=\ket{\psi_{\rm phys}} , \nn 
\end{align}
using that the clock states form a resolution of the identity, Eq.~\eqref{completeness}. In particular,
\begin{align}
\calr_{\mathbf S}^{-1}(\tau)\,\calr_{\mathbf S}(\tau) = \delta(\hat C_H)\,\left(\ket{\tau}\! \bra{\tau}\otimes I_S \right)=I_{\rm phys}.\label{check}
\end{align}
Conversely, one finds the identity acting on conditional states (defined by clock $C$) in the form
\begin{align}
\calr_{\mathbf S}(\tau)\,\calr_{\mathbf S}^{-1}(\tau) &= \braket{\tau|\,\delta(\hat C_H)\,|\tau}  \nn \\
&=\f{1}{2\pi}\int_\mathbb{R}dt\,\chi^*(t-\tau)\,U_S(t-\tau)\nn\\
& =\Pi_{\sigma_{SC}}, \nn
%\label{PWIS}
\end{align}
where $\Pi_{\sigma_{SC}}$ is the projector onto the physical system Hilbert space and the last line follows from Eq.~\eqref{scproj}.

\begin{figure}[t]
\includegraphics[width= 245pt]{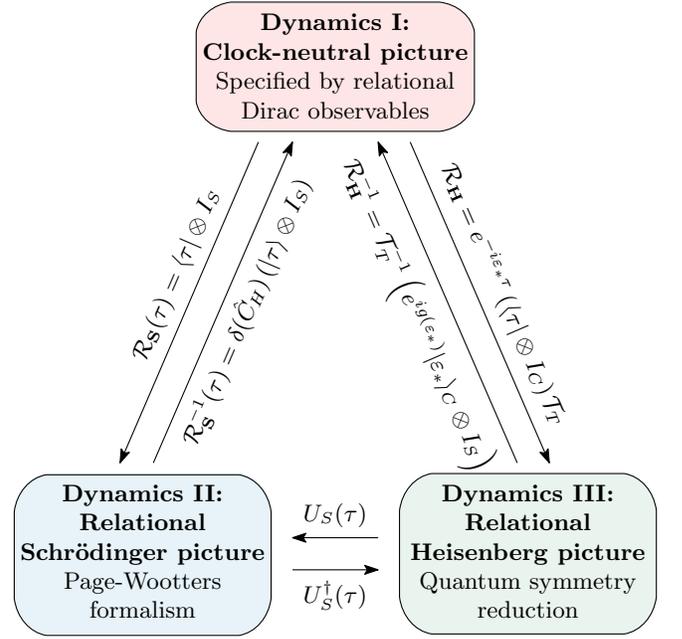}
\caption{The trinity of relational quantum dynamics.   This figure depicts the reduction maps from the physical Hilbert space $\mathcal{H}_{\rm phys}$ to the physical system Hilbert space $\mathcal{H}_S^{\rm phys}$ and their inverses. These maps are used to transform states and observables between the clock-neutral picture given by Dirac quantization, the relational Schr\"{o}dinger picture derived from the Page-Wootters formalism, and  the relational Heisenberg picture.  It is these maps that are used to prove the equivalence between these three relational quantum dynamics comprising the trinity. }
\label{picture1}
\end{figure}

The Page-Wootters reduction map and its inverse can be used to construct an encoding operation
${\mathcal{E}_{\mathbf S}^\tau \!: \,  \mathcal{L} \left( \mathcal{H}^{\rm phys}_S \right) \to \mathcal{L} \left( \mathcal{H}_{\rm phys} \right)}$,
where $\mathcal{L}(\mathcal{H})$ denotes the set of linear operators from $\mathcal{H}$ to itself. This operation encodes observables on $\mathcal{H}^{\rm phys}_S$ into Dirac observables acting on the physical Hilbert space $\mathcal{H}_{\rm phys}$ and is defined as
\begin{align}
\mathcal{E}_{\mathbf S}^\tau\left(\hat f_S^{\rm phys}\right) &\ce \calr_{\mathbf S}^{-1}(\tau)\,\hat f_S^{\rm phys} \, \calr_{\mathbf S}(\tau) \nn \\
&=\delta(\hat C_H)\left(\ket{\tau}\! \bra{\tau}\otimes \hat f_S^{\rm phys}\right).
\label{encode1}
\end{align}
Indeed, as the following theorem shows, this encoding reproduces precisely the quantum relational Dirac observables from Dirac quantization in sec.~\ref{sec_Dirac}. 

\begin{theorem}\label{thm_2}
Let $\hat f_S\in\cl\left(\ch_S\right)$. The quantum relational Dirac observable $\hat F_{f_S,T}(\tau)$ acting on $\ch_{\rm phys}$, Eq.~\eqref{RelationalDiracObservable},  reduces under $\calr_{\mathbf S}(\tau)$ to the corresponding  projected observable in the relational Schr\"{o}dinger picture on $\ch_S^{\rm phys}$, 
\begin{align}
\calr_{\mathbf S} \left(\tau\right)\,\hat F_{f_S,T}(\tau)\,\calr_{\mathbf S}^{-1}(\tau) = \Pi_{\sigma_{SC}} \hat{f}_S\, \Pi_{\sigma_{SC}}. \nn 
\end{align}
Conversely, let $\hat f_S^{\rm phys}\in\cl\left(\ch_S^{\rm phys}\right)$. The encoding operation in Eq.~\eqref{encode1} of system observables coincides \emph{on the physical Hilbert space} $\ch_{\rm phys}$ with the quantum relational Dirac observables in Eq.~\eqref{RelationalDiracObservable}, i.e.\
\begin{align}
\mathcal{E}_{\mathbf S}^{\tau}\left(\hat{f}^{\rm phys}_S\right)\approx\hat F_{f^{\rm phys}_S,T}(\tau),\label{encode2}
\end{align}
where $\approx$ is the quantum weak equality of Eq.~\eqref{qweak}.
\end{theorem}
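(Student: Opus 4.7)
\textbf{Proof plan for Theorem~\ref{thm_2}.} My plan is to prove the second statement first, since the first will then follow by a short computation using the left-inverse property in Eq.~\eqref{check}. The central observation is that the $G$-twirl form $\hat F_{f_S,T}(\tau)=\mathcal{G}\bigl(\ket{\tau}\!\bra{\tau}\otimes \hat f_S\bigr)$ of Eq.~\eqref{RelationalDiracObservable} collapses to the group-averaging projector $\delta(\hat C_H)$ as soon as it acts on a physical state, because $U_{CS}^\dagger(t)\ket{\psi_{\rm phys}}=\ket{\psi_{\rm phys}}$ by gauge invariance (Eq.~\eqref{WheelerDeWitt}).

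Concretely, for any $\ket{\psi_{\rm phys}}\in\ch_{\rm phys}$ and any $\hat f_S$ on $\ch_S$,
\begin{align}
\hat F_{f_S,T}(\tau)\ket{\psi_{\rm phys}}
&=\frac{1}{2\pi}\!\int_\mathbb{R}\! dt\, U_{CS}(t)\bigl(\ket{\tau}\!\bra{\tau}\otimes\hat f_S\bigr)U_{CS}^\dagger(t)\ket{\psi_{\rm phys}}\nn\\
&=\delta(\hat C_H)\bigl(\ket{\tau}\!\bra{\tau}\otimes\hat f_S\bigr)\ket{\psi_{\rm phys}}. \nn
\end{align}
Setting $\hat f_S=\hat f_S^{\rm phys}$ and recalling from Eq.~\eqref{encode1} that $\mathcal{E}_{\mathbf S}^{\tau}(\hat f_S^{\rm phys})=\delta(\hat C_H)\bigl(\ket{\tau}\!\bra{\tau}\otimes\hat f_S^{\rm phys}\bigr)$ establishes the weak equality in Eq.~\eqref{encode2}, which is the second assertion of the theorem.

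For the first assertion, I will act on an arbitrary $\ket{\psi_S}\in\ch_S^{\rm phys}$, noting that $\calr_{\mathbf S}^{-1}(\tau)\ket{\psi_S}\in\ch_{\rm phys}$ so that the computation just performed applies. Using $\calr_{\mathbf S}(\tau)\ket{\psi_{\rm phys}}=\ket{\psi_S(\tau)}$, the intermediate state factorizes as $\delta(\hat C_H)\bigl(\ket{\tau}\otimes \hat f_S\,\ket{\psi_S(\tau)}\bigr)=\calr_{\mathbf S}^{-1}(\tau)\,\hat f_S\,\ket{\psi_S(\tau)}$ by the form of $\calr_{\mathbf S}^{-1}$ in Eq.~\eqref{rsp1}. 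Applying $\calr_{\mathbf S}(\tau)$ from the left and invoking $\calr_{\mathbf S}(\tau)\calr_{\mathbf S}^{-1}(\tau)=\Pi_{\sigma_{SC}}$ (computed immediately after Eq.~\eqref{check} in the excerpt) then yields $\Pi_{\sigma_{SC}}\,\hat f_S\,\ket{\psi_S(\tau)}=\Pi_{\sigma_{SC}}\,\hat f_S\,\Pi_{\sigma_{SC}}\ket{\psi_S}$, since $\ket{\psi_S(\tau)}=\Pi_{\sigma_{SC}}\ket{\psi_S}=\ket{\psi_S}$. As $\ket{\psi_S}$ was arbitrary, this is the claimed operator identity.

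The only subtle point is bookkeeping about where weak equalities versus strict equalities hold: $\hat F_{f_S,T}(\tau)$ only coincides with $\delta(\hat C_H)(\ket{\tau}\!\bra{\tau}\otimes\hat f_S)$ after sandwiching by physical states, so the identity in part two is a quantum weak equality in the sense of Eq.~\eqref{qweak}, while the identity in part one is a genuine operator equality between maps $\ch_S^{\rm phys}\to\ch_S^{\rm phys}$ because $\calr_{\mathbf S}^{-1}(\tau)$ lands in $\ch_{\rm phys}$, precisely where the weak and strong actions agree. No other obstacle arises, since the continuous non-degenerate clock assumption guarantees $G=\mathbb{R}$, validity of the clock-state resolution of identity Eq.~\eqref{completeness}, and the explicit form of the inverse map.
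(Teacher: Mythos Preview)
Your argument is correct. For the second assertion, you and the paper run essentially the same computation in opposite directions: the paper starts from $\mathcal{E}_{\mathbf S}^{\tau}(\hat f_S^{\rm phys})$ and inserts $U_{CS}^\dagger(t)\ket{\psi_{\rm phys}}=\ket{\psi_{\rm phys}}$ to recover the $G$-twirl, while you start from the $G$-twirl and drop $U_{CS}^\dagger(t)$ on $\ket{\psi_{\rm phys}}$ to land on $\delta(\hat C_H)(\ket{\tau}\!\bra{\tau}\otimes\hat f_S^{\rm phys})$. This is the same proof.

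For the first assertion, however, your route genuinely differs from the paper's. The paper performs a direct operator-level computation: it expands $\calr_{\mathbf S}(\tau)\,\hat F_{f_S,T}(\tau)\,\calr_{\mathbf S}^{-1}(\tau)$ into a triple integral over clock times, evaluates the clock-state overlaps as $\chi$ functions, shifts integration variables, and then invokes the explicit identity $\Pi_{\sigma_{SC}}=\frac{1}{2\pi}\int_{\mathbb R}dt\,\chi^*(t)\,U_S(t)$ (derived case by case from Eq.~\eqref{clockStateOverlap2}) to recognize the two outer factors as projectors. Your proof sidesteps all of this: by acting on an arbitrary $\ket{\psi_S}\in\ch_S^{\rm phys}$, you immediately reduce to the weak equality already established in part two, and then need only the composition identity $\calr_{\mathbf S}(\tau)\,\calr_{\mathbf S}^{-1}(\tau)=\Pi_{\sigma_{SC}}$. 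The payoff of your approach is that it is shorter, more structural, and never touches the $\chi$ function or the spectral cases for $\sigma_c$; the payoff of the paper's direct computation is that it verifies the identity at the operator level without detouring through states and makes the appearance of the two projectors $\Pi_{\sigma_{SC}}$ explicit as genuine integral identities rather than as a consequence of the reduction-map algebra.
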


\begin{proof}
The proof is in Appendix \ref{Sec_theorems}.
\end{proof}

In particular, when $\Pi_{\sigma_{SC}}\neq I_S$, we have a many-to-one relation
\ba
\Pi_{\sigma_{SC}} \hat{f}_S\, \Pi_{\sigma_{SC}}=\hat f_S^{\rm phys}\nn.
\ea
Lemma~\ref{lem_1} asserts that $\hat F_{f_S,T}(\tau)$ coincides with $\hat F_{f_S^{\rm phys},T}(\tau)$ on $\ch_{\rm phys}$, which when combined with Theorem \ref{thm_2}  establishes a (formal) equivalence between the full sets of relational Dirac observables Eq.~\eqref{RelationalDiracObservable} on $\ch_{\rm phys}$ and system observables on $\ch_S^{\rm phys}$.

This construction {of Dirac observables in terms of the encoding map} elucidates that $\mathcal{E}_{\mathbf S}^{\tau}(\hat f_S^{\rm phys})$ corresponds to  the system observable $\hat f_S^{\rm phys}$  ``when the clock observable yields the value $\tau$''. This becomes especially clear through the next theorem. It shows that the expectation values of quantum relational Dirac observables in the physical inner product, Eq.~\eqref{PIP}, coincide with the expectation values of the  encoded system observables in the Page-Wootters inner product, Eq.~\eqref{PWinnerproduct}, and with the expectation values of the system observables in the relational Schr\"odinger picture, Eq.~\eqref{SIP}.

\begin{theorem}\label{thm_3}
Let $\hat f_S\in\cl\left(\ch_S\right)$ and $\hat f_S^{\rm phys} = \Pi_{\sigma_{SC}}\,\hat f_S\,\Pi_{\sigma_{SC}}$ be its associated operator on $\ch_S^{\rm phys}$. Then
\begin{align}
&\braket{\phi_{\rm phys} \,| \, \hat F_{f_S,T}(\tau) \,| \,\psi_{\rm phys}}_{\rm phys}  \nn \\
&\hspace{8em}  = \braket{\phi_S(\tau) \,| \,\hat f^{\rm phys}_S \,| \, \psi_S(\tau)}\nn\\
&\hspace{8em}  = \braket{\phi_{\rm phys} \,| \,\mathcal{E}_{\mathbf S}^{\tau}\left(\hat{f}^{\rm phys}_S \right) \,| \, \psi_{\rm phys}}_{\rm PW} ,\nn
\end{align}
where $\ket{\psi_S(\tau)} = \calr_{\mathbf S}(\tau)\ket{\psi_{\rm phys}}$.
\end{theorem}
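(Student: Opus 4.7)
The strategy is to chain together Lemma~\ref{lem_1}, Theorem~\ref{thm_2}, and the defining formulas Eqs.~\eqref{PIP} and~\eqref{PWinnerproduct} of the physical and Page-Wootters inner products, together with the identities $\calr_{\mathbf S}(\tau)\ket{\psi_{\rm phys}}=\ket{\psi_S(\tau)}$, $\calr_{\mathbf S}^{-1}(\tau)=\delta(\hat C_H)\,(\ket{\tau}\otimes I_S)$ from Eq.~\eqref{rsp1}, and $\calr_{\mathbf S}(\tau)\,\calr_{\mathbf S}^{-1}(\tau)=\Pi_{\sigma_{SC}}$ from Eq.~\eqref{check}. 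I would prove the two claimed equalities in turn.

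For the first equality, Lemma~\ref{lem_1} gives $\hat F_{f_S,T}(\tau)\approx\hat F_{f_S^{\rm phys},T}(\tau)$ on $\ch_{\rm phys}$, and Theorem~\ref{thm_2} further identifies this with $\mathcal{E}_{\mathbf S}^{\tau}(\hat f_S^{\rm phys})=\calr_{\mathbf S}^{-1}(\tau)\,\hat f_S^{\rm phys}\,\calr_{\mathbf S}(\tau)$. Applied to $\ket{\psi_{\rm phys}}$ and using the action of $\calr_{\mathbf S}(\tau)$, this yields $\hat F_{f_S,T}(\tau)\ket{\psi_{\rm phys}}\approx \delta(\hat C_H)\ket{\xi}$, with $\ket{\xi}\ce\ket{\tau}\otimes\hat f_S^{\rm phys}\ket{\psi_S(\tau)}$. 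Since $\ket{\xi}$ is a kinematical representative of the resulting physical state, Eq.~\eqref{PIP} rewrites the physical matrix element as the kinematical pairing $\bra{\phi_{\rm kin}}\delta(\hat C_H)\ket{\xi}_{\rm kin}$, for any kinematical representative $\ket{\phi_{\rm kin}}$ of $\ket{\phi_{\rm phys}}$. Formal self-adjointness of $\delta(\hat C_H)$ converts $\bra{\phi_{\rm kin}}\delta(\hat C_H)$ into the distributional bra $\bra{\phi_{\rm phys}}$; the remaining factor $\bra{\tau}\otimes I_S$ acting on $\ket{\phi_{\rm phys}}$ produces $\bra{\phi_S(\tau)}$, leaving precisely $\braket{\phi_S(\tau)|\hat f_S^{\rm phys}|\psi_S(\tau)}$.

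For the second equality, I would apply the natural bilinear extension of Eq.~\eqref{PWinnerproduct}, namely
\begin{align}
&\braket{\phi_{\rm phys}|\mathcal{E}_{\mathbf S}^{\tau}(\hat f_S^{\rm phys})|\psi_{\rm phys}}_{\rm PW} \nn \\
&\q = \bra{\phi_{\rm phys}}(\ket{\tau}\!\bra{\tau}\otimes I_S)\,\mathcal{E}_{\mathbf S}^{\tau}(\hat f_S^{\rm phys})\ket{\psi_{\rm phys}}_{\rm kin}. \nn
\end{align}
Substituting $\mathcal{E}_{\mathbf S}^{\tau}(\hat f_S^{\rm phys})=\calr_{\mathbf S}^{-1}(\tau)\,\hat f_S^{\rm phys}\,\calr_{\mathbf S}(\tau)$ and using the identity $(\ket{\tau}\!\bra{\tau}\otimes I_S)\,\calr_{\mathbf S}^{-1}(\tau)=\ket{\tau}\otimes\Pi_{\sigma_{SC}}$ on $\ch_S^{\rm phys}$, which follows directly from Eq.~\eqref{check}, together with the observation that $\hat f_S^{\rm phys}=\Pi_{\sigma_{SC}}\hat f_S\Pi_{\sigma_{SC}}$ already stabilizes $\ch_S^{\rm phys}$ (so $\Pi_{\sigma_{SC}}\hat f_S^{\rm phys}=\hat f_S^{\rm phys}$), collapses the right-hand side to $\braket{\phi_S(\tau)|\hat f_S^{\rm phys}|\psi_S(\tau)}$, closing the triangle with the Schr\"odinger expectation value.

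The principal subtlety I anticipate is the careful handling of the distributional projector $\delta(\hat C_H)$. Since physical states do not lie in $\ch_{\rm kin}$, every manipulation of $\delta(\hat C_H)$ as a self-adjoint operator and every kinematical pairing against a physical state must be interpreted within the rigged Hilbert space framework underlying group averaging. At the formal level adopted throughout the paper this is routine, and once granted, the remaining steps are immediate algebraic consequences of Lemma~\ref{lem_1}, Theorem~\ref{thm_2}, and the properties of $\calr_{\mathbf S}(\tau)$ and $\calr_{\mathbf S}^{-1}(\tau)$ encoded in Eqs.~\eqref{rsp1} and~\eqref{check}.
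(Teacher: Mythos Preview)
Your proposal is correct and follows essentially the same route as the paper: both invoke Lemma~\ref{lem_1} and Theorem~\ref{thm_2} to replace $\hat F_{f_S,T}(\tau)$ by the encoding $\mathcal{E}_{\mathbf S}^{\tau}(\hat f_S^{\rm phys})=\delta(\hat C_H)(\ket{\tau}\!\bra{\tau}\otimes\hat f_S^{\rm phys})$, then unwind the physical inner product Eq.~\eqref{PIP} to the Schr\"odinger expectation value. The only cosmetic difference is in closing the triangle: the paper inserts $I_{\rm phys}=\delta(\hat C_H)(\ket{\tau}\!\bra{\tau}\otimes I_S)$ from Eq.~\eqref{check} to pass from the physical inner product to the Page--Wootters one, whereas you reduce the Page--Wootters expression directly to the Schr\"odinger form using $\calr_{\mathbf S}(\tau)\calr_{\mathbf S}^{-1}(\tau)=\Pi_{\sigma_{SC}}$ (the identity just \emph{below} Eq.~\eqref{check}, not Eq.~\eqref{check} itself).
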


\begin{proof}
The proof is in Appendix \ref{Sec_theorems}.
\end{proof}

An  important corollary immediate follows. 
\begin{corol}
Setting $\hat f_S=\Pi_{\sigma_{SC}}$ in Theorem \ref{thm_3} shows the equivalence of the physical inner product in Eq.~\eqref{PIP} and the Page-Wootters inner product in Eq.~\eqref{PWinnerproduct} on $\ch_{\rm phys}$, and therefore that the Page-Wootters reduction map $\calr_{\mathbf S}(\tau)$ defines an isometry $\ch_{\rm phys}\rightarrow\ch^{\rm phys}_S$. That is,
\begin{align}
\braket{\phi_{\rm phys}\,|\,\psi_{\rm phys}}_{\rm phys} &= \braket{\phi_{\rm phys}\,|\,\psi_{\rm phys}}_{\rm PW}\nn\\
&=\braket{\phi_S(\tau)\,|\,\psi_S(\tau)}, \nn
\end{align}
for all conditional and physical states related by $\ket{\psi_S(\tau)}=\calr_{\mathbf S}(\tau)\,\ket{\psi_{\rm phys}}$.
\label{InnerProductTheorem}
\end{corol}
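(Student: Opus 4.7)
The plan is to derive this corollary as a direct consequence of Theorem~\ref{thm_3} with the specific choice $\hat f_S = \Pi_{\sigma_{SC}}$, supplemented by two short observations: that $\Pi_{\sigma_{SC}}$ acts as the identity on $\ch_S^{\rm phys}$, and that the associated Dirac observable $\hat F_{\Pi_{\sigma_{SC}},T}(\tau)$ acts weakly as $I_{\rm phys}$ on $\ch_{\rm phys}$.

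First, I would observe that $\hat f_S^{\rm phys} = \Pi_{\sigma_{SC}}\,\Pi_{\sigma_{SC}}\,\Pi_{\sigma_{SC}} = \Pi_{\sigma_{SC}}$, which by definition is the identity operator $I_{\ch_S^{\rm phys}}$ on the physical system Hilbert space. Since the conditional states $\ket{\psi_S(\tau)} = \calr_{\mathbf S}(\tau)\ket{\psi_{\rm phys}}$ lie in $\ch_S^{\rm phys}$ (as is manifest from the expansion of the group-averaged state in Eq.~\eqref{GAP}), the middle expression in Theorem~\ref{thm_3} collapses to $\braket{\phi_S(\tau)|\psi_S(\tau)}$.

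Next, I would note that, by the definition of the encoding map Eq.~\eqref{encode1} applied to $I_{\ch_S^{\rm phys}} = \Pi_{\sigma_{SC}}$, one has $\mathcal{E}_{\mathbf S}^\tau(\Pi_{\sigma_{SC}}) = \calr_{\mathbf S}^{-1}(\tau)\,\calr_{\mathbf S}(\tau) = I_{\rm phys}$ by Eq.~\eqref{check}. Combined with the identification $\mathcal{E}_{\mathbf S}^\tau(\hat f_S^{\rm phys}) \approx \hat F_{\hat f_S^{\rm phys},T}(\tau)$ of Theorem~\ref{thm_2} (Eq.~\eqref{encode2}), this yields the weak equality $\hat F_{\Pi_{\sigma_{SC}},T}(\tau) \approx I_{\rm phys}$. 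Hence the leftmost expression in Theorem~\ref{thm_3} becomes $\braket{\phi_{\rm phys}|\psi_{\rm phys}}_{\rm phys}$. Simultaneously, the defining relation for the Page-Wootters inner product, Eq.~\eqref{PWinnerproduct}, combined with Eq.~\eqref{ConditionalState}, gives $\braket{\phi_{\rm phys}|\psi_{\rm phys}}_{\rm PW} = \braket{\phi_S(\tau)|\psi_S(\tau)}$ for any $\tau\in G = \mathbb{R}$.

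Chaining these three equalities through the common middle quantity $\braket{\phi_S(\tau)|\psi_S(\tau)}$ yields the displayed identity
\begin{align}
\braket{\phi_{\rm phys}|\psi_{\rm phys}}_{\rm phys} = \braket{\phi_{\rm phys}|\psi_{\rm phys}}_{\rm PW} = \braket{\phi_S(\tau)|\psi_S(\tau)} . \nn
\end{align}
The final claim, that $\calr_{\mathbf S}(\tau) : \ch_{\rm phys} \to \ch_S^{\rm phys}$ is an isometry, then follows immediately by reading the second equality as the statement that $\calr_{\mathbf S}(\tau)$ preserves inner products, combined with the fact that its image lies in $\ch_S^{\rm phys}$ as established above. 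No step presents a genuine obstacle; the only point requiring care is bookkeeping the projector $\Pi_{\sigma_{SC}}$ through the three reduction/encoding relations so that the identification with identity operators on the respective spaces is unambiguous.
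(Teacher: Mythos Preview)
Your proposal is correct and follows exactly the route indicated in the paper, which presents the corollary as an immediate consequence of Theorem~\ref{thm_3} with the substitution $\hat f_S=\Pi_{\sigma_{SC}}$ and offers no further proof. You have simply filled in the bookkeeping: that $\Pi_{\sigma_{SC}}$ acts as the identity on $\ch_S^{\rm phys}$ so the middle term collapses, and that $\hat F_{\Pi_{\sigma_{SC}},T}(\tau)\approx I_{\rm phys}$ (which one may also see directly by noting $\hat F_{I_S,T}(\tau)=\cg(\ket{\tau}\!\bra{\tau}\otimes I_S)=I_{\rm kin}$ via the clock-state resolution of the identity, combined with Lemma~\ref{lem_1}); your route through the encoding map and Eq.~\eqref{check} is equally valid.
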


Hence, the two inner products for physical states (formally) define the same physical Hilbert space $\ch_{\rm phys}$. Furthermore, since the Page-Wootters reduction map $\calr_{\mathbf S}(\tau)$ is invertible,  this section proves the formal equivalence of the relational quantum dynamics on $\ch_{\rm phys}$ as encoded in quantum relational Dirac observables and on $\ch^{\rm phys}_{S}$ as encoded in the relational Schr\"odinger picture of the Page-Wootters formalism. In particular, the above results show that the Page-Wootters formalism is manifestly gauge invariant (and therefore physically further justified), which to the best of our knowledge was not explicitly established before. 

As a final remark, note that Theorem \ref{thm_3} shows formally that if the system observable $\hat f^{\rm phys}_S$ is self-adjoint on $\ch^{\rm phys}_S$, then so should be $\hat F_{f_S,T}(\tau)$ on $\ch_{\rm phys}$, given the invertibility of $\calr_{\mathbf S}(\tau)$.

 \subsection{Dynamics III: Relational Heisenberg picture through quantum deparametrization} \label{DynamicsIII}

{First}, we showcase the quantum symmetry reduction procedure taking us from the clock-neutral Dirac quantization to {a relational Heisenberg picture} relative to clock {observable $E_T$}. {Thereafter, we explore the relation with reduced quantization.} As shown in \cite{hoehnHowSwitchRelational2018,Hoehn:2018whn} {(see also \cite{Vanrietvelde:2018pgb,Vanrietvelde:2018dit} for spatial quantum reference frames)}, this procedure consists of two steps:
\begin{enumerate}
\item \emph{Constraint trivialization}: A transformation of the constraint such that it only acts on the chosen reference system (here a clock), fixing its degrees of freedom.
\item \emph{Conditioning on classical gauge fixing conditions}: A `projection' which removes the now redundant reference frame degrees of freedom.\footnote{While it is a true projection on the kinematical Hilbert space, it is not a projection when applied to the physical Hilbert space, as it only removes redundant information, namely degrees of freedom already fixed through the constraint. No physical information is lost. Hence, we put projection into quotation marks as it can be inverted for physical (but not for kinematical) states.} 
\end{enumerate}

This quantum symmetry reduction procedure constitutes a quantum deparametrization.

\subsubsection{Quantum symmetry reduction and equivalence with Dynamics I}\label{sec_dirac2red}

We define the trivialization map {on $\ch_{\rm kin}$ and $\mathcal{H}_{\rm phys}$} relative to the covariant time observable in terms of its $n^\text{th}$ moment operators:
\begin{align}
\ct_{T}&\ce \sum_{n=0}^\infty\,\f{i^n}{n!}\,\hat T^{(n)}\otimes\left(\hat H_S+\varepsilon_*\right)^n \nn \\
&= \frac{1}{2\pi} \int_\mathbb{R} dt \, \ket{t}\! \bra{t} \otimes e^{it (\hat{H}_S + \varepsilon_*)}, 
\label{trivial1}
\end{align}
 and require that $\varepsilon_*\in\spec(\hat H_C)$. The reason for the latter requirement will become clear shortly.  {Let us also define} 
 \begin{align}
\ct_T^{-1}\ce \frac{1}{2\pi} \int_\mathbb{R} dt \, \ket{t}\! \bra{t} \otimes e^{-it (\hat{H}_S + \varepsilon_*)},\label{trivial1b}
\end{align}
which will turn out to be the inverse of the trivialization on $\ch_{\rm phys}$, as established in the following Lemma.
We note that the trivialization map and {$\ct_T^{-1}$}  need not be unitary on $\ch_{\rm kin}$ since $\hat T^{(n)}$ need not be self-adjoint  (cf.\ Sec.~\ref{sssec_contspec}). However, this will not be a problem  as we are only interested in its action on $\ch_{\rm phys}$, where the following holds:

\begin{Lemma}\label{lem_2}
The trivialization map given in Eq.~\eqref{trivial1} trivializes the constraint to the clock degrees of freedom
\begin{align}
\ct_T\,\hat C_H\,\ct_T^{-1} = \left(\hat H_C-\varepsilon_*\right)\otimes I_S,\label{trivial3}
\end{align}
{for any $\varepsilon_*\in\mathbb{R}$. Furthermore, for $\varepsilon_*\in\spec(\Hat H_C)$, $\ct_T^{-1}$ is the left inverse of $\ct_T$ \emph{on physical states},
\begin{align}
\ct_T^{-1}\,\circ\,\ct_T  \,\approx I_{\rm phys}, \nn
\end{align}
and the trivialization}
 transforms physical states into product states with a  fixed and redundant clock factor
\begin{align}
\ct_T\,\ket{\psi_{\rm phys}} &= e^{i\,g(\varepsilon_*)}\, \ket{\varepsilon_*}_C\label{trivial4}\\
&\quad \otimes\, \,\,\,{{\intsum}_{E\in\sigma_{SC}}}\,e^{-i\,g(-E)}\,\psi_{\rm kin}(-E,E)\,\ket{E}_S.\nn
\end{align}
\end{Lemma}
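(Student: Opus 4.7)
The natural approach is to work in the joint eigenbasis $\{\ket{\varepsilon}_C\otimes\ket{E}_S\}$ of $\hat H_C\otimes I_S$ and $I_C\otimes \hat H_S$, since the trivialization is diagonal in $\hat H_S$ on the system factor and only mixes the clock's energy content through an overall shift. Using $\langle t\ket{\varepsilon}=e^{-ig(\varepsilon)}e^{i\varepsilon t}$ from Eq.~\eqref{continuousClockState} together with the generalized Fourier relation in Eq.~\eqref{fourier}, I would first establish the ``shift identity''
\begin{align*}
\ct_T\,\ket{\varepsilon}_C\ket{E}_S &= e^{i[g(\varepsilon+E+\varepsilon_*)-g(\varepsilon)]}\,\ket{\varepsilon+E+\varepsilon_*}_C\ket{E}_S,\\
\ct_T^{-1}\,\ket{\varepsilon'}_C\ket{E}_S &= e^{i[g(\varepsilon'-E-\varepsilon_*)-g(\varepsilon')]}\,\ket{\varepsilon'-E-\varepsilon_*}_C\ket{E}_S.
\end{align*}
This single pair of identities, obtained by integrating the clock-state resolution $\langle t\ket{\varepsilon}$ against the system exponential inside $\ct_T$ and recognizing the result as another clock energy eigenvector, is the workhorse behind all three assertions.

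For claim \eqref{trivial4}, I would simply apply the shift identity to each term in the group-averaged expression Eq.~\eqref{GAP}. Every term has $\varepsilon=-E$, so the clock label is shifted to $-E+E+\varepsilon_*=\varepsilon_*$ \emph{independently of $E$}; the factor $\ket{\varepsilon_*}_C$ and the phase $e^{ig(\varepsilon_*)}$ pull out of the sum, producing exactly the claimed product state. This is the step that forces $\varepsilon_*\in\spec(\hat H_C)$: otherwise $\ket{\varepsilon_*}_C$ is not a legitimate (generalized) eigenstate and the image of $\ct_T$ on $\ch_{\rm phys}$ would not sit inside $\ch_{\rm kin}$.

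For claim \eqref{trivial3}, composing the two shift identities on an arbitrary basis vector yields
$\ct_T\,\hat C_H\,\ct_T^{-1}\,\ket{\varepsilon'}_C\ket{E}_S=(\varepsilon'-\varepsilon_*)\,\ket{\varepsilon'}_C\ket{E}_S$,
because the $\hat H_S$ eigenvalue $E$ generated when $\hat C_H$ acts on the intermediate state is cancelled by the opposite shift when $\ct_T$ acts from the left, and all $g$-phases telescope. Since this holds for every energy eigenvector it holds as an operator identity on $\ch_{\rm kin}$, with no restriction on $\varepsilon_*$. Claim (2) follows from exactly the same calculation with $\hat C_H$ replaced by the identity: $\ct_T^{-1}\ct_T\,\ket{\varepsilon}_C\ket{E}_S=\ket{\varepsilon}_C\ket{E}_S$ on the basis; restricting to $\ch_{\rm phys}$, where $E\in\sigma_{SC}$ and hence $-E\in\spec(\hat H_C)$, the intermediate state $\ct_T\ket{\psi_{\rm phys}}$ from claim (3) is a well-defined element of $\ch_{\rm kin}$ whenever $\varepsilon_*\in\spec(\hat H_C)$, and the basis identity then gives $\ct_T^{-1}\circ\ct_T\approx I_{\rm phys}$.

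The main technical hurdle is not algebraic but functional-analytic: the clock states $\ket{t}$ and $\ket{\varepsilon}$ are generally non-normalizable (cf.\ Sec.~\ref{sssec_contspec}) and $\hat T^{(n)}$ need not be self-adjoint, so the shift identity must be interpreted in the rigged-Hilbert-space sense and the Fourier inversion invoked only on dense domains. In particular, one must verify that the relevant integrals over $dt$ converge as tempered distributions and that the point $\varepsilon_*$, if on the boundary of $\spec(\hat H_C)$, still yields a well-defined generalized eigenvector; once these distributional issues are handled in the same spirit as Eq.~\eqref{fourier}, the three assertions follow immediately from the shift identity above.
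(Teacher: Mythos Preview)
Your shift-identity approach is essentially what the paper uses for claims~\eqref{trivial4} and the left-inverse property: the paper's Eq.~\eqref{nonunitaryshift} is precisely your $\ct_T$ shift identity, derived by the same Fourier computation, and the action on physical states follows exactly as you describe. For these two claims your argument is correct and equivalent to the paper's.

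For claim~\eqref{trivial3}, however, your basis-vector argument has a genuine gap when $\spec(\hat H_C)\subsetneq\mathbb{R}$. Your shift identity for $\ct_T^{-1}$ only holds when the shifted energy $\varepsilon'-E-\varepsilon_*$ lies in $\spec(\hat H_C)$; otherwise the integral collapses to zero (this is exactly the case distinction in the paper's Eq.~\eqref{nonunitaryshift}). Hence for a generic basis vector $\ket{\varepsilon'}_C\ket{E}_S$ with $\varepsilon'-E-\varepsilon_*\notin\spec(\hat H_C)$ you get $\ct_T\hat C_H\ct_T^{-1}\ket{\varepsilon'}\ket{E}=0$, not $(\varepsilon'-\varepsilon_*)\ket{\varepsilon'}\ket{E}$, so the claimed operator identity on $\ch_{\rm kin}$ does not follow from your calculation. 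The paper circumvents this by never acting $\ct_T^{-1}$ on arbitrary kinematical states: it instead computes $[\hat T^{(n)},\hat H_C]=in\,\hat T^{(n-1)}$ from the covariance of the clock states, sums the series to obtain $[\ct_T,\hat H_C]=-I_C\otimes(\hat H_S+\varepsilon_*)\,\ct_T$, and thereby establishes the intertwining relation $\ct_T\,\hat C_H=(\hat H_C-\varepsilon_*)\,\ct_T$ directly, valid for any $\varepsilon_*$ and without any spectral restriction on intermediate states. This is the step you are missing; the shift identity is too coarse a tool for the non-ideal clock case because $\ct_T$ is then not invertible on all of $\ch_{\rm kin}$.
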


\begin{proof}
The proof is given in Appendix \ref{Sec_theorems}.
\end{proof}

Equation~\eqref{trivial3} holds regardless of the value of $\varepsilon_*$, while Eq.~\eqref{trivial4} is only true for $\varepsilon_*\in\spec(\hat H_C)$. Indeed, $\hat C_H$ and $\hat H_C-\varepsilon_*$ will only have the same spectrum if $\ct_T$ is unitary on $\ch_{\rm kin}$, which is only true if $\spec(\hat H_C)=\mathbb{R}$, in which case the clock states are orthogonal and $\hat T^{(n)}$ is self-adjoint (cf.\ Sec.~\ref{sssec_contspec}). For example, if $\hat H_C$ is bounded and $\hat H_S$ is unbounded, then $\hat C_H$ and $\hat H_C-\varepsilon_*$ will have distinct spectra. However, this is of no concern to us since we are not interested in the full spectrum of $\hat C_H$ on $\ch_{\rm kin}$, but only in its zero-eigenspace, namely the space $\ch_{\rm phys}$ of physical states. Here, we will need $\ct_T$ to be invertible and to preserve the zero-eigenvalue, which is the case when $\varepsilon_*\in\spec(\hat H_C)$.

We emphasize that $\ct_T$ is not a transformation on~{$\ch_{\rm phys}$, but instead a transformation {\it of} it}, since clearly Eq.~\eqref{trivial4} no longer satisfies the original constraint Eq.~\eqref{WheelerDeWitt}, but {instead} the transformed constraint Eq.~\eqref{trivial3}. {Note that the trivialization map disentangles the clock and system, which were originally entangled in the physical state given in Eq.~\eqref{GAP}. We will discuss this point in more depth  in Sec.~\ref{sec_reinterpret}.}

The redundant clock factor in Eq.~\eqref{trivial4} carries no more information about the original state $\ket{\psi_{\rm phys}}$ and can be removed by a `projection' onto the classical gauge fixing condition $T=\tau$, cf.\ Sec.\ \ref{sssec_psred}. Accordingly, we define the complete quantum symmetry reduction map $\ch_{\rm phys}\rightarrow { \calr_{\mathbf H}(\ch_{\rm phys})}$ to the relational Heisenberg picture {(generalizing the procedure of \cite{hoehnHowSwitchRelational2018,Hoehn:2018whn} to include also non-orthogonal clock states)} as
\begin{align}
\calr_{\mathbf H}\ce e^{-i\,\varepsilon_*\,\tau}\,(\bra{\tau}\otimes I_S)\,\ct_T.\label{RRQ}
\end{align}
It follows from Eqs.~\eqref{continuousClockState} and \eqref{trivial4} that
\begin{align}
\calr_{\mathbf H}\,\ket{\psi_{\rm phys}}&
= \, \,\,\,{{\intsum}_{E\in\sigma_{SC}}}\,e^{-i\,g(-E)}\,\psi_{\rm kin}(-E,E)\,\ket{E}_S,\label{redstate2}
\end{align}
which is independent of the parameter $\tau$. For this reason we do not write this quantum deparametrization map as a function of the clock reading $
\tau$, in contrast to the Page-Wootters reduction in Eq.~\eqref{rsp}; on physical states the \emph{a priori} $\tau$-dependence on the right hand side of Eq.~\eqref{RRQ} drops out.  We can interpret this as a state in a relational Heisenberg picture, provided we make the further identification 
\begin{align}
{\psi_S(E)}\,\equiv e^{-i\,g(-E)}\,\psi_{\rm kin}(-E,E).\label{ident}
\end{align}
Notice that since $\psi_S(E)$ is square-integrable/summable, we have $\calr_{\mathbf H}(\ch_{\rm phys})\simeq\ch_S^{\rm phys}\subseteq\ch_S$, and so again find the physical system Hilbert space as the image of the quantum symmetry reduction.} The label $\mathbf{H}$ will henceforth signify `Heisenberg picture'.

The  inverse of the reduction map in Eq.~\eqref{RRQ} is~\cite{hoehnHowSwitchRelational2018,Hoehn:2018whn}
\begin{align}
\calr_{\mathbf H}^{-1}&=\ct_T^{-1}\,\left(e^{i\,g(\varepsilon_*)}\,\ket{\varepsilon_*}_C\otimes I_S\right) .\label{invtriviala}
\end{align}
Indeed, we have the following:

{\begin{Lemma}\label{lem_3}
On physical states, the quantum symmetry reduction map is equal to
\ba
\calr_{\mathbf H}\approx\bra{\tau}\otimes U_S^\dag(\tau)\,\label{rsp2}
\ea
while its inverse can  be written as
\ba
\calr_{\mathbf H}^{-1}=\delta (\hat{C}_H ) \left( \ket{\tau}\otimes U_S(\tau) \right).\label{rhp2}
\ea
Moreover, the two maps are the appropriate inverses of one another: 
\begin{align}
\calr_{\mathbf H}^{-1}\,\circ\,\calr_{\mathbf H}&=I_{\rm phys},\nn\\
\calr_{\mathbf H}\,\circ\,\calr_{\mathbf H}^{-1}&=\Pi_{\sigma_{SC}}. \nn 
\end{align}
\end{Lemma}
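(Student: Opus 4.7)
My overall approach is to verify each of the four assertions by direct computation, with Lemma~\ref{lem_2} providing the key input. For the formula $\calr_{\mathbf H}\approx\bra{\tau}\otimes U_S^\dag(\tau)$, I would substitute Eq.~\eqref{trivial4} into the definition Eq.~\eqref{RRQ}. Acting with $\bra{\tau}\otimes I_S$ on the trivialized state contributes an overall scalar $\braket{\tau|\varepsilon_*}_C=e^{-ig(\varepsilon_*)}e^{i\varepsilon_*\tau}$, inherited from Eq.~\eqref{continuousClockState}, which combines with the prefactor $e^{-i\varepsilon_*\tau}$ and the phase $e^{ig(\varepsilon_*)}$ already present in Eq.~\eqref{trivial4} to give unity, reproducing precisely Eq.~\eqref{redstate2}. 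Independently, expanding $\ket{\psi_{\rm phys}}$ in the energy eigenbasis via Eq.~\eqref{GAP} and applying $\bra{\tau}\otimes U_S^\dag(\tau)$ term by term yields $\braket{\tau|{-E}}_C=e^{-ig(-E)}e^{-iE\tau}$ and $U_S^\dag(\tau)\ket{E}_S=e^{iE\tau}\ket{E}_S$; the $\tau$-dependent phases cancel and the same reduced state Eq.~\eqref{redstate2} is recovered, establishing the claim on $\ch_{\rm phys}$.

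For the inverse formula, I would apply the original definition Eq.~\eqref{invtriviala} to an arbitrary $\ket{\psi_S}\in\ch_S^{\rm phys}$. Substituting Eq.~\eqref{trivial1b} and $\braket{t|\varepsilon_*}_C=e^{-ig(\varepsilon_*)}e^{i\varepsilon_*t}$ causes all $g(\varepsilon_*)$- and $\varepsilon_*$-dependent phases to cancel, leaving
\begin{align*}
\calr_{\mathbf H}^{-1}\ket{\psi_S}=\frac{1}{2\pi}\int_{\mathbb{R}} dt\,\ket{t}\otimes U_S(t)\ket{\psi_S}.
\end{align*}
On the other hand, writing $\delta(\hat C_H)=\frac{1}{2\pi}\int ds\,U_{CS}(s)$ and using $U_C(s)\ket{\tau}=\ket{\tau+s}$ from Eq.~\eqref{covariancestate} together with $U_S(s)U_S(\tau)=U_S(s+\tau)$, the change of variable $t=\tau+s$ produces the same integral representation for $\delta(\hat C_H)(\ket{\tau}\otimes U_S(\tau))\ket{\psi_S}$. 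The two expressions therefore agree as maps $\ch_S^{\rm phys}\to\ch_{\rm phys}$, and by construction the result is manifestly $\tau$-independent, consistent with the Heisenberg-picture interpretation.

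The two composition identities then follow from these explicit forms. For $\calr_{\mathbf H}^{-1}\circ\calr_{\mathbf H}=I_{\rm phys}$, I would feed the image Eq.~\eqref{redstate2} of $\calr_{\mathbf H}$ into $\calr_{\mathbf H}^{-1}$; using the identification Eq.~\eqref{ident}, the factors $e^{\pm ig(-E)}$ cancel and the original physical state Eq.~\eqref{GAP} is recovered. For $\calr_{\mathbf H}\circ\calr_{\mathbf H}^{-1}=\Pi_{\sigma_{SC}}$, applying $\calr_{\mathbf H}^{-1}$ (in its group-averaged form) to an arbitrary $\ket{\psi_S}\in\ch_S$ expanded in the $\hat H_S$-eigenbasis, the projector identity Eq.~\eqref{GAP} restricts the surviving components to $E\in\sigma_{SC}$; subsequently applying $\bra{\tau}\otimes U_S^\dag(\tau)$ strips the clock factor and delivers $\Pi_{\sigma_{SC}}\ket{\psi_S}$. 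The principal bookkeeping obstacle throughout is threading the phases $g(\varepsilon_*)$, $g(-E)$, $\varepsilon_*\tau$ and $E\tau$ through each step so that they cancel consistently and the reduction is truly independent of the auxiliary choices of $\varepsilon_*$ and $\tau$; a secondary subtlety is that, because $\spec(\hat H_S)$ may strictly contain $\sigma_{SC}$, the group averaging acts as a spectral filter, which is precisely what yields $\Pi_{\sigma_{SC}}$ (rather than $I_S$) in the second composition.
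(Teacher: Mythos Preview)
Your proposal is correct and covers all four claims. The route you take, however, differs from the paper's. You work throughout in the energy eigenbasis of $\hat H_C$ and $\hat H_S$, expanding physical states via Eq.~\eqref{GAP} and tracking the phases $g(\varepsilon_*)$, $g(-E)$, $\varepsilon_*\tau$, $E\tau$ explicitly until they cancel. The paper instead works at the operator level with the clock-state overlap function $\chi$ from Eq.~\eqref{clockStateOverlap}: it derives the intermediate identity $\calr_{\mathbf H}=(\bra{\tau}\otimes U_S^\dag(\tau))\,\frac{1}{2\pi}\int dt\,\chi^*(t)\,e^{i\varepsilon_* t}\,U_{CS}^\dag(t)$, then invokes $U_{CS}^\dag(t)\ket{\psi_{\rm phys}}=\ket{\psi_{\rm phys}}$ together with the $\chi$-integral identities Eqs.~\eqref{xyzw} and \eqref{scproj} to collapse the extra factors. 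For $\calr_{\mathbf H}^{-1}\circ\calr_{\mathbf H}=I_{\rm phys}$ the paper simply observes that, in the new forms, the composition equals $\delta(\hat C_H)(\ket{\tau}\!\bra{\tau}\otimes I_S)$ and quotes Eq.~\eqref{check}, rather than re-threading the energy expansion as you do. Your approach is more elementary and makes the phase bookkeeping transparent; the paper's approach stays closer to the abstract operator identities already established (in particular it reuses Eq.~\eqref{scproj}, which also drives the proofs of Theorems~\ref{thm_2a} and~\ref{thm_2}), so less has to be recomputed. One small inaccuracy: in your last paragraph you cite Eq.~\eqref{GAP} as the ``projector identity'' responsible for the spectral restriction to $\sigma_{SC}$; the mechanism you describe is correct, but the equation that encodes it is Eq.~\eqref{scproj} (or equivalently the support condition in Eq.~\eqref{sigmasc}), not Eq.~\eqref{GAP} itself.
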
}

\begin{proof}
{The proof is in Appendix~\ref{Sec_theorems}.}
\end{proof}

This permits us to define a new encoding map of evolving observables on $\ch_S^{\rm phys} $ into Dirac observables, $\mathcal{E}_{\mathbf H}: \cl (\ch_S^{\rm phys} )\rightarrow\cl(\ch_{\rm phys})$. We may choose any Heisenberg picture observable
\begin{align}
\hat f^{\rm phys}_S(\tau)=e^{i\tau\hat H_S}\,\hat f_S^{\rm phys}\,e^{-i\tau\hat H_S}\label{F5}
\end{align}
on $\ch_S^{\rm phys}$, which is why we may  set $\hat f_S^{\rm phys}(0)=\hat f_S^{\rm phys}$, and define the encoding as
\begin{align}
\mathcal{E}_{\mathbf H}\!\left(\!\hat f^{\rm phys}_S(\tau)\!\right)\ce\calr_{\mathbf H}^{-1}\,\hat f^{\rm phys}_S(\tau)\,\calr_{\mathbf H} .\label{encodeQR}
\end{align}
Note, that we therefore do {\it not} equip the Heisenberg picture observables with the label ${}^{\rm red}$, in contrast to Sec.~\ref{Heisenberg} on reduced quantization; the relation between $\ch_S^{\rm phys}$ and $\ch_S^{\rm red}$ remains to be investigated. The following theorem confirms that the encoded observables coincide with the quantum relational Dirac observables on $\ch_{\rm phys}$.

\begin{theorem}\label{thm_4}
{Let $\hat f_S\in\cl\left(\ch_S\right)$.} The quantum relational Dirac observables {$\hat F_{f_S,T}(\tau)$} on $\ch_{\rm phys}$, Eq.~\eqref{RelationalDiracObservable},  reduce under $\calr_{\mathbf H}$ to the corresponding projected evolving observables of the relational Heisenberg picture on $\ch_S^{\rm phys}$, Eq.~\eqref{F5}, i.e.
\begin{align}
\calr_{\mathbf H}\,\hat F_{f_S,T}(\tau)\,\calr_{\mathbf H}^{-1} = {\Pi_{\sigma_{SC}}\,\hat f_S(\tau)\,\Pi_{\sigma_{SC}}}. \nn %\label{HPobs}
\end{align}
Conversely, {let $\hat f_S^{\rm phys}(\tau)\in\cl\left(\ch_S^{\rm phys}\right)$ be any evolving observable, Eq.~\eqref{F5}. In} analogy to Eq.~\eqref{encode2}, 
\begin{align}
\mathcal{E}_{\mathbf H} \left( \hat f^{\rm phys}_S(\tau) \right)\,\approx {\hat F_{f^{\rm phys}_S,T}(\tau)}. \nn 
\end{align}
\end{theorem}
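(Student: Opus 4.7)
The plan is to bootstrap the result from Theorem~\ref{thm_2} by factoring the Heisenberg reduction map in terms of the Schrödinger one. Using Lemma~\ref{lem_3}, on physical states one has $\calr_{\mathbf H}\approx U_S^\dag(\tau)\,\calr_{\mathbf S}(\tau)$, since $\bra{\tau}\otimes U_S^\dag(\tau) = U_S^\dag(\tau)\cdot(\bra{\tau}\otimes I_S)$ with $U_S^\dag(\tau)$ acting on the image in $\ch_S^{\rm phys}$; analogously, inspection of the explicit formulas $\calr_{\mathbf S}^{-1}(\tau) = \delta(\hat C_H)(\ket{\tau}\otimes I_S)$ and $\calr_{\mathbf H}^{-1} = \delta(\hat C_H)(\ket{\tau}\otimes U_S(\tau))$ yields $\calr_{\mathbf H}^{-1} = \calr_{\mathbf S}^{-1}(\tau)\,U_S(\tau)$. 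This factorization is consistent because $\ch_S^{\rm phys}$ is $U_S(\tau)$-invariant.

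For the first claim, substituting the factorization gives
$$\calr_{\mathbf H}\,\hat F_{f_S,T}(\tau)\,\calr_{\mathbf H}^{-1}\approx U_S^\dag(\tau)\bigl[\calr_{\mathbf S}(\tau)\,\hat F_{f_S,T}(\tau)\,\calr_{\mathbf S}^{-1}(\tau)\bigr]U_S(\tau).$$
Theorem~\ref{thm_2} identifies the bracketed factor with $\Pi_{\sigma_{SC}}\hat f_S\Pi_{\sigma_{SC}}$. Since both $\Pi_{\sigma_{SC}}$ and $U_S(\tau)$ are diagonal in the $\hat H_S$-eigenbasis they commute, so the outer $U_S$-factors can be moved inside the projectors, producing $\Pi_{\sigma_{SC}}\,U_S^\dag(\tau)\hat f_S U_S(\tau)\,\Pi_{\sigma_{SC}} = \Pi_{\sigma_{SC}}\hat f_S(\tau)\Pi_{\sigma_{SC}}$ by the definition~\eqref{F5} of $\hat f_S(\tau)$.

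For the converse, the same factorization of $\calr_{\mathbf H}^{\pm 1}$ yields
$$\mathcal{E}_{\mathbf H}\!\bigl(\hat f_S^{\rm phys}(\tau)\bigr) = \calr_{\mathbf S}^{-1}(\tau)\,U_S(\tau)\,\hat f_S^{\rm phys}(\tau)\,U_S^\dag(\tau)\,\calr_{\mathbf S}(\tau).$$
Substituting $\hat f_S^{\rm phys}(\tau) = U_S^\dag(\tau)\hat f_S^{\rm phys}U_S(\tau)$ from Eq.~\eqref{F5}, the central block collapses to $\hat f_S^{\rm phys}$, leaving $\calr_{\mathbf S}^{-1}(\tau)\,\hat f_S^{\rm phys}\,\calr_{\mathbf S}(\tau) = \mathcal{E}_{\mathbf S}^{\tau}(\hat f_S^{\rm phys})$. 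The converse part of Theorem~\ref{thm_2}, Eq.~\eqref{encode2}, then gives $\mathcal{E}_{\mathbf H}\!\bigl(\hat f_S^{\rm phys}(\tau)\bigr)\approx\hat F_{f_S^{\rm phys},T}(\tau)$.

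The one point deserving careful attention is the factorization $\calr_{\mathbf H}^{\pm 1} = \calr_{\mathbf S}^{\pm 1}(\tau)\cdot U_S^{\mp 1}(\tau)$: one should verify that this rewriting remains a true left inverse on $\ch_{\rm phys}$. This is a direct check using Eq.~\eqref{check}: $\calr_{\mathbf H}^{-1}\calr_{\mathbf H} = \calr_{\mathbf S}^{-1}(\tau)\,U_S(\tau)U_S^\dag(\tau)\,\calr_{\mathbf S}(\tau) = \calr_{\mathbf S}^{-1}(\tau)\calr_{\mathbf S}(\tau) = I_{\rm phys}$. Conceptually, although $\calr_{\mathbf H}$ itself is $\tau$-independent on physical states, its factorization in terms of $\calr_{\mathbf S}(\tau)$ is not, and it is precisely this hidden $\tau$-dependence—cancelling between $U_S(\tau)$ and $U_S^\dag(\tau)$ at the level of states but surviving at the level of operators—that generates the Heisenberg time dependence of the encoded observables.
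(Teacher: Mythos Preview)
Your proof is correct. For the second claim your argument is essentially the paper's: both set the free clock-reading parameters in $\calr_{\mathbf H}$ and $\calr_{\mathbf H}^{-1}$ equal to $\tau$, collapse the $U_S$ factors against $\hat f_S^{\rm phys}(\tau)$, and recognize the result as $\mathcal{E}_{\mathbf S}^\tau(\hat f_S^{\rm phys})$, invoking Theorem~\ref{thm_2}.

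For the first claim, however, your route differs genuinely from the paper's. The paper establishes $\calr_{\mathbf H}\,\hat F_{f_S,T}(\tau)\,\calr_{\mathbf H}^{-1} = \Pi_{\sigma_{SC}}\hat f_S(\tau)\Pi_{\sigma_{SC}}$ by a direct and rather lengthy computation: it inserts the explicit integral forms of $\ct_T$, the $G$-twirl defining $\hat F_{f_S,T}(\tau)$, and $\delta(\hat C_H)$, obtains a triple integral over $t,s,u$ populated with several $\chi$-factors, and reduces it via three successive variable shifts together with the projector identity Eq.~\eqref{scproj} and Eq.~\eqref{xyzw}. You instead bootstrap from Theorem~\ref{thm_2} via the factorization $\calr_{\mathbf H}\approx U_S^\dag(\tau)\,\calr_{\mathbf S}(\tau)$ and $\calr_{\mathbf H}^{-1}=\calr_{\mathbf S}^{-1}(\tau)\,U_S(\tau)$ (a relation the paper itself records at the end of Sec.~\ref{sec_Unifying} but does not exploit in this proof). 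This buys you a two-line argument in place of a page of integrals and makes transparent that the Heisenberg reduction is nothing but the Schr\"odinger reduction conjugated by $U_S(\tau)$. The paper's direct computation, on the other hand, is self-contained and does not logically rely on Theorem~\ref{thm_2}. One small remark: your use of $\approx$ in the displayed equation for the first claim is a slight notational slip, since the asserted identity is a genuine equality of operators on $\ch_S^{\rm phys}$; it holds strictly because $\calr_{\mathbf H}^{-1}$ lands in $\ch_{\rm phys}$ and $\hat F_{f_S,T}(\tau)$ preserves $\ch_{\rm phys}$, so the weak form of $\calr_{\mathbf H}$ from Lemma~\ref{lem_3} may be invoked without loss.
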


\begin{proof}
The proof is given in Appendix~\ref{Sec_theorems}.
\end{proof}

{It is evident that 
\ba
\Pi_{\sigma_{SC}}\,\hat f_S(\tau)\,\Pi_{\sigma_{SC}}=e^{i\tau\hat H_S}\,\Pi_{\sigma_{SC}}\,\hat f_S\,\Pi_{\sigma_{SC}}\,e^{-i\tau\hat H_S}\nn
\ea
is an element of $\cl\left(\ch_S^{\rm phys}\right)$. Owing to Lemma~\ref{lem_1}, this theorem thereby establishes an equivalence between the full sets of relational Dirac observables $\hat F_{f_S,T}(\tau)$ on $\ch_{\rm phys}$ and of the evolving system observables $\hat f_S^{\rm phys}(\tau)$ of the relational Heisenberg picture on $\ch_S^{\rm phys}$ (see also the discussion below Theorem~\ref{thm_2}).}

{The next result shows that the expectation values of the quantum relational Dirac observables, Eq.~\eqref{RelationalDiracObservable}, in the physical inner product, Eq.~\eqref{PIP}, coincide with the expectation values of the corresponding {evolving observables of the relational Heisenberg picture on $\ch_S^{\rm phys}$}.}
\begin{theorem}\label{thm_5}
Let $\hat f_S\in\cl\left(\ch_S\right)$ and $\hat f_S^{\rm phys}(\tau)=e^{i\tau\hat H_S}\,\Pi_{\sigma_{SC}}\,\hat f_S\,\Pi_{\sigma_{SC}}\,e^{-i\tau\hat H_S}$ be its associated evolving Heisenberg operator on $\ch_S^{\rm phys}$. Then
\begin{align}
\bra{\phi_{\rm phys}}\,\hat{ F}_{f_S,T}(\tau)\,\ket{\psi_{\rm phys}}_{\rm phys} = \braket{\phi_S\,|\,\hat{f}^{\rm phys}_{S}(\tau)\,|\,\psi_S}, \nn 
\end{align}
where $\ket{\psi_S}=
\calr_{\mathbf H}\,\ket{\psi_{\rm phys}}\in\ch_S^{\rm phys}$. 
\end{theorem}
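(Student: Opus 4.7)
The plan is to deduce the claim as a direct corollary of Theorem~\ref{thm_3} and Lemma~\ref{lem_3}, exploiting the fact that the Heisenberg and Schr\"odinger reduction maps differ on $\ch_{\rm phys}$ only by a unitary transformation of $\ch_S^{\rm phys}$.

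First, I would use Lemma~\ref{lem_3} to observe that on physical states
\begin{align*}
\calr_{\mathbf H}\approx\bigl(I_C\otimes U_S^\dag(\tau)\bigr)\,\calr_{\mathbf S}(\tau),
\end{align*}
which implies
\begin{align*}
\ket{\psi_S}=\calr_{\mathbf H}\ket{\psi_{\rm phys}}\approx U_S^\dag(\tau)\,\ket{\psi_S(\tau)},
\end{align*}
where $\ket{\psi_S(\tau)}=\calr_{\mathbf S}(\tau)\ket{\psi_{\rm phys}}$ is the Page--Wootters conditional state (and analogously for $\ket{\phi_{\rm phys}}$). Since $\Pi_{\sigma_{SC}}$ projects onto an eigenspace of $\hat H_S$, the operator $U_S^\dag(\tau)=e^{i\tau\hat H_S}$ commutes with $\Pi_{\sigma_{SC}}$ and thus restricts to a unitary on $\ch_S^{\rm phys}$, so this is a bona fide unitary equivalence between $\ket{\psi_S}$ and $\ket{\psi_S(\tau)}$.

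With this in hand, Theorem~\ref{thm_3} gives
\begin{align*}
\bra{\phi_{\rm phys}}\hat F_{f_S,T}(\tau)\ket{\psi_{\rm phys}}_{\rm phys} &=\braket{\phi_S(\tau)|\hat f_S^{\rm phys}|\psi_S(\tau)}\\
&=\bra{\phi_S}U_S^\dag(\tau)\,\hat f_S^{\rm phys}\,U_S(\tau)\ket{\psi_S},
\end{align*}
and I would conclude by noting that $[\hat H_S,\Pi_{\sigma_{SC}}]=0$ implies
\begin{align*}
U_S^\dag(\tau)\hat f_S^{\rm phys}U_S(\tau)=e^{i\tau\hat H_S}\Pi_{\sigma_{SC}}\hat f_S\Pi_{\sigma_{SC}}e^{-i\tau\hat H_S}=\hat f_S^{\rm phys}(\tau),
\end{align*}
which yields the desired identity.

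I do not anticipate any substantial obstacle along this route; the only mild care is in recognizing that $U_S^\dag(\tau)$ genuinely acts unitarily on $\ch_S^{\rm phys}$, which is guaranteed by the commutation of $\hat H_S$ with $\Pi_{\sigma_{SC}}$. A parallel and equally short route would start from Theorem~\ref{thm_4} by writing $\hat F_{f_S,T}(\tau)\approx\calr_{\mathbf H}^{-1}\hat f_S^{\rm phys}(\tau)\calr_{\mathbf H}$ on $\ch_{\rm phys}$ and then invoking the isometry $\calr_{\mathbf H}^\dag\calr_{\mathbf H}\approx I_{\rm phys}$, which is itself inherited from the Corollary to Theorem~\ref{thm_3} via the above unitary decomposition of $\calr_{\mathbf H}$.
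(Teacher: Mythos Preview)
Your proposal is correct, and the main route you take is genuinely different from the paper's. The paper proves Theorem~\ref{thm_5} by invoking Theorem~\ref{thm_4} and Lemma~\ref{lem_1} to write $\bra{\phi_{\rm phys}}\hat F_{f_S,T}(\tau)\ket{\psi_{\rm phys}}_{\rm phys}=\bra{\phi_{\rm kin}}\calr_{\mathbf H}^{-1}\hat f_S^{\rm phys}(\tau)\calr_{\mathbf H}\ket{\psi_{\rm phys}}_{\rm kin}$, and then computes $\bra{\phi_{\rm kin}}\calr_{\mathbf H}^{-1}=\bra{\phi_S}$ by an explicit energy-basis calculation using Eqs.~\eqref{kinstate}, \eqref{rhp1app} and~\eqref{continuousClockState}. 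Your argument instead reduces Theorem~\ref{thm_5} to the already-established Theorem~\ref{thm_3} via the weak identity $\calr_{\mathbf H}\approx U_S^\dag(\tau)\calr_{\mathbf S}(\tau)$ from Lemma~\ref{lem_3}, which is more economical and avoids the explicit computation. The paper in fact records this unitary relation between the two reduction maps in the subsection on the equivalence of Dynamics~II and~III, but only \emph{after} proving Theorem~\ref{thm_5}, so your ordering is slightly different. The alternative you sketch at the end---starting from Theorem~\ref{thm_4} and using the isometry of $\calr_{\mathbf H}$---is essentially the paper's strategy, though the paper establishes the isometry (Corollary~\ref{corol_2}) as a \emph{consequence} of Theorem~\ref{thm_5} rather than as an input, which is why it resorts to the direct computation of $\bra{\phi_{\rm kin}}\calr_{\mathbf H}^{-1}$.
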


\begin{proof}
The proof is given in Appendix~\ref{Sec_theorems}.\end{proof}

Again, an important corollary immediately follows.

\begin{corol}\label{corol_2}
Setting {$\hat f_S=I_S$ in Theorem~\ref{thm_5}} shows that the quantum symmetry reduction map $\calr_{\mathbf H}$ preserves the inner product
\begin{align}
\braket{\phi_{\rm phys}\,|\,\psi_{\rm phys}}_{\rm phys}= \braket{\phi_S\,|\,\psi_S}, \nn 
\end{align}
where $\braket{\cdot|\cdot}_{\rm phys}$ and $\langle\cdot|\cdot\rangle$ denote the inner products on $\ch_{\rm phys}$ and {$\ch_S^{\rm phys}$}, respectively, and physical and reduced states are related by {$\ket{\psi_S}=
\calr_{\mathbf H}\,\ket{\psi_{\rm phys}}$}. Hence, $\calr_{\mathbf H}$ (formally) defines an isometry.
\end{corol}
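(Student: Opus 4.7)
The plan is to derive Corollary~\ref{corol_2} as a direct specialization of Theorem~\ref{thm_5} to the case $\hat f_S = I_S$. The first step is to compute the right-hand side under this substitution. With $\hat f_S = I_S$, the associated evolving Heisenberg operator becomes
\begin{align}
\hat f^{\rm phys}_S(\tau) = e^{i\tau\hat H_S}\,\Pi_{\sigma_{SC}}\,I_S\,\Pi_{\sigma_{SC}}\,e^{-i\tau\hat H_S} = \Pi_{\sigma_{SC}},\nn
\end{align}
where I used idempotence $\Pi_{\sigma_{SC}}^2 = \Pi_{\sigma_{SC}}$ and the fact that $\Pi_{\sigma_{SC}}$ is the spectral projector of $\hat H_S$ onto the eigenspaces with $E \in \sigma_{SC}$, so it commutes with $\hat H_S$ and hence with $U_S(\tau)$. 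Since by Lemma~\ref{lem_3} the image of $\calr_{\mathbf H}$ lies in $\ch_S^{\rm phys} = \Pi_{\sigma_{SC}}(\ch_S)$, the operator $\Pi_{\sigma_{SC}}$ acts as the identity on both $\ket{\psi_S}$ and $\ket{\phi_S}$, so the right-hand side of Theorem~\ref{thm_5} collapses to $\braket{\phi_S|\psi_S}$.

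The second step is to show that the left-hand side reduces to $\braket{\phi_{\rm phys}|\psi_{\rm phys}}_{\rm phys}$, which amounts to establishing that $\hat F_{I_S,T}(\tau)$ acts weakly as the identity on $\ch_{\rm phys}$. The cleanest route is to appeal to Theorem~\ref{thm_4}: conjugating by the reduction map gives $\calr_{\mathbf H}\,\hat F_{I_S,T}(\tau)\,\calr_{\mathbf H}^{-1} = \Pi_{\sigma_{SC}}\,I_S(\tau)\,\Pi_{\sigma_{SC}} = \Pi_{\sigma_{SC}}$, which is the identity on $\ch_S^{\rm phys}$. Combining this with the invertibility relations in Lemma~\ref{lem_3}, $\calr_{\mathbf H}^{-1}\circ\calr_{\mathbf H} = I_{\rm phys}$, we conclude $\hat F_{I_S,T}(\tau) \approx I_{\rm phys}$ in the sense of Eq.~\eqref{qweak}. (Alternatively, one could use Lemma~\ref{lem_1} together with Theorem~\ref{thm_2} via the Schr\"odinger reduction $\calr_{\mathbf S}$ to reach the same conclusion.) Hence
\begin{align}
\braket{\phi_{\rm phys}|\hat F_{I_S,T}(\tau)|\psi_{\rm phys}}_{\rm phys} = \braket{\phi_{\rm phys}|\psi_{\rm phys}}_{\rm phys}.\nn
\end{align}

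Putting the two steps together, Theorem~\ref{thm_5} yields $\braket{\phi_{\rm phys}|\psi_{\rm phys}}_{\rm phys} = \braket{\phi_S|\psi_S}$ for all physical states and their reductions $\ket{\psi_S} = \calr_{\mathbf H}\ket{\psi_{\rm phys}}$, $\ket{\phi_S} = \calr_{\mathbf H}\ket{\phi_{\rm phys}}$, which is precisely the claim that $\calr_{\mathbf H}$ is an isometry from $\ch_{\rm phys}$ onto $\ch_S^{\rm phys}$. There is no substantive obstacle here since Theorem~\ref{thm_5} already carries the analytical burden; the only points requiring care are the identification of $\Pi_{\sigma_{SC}}$ as the identity on $\ch_S^{\rm phys}$ (guaranteed by the construction of the physical system Hilbert space) and the verification that $\hat F_{I_S,T}(\tau)$ is weakly the identity on $\ch_{\rm phys}$, both of which are immediate from the results already established.
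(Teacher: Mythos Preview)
Your proposal is correct and follows precisely the approach the paper indicates: the corollary is stated as an immediate consequence of setting $\hat f_S=I_S$ in Theorem~\ref{thm_5}, and you carry this out carefully on both sides. One minor simplification you could make on the left-hand side: from the definition Eq.~\eqref{RelationalDiracObservable} and the resolution of the identity Eq.~\eqref{completeness}, one sees directly that $\hat F_{I_S,T}(\tau)=\frac{1}{2\pi}\int_\mathbb{R}dt\,\ket{t}\!\bra{t}\otimes I_S=I_C\otimes I_S$ \emph{strongly}, so the detour through Theorem~\ref{thm_4} and Lemma~\ref{lem_3} is not needed.
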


Given that the quantum symmetry reduction procedure is invertible, we have thereby established a formal equivalence between the dynamics encoded in the quantum relational Dirac observables on the clock-neutral physical Hilbert space $\ch_{\rm phys}$ and the {relational Heisenberg picture on $\ch_S^{\rm phys}$}. Specifically, if the evolving reduced observables $\hat f^{\rm phys}_S(\tau)$ are self-adjoint on {$\ch_S^{\rm phys}$}, then Theorem~\ref{thm_5} formally implies that the same applies to $\hat F_{f_S,T}(\tau)$ on $\ch_{\rm phys}$.

This generalizes the quantum symmetry reduction procedure  introduced in Refs.~\cite{hoehnHowSwitchRelational2018,Hoehn:2018whn}, to which we refer the reader 
for an explicit exposition in two concrete models.
We note that it seems fruitful to explore the connection with a recent algebraic approach to establishing a quantum version of symplectic reduction \cite{Bojowald:2019mas}, which may be related to the  procedure exhibited here.

\subsubsection{Relation with reduced phase space quantization}

{Lastly, we comment on the relation with reduced phase space quantization of Sec.~\ref{Heisenberg}.
\begin{corol}
The relational Heisenberg picture on $\ch_S^{\rm phys}$, obtained through the quantum symmetry reduction $\calr_{\mathbf H}$, is only equivalent to the relational Heisenberg picture of reduced phase space quantization described in Sec.~\ref{Heisenberg} if 
\begin{align}
\spec(\hat H_S^{\rm red})=\spec(\hat{H}_S) \cap \spec(- \hat{H}_C)=\sigma_{SC}.\label{samespec}
\end{align}
Specifically, in this case,
\begin{itemize}
\item[(i)]  $\ch_S^{\rm red} \simeq \ch_S^{\rm phys}\ce \calr_{\mathbf H}\left(\ch_{\rm phys}\right)$, 
\item[(ii)] $\hat H_S^{\rm red}\equiv\hat H_S^{\rm phys}\ce \calr_{\mathbf H}\,\hat H_S\,\calr_{\mathbf H}^{-1}$, and
\item[(iii)] The set of quantum symmetry reduced evolving observables in Eq.~\eqref{F5}, ${\hat f^{\rm phys}_S(\tau)=\calr_{\mathbf H} \,\hat F_{f^{\rm phys}_S,T}(\tau)\,\calr_{\mathbf H}^{-1}}$, coincides with the set of evolving observables $\hat f^{\rm red}_S(\tau)$, Eq.~\eqref{F4}, resulting from reduced phase space quantization. In particular, under the appropriate identifications, $\ket{\psi^{\rm red}_S}\equiv\ket{\psi_S}=\calr_{\mathbf H}\,\ket{\psi_{\rm phys}}$ and $\hat f^{\rm phys}_S(\tau)\equiv\hat f_S^{\rm red}(\tau)$, we have
\begin{align}
\braket{\phi^{\rm red}_S\,|\,\hat{f}^{\rm red}_{S}(\tau)\,|\,\psi^{\rm red}_S} &\equiv \braket{\phi_S\,|\, \hat{f}^{\rm phys}_{S}(\tau) \,|\,\psi_S } \nn\\
&=\bra{\phi_{\rm phys}}\, \hat{ F}_{f^{\rm phys}_S,T}(\tau) \,\ket{\psi_{\rm phys}}_{\rm phys}.\nn
\end{align}
\end{itemize}
\end{corol}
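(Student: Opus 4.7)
The plan is to leverage the spectral condition \eqref{samespec} to build an explicit isometry between $\ch_S^{\rm red}$ and $\ch_S^{\rm phys}$ under which the reduced Hamiltonian $\hat H_S^{\rm red}$ is identified with $\hat H_S^{\rm phys}\ce\calr_{\mathbf H}\,\hat H_S\,\calr_{\mathbf H}^{-1}$. Once this identification is in place, claim (iii) and the expectation-value equality become immediate consequences of Theorems~\ref{thm_4} and~\ref{thm_5}.

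First, for (i) and (ii), I would compare the spectral expansions of states in the two Hilbert spaces. Equation~\eqref{redstate} presents $\ket{\psi_S^{\rm red}}\in\ch_S^{\rm red}$ as a wavefunction $\psi_S^{\rm red}(E)$ supported on $\sigma_S^{\rm red}$ with inner product \eqref{RIP}, while Eq.~\eqref{redstate2} together with the identification \eqref{ident} shows that $\ket{\psi_S}=\calr_{\mathbf H}\ket{\psi_{\rm phys}}\in\ch_S^{\rm phys}$ is a wavefunction $\psi_S(E)$ supported on $\sigma_{SC}$ in the $\hat H_S$-eigenbasis, with the inner product inherited from $\ch_S$. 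Under the hypothesis $\sigma_S^{\rm red}=\sigma_{SC}$, the coefficient-wise map $\psi_S^{\rm red}(E)\mapsto\psi_S(E)$ in the $\ket{E}_S$ basis is then a manifest isometry, establishing (i). For (ii), both $\hat H_S^{\rm red}$ and $\hat H_S^{\rm phys}$ act diagonally as multiplication by $E$ on this common eigenbasis, so they coincide under the isometry.

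Given (i) and (ii), claim (iii) follows because the Heisenberg evolutions $e^{-i\tau\hat H_S^{\rm red}}$ and $e^{-i\tau\hat H_S^{\rm phys}}$ coincide under the isometry. Any reduced observable $\hat f_S^{\rm red}\in\cl(\ch_S^{\rm red})$ is then canonically identified with $\hat f_S^{\rm phys}\in\cl(\ch_S^{\rm phys})$, and the Heisenberg flows \eqref{F4} and \eqref{F5} match: $\hat f_S^{\rm red}(\tau)\leftrightarrow\hat f_S^{\rm phys}(\tau)$. Theorem~\ref{thm_4} then yields $\calr_{\mathbf H}\,\hat F_{f_S^{\rm phys},T}(\tau)\,\calr_{\mathbf H}^{-1}=\hat f_S^{\rm phys}(\tau)$, and Theorem~\ref{thm_5} (combined with the isometry from (i)) delivers the desired chain of equalities between expectation values on $\ch_S^{\rm red}$, $\ch_S^{\rm phys}$, and $\ch_{\rm phys}$.

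The main subtlety lies in reconciling the two distinct quantization prescriptions underlying $\ch_S^{\rm red}$ and $\ch_S^{\rm phys}$: Dirac quantization produces operators $\hat f_S\in\ca_S^Q$ on $\ch_S$ (subsequently projected to $\ch_S^{\rm phys}$), whereas reduced quantization produces $\hat f_S^{\rm red}\in\tilde\ca_S^Q$ directly on $\ch_S^{\rm red}$, each scheme potentially employing different factor orderings and operator domains. The spectral condition \eqref{samespec} by itself does \emph{not} equate arbitrary members of these algebras operator-by-operator, but it does render the Hilbert spaces and Hamiltonians canonically isomorphic, so that each observable in one framework transports to a well-defined operator in the other via the isometry. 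Ensuring this global compatibility is precisely the role of \eqref{samespec}, and its failure would block the equivalence even when the operator algebras happen to correspond locally on dense subalgebras.
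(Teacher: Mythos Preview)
Your proposal is correct and follows essentially the same route as the paper: both arguments identify $\ch_S^{\rm red}$ with $\ch_S^{\rm phys}$ via the coincidence of energy eigenbases when $\sigma_S^{\rm red}=\sigma_{SC}$, deduce (ii) from the diagonal action of the two Hamiltonians, and then obtain (iii) by matching the Heisenberg flows and invoking Theorem~\ref{thm_5}. The paper additionally cites Corollary~\ref{corol_2} to secure the inner-product equality and mentions a possible measure redefinition in the eigenbasis, while your closing paragraph on the distinct quantization prescriptions $\ca_S^Q$ versus $\tilde\ca_S^Q$ is a helpful elaboration that the paper's proof leaves implicit.
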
}
{\begin{proof}
The proof is given in Appendix~\ref{Sec_theorems}.
\end{proof}}

Hence, if  Eq.~\eqref{samespec} {is satisfied}, then the relational dynamics of {the quantization of the} reduced phase space and Dirac quantization are equivalent. The simplest example is the special case of the ideal clock where  $\hat H_C=c\,\hat p$ on $L^2(\mathbb{R})$ and $\hat H_S$ arbitrary. Another example is $\hat H_C=\hat p^2/2+a_1\,e^{a_2\,\hat q}$, $a_i>0$, on $L^2(\mathbb{R})$ (but energy eigenstates only required to vanish as $q\rightarrow+\infty$) and $\hat H_S$ equal to (minus) the harmonic oscillator or free particle Hamiltonian. 

If Eq.~\eqref{samespec} is not satisfied, then reduced and Dirac quantization will not be exactly equivalent (see also \cite{Ashtekar:1982wv,Schleich:1990gd,Romano:1989zb,Loll:1990rx,Kunstatter:1991ds,Thiemann:2004wk}). In this case it can still happen that one can embed $\ch_S^{\rm red}$ into $\ch_{\rm phys}$ \cite{Ashtekar:1982wv}, here through $\calr_{\mathbf H}^{-1}$.

\subsection{Equivalence of Dynamics II and III}

In the previous two subsections, we have demonstrated the formal equivalence of Dirac quantization with the Page-Wootters formalism, as well as with the relational Heisenberg picture obtained through a quantum symmetry reduction procedure. Therefore, the Page-Wootters formalism, which we had already identified as the relational Schr\"odinger picture, is equivalent with this relational Heisenberg picture. 
It is thus obvious that the Page-Wootters formalism   and {the relational Heisenberg picture of the quantum reduction} must be related by the  unitary evolution $U_S(\tau)$. Indeed, Eqs.~\eqref{rsp} and~\eqref{rsp1}, as well as {Eqs.~\eqref{rsp2} and~\eqref{rhp2}}, directly imply
\begin{align}
\calr_{\mathbf S}(\tau) \, &\approx  U_S(\tau)\cdot\calr_{\mathbf H},\nn\\
\calr^{-1}_{\mathbf S}(\tau)&=\calr_{\mathbf H}^{-1}\cdot U_S^\dag(\tau). \nn
\end{align}
This completes the proof of the formal equivalence of the three elements of the trinity of relational quantum dynamics depicted in Fig.~\ref{picture1} for clock Hamiltonians with non-degenerate and continuous spectrum.

%========================================
%========================================

\section{Disentangling the Page-Wootters formalism}\label{sec_reinterpret}

In the context of the Page-Wootters formalism, it is sometimes stressed that the emergence of time from the `timeless' quantum theory defined by the Hamiltonian constraint Eq.~\eqref{WheelerDeWitt} originates in the entanglement between the clock and system {(e.g.~\cite{pageEvolutionEvolutionDynamics1983,woottersTimeReplacedQuantum1984,moreva2014time,giovannettiQuantumTime2015,marlettoEvolutionEvolutionAmbiguities2017})}. This is suggested by the shape of physical states in  Eq.~\eqref{GAP} {or by expanding the physical state in the clock state basis 
\begin{align}
\ket{\psi_{\rm phys}} = \frac{1}{2\pi}\int_{\mathbb{R}} d\tau \, \ket{\tau} \ket{\psi_S(\tau)}. \nn
%\label{EntangledClockSystem}
\end{align}}
{Wootters emphasizes this point~}\cite{woottersTimeReplacedQuantum1984}
\begin{quote}
One motivation for considering such a ``condensation'' of history [i.e.\ physical state] is the desire for economy as regards the number of basic elements of the theory: quantum correlations are an integral part of quantum theory already; so one is not adding a new element to the theory. And yet an old element, time, is being eliminated, becoming a secondary and even approximate concept.
\end{quote}

Enticing though this may be, we shall now explain why one has to be careful with this picture of the emergence of time evolution. In short, this entanglement within physical states Eq.~\eqref{GAP} is not gauge-invariant, but defined with respect to a tensor factorization of the kinematical Hilbert space which is not inherited by the physical Hilbert space. As we shall demonstrate, one can also obtain the same relational dynamics without any (kinematical) entanglement between clock and system degrees of freedom, while still using a Page-Wootters reduction scheme. This observation relies on a reinterpretation of the trinity which we have just established and in particular Lemma~\ref{lem_2}.

\subsection{Reinterpreting the trinity}

Recall that the quantum symmetry reduction map in Eq.~\eqref{RRQ} is a two-step process which we may write using the Page-Wootters reduction map in Eq.~\eqref{rsp} as $\calr_{\mathbf H} = \calr'_{\mathbf S}(\tau')\,\ct_T$, where ${\calr'_{\mathbf S}(\tau')\ce e^{-i\varepsilon_*\tau'}\,\calr_{\mathbf S}(\tau')}$. Recall also from Sec.~\ref{sec_dirac2red} that the trivialization map yields a transformation of the physical Hilbert space, which we may interpret as a new  physical Hilbert space $\ch_{\rm phys}'\ce \ct_T\left(\ch_{\rm phys}\right)$. This permits us to reinterpret the trinity diagram of Fig.~\ref{picture1}  in terms of two Page-Wootters reductions on two different physical Hilbert space representations (not depicting inverse maps) as follows:
\begin{center}
\begin{tikzcd}[row sep=huge, column sep = huge]
\ch_{\rm phys}\arrow[d, "\calr_{\mathbf S}(\tau)"] \arrow[to=2-3,"\calr_{\mathbf H}"] \arrow[rr, "\ct_T"]&&\ch'_{\rm phys}  \arrow[d, "\calr'_{\mathbf S}(\tau')"]\\
\ch_S^{\rm phys} \arrow[rr, "U_S^\dag(\tau)"] && \ch_S^{\rm phys}
\end{tikzcd}
\end{center}
(Recall that the image of $\calr_{\mathbf H}$ does not depend on $\tau'$.) The left and right Page-Wootters reductions produce, of course, the relational Schr\"odinger and Heisenberg pictures, respectively, which both live on the same physical system Hilbert space $\ch_S^{\rm phys}$.

Equation~\eqref{invtriviala} implies that the inverse map from the relational Heisenberg picture on $\ch_S^{\rm phys}$ to the trivialized physical Hilbert space $\ch_{\rm phys}'$ is $\tau'$-independent and given by
\begin{align}
{\calr'_{\mathbf S}}^{-1}&= e^{ig(\varepsilon_*)}\,\ket{\varepsilon_*}\otimes I_S\nn\\
&=\left(\delta(\hat H_C-\varepsilon_*)\,\ket{t=0}\right)\otimes I_S, \nn
\end{align}
where we have made use of Eq.~\eqref{fourier}. This is a product version of Eq.~\eqref{rsp1}, relative to the trivialized constraint Eq.~\eqref{trivial3}.

We have seen in Lemma~\ref{lem_2} that the trivialization {map $\mathcal{T}_T$} acts as a {disentangling} map on the physical Hilbert space; states in $\ch_{\rm phys}'$ are product states between clock and system relative to the tensor factorization of $\ch_{\rm kin}$. Using the reduction maps, it is now straightforward to show that all relational observables on $\ch_{\rm phys}'$, i.e.\ the trivialization of the relational Dirac observables from $\ch_{\rm phys}$, are {also} product observables. To this end, we first define a new encoding of the evolving observables of the relational Heisenberg picture on $\ch_S^{\rm phys}$. Denoting $\ket{\psi'_{\rm phys}}\ce\ct_T\,\ket{\psi_{\rm phys}}$ in Eq.~\eqref{trivial4}, we find weakly on $\ch_{\rm phys}'$ 
\begin{align}
\mathcal{E}'_{\mathbf S}\left(\hat f_S^{\rm phys}(\tau)\right) \ket{\psi'_{\rm phys}}
&={\calr'_{\mathbf S}}^{-1}\,\hat f_S^{\rm phys}(\tau)\,\calr'_{\mathbf S}(\tau')\ket{\psi'_{\rm phys}} \nn\\
&=\left(\delta(\hat H_C-\varepsilon)\ket{t=0}\!\bra{\tau'}e^{-i\varepsilon_*\tau'}\right)\nn\\
&\quad \otimes \hat f_S^{\rm phys}(\tau)\,\ket{\psi'_{\rm phys}} \nn\\
&=\cg'\left(\ket{0}\!\bra{0}\otimes\hat f_S^{\rm phys}(\tau)\right)\,\ket{\psi'_{\rm phys}} \nn\\
&=I_C\otimes \hat f_S^{\rm phys}(\tau)\,\ket{\psi'_{\rm phys}}, \nn
\end{align}
where $\cg'$ denotes the $G$-twirl with respect to the group generated by the trivialized constraint $(\hat H_C-\varepsilon_*)\otimes I_S$. Notice that
\begin{align}
\hat F'_{f_S,T}(\tau) \ce \cg'\left(\ket{0}\!\bra{0}\otimes\hat f_S(\tau)\right)\nn
\end{align}
are the adaptations of the relational Dirac observables in Eq.~\eqref{RelationalDiracObservable} to the new representation on $\ch_{\rm phys}'$ with respect to the trivialized constraint. Exploiting the trinity of Sec.~\ref{sec_Unifying}, it is also clear that these coincide with the trivialized relational Dirac observables from $\ch_{\rm phys}$:
\ba
\hat F'_{f_S,T}(\tau)\,\ket{\psi'_{\rm phys}}&=&\ct_T\hat F_{f_S,T}(\tau)\ct_T^{-1}\ket{\psi'_{\rm phys}}\nn\\
&=&I_C\otimes \hat f_S^{\rm phys}(\tau)\,\ket{\psi'_{\rm phys}} \,. \label{trivialobs}
\ea
Since the trivialized constraint only acts on the clock factor, this result is to be expected.

The \emph{entire} relational dynamics relative to the covariant time observable $E_T$ is therefore encoded in {product} states, Eq.~\eqref{trivial4}, and {product} observables, Eq.~\eqref{trivialobs}, on $\ch'_{\rm phys}$ with respect to the kinematical tensor product.

The fact that one can always change a tensor factorization on a Hilbert space through an entangling unitary may lead one at first to think  that this observation is unsurprising. Let us explain why the situation is, in fact, more subtle. While we may also interpret the trivialization $\ct_T$ as a passive transformation which changes the {partitioning} of the theory into clock and system, it leads to crucial differences compared to standard unitary {re-partitionings of a Hilbert space:}
\begin{itemize}
\item[(a)] The trivialization map $\ct_T$ is generally \emph{not} a unitary on $\ch_{\rm kin}$ with respect to which the tensor factorization is defined. (It is unitary if the clock states in Eq.~\eqref{continuousClockState} are orthogonal.) In fact, it may not even  be invertible on $\ch_{\rm kin}$. By contrast, Lemma~\ref{lem_2} proves that $\ct_T$ is invertible between $\ch_{\rm phys}$ and $\ch'_{\rm phys}$, which is why Eq.~\eqref{trivialobs}  only holds weakly.
\item[(b)] The clock factor for \emph{all} observables and states on $\ch_{\rm phys}'$ is completely fixed through the constraint and contains no more information about the physics; it is redundant. All non-trivial physical information is   encoded in the system factor.
\end{itemize}

This highlights that one has to be careful with the picture that dynamics emerges from entanglement. Indeed, the notion of entanglement in gauge theories is subtle, especially when zero lies in the continuous spectrum of the constraint(s) as in this article.\footnote{A extreme example exhibiting the difference between kinematical and gauge-invariant entanglement is 3D vacuum quantum gravity. Kinematically, the theory has local degrees of freedom and accordingly there may be all kinds of entanglement on its kinematical Hilbert space. However, upon imposing the constraints, the theory becomes topological and thus devoid of local gauge-invariant degrees of freedom. The physical Hilbert space  turns out to be one-dimensional for 3D vacuum quantum gravity (with genus-one spatial hypersurfaces) \cite{Noui:2004iy}: it has a unique physical state which is also not part of the kinematical Hilbert space. Kinematical entanglement has become physically irrelevant.}
 It is correct that physical states Eq.~\eqref{GAP} are entangled with respect to the kinematical tensor product structure in the sense of not being separable. However, given that $\ch_{\rm phys}$ is {not} a subspace of $\ch_{\rm kin}$ (thanks to Eq.~\eqref{PIP} physical states can be thought of as distributions on $\ch_{\rm kin}$), physical states do not give rise to all the probabilistic consequences of entanglement on $\ch_{\rm kin}$, in particular in terms of correlations, because they are not normalizable with respect to the kinematical inner product. This notion of entanglement is in any case kinematical, and not gauge-invariant. As we shall now argue, it cannot be probed using gauge-invariant Dirac observables.

A physical notion of entanglement must  be defined in terms of structures on $\ch_{\rm phys}$. Let us now argue that the kinematical tensor product decomposition between clock and system, used to construct $\ch_{\rm phys}$, in fact does not survive on the latter. This is a consequence of the redundancy on the physical Hilbert space. As a result of the constraint defining the physical Hilbert space not all {of} the physical degrees of freedom are independent because some get fixed, while others will be algebraically related. This is especially evident from the trivialized physical Hilbert space $\ch_{\rm phys}'$ and the shape of its states Eq.~\eqref{trivial4}; their clock factor is entirely redundant. But it is also apparent from an algebraic perspective: a gauge-invariant tensor factorization of $\ch_{\rm phys}$ must manifest itself in terms of commuting subalgebras of Dirac observables. Are there subalgebras of Dirac observables  {that depend only} on clock and system degrees of freedom, respectively, which commute and can thereby establish that the physical Hilbert space factors into a clock and system decomposition? The only independent clock Dirac observable is its Hamiltonian $\hat H_C$, but due to Eq.~\eqref{WheelerDeWitt}, $\hat H_C$  is the same observable  as $\hat H_S$ on $\ch_{\rm phys}$, up to an overall negative sign. Owing to the redundancy on $\ch_{\rm phys}$, there do not exist independent commuting subalgebras of Dirac observables corresponding purely to clock and system degrees of freedom, respectively. In this sense, $\ch_{\rm phys}$ does \emph{not} inherit the kinematical tensor decomposition between clock and system.\footnote{Something similar happens when considering two qubits, {$\mathcal{H} \simeq\mathbb{C}^2 \otimes \mathbb{C}^2$}, and restricting to the three-dimensional subspace of the symmetric sector{, $\mathcal{H}_{\rm sym} \subset \mathcal{H}$}. {On} this subspace the observables relative to one qubit can be considered as dependent on those of the other. Likewise, this subspace does not inherit the tensor product structure of {$\mathcal{H}$} of which it is a subspace in the sense that it cannot be written as a non-trivial tensor product (after all, it is three-dimensional). The difference is that in the qubit case there is no gauge symmetry. Hence, {$\mathcal{H}$} is {already} `physical' and {thus} so too is the entanglement with respect to its tensor product structure.}

In conclusion, entanglement does play a role in the emergence of time evolution, but only a kinematical notion of it and even this is not strictly necessary.  Upon Page-Wootters reduction, kinematically entangled physical states yield the relational Schr\"odinger picture. However, one obtains the unitarily equivalent relational Heisenberg picture also through Page-Wootters reduction, but in this case of kinematically unentangled states from $\ch_{\rm phys}'$. To strengthen this last point, we argue now that this trivialized physical Hilbert space can sometimes be regarded as the result of a Dirac quantization of the same classical system Eq.~\eqref{Constraint}, but with respect to a different set of phase space coordinates. 

\subsection{Classical analog of the trivialization}\label{sec_clanalog}

 For this section only, let us assume that the system phase space $\cp_S$ is parametrized by canonical pairs $(q^i_S,p^i_S)_{i=1}^N$ and the clock phase space $\cp_C$ is parametrized by a canonical pair $(t,p_t)$, for simplicity all taking values in the full reals. The classical analog of the trivialization $\ct_T$ is a canonical transformation $\mathfrak{T}_T$ on $\cp_{\rm kin}=\cp_C\oplus\cp_S$, which splits the new canonical coordinates into pure gauge degrees of freedom on the one hand, and pure Dirac observables on the other:
\begin{align}
(t,p_t;q^i_S,p^i_S)_{i=1}^N &\mapsto \left(T,P_T\ce C_H;Q^i_S(\tau),P^j_S(\tau)\right),\nn
\end{align}
where 
\begin{align}
Q^i_S(\tau)\ce F_{q^i_S,T}(\tau)\,,\q\q P^j_S(\tau) \ce F_{p^i_S,T}(\tau),\nn
\end{align}
and $F_{f_S,T}(\tau)$ is given in Eq.~\eqref{F2}; for systems with constraints linear in the momenta see also \cite{Kunstatter:1991ds,Kuchar:1986ji,Dittrich:2013jaa}. The transformation $\mathfrak{T}_T$ is shown to be canonical in Appendix~\ref{app_canonical} and is  sometimes called an abelianization of constraints when there are several \cite{Henneaux:1992ig,dittrichPartialCompleteObservables2007}.

We note that we can also interpret this as a passive transformation which changes the decomposition of the kinematical phase space from $\cp_{\rm kin}=\cp_C\times\cp_S$ into $\cp_{\rm kin}=\cp_{C'}\times\cp_{S'}$, where, e.g., $\cp_{C'}$ is now parametrized by the canonical pair $(T,C_H)$ and thereby depends on the old $\cp_S$ degrees of freedom.

We can now formally Dirac quantize $\cp_{\rm kin}$ using the new canonically conjugate pairs. The following discussion is formal because the canonical transformation $\mathfrak{T}_T$ may not always be globally valid, so that the new canonical coordinates $(T,P_T;Q^i_S(\tau),P^j_S(\tau))$ may not be defined everywhere on $\cp_{\rm kin}$. For example, we have already seen in Sec.~\ref{Sec3a} that $T$ may be ill-defined on subsets of $\cp_{\rm kin}$ and, depending on $H_C$ and $H_S$, the new clock momentum $P_T$ may not actually take values in the full real line. In that case, we can not simply promote the pair $(T,P_T)$ to a pair of canonically conjugate self-adjoint operators on a new clock Hilbert space $\ch_{C'}$. Instead, one could employ affine quantization \cite{isham2,hoehnHowSwitchRelational2018}, promoting $P_T$ to a self-adjoint operator on $\ch_{C'}$ and defining the quantum analog of $T$ on $\ch_{C'}$, as in Sec.~\ref{Sec3b}, in terms of a covariant clock POVM, this time with respect to $\hat P_T$. More generally, it may be necessary to resort to geometric quantization techniques \cite{isham2,Hall2013}.

Leaving such global challenges aside, formally the kinematical Hilbert space $\ch_{\rm kin}'=\ch_{C'}\otimes\ch_{S'}$ is spanned by the states 
\begin{align}
 \ket{\psi_{\rm kin}'}=\int\,dP_T\,\prod_{j}dP^j_S\,\psi_{\rm kin}'(P_T,\{P^j_S\})\ket{P_T}\ket{P^j_S}\,.\nn
\end{align}
The constraint we need to now impose  is $\hat P_T$ and thus already trivialized. Hence, physical states defining a new physical Hilbert space $\ch_{\rm phys}''$ are 
\begin{align}
\ket{\psi_{\rm phys}''}& \ce \left(\delta(\hat P_T)\otimes I_{S}\right)\, \ket{\psi_{\rm kin}'}\nn\\
&=\ket{P_T=0}\otimes \int\,\prod_{j}dP^j_S\,\psi_{\rm kin}'(0,\{P^j_S\})\ket{P^j_S},\nn
\end{align}
in analogy to the trivialized physical states $\ct_T\ket{\psi_{\rm phys}}$ of Eq.~\eqref{trivial4}. Similarly, it is clear that a complete set of  Dirac observables in this decomposition is simply the kinematical operators
\begin{align}
I_C\otimes\hat Q_S^i(\tau) \ \mbox{ and }  \ I_C\otimes \hat P^j_S(\tau),\nn
\end{align}
in analogy to the trivialized relational Dirac observables in Eq.~\eqref{trivialobs}; all other Dirac observables will be functions of these {Dirac observables}. The physical Hilbert space of this Dirac quantization is  trivialized by construction.

What is the relation between this new physical Hilbert space $\ch_{\rm phys}''$ and the trivialized Hilbert space ${\ch_{\rm phys}' \ce \ct_T\left(\ch_{\rm phys}\right)}$? When $H_C$ is classically unbounded in both directions, and thus $\spec(\hat H_C)=\spec (\hat P_T)=\mathbb{R}$, the two  coincide, {$\ch_{\rm phys}' \simeq \ch_{\rm phys}''$}. In this case, the canonical transformation $\mathfrak{T}_T$ is globally defined on $\cp_{\rm kin}$ and the relational Dirac observables $Q^i_S(\tau),P^j_S(\tau)$ take values in all of the reals, even on the constraint surface $\cc$. In particular, one can quantize $(\hat T,\hat P_T)$ and $(\hat Q^i_S(\tau),\hat P^j_S(\tau))$ as canonically conjugate self-adjoint operators on $\ch_{\rm kin}'$ and this extends to $\ch_{\rm phys}''$ for the latter pairs. Hence, their spectrum on $\ch_{\rm phys}''$ is the full reals. Likewise, in this case we have $\sigma_{SC}=\spec(\hat H_S)$ on $\ch_{\rm phys}$, i.e.\ the system energy does not get restricted on the physical Hilbert space and we have $\ch_S^{\rm phys}=\ch_S$. Hence, we can identify a complete set of trivialized Dirac observables in Eq.~\eqref{trivialobs} with
\begin{align}
I_C\otimes \hat q^i_S(\tau) \ \mbox{ and }  \  I_C\otimes\hat p^j_S(\tau),\nn
\end{align}
where the $\hat q^i_S,\hat p^j_S$ are the system observables defining the relational Dirac observables $\hat F_{q^i_S,T}(\tau),\hat F_{p^j_S,T}(\tau)$. Their spectrum will likewise be the full real line, given that $\ch_S^{\rm phys}=\ch_S$. Accordingly, we have $\ch_{\rm phys}'\simeq\ch_{\rm phys}''$  and we can identify the two quantum theories on them. We note that in this special case the trivialization is actually a unitary operator on $\ch_{\rm kin}$ and we  have $\ch_{\rm kin}'=\ct_T\left(\ch_{\rm kin}\right)$.

While there may be other cases in which this equivalence holds, it is unlikely that the two quantum theories on $\ch_{\rm phys}'$ and $\ch_{\rm phys}''$ coincide in general, even if one could cope with the global challenges alluded to above. In fact, their relation will generally be of a similar kind as that between Dirac and reduced quantization discussed in Sec.~\ref{sec_dirac2red}. The Groenewold-van-Hove theorem \cite{Guillemin:1990ew,Hall2013} implies that two quantizations of the same system with respect to different sets of canonically conjugate coordinates cannot in general be unitarily equivalent. In our case, this means that $\ch_{\rm kin}$ and $\ch_{\rm kin}'$ will not in general be unitarily equivalent and this is consistent with the fact that $\ct_T$ is not in general unitary on $\ch_{\rm kin}$. This will render the question of whether the spectra of Dirac observables coincide in the two theories a complicated one. In the context of quantum gravity, this point has been raised before \cite{kucharTimeInterpretationsQuantum2011a,Isham1993} (see also \cite{Ashtekar:1993wb} where an equivalence between Dirac quantization of homogeneous cosmological models with respect to two different canonical coordinate sets could be established).

Regardless of whether the trivialized Hilbert space  and the Dirac quantization of the classically trivialized theory coincide, the trivialization map $\ct_T$ can in general be viewed as the quantum analog of the classical canonical transformation $\mathfrak{T}_T$.

\subsection{Simplifying commutators}\label{sec_Tcommutator}

As an aside,  the above observations are  useful for  the computation of commutators of relational Dirac observables on $\ch_{\rm phys}$. Observe that
\ba
&&\ct_T\left[\hat F_{f_S,T}(\tau),\hat F_{g_S,T}(\tau)\right]\ct_T^{-1}\,\ket{\psi_{\rm phys}'}\label{trivialcommutator}\\
&&\q\q\q\q\q\q=\left[\hat F'_{f_S,T}(\tau),\hat F'_{g_S,T}(\tau)\right]\,\ket{\psi_{\rm phys}'}\nn\\
&&\q\q\q\q\q\q=I_C\otimes\left[\hat f_S^{\rm phys}(\tau),\hat g_S^{\rm phys}(\tau)\right]\,\ket{\psi'_{\rm phys}}\,.\nn
\ea
For example, suppose $ \hat O_S \ce [\hat f_S^{\rm phys}(\tau),\hat g_S^{\rm phys}(\tau)]$ is a constant of motion on $\ch_S^{\rm phys}$. Then it immediately follows that $[\hat F_{f_S,T}(\tau),\hat F_{g_S,T}(\tau)]=I_C\otimes \hat O_S$ on $\ch_{\rm phys}$. This demonstrates that $\hat F_{f_S,T}(\tau)$ and $\hat F_{g_S,T}(\tau)$ are (weakly) canonically conjugate, if   $\hat f_S$ and $\hat g_S$ are canonically conjugate.

This is the quantum analog of how, classically, the Poisson-algebra of relational Dirac observables on the constraint surface $\cc$ is determined using the Dirac bracket on the gauge fixing surfaces \cite{dittrichPartialCompleteObservables2007,Dittrich:2005kc, Dittrich:2006ee,  Dittrich:2007jx}. More generally, recalling that $\hat f_S^{\rm phys}(\tau)=\exp(i\hat H_S\tau)\,\hat f_S^{\rm phys}\,\exp(-i\hat H_S\tau)$, it is clear that  Eqs.~\eqref{trivialobs} and~\eqref{trivialcommutator} are a manifestation of the (weak) quantum algebra homomorphism established in Theorem~\ref{thm_2a}.

%========================================
%========================================
\section{Changing temporal reference frames}
\label{SubSec_SwitchingFrames}

We now explain how a change of temporal reference frame is performed in both the Page-Wootters formalism and the relational Heisenberg picture obtained through quantum symmetry reduction, and, owing to the trinity, changes between these pictures.  Recall that a \emph{temporal reference frame (system)} is a clock $C$  associated with a Hilbert space $\mathcal{H}_C$, a Hamiltonian $\hat{H}_C$, and a time observable $E_T$ associated with a POVM that is covariant with respect to the group generated by $\hat{H}_C$ and defined by the set of clock states $\{ \ket{t}, \,  \forall t \in G \}$.  A change of temporal reference frame therefore means changing the clock with respect to which the dynamics of a system is specified.  

We examine in sequence  how states and observables transform under a change of temporal reference frame.  To construct the temporal frame change (TFC) map, we will make use of the reduction maps and their inverses, given for the relational Schr\"{o}dinger picture (Page-Wootters formalism) in Eqs.~\eqref{rsp} and \eqref{rsp1} and for the relational Heisenberg picture in Eqs.~\eqref{rsp2} and \eqref{rhp2}. We then use the TFC map to briefly examine the relativity of temporal locality. In what follows we thereby generalize (and recover) the recent temporal frame change operations developed in Ref.~\cite{hoehnHowSwitchRelational2018, Hoehn:2018whn} for the relational Heisenberg picture and reduced quantization, and later in Ref.~\cite{castro-ruizTimeReferenceFrames2019} for the Page-Wootters formalism. In particular, we will show that they are equivalent.

%========================================
%========================================
\subsection{State transformations}
\label{State transformation}

Consider two clocks (temporal reference frames), $A$ and $B$, and a system $S$ whose dynamics we are interested in describing with respect to either clock. Suppose the physical states of the theory satisfy the constraint equation
\begin{align}
\hat{C}_H \ket{\psi_{\rm phys}} = \left(\hat{H}_A + \hat{H}_B + \hat{H}_S \right) \ket{\psi_{\rm phys}} = 0, 
\label{twoClocks}
\end{align}
where for simplicity we have suppressed tensor products of identity operators (e.g. $\hat{H}_A = \hat{H}_A \otimes I_B \otimes I_S$). In the relational Schr\"{o}dinger and Heisenberg pictures, the state of clock $B$ and system $S$ with respect to clock $A$ is\footnote{With two clocks, as described by the constraint in Eq.~\eqref{twoClocks}, one can apply a second reduction map to  the  state yielding twice conditioned state of $S$ 
\begin{align}
\ket{\psi_{S|AB}(\tau_A, \tau_B)} &\ce \mathcal{R}_{{\mathbf S}}(\tau_A) \circ \mathcal{R}_{{\mathbf S}}(\tau_B) \ket{ \psi_{\rm phys}}. \nn 
\end{align}
Note that $\mathcal{R}_{{\mathbf S}}(\tau_A)  \mathcal{R}_{{\mathbf S}}(\tau_B)  = \mathcal{R}_{{\mathbf S}}(\tau_B)  \mathcal{R}_{{\mathbf S}}(\tau_A) $. An expansion of a physical state may be specified in terms of this twice reduced state
\begin{align}
\ket{\psi_{\rm phys}} &= \int d\tau_A d \tau_B \,  \ket{\tau_A} \ket{\tau_B} \ket{\psi_{S|AB} (\tau_A, \tau_B)}. \nn 
\end{align}}\begin{subequations}
\begin{align}
\ket{\psi_{BS|A}(\tau_A)} &\ce \mathcal{R}_{{\mathbf S}}(\tau_A) \ket{ \psi_{\rm phys}}, \nn\\
\ket{\psi_{BS|A}} &\ce \mathcal{R}_{{\mathbf H},A} \ket{ \psi_{\rm phys}}, \nn
\end{align}
\end{subequations}
while the state of $A$ and $S$ with respect to  $B$ is
\begin{align}
\ket{\psi_{AS|B}(\tau_B)} &\ce \mathcal{R}_{{\mathbf S}}(\tau_B) \ket{ \psi_{\rm phys}}, \nn \\
\ket{\psi_{AS|B}} &\ce \mathcal{R}_{{\mathbf H}, B} \ket{ \psi_{\rm phys}}. \nn
\end{align}
For clarity in the frame change procedure below, we attach the reference frame label $A$ or $B$ to the Heisenberg reduction map and to the clock reading $\tau$ in the case of the Schr\"odinger reduction map.

A change of temporal reference frames is performed by acting on the state of $BS$ relative to $A$ with the inverse reduction map associated with $A$, followed by the  clock $B$ reduction map. The composition of these two maps yields the TFC maps which take states relative to $A$ to states relative to $B$, that is, $\Lambda^{A \to B}\,:\, \mathcal{H}_{B}^{\rm phys} \otimes \mathcal{H}_{S}^{\rm phys} \to \mathcal{H}_{A}^{\rm phys} \otimes \mathcal{H}_{S}^{\rm phys}$,\footnote{Note that $\ch_B^{\rm phys}\otimes\ch_S^{\rm phys}$ is the physical subspace of $\ch_B\otimes\ch_S$, i.e.\ the subspace permitted by the constraint Eq.~\eqref{twoClocks}.} and where, depending on which relational picture we work in and whether we also change the relational picture,
\begin{align}
\Lambda^{A \to B}_{\mathbf{S}} &\ce \calr_\mathbf{S}(\tau_B)\circ\calr_\mathbf{S}^{-1}(\tau_A)\nn\\
\Lambda^{A\to B}_\mathbf{H}&\ce\calr_{\mathbf{H},B}  \circ \calr^{-1}_{\mathbf{H},A}\nn\\
\Lambda^{A\to B}_{\mathbf{H}\to\mathbf{S}}&\ce\calr_\mathbf{S}(\tau_B)\circ\calr_{\mathbf{H},A},\nn\\
\Lambda^{A\to B}_{\mathbf{S}\to\mathbf{H}}&\ce\calr_{\mathbf{H},B}\circ\calr_\mathbf{S}(\tau_A),\label{TFC}
\end{align}
The structure of these four ways of changing frame from $A$ to $B$ is depicted in Fig.~\ref{ChangeingRF}.

 Thanks to the compositional structure in Eq.~\eqref{TFC}, the TFC map $\Lambda^{A\to B}$ always passes through the physical Hilbert space $\ch_{\rm phys}$. For instance, in the relational Schr\"odinger picture $\calr^{-1}_{{\mathbf S}}(\tau_A)\,\ket{\psi_{BS|A}(\tau_A)}\in\ch_{\rm phys}$ as shown in Sec.~\ref{sec_nondegI2III}, and similarly for the relational Heisenberg picture.  The TFC map thereby has the compositional structure analogous to coordinate changes $\varphi_B\circ\varphi_A^{-1}$ on a manifold. For example, in general relativity these pass from one coordinate description of the local physics {\it via} the reference frame independent (i.e.\ coordinate independent) description of the spacetime manifold, to another coordinate description of the local physics. Indeed, here we can think of Eq.~\eqref{TFC} as defining a ``quantum coordinate change''. The temporal reference frames $A$ and $B$ define two possible descriptions in the coordinates $\tau_A$ and $\tau_B$ for the quantum evolution of the remaining degrees of freedom. The physical Hilbert space $\ch_{\rm phys}$, defined  here by Eq.~\eqref{twoClocks}, assumes the analogous role of the manifold since it is independent of the choice of which subsystem is used as a temporal reference system. The physical Hilbert space encodes a multitude of such temporal frame choices (clock perspectives), not just $A$ and $B$. This is  why we may think of $\ch_{\rm phys}$ as defining a clock-neutral~\cite{hoehnHowSwitchRelational2018, Hoehn:2018whn}, rather than timeless quantum theory; it is a quantum description prior to having chosen a temporal quantum reference frame. The framework developed here thereby contributes to the more general perspective-neutral approach to both spatial and temporal quantum reference frames introduced in \cite{Hoehn:2017gst,Vanrietvelde:2018dit,Vanrietvelde:2018pgb,hoehnHowSwitchRelational2018, Hoehn:2018whn}. Changes of perspective (i.e. quantum reference frame) in this approach always proceed via the perspective-neutral physical Hilbert space; see Fig.~\ref{QuantumCovariance} for more discussion.

\begin{figure}[t]
\includegraphics[width= 245pt]{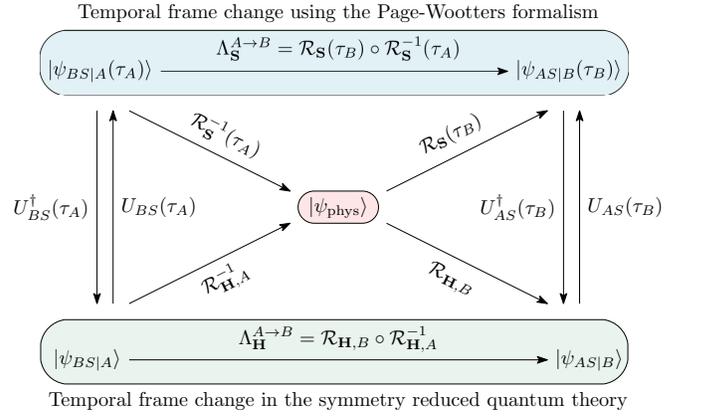}
\caption{The temporal  frame change (TFC) maps in the relational Schr\"{o}dinger picture (Page-Wootters formalism), $\Lambda^{A \to B}_{\mathbf S}$, and the relational Heisenberg picture, $\Lambda^{A \to B}_{\mathbf H}$, as well as TFC maps acting in-between them, as given in Eq.~\eqref{TFC}. To transform  the  state of clock $B$ and system $S$ and with respect to clock $A$, to the state of $A$ and $S$ with respect to $B$, we must first pass to the physical Hilbert space via the inverse of the reduction map, indicated by the arrows pointing from the top and bottom left corners to the center, followed by the application of the reduction map, depicted by the arrows pointing from the center to the top and bottom right corners.
}
\label{ChangeingRF}
\end{figure}

\begin{figure}[t]
\includegraphics[width= 245pt]{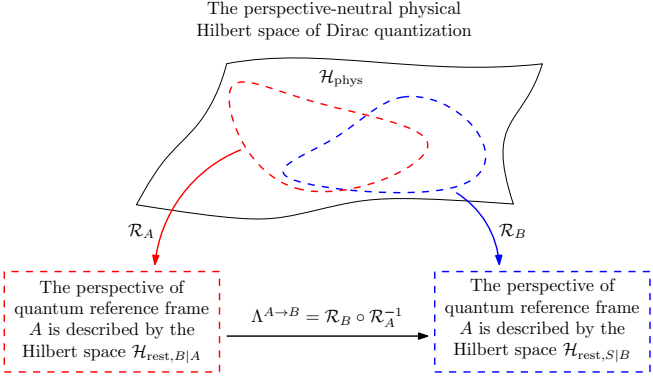}
\caption{A change of quantum frame perspective has the same compositional structure as coordinate changes on a manifold. The `quantum coordinate maps' $\calr_A$ and $\calr_B$ take as their input the perspective-neutral physics on $\ch_{\rm phys}$ and map it to a description relative to the perspective of either quantum reference frame $A$ or $B$. The quantum coordinate maps $\calr_A,\calr_B$ are maps between Hilbert spaces (quantum reduction maps). Just like coordinates on a manifold, a perspective need not be globally valid (due to the Gribov problem) \cite{Vanrietvelde:2018dit,Vanrietvelde:2018pgb,hoehnHowSwitchRelational2018, Hoehn:2018whn}.}
\label{QuantumCovariance}
\end{figure}

The TFC map, defined in Eq.~\eqref{TFC}, transforms states in the relational Schr\"{o}dinger picture as
\begin{align}
&\Lambda_{\mathbf S}^{A \to B} : \mathcal{H}^{\rm phys}_{B} \otimes \mathcal{H}^{\rm phys}_{S} \to \mathcal{H}^{\rm phys}_{A} \otimes \mathcal{H}^{\rm phys}_{S}, \nn \\
& \qquad \ket{\psi_{BS|A} (\tau_A)} \mapsto  \ket{\psi_{AS|B}(\tau_B)} =\Lambda^{A \to B}_{\mathbf S} \ket{\psi_{BS|A}(\tau_A)}, 
 \nn
\end{align}
where $\Lambda^{A \to B}_{\mathbf S}$ is the operator
\begin{align}
\Lambda_{\mathbf S}^{A \to B}  &\ce \calr_{{\mathbf S}}(\tau_B)  \circ \calr^{-1}_{{\mathbf S}}(\tau_A) \nn \\
&= \left(\bra{\tau_B} \otimes I_{AS} \right) \delta(\hat{C}_H) \left(\ket{\tau_A} \otimes I_{BS} \right), \label{LambdaPW}
\end{align}
and $I_{AS}$ denotes the identity on $\ch_A\otimes\ch_S$ and similarly for $I_{BS}$. In the relational Heisenberg picture the  state transforms as
\begin{align}
\Lambda_{\mathbf H}^{A \to B}\,:\, \mathcal{H}^{\rm phys}_{B} \otimes \mathcal{H}^{\rm phys}_{S} &\to \mathcal{H}^{\rm phys}_{A} \otimes \mathcal{H}^{\rm phys}_{S}, \nn \\
 \ket{\psi_{BS|A}} &\mapsto  \ket{\psi_{AS|B}} =\Lambda^{A \to B}_{\mathbf H} \ket{\psi_{BS|A}}, 
 \nn
\end{align}
where $\Lambda^{A \to B}_{\mathbf H}$ is the operator
\begin{align}
\!\! \Lambda_{\mathbf H}^{A \to B}  &\!\ce \calr_{{\mathbf H},B} \circ \calr^{-1}_{{\mathbf H},A} \nn \\
&= \! \!\left(\!\bra{\tau_B} \otimes U_{AS}^\dagger(\tau_B) \!\right)\! \delta(\hat{C}_H) \left(\ket{\tau_A} \otimes U_{BS}(\tau_A) \right)\!, 
\label{TFCQR}
\end{align}
where $U_{AS}^\dag(\tau_B) = e^{i\,(\hat H_A+\hat H_S)\,\tau_B}$ and similarly for $U_{BS}(\tau_A)$. We emphasize that {in the sequel we will always assume} the TFC operators in Eqs.~\eqref{LambdaPW} and \eqref{TFCQR} {to act} on $\mathcal{H}^{\rm phys}_{A} \otimes \mathcal{H}^{\rm phys}_{S}$, {so that we may use, e.g., the simpler form Eq.~\eqref{rsp2} for $\calr_{\mathbf H}$}.

%========================================
%========================================
\subsection{Observable transformations}
\label{Observable transformation}

A change of temporal reference frame also induces a transformation of observables. Under a change of temporal reference frame, the expectation value of the untransformed observable with the untransformed state is equal to the expectation value of the transformed observable with the transformed state. We examine transformations of observables in the relational Schr\"{o}dinger and Heisenberg pictures in the following two subsections.

\subsubsection{Observable transformations in the relational Schr\"{o}dinger picture}

Consider in the relational Schr\"{o}dinger picture the observable $\hat{O}_{BS|A}^{\rm phys} \in \mathcal{L}(\mathcal{H}_B^{\rm phys} \otimes \mathcal{H}_S^{\rm phys})$ associated with $BS$ {`seen'} from the perspective of $A$. Demanding the expectation value of $\hat{O}_{BS|A}^{\rm phys}$ with the untransformed state $\ket{\psi_{BS|A} (\tau_A)}$ be equal to the expectation value of the transformed observable, which we denote $\hat{O}_{AS|B}^{\rm phys}(\tau_A, \tau_B)  \in \mathcal{L}(\mathcal{H}_A^{\rm phys} \otimes \mathcal{H}_S^{\rm phys})$ on the transformed state implies that 
\begin{align}
&\bra{\psi_{BS|A} (\tau_A)}  \hat{O}_{BS|A}^{\rm phys} \ket{\psi_{BS|A} (\tau_A) }  \nn \\
&\hspace{4em}  = \bra{\psi_{AS|B} (\tau_B)}  \hat{O}_{AS|B}^{\rm phys}(\tau_A, \tau_B) \ket{\psi_{AS|B}(\tau_B)}. \nn
\end{align}
The appearance of the evolution parameters $\tau_A,\tau_B$ in $B$'s Schr\"odinger picture wil be clarified shortly.
It then follows that  the observables transform between perspectives under conjugation with the TFC map $\Lambda^{A \to B}_{\mathbf S}$
\begin{align}
\hat{O}_{AS|B}^{\rm phys} (\tau_A, \! \tau_B) \!&=\! \Lambda^{A \to B}_{\mathbf S} \hat{O}_{BS|A}^{\rm phys} \left(\Lambda^{A \to B}_{\mathbf S} \right)^\dagger \label{PWobstrans} \\
&= \! \mathcal{R}_{{\mathbf S}}(\tau_B) \circ \mathcal{E}_{\mathbf S}^{\tau_A} \!\left( \hat{O}_{BS|A}^{\rm phys} \right) \! \circ \mathcal{R}^{-1}_{{\mathbf S}}(\tau_B) \nn \\
&= \! \bra{\tau_B}\!  \delta(\hat{C}_H)\! \left( \! \ket{\tau_A} \! \bra{\tau_A}\!  \otimes \! \hat{O}_{BS|A}^{\rm phys}\!  \right) \!\delta(\hat{C}_H) \! \ket{\tau_B}\!, \nn
\end{align}
where we have made use of Eqs.~\eqref{encode1} and \eqref{LambdaPW}. It is thus seen that the observable $\hat{O}_{BS|A}^{\rm phys}$ transforms from $A$'s perspective to $B$'s perspective by first acting on it with the operator $\ket{\tau_A}\!\bra{\tau_A}$ associated with clock $A$ reading the time $\tau_A$, yielding $\ket{\tau_A}\!\bra{\tau_A} \otimes \hat{O}_{BS|A}^{\rm phys}$. This operator is then projected onto the physical Hilbert space via the operator $\delta(\hat{C}_H)$ and  conditioned on clock $B$ reading the time $\tau_B$. This procedure yields the transformed observable on $AS$ as {seen from} the perspective of $B$. 

Crucially, notice that in line with the perspective-neutral approach~\cite{hoehnHowSwitchRelational2018, Hoehn:2018whn,Vanrietvelde:2018dit,Vanrietvelde:2018pgb} alluded to above, these observable transformations from one `clock-perspective' to another always proceed via the algebra of  Dirac observables on $\ch_{\rm phys}$. Indeed, adapting Theorem~\ref{thm_2} to the present case implies that the encoding $\mathcal{E}_{\mathbf S}^{\tau_A} ( \hat{O}_{BS|A}^{\rm phys} )$ inside Eq.~\eqref{PWobstrans} corresponds to the relational Dirac observable $\hat F_{O_{BS|A},T_A}(\tau_A)$ on $\ch_{\rm phys}$. This is the observable analog of the `quantum coordinate changes' described before, which map reduced states from one perspective always \emph{via} $\ch_{\rm phys}$ to reduced states of another perspective (cf.~Fig.~\ref{QuantumCovariance}).

In order to understand the meaning of the state and observable transformations, it is important to note that we are always describing the \emph{same} physics (encoded in the clock-neutral $\ch_{\rm phys}$), just from different (clock) perspectives. In particular, just as we always describe the \emph{same} clock-neutral physical state $\ket{\psi_{\rm phys}}$ in reduced form relative to different clocks, we also always describe the \emph{same} Dirac observable from $\ch_{\rm phys}$ (in Eq.~\eqref{PWobstrans} this is $\hat F_{O_{BS|A},T_A}(\tau_A)$) in the respective reduced theories. It is precisely these clock-neutral structures of states and observables on $\ch_{\rm phys}$ that provide the consistent link between the different reduced descriptions relative to different choices of clock.

It is seen from Eq.~\eqref{PWobstrans} that the transformed observable may depend on {\it both} $\tau_A$ and $\tau_B$, even though the untransformed observable was independent of both $\tau_A$ and $\tau_B$. The explicit dependence of the transformed observable on the evolution parameter $\tau_A$ from the old perspective should not surprise because, as just observed, we are now describing the relational Dirac observable $\hat F_{O_{BS|A},T_A}(\tau_A)$ from the perspective of clock $B$, and this observable includes a description of how  system degrees of freedom evolve relative to clock $A$. Loosely speaking, this is analogous to how in relativity an observer $B$ may describe from their reference frame how a system $S$ evolves relative to the clock of some other observer $A$. The $\tau_B$ dependence, by contrast, is more subtle. The following theorem states the necessary and sufficient conditions under which the transformed observable is independent of~$\tau_B$.

\begin{theorem} \label{tauindependent} 
Consider an operator  on  $BS$ from the perspective of $A$ described by  $\hat O_{BS|A}^{\rm phys} \in \mathcal{L}(\mathcal{H}_B^{\rm phys} \otimes \mathcal{H}_S^{\rm phys} )$. From the perspective of $B$,  this  operator is independent of $\tau_B$, so that $\hat{O}_{AS|B}^{\rm phys}(\tau_A, \tau_B) = \hat{O}_{AS|B}^{\rm phys}( \tau_A) \in \mathcal{L}(\mathcal{H}_A^{\rm phys} \otimes \mathcal{H}_S^{\rm phys})$ if and only if  
\begin{align}
\hat O_{BS|A}^{\rm phys}  = \sum_i \left( \hat O_{B|A}^{\rm phys} 
\right)_i \otimes \left( \hat f_{S|A}^{\rm phys} \right)_i, \nn
\end{align}
where $(\hat f_{S|A}^{\rm phys} )_i$ is an operator on $S$ and $( \hat O_{B|A}^{\rm phys} )_i$ is a constant of motion, $[( \hat O_{B|A}^{\rm phys} )_i,\hat H_B]=0$. Furthermore, in this case
\begin{align}
\hat O_{AS|B}^{\rm phys}(\tau_A) &= \Pi_{\sigma_{ABS}} \Bigg[  \sum_i  \mathcal{G}_{AS} \left( \ket{\tau_A}\! \bra{\tau_A} \otimes 
\left(\hat f_{S|A}^{\rm phys} \right)_i  \right) \nn \\
&\quad\times \bra{t_B} \left( \hat O_{B|A}^{\rm phys} \right)_i \delta(\hat C_H)\,\ket{t_B} \Bigg] \Pi_{\sigma_{ABS}}, \label{comobservable}
\end{align}
where $\Pi_{\sigma_{ABS}}$ is a projection onto the subspace of $\ch_A\otimes\ch_S$ spanned by energy eigenstates whose energy lies in $\sigma_{ABS} \ce \spec (\hat{H}_A + \hat{H}_S) \cap \spec(-\hat{H}_B)$,  $\ket{t_B}$ is an arbitrary clock state of $B$, and $\mathcal{G}_{AS}$ is the $G$-twirl over the group generated by $\hat{H}_A + \hat{H}_S$.
\end{theorem}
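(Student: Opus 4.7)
The plan is to prove both directions by working directly with the frame-change formula in Eq.~\eqref{PWobstrans} and expanding each $\delta(\hat C_H)$ as a group average $\tfrac{1}{2\pi}\!\int\!ds\,e^{is\hat C_H}$ over $\hat C_H=\hat H_A+\hat H_B+\hat H_S$. Since the exponentials factor across the three tensor slots, the central object becomes the scalar-valued $B$-bracket $\bra{\tau_B}e^{is\hat H_B}\hat O_{BS|A}^{\rm phys}e^{it\hat H_B}\ket{\tau_B}$, whose $\tau_B$-dependence is controlled by the interplay between clock-state covariance Eq.~\eqref{covariancestate} and the $\hat H_B$-commutativity of $\hat O_{BS|A}^{\rm phys}$.

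For sufficiency, I substitute $\hat O_{BS|A}^{\rm phys}=\sum_i(\hat O_{B|A}^{\rm phys})_i\otimes(\hat f_{S|A}^{\rm phys})_i$ with $[(\hat O_{B|A}^{\rm phys})_i,\hat H_B]=0$, so that $e^{is\hat H_B}(\hat O_{B|A}^{\rm phys})_i e^{it\hat H_B}=(\hat O_{B|A}^{\rm phys})_i e^{i(s+t)\hat H_B}$ depends only on $w\ce s+t$; covariance combined with the same commutativity then allows $\ket{\tau_B}$ to be replaced by an arbitrary reference $\ket{t_B}$. Changing integration variables from $(s,t)$ to $(s,w)$, the $s$-integral builds the $G$-twirl $\mathcal{G}_{AS}\bigl(\ket{\tau_A}\!\bra{\tau_A}\otimes(\hat f_{S|A}^{\rm phys})_i\bigr)$ on the $A\otimes S$ factor, while the $w$-integral combines with the $B$-bracket to produce $\bra{t_B}(\hat O_{B|A}^{\rm phys})_i\delta(\hat C_H)\ket{t_B}$. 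The projectors $\Pi_{\sigma_{ABS}}$ emerge from the spectral support imposed by $\delta(\hat C_H)$, reproducing Eq.~\eqref{comobservable}, which is manifestly $\tau_B$-independent.

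For necessity, I differentiate Eq.~\eqref{PWobstrans} in $\tau_B$ using $\partial_{\tau_B}\ket{\tau_B}=-i\hat H_B\ket{\tau_B}$ and $[\hat H_B,\delta(\hat C_H)]=0$, which yields
\begin{align}
\partial_{\tau_B}\hat O_{AS|B}^{\rm phys}(\tau_A,\tau_B)=i\bra{\tau_B}\delta(\hat C_H)\bigl(\ket{\tau_A}\!\bra{\tau_A}\otimes[\hat H_B,\hat O_{BS|A}^{\rm phys}]\bigr)\delta(\hat C_H)\ket{\tau_B}.\nn
\end{align}
This is precisely the frame-change transformation applied to the commutator $[\hat H_B,\hat O_{BS|A}^{\rm phys}]$. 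Combining the isometry property of the $B$-reduction (Corollary~\ref{InnerProductTheorem} adapted to the two-clock setting) with Theorem~\ref{thm_2} and Lemma~\ref{lem_1} shows that this frame-change map is injective modulo the physical equivalence of Lemma~\ref{lem_1}, so the assumed vanishing for all $\tau_A,\tau_B$ forces $[\hat H_B,\hat O_{BS|A}^{\rm phys}]$ to vanish on the physical subspace $\ch_B^{\rm phys}\otimes\ch_S^{\rm phys}$. Expanding $\hat O_{BS|A}^{\rm phys}=\sum_{m,n}\ket{\epsilon_m^B}\!\bra{\epsilon_n^B}\otimes\hat X_{mn}$ in the $\hat H_B$-eigenbasis turns this into $(\epsilon_m^B-\epsilon_n^B)\hat X_{mn}\approx 0$, so only $\hat H_B$-block-diagonal contributions survive; regrouping them by distinct $\hat H_B$-eigenvalues yields the desired factorized form with each $(\hat O_{B|A}^{\rm phys})_i$ commuting with $\hat H_B$.

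The main obstacle is the necessity step, where one must rigorously pass from the vanishing of the family of scalar expressions $\bra{\tau_B}\cdots\ket{\tau_B}$ to an operator identity on the physical subspace of $\ch_B\otimes\ch_S$. This requires carefully invoking the injectivity of the $B$-reduction modulo the adapted equivalence of Lemma~\ref{lem_1}, and handling the case where $\sigma_{ABS}$ is a proper subset of $\spec(\hat H_A)+\spec(\hat H_S)$, so that kinematically off-diagonal blocks which vanish only against covariant clock states but are not intrinsically zero still need to be controlled when reconstructing the factorized decomposition.
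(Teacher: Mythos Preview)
Your sufficiency argument is essentially the paper's: both expand the two $\delta(\hat C_H)$'s, use $[\hat O_{B|A},\hat H_B]=0$ to collapse the $B$-bracket to a function of a single combination of the integration variables, and then identify the remaining integrals as the $G$-twirl $\mathcal G_{AS}$ and the factor $\bra{t_B}\hat O_{B|A}\delta(\hat C_H)\ket{t_B}$. The paper makes the appearance of $\Pi_{\sigma_{ABS}}$ explicit by inserting clock-state resolutions of the identity so that the $\chi$-integrals of Eq.~\eqref{scproj} arise, whereas you leave this implicit; the content is the same.

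Your necessity argument, however, takes a genuinely different route. The paper first reduces to a single product term $\hat O_{B|A}\otimes\hat O_{S|A}$, derives an expression in which the entire $\tau_B$-dependence sits in the matrix element $\bra{\tau_B-u}\hat O_{B|A}\ket{\tau_B-v}$, differentiates \emph{that} to obtain $[\hat O_{B|A},\hat H_B]=0$, and then invokes linearity to reach the general decomposition. You instead differentiate the full frame-change expression directly, recognise the result as $\Lambda^{A\to B}_{\mathbf S}$ applied to $[\hat H_B,\hat O_{BS|A}^{\rm phys}]$, and use that $\Lambda^{A\to B}_{\mathbf S}=\calr_{\mathbf S}(\tau_B)\circ\calr_{\mathbf S}^{-1}(\tau_A)$ is an isometry (hence conjugation by it is injective on operators) to conclude the commutator vanishes on the physical subspace, after which the $\hat H_B$-eigenbasis decomposition is immediate. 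Your route has the advantage of treating general $\hat O_{BS|A}^{\rm phys}$ in one stroke without first assuming product form, so the ``by linearity'' step that the paper uses for necessity is replaced by a clean invertibility argument; the paper's route has the advantage of isolating the $\tau_B$-dependence in an explicit scalar matrix element, which makes the mechanism more transparent. Note that your appeal to Theorem~\ref{thm_2} and Lemma~\ref{lem_1} is not really needed here: once you know $\Lambda^{A\to B}_{\mathbf S}$ is invertible between the physical subspaces (Corollary~\ref{InnerProductTheorem} adapted to two clocks), injectivity of conjugation is immediate and no further quotienting by equivalence classes is required.
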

\begin{proof}
The proof is given in Appendix \ref{Sec_theorems}.
\end{proof}

Adapting Eq.~\eqref{PaWSchrod} to the present case, it follows that $\hat H_A+\hat H_S$ is the Hamiltonian which generates the time evolution in the Schr\"odinger picture relative to clock $B$. This Hamiltonian is $\tau_B$ independent. Observables in a Schr\"odinger picture with a time independent Hamiltonian are usually time independent  themselves. Theorem~\ref{tauindependent} shows that this is the case in the new perspective when the  observable being transformed does not encode any evolving degrees of freedom of the new clock $B$. 

When Schr\"odinger picture observables are nevertheless explicitly dependent on time, one often associates this with some external influence {(e.g.\ classical control of a magnetic field)}. Here the situation is different. Theorem~\ref{tauindependent} shows that if the observable $\hat F_{O_{BS|A},T_A}(\tau_A)$ being transformed contains degrees of freedom of the new clock $B$ that evolve non-trivially with respect to the old clock $A$, this observable will have an explicit $\tau_B$ dependence even when described in the Schr\"odinger picture relative to the new clock $B$. 

This is, in fact, an indirect instance of  self-reference  by clock $B$: the transformed observable $\hat{O}_{AS|B}^{\rm phys}(\tau_A, \tau_B)$ is the description of the relational Dirac observable $\hat F_{O_{BS|A},T_A}(\tau_A)$ from the perspective of $B$. But $\hat F_{O_{BS|A},T_A}(\tau_A)$ describes how $B$ (and $S$) degrees of freedom evolve relative to $A$. Hence, $\hat{O}_{AS|B}^{\rm phys}(\tau_A, \tau_B)$ indirectly describes how $B$ degrees of freedom evolve relative to $B$. This becomes particularly evident when, e.g., $\hat O_{BS|A}=\hat T_B\otimes I_S$ and so $\hat F_{O_{BS|A},T_A}(\tau_A)\equiv\hat F_{T_B\otimes I_S,T_A}(\tau_A)$. In that case, $\hat{O}_{AS|B}^{\rm phys}(\tau_A, \tau_B)$  encodes how the first moment of the clock $B$ evolves relative to the clock $A$ and describes these relations from the perspective of $B$. It should be no surprise that this observable must depend on $\tau_B$ even in the Schr\"odinger picture relative to $B$, despite the evolution generator being $\tau_B$ independent.

Theorem~\ref{tauindependent} clarifies that such an indirect clock self-reference will in general manifest itself in the shape of observables in the relational Schr\"odinger picture of this clock, which explicitly depend on its own evolution parameter. We note that this observation is only possible thanks to the clock-neutral picture on $\ch_{\rm phys}$, which encodes many clock choices at once.

From $A$'s perspective, if it is the case that $\hat O_{BS|A}^{\rm phys} = I_{B|A}^{\rm phys} \otimes \hat{f}_{S|A}^{\rm phys}$, it follows immediately from Eq.~\eqref{comobservable} that the transformed observable on $AS$ from the perspective of $B$ is 
\begin{align}
{\hat O_{AS|B}^{\rm phys}(\tau_A)} \!= \! \Pi_{\sigma_{ABS}}  \mathcal{G}_{AS} \left( \ket{\tau_A}\! \bra{\tau_A} \otimes  \hat f_{S|A}^{\rm phys} \right) \Pi_{\sigma_{ABS}}.
\label{STransformation}
\end{align}  
This can also be seen to follow from the shape of $\hat F_{O_{BS|A},T_A}(\tau_A)=\hat F_{I_B\otimes f_S,T_A}(\tau_A)=\hat F_{f_S,T_A}(\tau_A)\otimes I_B$ on $\ch_{\rm phys}$, by adapting Eq.~\eqref{RelationalDiracObservable} to the constraint Eq.~\eqref{twoClocks}. The $G$-twirl appearing in Eq.~\eqref{STransformation} has the effect of removing any coherence the operator  $\ket{\tau_A}\! \bra{\tau_A} \otimes  \hat f_{S|A}^{\rm phys}$ may have across the eigenspaces of $\hat H_A + \hat H_S$; that is, the transformed observable is superselected with respect to the charge sectors induced by $\hat H_A + \hat H_S$~\cite{Bartlett:2007zz,smithCommunicatingSharedReference2019}.

Equation~\eqref{STransformation} implies the following corollary, which provides the necessary and sufficient conditions for a system observable $\hat{f}_{S|A}^{\rm phys}$ to be invariant under a change of temporal frame. 
\begin{corol}
Consider an observable seen from the perspective of $A$ that acts nontrivially only on $S$, 
\begin{align}
\hat O_{BS|A}^{\rm phys} =  I_{B|A}^{\rm phys} \otimes \hat{f}_{S|A}^{\rm phys}. \nn
\end{align}
Under a temporal frame change to the perspective of $B$, such an observable transforms to 
\begin{align}
\hat O_{AS|B}^{\rm phys} =  I_{A|B}^{\rm phys} \otimes \hat{f}_{S|B}^{\rm phys}, \nn
\end{align}
where $\hat{f}^{\rm phys}_{S|B} =  \hat{f}^{\rm phys}_{S|A}$ if and only if $\hat{f}^{\rm phys}_{S|A}$ is a constant of motion, $ [ \hat{f}_{S|A}^{\rm phys}, \hat{H}_{S} ] = 0$. 
\end{corol}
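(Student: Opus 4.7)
\emph{Proof plan.} I would start from Eq.~\eqref{STransformation}, which is the specialisation of the preceding theorem to $\hat O_{BS|A}^{\rm phys} = I_{B|A}^{\rm phys} \otimes \hat f_{S|A}^{\rm phys}$. Expanding the $G$-twirl $\mathcal{G}_{AS}$ via $U_{AS}(t) = e^{-it\hat H_A} \otimes e^{-it\hat H_S}$, applying the covariance relation $e^{-it\hat H_A}\ket{\tau_A} = \ket{\tau_A + t}$ from Eq.~\eqref{covariancestate}, and shifting the integration variable by $\tau_A$, the transformed observable becomes
\begin{align}
\hat O_{AS|B}^{\rm phys}(\tau_A) &= \Pi_{\sigma_{ABS}}\Bigl[\frac{1}{2\pi}\int dt\,\ket{t}\!\bra{t} \otimes U_S(t-\tau_A)\,\hat f_{S|A}^{\rm phys}\,U_S^\dagger(t-\tau_A)\Bigr]\Pi_{\sigma_{ABS}}. \nonumber
\end{align}
This common skeleton feeds both directions of the biconditional.

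For sufficiency, if $[\hat f_{S|A}^{\rm phys}, \hat H_S]=0$ the system factor in the integrand collapses to $\hat f_{S|A}^{\rm phys}$, the clock integral reduces to $I_A$ by the POVM normalisation of Eq.~\eqref{completeness}, and the operator $I_A \otimes \hat f_{S|A}^{\rm phys}$ commutes with $\hat H_A + \hat H_S$, hence with the projector $\Pi_{\sigma_{ABS}}$. Identifying $\Pi_{\sigma_{ABS}}(I_A \otimes \hat f_{S|A}^{\rm phys})\Pi_{\sigma_{ABS}}$ with $I_{A|B}^{\rm phys} \otimes \hat f_{S|B}^{\rm phys}$ on $\ch_A^{\rm phys} \otimes \ch_S^{\rm phys}$ gives $\hat f_{S|B}^{\rm phys} = \hat f_{S|A}^{\rm phys}$.

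For necessity, I would assume $\hat O_{AS|B}^{\rm phys}(\tau_A) = I_{A|B}^{\rm phys} \otimes \hat f_{S|A}^{\rm phys}$ for every $\tau_A$ and differentiate in $\tau_A$. Using $\partial_{\tau_A}\bigl[U_S(t-\tau_A)\hat f_{S|A}^{\rm phys}U_S^\dagger(t-\tau_A)\bigr] = i\,U_S(t-\tau_A)[\hat H_S,\hat f_{S|A}^{\rm phys}]U_S^\dagger(t-\tau_A)$ and the $\tau_A$-independence of the right-hand side, one obtains
\begin{align}
0 &= \Pi_{\sigma_{ABS}}\Bigl[\frac{i}{2\pi}\int dt\,\ket{t}\!\bra{t} \otimes U_S(t-\tau_A)[\hat H_S,\hat f_{S|A}^{\rm phys}]U_S^\dagger(t-\tau_A)\Bigr]\Pi_{\sigma_{ABS}}. \nonumber
\end{align}
By the two-clock analogue of Theorem~\ref{thm_2}, this expression is precisely the $B$-perspective reduction of the Dirac observable $\hat F_{I_B\otimes[H_S,f_S],T_A}(\tau_A)$ on $\ch_{\rm phys}$. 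Its vanishing, together with the isometry property of Corollary~\ref{InnerProductTheorem}, forces the underlying Dirac observable to vanish weakly on $\ch_{\rm phys}$; Theorem~\ref{thm_2a} (algebra homomorphism) and Lemma~\ref{lem_1} then identify its weak equivalence class on $\ch_S^{\rm phys}$ with that of $[\hat H_S,\hat f_{S|A}^{\rm phys}]$, establishing $[\hat f_{S|A}^{\rm phys}, \hat H_S] = 0$ as an operator on $\ch_S^{\rm phys}$.

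The main obstacle is the necessity direction: sufficiency is essentially a one-line consequence of the POVM covariance and normalisation, whereas necessity requires extracting a statement about the bare commutator from an expression in which it is convolved with (generically non-orthogonal) clock states and sandwiched between two non-trivial projectors $\Pi_{\sigma_{ABS}}$. My strategy is to sidestep those subtleties by lifting the argument to the clock-neutral picture, where the trinity turns the hypothesis into a clean statement about weak equivalence classes of Dirac observables on $\ch_{\rm phys}$, from which the commutation relation on $\ch_S^{\rm phys}$ follows immediately.
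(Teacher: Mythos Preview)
Your sufficiency argument coincides with the paper's: both collapse the $G$-twirl using $[\hat f_{S|A}^{\rm phys},\hat H_S]=0$ together with the POVM normalisation of Eq.~\eqref{completeness}, then pass the projector $\Pi_{\sigma_{ABS}}$ through.

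For necessity you take a genuinely different route. The paper writes the hypothesis $\hat O_{AS|B}^{\rm phys}=I_{A|B}^{\rm phys}\otimes\hat f_{S|A}^{\rm phys}$ in two ways --- once with $\hat f_{S|A}^{\rm phys}$ pulled outside the $G$-twirl (using $\mathcal{G}_{AS}(\ket{\tau_A}\!\bra{\tau_A}\otimes I_S)=I_{AS}$), once with it inside via Eq.~\eqref{STransformation} --- and infers $[\hat f_{S|A}^{\rm phys},U_S(t)]=0$ directly from equating the two integrands. Your strategy of differentiating in $\tau_A$ and lifting to $\ch_{\rm phys}$ via the trinity isometries is conceptually clean, and the identification $\partial_{\tau_A}\hat F_{I_B\otimes f_S,T_A}(\tau_A)=i\,\hat F_{I_B\otimes[H_S,f_S],T_A}(\tau_A)$ is correct.

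There is, however, a gap in your final step. Theorem~\ref{thm_2a} and Lemma~\ref{lem_1} are single-clock statements in which the ``system'' for the $T_A$-encoding is $BS$, not $S$ alone. The weak equivalence class of $\hat F_{I_B\otimes[H_S,f_S],T_A}(\tau_A)$ is therefore labelled by the projection of $I_B\otimes[\hat H_S,\hat f_{S|A}^{\rm phys}]$ to $\ch_{BS}^{\rm phys}$ --- i.e.\ by conjugation with the $A$-side projector --- not by $[\hat H_S,\hat f_{S|A}^{\rm phys}]$ on $\ch_S^{\rm phys}$. Your lift to $\ch_{\rm phys}$ thus only trades the $B$-side projector $\Pi_{\sigma_{ABS}}$ for the $A$-side one; extracting the bare $S$-commutator from its vanishing after this projection is precisely the spectral subtlety you flagged and hoped to sidestep. (To be fair, the paper's own comparison argument is equally terse at the analogous point.) For an ideal clock $A$ with $\spec(\hat H_A)=\mathbb{R}$ the projector trivialises and your argument closes; in general you would still need a completeness argument showing that the matrix elements of $I_B\otimes[\hat H_S,\hat f_{S|A}^{\rm phys}]$ pinned down on $\ch_{BS}^{\rm phys}$ suffice to force the commutator to vanish on $\ch_S^{\rm phys}$.
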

\begin{proof}
The proof is given in Appendix \ref{Sec_theorems}.
\end{proof}

Hence, whenever an observable is not a constant of motion, it will appear differently relative to the different clocks.

\subsubsection{Observable transformations in the  relational Heisenberg picture}

Similarly, in the relational Heisenberg picture we demand the following criterion between untransformed and transformed states and observables
\begin{align}
&\bra{\psi_{BS|A}}  \hat{O}_{BS|A}^{\rm phys} (\tau_A)  \ket{\psi_{BS|A}  }  \nn \\
&\hspace{1in}  = \bra{\psi_{AS|B} }  \hat{O}_{AS|B}^{\mathbf H}(\tau_A,\tau_B) \ket{\psi_{AS|B}}, \nn
\end{align}
where for distinction we write the transformed observable as $\hat{O}_{AS|B}^{\mathbf H}(\tau_A,\tau_B)$ as this will in general not coincide with the transformed observable $\hat{O}_{AS|B}^{\rm phys}(\tau_A,\tau_B)$ of the relational Schr\"odinger picture above.
Again, in the relational Heisenberg picture observables transform between perspectives under conjugation with the TFC map $\Lambda^{A \to B}_{\mathbf H}$
\begin{align} 
\hat{O}_{AS|B}^{\mathbf H}(\tau_A, \tau_B)&\! = \Lambda^{A \to B}_{\mathbf H} \hat{O}_{BS|A}^{\rm phys} (\tau_A) \left( \Lambda^{A \to B}_{\mathbf H} \right)^\dagger \nn \\
&=\mathcal{R}_{{\mathbf H},B} \circ \mathcal{E}_{\mathbf H} \left( \hat{O}_{BS|A}^{\rm phys}(\tau_A)\right) \circ \mathcal{R}^{-1}_{{\mathbf H},B} \nn \\
&= U_{AS}^\dagger(\tau_B) \hat{O}_{AS|B}^{\rm phys} (\tau_A,   \tau_B)   U_{AS}(\tau_B). \label{QRobservableTFC}
\end{align} 
In the last line, $\hat{O}_{AS|B}^{\rm phys} (\tau_A,   \tau_B)$ is the transformed observable from the relational Schr\"odinger picture.
Again, the transformation between different reduced descriptions of observables  proceeds via Dirac observables on the clock-neutral Hilbert space $\ch_{\rm phys}$.
The above equation and Theorem~\ref{tauindependent} imply the following corollary that specifies the necessary and sufficient conditions under which  $\hat{O}_{AS|B}^{\mathbf H}(\tau_A, \tau_B)$ evolves in clock $B$ time $\tau_B$ according to the Heisenberg equation of motion with no explicit $\tau_B$ dependence.

\begin{corol}
Consider an operator on  $BS$ from the perspective of $A$ described by $\hat O_{BS|A}^{\rm phys}(\tau_A) \in \mathcal{L}(\mathcal{H}_B^{\rm phys} \otimes \mathcal{H}_S^{\rm phys})$. Under a temporal frame change to the perspective of $B$, this operator transforms to $\hat O_{AS|B}^{\mathbf H}(\tau_A, \tau_B)$  that satisfies the Heisenberg equation of motion in clock $B$ time $\tau_B$ without an explicitly $\tau_B$ dependent  term, 
\begin{align}
\frac{d}{d \tau_B} \hat O_{AS|B}^{\mathbf H}(\tau_A, \tau_B) = i \left[\hat{H}_A+ \hat{H}_S ,\hat O_{AS|B}^{\mathbf H}(\tau_A, \tau_B)\right],\nn 
\end{align}
if and only if  
\begin{align}
\hat O_{BS|A}^{\rm phys}(\tau_A)  = \sum_i \left( \hat O_{B|A}^{\rm phys} 
\right)_i \otimes \left( \hat f_{S|A}^{\rm phys} (\tau_A)  \right)_i, \nn
%\label{timetranslationinvaraint}
\end{align}
and $\hat O_{B|A}^{\rm phys}$ is a constant of motion, $[ \hat H_B , \hat O_{B|A}^{\rm phys}] =0$. 
\end{corol}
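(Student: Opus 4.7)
The approach is to reduce the claim directly to Theorem~\ref{tauindependent} via the observable transformation formula, Eq.~\eqref{QRobservableTFC}. The structural observation driving the proof is that the unitary $U_{AS}(\tau_B)=e^{-i(\hat H_A+\hat H_S)\tau_B}$ conjugating $\hat O_{AS|B}^{\rm phys}$ in Eq.~\eqref{QRobservableTFC} is generated by exactly the operator $\hat H_A+\hat H_S$ appearing on the right-hand side of the target Heisenberg equation. Hence the condition that $\hat O_{AS|B}^{\mathbf H}(\tau_A,\tau_B)$ evolves in $\tau_B$ without any explicit $\tau_B$-dependent piece will translate into $\tau_B$-independence of the Schr\"odinger-picture transformed observable $\hat O_{AS|B}^{\rm phys}(\tau_A,\tau_B)$, which Theorem~\ref{tauindependent} already characterizes.

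Concretely, I first differentiate Eq.~\eqref{QRobservableTFC} with respect to $\tau_B$ using the product rule and Stone's theorem, obtaining
\begin{align}
\tfrac{d}{d\tau_B}\hat O_{AS|B}^{\mathbf H}(\tau_A,\tau_B) &= i\bigl[\hat H_A+\hat H_S,\,\hat O_{AS|B}^{\mathbf H}(\tau_A,\tau_B)\bigr] \nn\\
&\quad + U_{AS}^\dagger(\tau_B)\,\tfrac{\partial\hat O_{AS|B}^{\rm phys}(\tau_A,\tau_B)}{\partial \tau_B}\,U_{AS}(\tau_B).\nn
\end{align}
The ``pure'' Heisenberg equation (with no explicit $\tau_B$-dependent term) holds for all $\tau_B$ if and only if the residual term vanishes identically, which, because $U_{AS}(\tau_B)$ is unitary on $\ch_A^{\rm phys}\otimes\ch_S^{\rm phys}$, is equivalent to $\partial_{\tau_B}\hat O_{AS|B}^{\rm phys}(\tau_A,\tau_B)=0$, i.e., to the Schr\"odinger-picture transformed observable being independent of $\tau_B$. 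For each fixed $\tau_A$, I then apply Theorem~\ref{tauindependent} to the operator $\hat O_{BS|A}^{\rm phys}(\tau_A)\in\cl(\ch_B^{\rm phys}\otimes\ch_S^{\rm phys})$ to conclude that this $\tau_B$-independence is equivalent to the decomposition $\hat O_{BS|A}^{\rm phys}(\tau_A)=\sum_i(\hat O_{B|A}^{\rm phys})_i\otimes(\hat f_{S|A}^{\rm phys}(\tau_A))_i$ with $[(\hat O_{B|A}^{\rm phys})_i,\hat H_B]=0$, exactly the claimed condition. Composing the two equivalences yields the stated iff.

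The main subtlety, deserving some care in the detailed write-up, concerns the $\tau_A$-dependence of the $B$-factors: a pointwise application of Theorem~\ref{tauindependent} at each $\tau_A$ would a priori produce factors $(\hat O_{B|A}^{\rm phys})_i$ that depend on $\tau_A$. In the natural setting where $\hat O_{BS|A}^{\rm phys}(\tau_A)$ arises as the $A$-Heisenberg evolution of an initial operator under $\hat H_B+\hat H_S$, the commutation condition $[(\hat O_{B|A}^{\rm phys})_i,\hat H_B]=0$ forces these $B$-factors to be fixed points of the flow, so all the $\tau_A$-dependence is absorbed into the system factors $(\hat f_{S|A}^{\rm phys}(\tau_A))_i$ and the corollary's form is recovered verbatim. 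More generally, one can fix a $\tau_A$-independent operator basis for the commutant of $\hat H_B$ in $\cl(\ch_B^{\rm phys})$ and push all $\tau_A$-dependence into the $S$-factor coefficients of the expansion, which yields the same conclusion.
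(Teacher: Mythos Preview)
Your proposal is correct and follows essentially the same route as the paper: differentiate Eq.~\eqref{QRobservableTFC} in $\tau_B$, identify the extra term $U_{AS}^\dagger(\tau_B)\,\partial_{\tau_B}\hat O_{AS|B}^{\rm phys}\,U_{AS}(\tau_B)$, and reduce its vanishing to Theorem~\ref{tauindependent}. The paper is more terse about the $\tau_A$-dependence subtlety you discuss at the end, simply noting that the Schr\"odinger-picture condition from Theorem~\ref{tauindependent} is ``equivalently'' the stated Heisenberg-picture decomposition, which is exactly your observation that the $B$-factors, commuting with $\hat H_B$, are fixed under the $U_{BS}(\tau_A)$ flow so that all $\tau_A$-dependence lands in the $S$-factors.
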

\begin{proof}
The proof is given in Appendix \ref{Sec_theorems}.
\end{proof}

The interpretation of these observable transformations is of course analogous to those between different relational Schr\"odinger pictures. In particular, when there is an explicit $\tau_B$ dependence in the relational Heisenberg equations of motion relative to clock $B$, this can be interpreted as a manifestation of a clock $B$ self-reference.

%========================================
%========================================
\subsection{Temporal localization is frame dependent}
\label{TemporalLocality}

We now consider two explicit examples of temporal frame changes in the relational Schr\"{o}dinger picture.  In the first example, we  change from the perspective of $A$ to the perspective of $B$, when the state of $B$ seen by $A$ at clock $A$ time $\tau_A$ has support localized around the clock state $\ket{\tau_A}$. In this case, we find that the evolution of $AS$ seen by $B$ is \emph{temporally local} in the sense that the evolution of $AS$ is described by a single time evolution operator $U_{AS}(\tau_B)$ generated by $\hat H_A + \hat H_S$.   In the second example, we change to the perspective of $B$, when $B$ is seen by $A$ to be in a superposition of two states localized around different clock states $\ket{\tau_A \pm \Delta}$. In this case, we find that the evolution of $AS$ is \emph{temporally nonlocal}, by which we mean that the evolution of $AS$ is described by a superposition of the time evolution operators $U_{AS}(\tau_B \pm \Delta)$. These examples are depicted in Fig.~\ref{perspectives}, and illustrate that temporal localization is frame dependent.

\begin{figure*}[t]

\includegraphics[scale= 1]{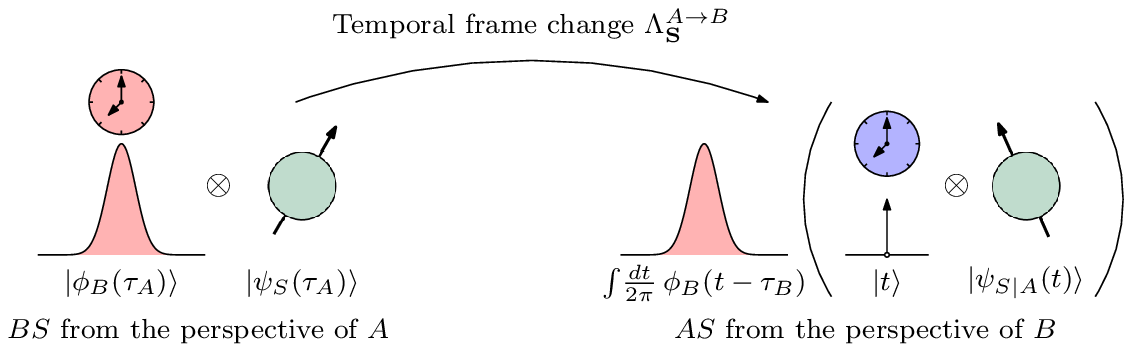} \vspace{11pt}

(a) Temporally local evolution seen by $A$ and $B$. \vspace{15pt}

\includegraphics[scale= 1]{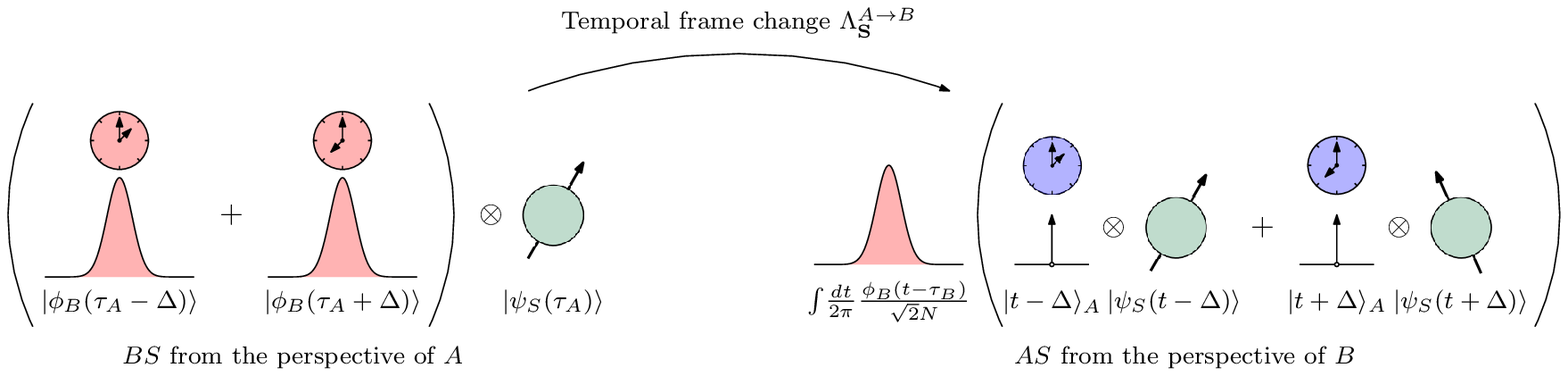} \vspace{11pt}

(b) Temporally local evolution seen by $A$ and temporally nonlocal evolution seen by $B$.

\caption{Clock $A$ and $B$ are depicted in blue and red respectively, and the system $S$ in green. (a) The evolution of the state $\ket{\psi_{BS|A} (\tau_A)}$ of $BS$ seen by $A$  is temporally local (left). Since clock $B$ is localized in its clock state basis as seen by $A$, transforming to the perspective of $B$ yields a temporally local evolution of the state  $\ket{\psi_{AS|B} (\tau_B)}$ of $AS$ (right) described by Eq.~\eqref{timelocalEX}. (b) The evolution of the state $\ket{\psi_{BS|A} (\tau_A)}$ of $BS$ seen by $A$  is again temporally local (left). Since clock $B$ is in a superposition of two states localized in its clock state basis as seen by $A$, transforming to the perspective of $B$ yields a temporally nonlocal evolution of the state  $\ket{\psi_{AS|B} (\tau_B)}$ of $AS$ (right) described by Eq.~\eqref{supOfEvolutions}. While $BS$ appears unentangled from the perspective of $A$, from $B$'s  perspective $AS$ appears as an entangled state comprised of a superposition of two branches localized at different times, $t \pm \Delta$. This is the temporal analog of the observation in  that spatial entanglement  depends on the quantum frame perspective~\cite{giacominiQuantumMechanicsCovariance2019,Vanrietvelde:2018pgb} and complements the recent discussion  in \cite{castro-ruizTimeReferenceFrames2019}.}
\label{perspectives}
\end{figure*}

Consider again two clocks $A$ and $B$ and a system $S$ described by a physical state satisfying  Eq.~\eqref{twoClocks}. For simplicity we assume that the associated clock states are orthogonal. Suppose that in the relational Schr\"{o}dinger picture  the state of $BS$ from the perspective of $A$ is a product of pure states of $B$ and~$S$
\begin{align}
\ket{\psi_{BS|A}(\tau_A)} &=  \ket{\psi_{B|A}(\tau_A)}\ket{\psi_{S|A} (\tau_A)} . 
\label{BSwrtA}
\end{align}
As constructed, Eq.~\eqref{BSwrtA} is temporally local in the evolution generated by $\hat{H}_B + \hat{H}_S$ as it can be written in the form $U_{BS}(\tau_A ) \ket{\psi_A} \ket{\psi_S}$, where $U_{BS}(\tau_A) \ce e^{-i (\hat{H}_B + \hat{H}_S )\tau_A}$. Application of the TFC map $\Lambda^{A \to B}_{\mathbf S} $ yields the state of $AS$ from the perspective of $B$ (see Appendix~\ref{Time nonlocality calculations})
\begin{align}
\ket{\psi_{AS|B}(\tau_B)} &= \Lambda^{A \to B}_{\mathbf S}  \ket{\psi_{BS|A}(\tau_A)} \nn \\
&=  \int_{\mathbb{R}}  \f{dt}{2\pi}  \, \psi_{B|A}(\tau_B - t)   \ket{t}_A \ket{\psi_{S} (t)},
\label{AStransform}
\end{align}
where $\psi_{B|A}(\tau_B-t) \ce \braket{\tau_B | \psi_{B|A}(t)}$ is the wave function of clock $B$ in the clock state basis. In the description relative to clock $A$, the wave function $\psi_{B|A}$ rather depends on $\tau_A$, but the TFC map replaces this by a dependence on $\tau_B$ (see Appendix~\ref{Time nonlocality calculations}). We note that from the perspective of $A$, $BS$ is in a product state, while from the perspective of $B$, $AS$ is entangled.

First, suppose that from the perspective of $A$ the state of $B$ is a localized Gaussian wave packet of width $\sigma$,
\begin{align}
\psi_{B|A}( \tau_B-t ) = \frac{e^{-(\tau_B-t)^2/2\sigma^2} }{\pi^{1/4} \sqrt{\sigma}}  =: \phi_B(\tau_B-t). \nn
\end{align}
The parameter $\sigma$ quantifies the degree of  localization of $B$ around the clock state $\ket{\tau_B}$.  In this case, the state of $AS$ seen by $B$ is
\begin{align}
\ket{\psi_{AS|B}(\tau_B)} 
&=  U_{AS}(\tau_B)\int_{\mathbb{R}} \f{dt}{2\pi}    \,  \phi_B(t) \ket{t}_A \ket{\psi_{S} (t)},
\label{timelocalEX}
\end{align}
where $U_{AS}(\tau_B) \ce e^{-i (\hat{H}_A + \hat{H}_S )\tau_B}$. We conclude that $AS$ is seen by $B$ to be localized  around $\ket{\tau_B} \ket{\psi_{S|A} (\tau_B)}$, since $\phi_B(t)$ is peaked around $t=0$, and that the evolution of $AS$ is temporally local because its evolution is written in terms of a single time evolution operator $U_{AS}(\tau_B)$. This situation is shown in Fig.~\ref{perspectives}(a). 

Next, suppose instead that $B$ is seen by $A$ to be in a superposition of two states localized around different clock states 
\begin{align}
\psi_{B|A} (\tau_B)  =  \frac{1}{\sqrt{2N}}  \big[ \phi_B(\tau_B - \Delta )+ \phi_B(\tau_B + \Delta )  \big], \nn 
\end{align}
where $N \ce 1+e^{-\Delta^2/\sigma^2}$. Then the state of $AS$ from the perspective of $B$ is 
\begin{align}
\ket{\psi_{AS|B}(\tau_B)}
&=  \frac{1}{ \sqrt{2  N}} \big[ U_{AS} (\tau_B- \Delta)   +   U_{AS} ( \tau_B + \Delta)\big] \nn \\
&\quad  \times \int_{\mathbb{R}} \f{dt}{2\pi}  \,  \phi_B(t)    \ket{t}_A \ket{\psi_{S} (t)},
\label{supOfEvolutions}
\end{align}
From Eq.~\eqref{supOfEvolutions} we conclude that the state of $AS$ as seen by $B$ is in a superposition of wave packets localized around $\ket{\tau_B} \ket{\psi_{S} (\tau_B)}$ translated forward and backward in clock $B$ time $\tau_B$ by $\Delta$. We thus conclude that the evolution of $AS$ is temporally  nonlocal because it corresponds to a superposition of time evolutions separated in clock $B$ time by an amount $2 \Delta$, see Fig.~\ref{perspectives}(b). This is an example of a superposition of time evolutions~\cite{aharonovSuperpositionsTimeEvolutions1990}.

 The particular form of entanglement in the state of $AS$ in Eq.~\eqref{supOfEvolutions} implies that the reduced state of $S$ is mixed relative to $B$
\begin{align}
\rho_{S|B} (t) &\approx \frac{1}{2} \Big( \ket{\psi_S(\tau_B - \Delta) }\!\bra{\psi_S(\tau_B - \Delta) } \nn \\
&\quad  + \ket{\psi_S(\tau_B + \Delta) }\!\bra{\psi_S(\tau_B + \Delta) } \Big), \nn
\end{align}
where we have assumed $\sigma \ll 1$ (and that $\ket{\psi_S(t)}$ is not $2\Delta$ periodic); note that $\approx$ here, in contrast to the rest of the article, does not denote a weak equality but rather approximate equality. The above density matrix can be explained as $S$ being temporally localized at either $\tau_B - \Delta$ or $\tau_B + \Delta$, but from the perspective of $B$ it is indefinite as to which of these two possibilities is realized.  Thus, $B$ sees the temporal locality of $S$ as indefinite.

The lesson of these  examples is that temporal locality is frame dependent. From the perspective of $A$ the evolution of $BS$ was temporally local.  From the perspective of $B$,  which depends on the state of $B$  as seen by $A$, the evolution of $AS$ can either be temporally local or nonlocal. This complements the discussion in \cite{castro-ruizTimeReferenceFrames2019} where  likewise an interesting temporal non-locality was reported that depends on the clock perspective.

%========================================
%========================================
\subsection{Connection with {past} work on quantum temporal frame changes}
\label{PastFrameChanges}

The first systematic method for changing quantum clocks \cite{Bojowald:2010xp,Bojowald:2010qw,Hohn:2011us} was developed at a semiclassical level using so-called effective  techniques for constraint systems. This approach already featured what we may  call a perspective-neutral structure (a constraint surface in a quantum phase space) that contained all clock perspectives at once. The perspective-neutral  approach to quantum frame changes was then generalized to a full quantum method for switching clock perspectives  for the parametrized particle \cite{hoehnHowSwitchRelational2018} and for a model which can be  interpreted either as a quantum cosmological model or as a relativistic particle \cite{Hoehn:2018whn}. These two examples were discussed in the relational Heisenberg picture (which in those models is equivalent to reduced phase space quantization) and illustrate specific realizations  of the TFC map $\Lambda^{A \to B}_{\mathbf H}$ for both states and relational observables. 
In these two models, the various clock operators are self-adjoint on $\ch_{\rm kin}$ and thus have orthogonal clock states. However, in both models one also has to deal with degenerate clock Hamiltonians.

Recently, temporal  frame changes for the Page-Wootters formalism were derived independently from the present work in \cite{castro-ruizTimeReferenceFrames2019}, offering an example of the TFC map $\Lambda^{A \to B}_{\mathbf S}$, although observable transformations were not explored. The clocks considered in \cite{castro-ruizTimeReferenceFrames2019} are of the ideal, non-degenerate case $\spec(\hat H_C)=\mathbb{R}$ when $\hat T$ is a self-adjoint operator with orthogonal clock eigenstates on $\ch_{\rm kin}$.  The authors of \cite{castro-ruizTimeReferenceFrames2019} explore how an indefinite causal order of quantum events may arise through gravitationally interacting quantum clocks.

We now show that the clock changes of \cite{castro-ruizTimeReferenceFrames2019} are included in  the class of temporal frame changes developed above, which pass through the clock-neutral physical Hilbert space.
For example, adapted to our notation and normalization, the quantum clock transformation Eq.~(25) of \cite{castro-ruizTimeReferenceFrames2019} reads
\ba
\cs^{A \to B}_{\mathbf S}=\left(\bra{\tau_B=0}\otimes I_{AS}\right) \int_\mathbb{R} \f{ dt}{2\pi}\,\ket{t}_A\otimes\,U_{BS}(t_A)\,,\nn%\\\label{tfca}
\ea
where $\cs^{A \to B}_{\mathbf S}$ transforms states from those with respect to $A$ to those with respect to $B$, and the clock states are assumed to be orthogonal for different values of $t$. Comparing with Eqs.~\eqref{rsp}, \eqref{rsp1}, \eqref{rsp2} and \eqref{rhp1app}, it is easy to see that
\begin{align}
\cs^{A \to B}_{\mathbf S}
&=\calr_{\mathbf S}(\tau_B=0)\circ\calr_{\mathbf S}^{-1}(\tau_A=0)\nn\\
&=\calr_{\mathbf H,B}(\tau_B=0)\circ\calr_{\mathbf H,A}^{-1}\,,\label{tfc2}
\end{align}
which is an example of the TFC maps (in the case of ideal clocks) as defined in Eq.~\eqref{TFC}, i.e.
\begin{align}
\cs^{A \to B}_{\mathbf S}\equiv\Lambda^{A \to B}_{\mathbf S}(\tau_A=0,\tau_B=0) \equiv\Lambda^{A \to B}_{\mathbf H}, \nn
\end{align}
where for clarity we have included  the times between which $\Lambda^{A \to B}_{\mathbf S}$ translates Schr\"odinger-picture states. 

For completeness, we note that we can decompose our TFC map in Eq.~\eqref{LambdaPW} as follows:
\ba
\Lambda^{A \to B}_{\mathbf S}&=&\left(\bra{\tau_B} \otimes I_{AS} \right) \delta(\hat{C}_H) \left(\ket{\tau_A} \otimes I_{BS} \right) \nn \\
&=&U_A(\tau_A)\otimes I_{BS}\left[\f{1}{2\pi}\int_\mathbb{R} dt\,\ket{t}_A\otimes\bra{-t}_B\otimes U_S(t)\right]\nn\\
&&\q\q\q\q U_B^\dag(\tau_B)\otimes I_{AS}.\nn
\ea
The term in the square brackets can be further decomposed as
\ba
\f{1}{2\pi}\int_\mathbb{R} dt\,\ket{t}_A\otimes\bra{-t}_B\otimes U_S(t)=\sum_{n=0}^{\infty}\f{i^n}{n!}\cp_{A\to B}^{(n)}\otimes \hat H_S^n,\nn
\ea
where we define
\ba
\cp_{A\to B}^{(n)}\ce\f{1}{2\pi}\int_\mathbb{R}\,dt\,(-t)^n\ket{t}_A\otimes\bra{-t}_B\nn
\ea
as the \emph{$n^{\rm th}$-moment parity-swap operator} between clocks $A$ and $B$. This generalizes the (`$0^{\rm th}$-moment') parity-swap operator, which was originally introduced in \cite{giacominiQuantumMechanicsCovariance2019} for spatial quantum reference frames, appeared in \cite{hoehnHowSwitchRelational2018,Hoehn:2018whn,castro-ruizTimeReferenceFrames2019} for quantum clocks also, and which applies to self-adjoint reference frame degrees of freedom, to covariant clock POVMs. Indeed, in the special case that the clock states are orthogonal, in which case they are eigenstates of a self-adjoint first moment operator $\hat T_B$, we can simplify the above expression to
\ba
\f{1}{2\pi}\int_\mathbb{R} dt\,\ket{t}_A\otimes\bra{-t}_B\otimes U_S(t)=\cp_{A\to B}^{(0)}\,e^{i\,\hat T_B\otimes\hat H_S},\nn
\ea
where $\cp_{A\to B}^{(0)}$ is the standard parity-swap operator; cf. Eq.~(26) of \cite{castro-ruizTimeReferenceFrames2019}, see also~\cite{hoehnHowSwitchRelational2018,Hoehn:2018whn}.

Thanks to the equivalence established through the trinity, the present article thus unifies and extends both previous methods to a much larger class of models in which the clock need not be quantized as a self-adjoint operator, but is rather encoded in the more general notion of a covariant clock POVM. In this manner, we are able to go beyond the assumption of ideal clocks, including those which may classically feature pathological behaviour as illustrated in the example of the exponential potential (cf.\ Sec.~\ref{Sec3a}). 
 In a companion article \cite{HLSrelativistic} we  extend the ability of the TFC maps in \cite{hoehnHowSwitchRelational2018, Hoehn:2018whn} to deal with the subtleties arising in the presence of the clock energy degeneracies in relativistic systems to covariant clock POVMs.

%========================================
%========================================
\section{Implications of the trinity}
\label{sec_applications}

%========================================
%========================================
\subsection{Quantum analog of gauge-invariant extension of gauge-fixed quantities}\label{sec_giegf}

As explained in Sec.~\ref{Sec_ClassicalRelationalDynamics}, the classical relational Dirac observables $F_{f,T}(\tau)$ are so-called gauge-invariant extensions of gauge-fixed quantities \cite{Henneaux:1992ig,dittrichPartialCompleteObservables2007,Dittrich:2005kc, Dittrich:2006ee,  Dittrich:2007jx}. $F_{f,T}(\tau)$ corresponds to the value that the function $f$ takes on the intersection of the gauge fixing surface $T=\tau$ with the constraint surface $\cc$ (cf.\ Fig.~\ref{ConstraintGeometryFig}). In particular, this intersection of $T=\tau$ with $\cc$ corresponds to a gauge-fixed reduced phase space (cf.\ Sec.~\ref{sssec_psred}).

So far, the quantum analog of the notion of `gauge-invariant extension of gauge-fixed quantities' has been lacking in the literature. One reason is that, within the canonical Dirac quantization procedure, there is no gauge-fixing:\footnote{Clearly, at the path integral formulation there is the well-known Faddeev-Popov gauge fixing \cite{Faddeev:1967fc} and its generalization, the Batalin-Vilkovisky formalism \cite{Batalin:1977pb}.} the physical Hilbert space\,---\,i.e.\ the quantum constraint surface\,---\,is already gauge-invariant in contrast to the classical constraint surface which contains all the gauge orbits. Another is that the quantum analog of `gauge-fixed' reduced phase space seems to have been missing. 

For the class of systems defined by the constraint Eq.~\eqref{WheelerDeWitt}, we have clarified in this article precisely the quantum versions of both `gauge-invariant extensions of gauge-fixed quantities' and `gauge-fixed reduced phase spaces'. The canonical quantum analog of `phase space reduction through gauge-fixing'  is given by the reduction maps $\calr_{\mathbf S}(\tau)$ and $\calr_{\mathbf H}$, especially the latter, as it gives rise to the relational Heisenberg picture in analogy to the classical relational Hamiltonian equations of motion on the reduced phase spaces (cf.\ Sec.~\ref{sssec_psred}). The quantum analog of the reduced phase space is the physical system Hilbert space $\ch_S^{\rm phys}$. Accordingly, the quantum analog of a `gauge-fixed' quantity are the system observables $\hat f_S^{\rm phys}(\tau)$ and $\hat f_S^{\rm phys}$, so that the encoding maps Eq.~\eqref{encodeQR} and \eqref{encode1},
\ba
\mathcal{E}_{\mathbf H}\left(\hat f_S^{\rm phys}(\tau)\right) &=& \calr_{\mathbf H}^{-1}\,\hat f_S^{\rm phys}(\tau)\,\calr_{\mathbf H}\nn\\
\mathcal{E}^{\tau}_{\mathbf S}\left(\hat f_S^{\rm phys}\right) &=& \calr_{\mathbf S}^{-1}(\tau)\,\hat f_S^{\rm phys}\,\calr_{\mathbf S}(\tau)\,,\label{encode5}
\ea
constitute the quantum analog of the `gauge-invariant extension of gauge-fixed quantities' procedure. Indeed, as established in Theorems~\ref{thm_2} and~\ref{thm_4}, the encoded observables coincide weakly, i.e.\ on $\ch_{\rm phys}$, with the power series quantization Eq.~\eqref{RelationalDiracObservable} of the relational Dirac observables $F_{f_S,T}(\tau)$ of Eq.~\eqref{F2}. In line with all this, we have also shown in Theorem~\ref{thm_2a} that the map $\hat f_S^{\rm phys}\mapsto\hat F_{f^{\rm phys}_S,T}(\tau)$ is weakly an algebra homomorphism with respect to addition, multiplication and the commutator (see also Sec.~\ref{sec_Tcommutator}). This is precisely the quantum analog of the corresponding classical weak algebra homomorphism $f\mapsto F_{f,T}(\tau)$ established in \cite{dittrichPartialCompleteObservables2007}, which relied on the notion of `gauge-invariant extension of gauge-fixed quantities.'

Recall that the power-series quantization of the classical relational Dirac observables yields $\hat F_{f_S,T}(\tau)$ as the $G$-twirl $\cg (\ket{\tau}\bra{\tau}\otimes\hat f_S )$, i.e.\ an integral over the one-parameter group generated by the constraint $\hat C_H$. Hence, on $\ch_{\rm phys}$, we may alternatively think of the relational Dirac observables as $G$-twirls of the reduced observables together with the `projector' $\ket{\tau}\bra{\tau}$ onto the clock time $\tau$. Conversely, Eq.~\eqref{encode5} provides a new way to understand the $G$-twirl: it is weakly equal to a conjugation with symmetry reduction maps. This seems to have been unknown before.  These observation thereby offer a novel systematic construction procedure for quantum relational Dirac observables.

While completing this work, we became aware of a recent complementary article \cite{Chataignier:2019kof} which also carefully develops a quantum version of `gauge-invariant extension of gauge-fixed quantities'. In contrast to us, this work  begins with integral representations of relational observables \cite{Marolf:1994wh,Giddings:2005id}, rather than the power-series expansions \cite{dittrichPartialCompleteObservables2007,Dittrich:2005kc, Dittrich:2006ee,  Dittrich:2007jx}, which we have employed. The approach in \cite{Chataignier:2019kof} can be viewed as a canonical operator analog of Faddeev-Popov gauge-fixing \cite{Faddeev:1967fc}. Interestingly, this construction also yields what we call the $G$-twirl (compare with Eqs.~(36) and (46) in \cite{Chataignier:2019kof}) and, in fact, a systematic construction procedure for relational Dirac observables for a wider class of systems with a Hamiltonian constraint (the restriction Eq.~\eqref{WheelerDeWitt} is not assumed, while a monotonic clock is implicitly assumed). However, the advantage of our procedure for the class of systems considered is that we do not rely on a (kinematical) self-adjoint quantization of classical gauge-fixing conditions unlike \cite{Chataignier:2019kof}. In our case the classical gauge fixing conditions are $T=\tau$ and, as described in Sec.~\ref{Sec3b}, we instead quantize $T$ more generally as a covariant clock POVM. This enables us to consider a much wider class of clocks. Furthermore, the relation with quantum symmetry reduction and the algebra homomorphism  were not  discussed in \cite{Chataignier:2019kof}, which we believe elucidates  clearly the quantum analog of `gauge-invariant extensions of gauge-fixed quantities.' It would be very interesting to combine the techniques developed in \cite{Chataignier:2019kof} with the results established in this manuscript. In particular, the shape of Eq.~\eqref{RelationalDiracObservable} suggests that our construction of quantum  Dirac observables in terms of the $G$-twirl holds for general Hamiltonian constraints including interactions, as also observed in \cite{Chataignier:2019kof} (see also Eq.~(3.1.10) in \cite{thiemannModernCanonicalQuantum2008}). Note, however, that for non-integrable systems this $G$-twirl expression will be formal as in that case a quantum representation problem of Dirac observables arises \cite{Dittrich:2016hvj,Dittrich:2015vfa} (see also \cite{Hohn:2011us}).

\subsection{Conditional inner product as quantum gauge-fixed physical inner product}

The quantum reduction maps $\calr_{\mathbf S}(\tau)$ and $\calr_{\mathbf H}$ and their inverses thus give rise to the quantum analogs of both gauge-invariantly extending gauge-fixed quantities and the converse, gauge-fixing gauge-invariant quantities for both observables and states. The relational Schr\"odinger and Heisenberg pictures on $\ch_S^{\rm phys}$ are the `quantum gauge-fixed' descriptions of the clock-neutral picture on the manifestly gauge-invariant $\ch_{\rm phys}$.

In line with this, the physical inner product in Eq.~\eqref{PIP} is clock-neutral: its definition does not depend on a temporal reference frame and is compatible with a multitude of different clock choices. Accordingly, we can regard it as a description of the theory's inner product prior to having chosen a temporal frame. By contrast, it is now clear that the conditional/Page-Wootters inner product in Eq.~\eqref{PWinnerproduct}, originally introduced in \cite{Smith:2017pwx,Smith:2019}, is a quantum gauge-fixed version of the physical inner product, thanks to Corollary~\ref{InnerProductTheorem}. The definition of the conditional inner product requires a specific clock choice and a specific reading of that clock. Classically, any fixed clock reading  corresponds to a choice of gauge. Consistent with the interpretation that the conditional inner product is a gauge-fixed version of the gauge-invariant physical inner product, one finds that it is actually independent of the clock reading because the reduced dynamics is unitary. As such, we can view the conditional inner product as the description of the inner product relative to a choice of temporal reference frame.

Classically different clock choices lead to different gauge-fixings and thus different reduced theories, interpreted as the descriptions of the same dynamics, but relative to different choices of temporal reference frame. The same is true in the quantum theory: different clock choices yield different families of relational observables and different reduction maps $\calr_{\mathbf S}(\tau)$ and $\calr_{\mathbf H}$, and hence different relational Schr\"odinger and Heisenberg pictures with different conditional inner products. However, these different reduced quantum theories, i.e.\ descriptions of the quantum dynamics relative to different choices of temporal reference frame, are all equivalent by being different quantum gauge-fixings of the manifestly gauge-invariant clock-neutral picture on $\ch_{\rm phys}$.\footnote{Global equivalence requires that the different choices of temporal reference frame correspond each to monotonic clocks. For a non-monotonic clock the equivalence will not be global on the physical Hilbert space (see \cite{Vanrietvelde:2018dit,Bojowald:2010xp, Bojowald:2010qw, Hohn:2011us} for a related discussion).}

%========================================
%========================================
\subsection{Resolving Kucha\v{r}'s three criticisms}

Kucha\v{r} raised a serious challenge to the Page-Wootters formalism in his seminal review on the problem of time~\cite{kucharTimeInterpretationsQuantum2011a}. He presented three distinct criticisms to the proposal, which we paraphrase here:
\begin{enumerate}
\item \emph{Inappropriate for Klein-Gordon systems}: When applied to a relativistic particle in Minkowski space, the conditional probability for the position of the particle as a function of Minkowski time differs from the accepted Klein-Gordon probability density for the localization of a relativistic particle.

\item \emph{Violation of the constraints}: The Page-Wootters formalism  postulates the conditional probability in Eq.~\eqref{AwhenT}, which is motivated by applying the Born rule to a measurement  corresponding to the effect operator $e_T(\tau) \otimes e_{f_S}(f)$. Such an effect operator does not  commute with the constraint operator $\hat{C}_H$, and thus the measurement throws $\ket{\psi_{\rm phys}}$ out of the physical Hilbert space. The Page-Wootters formalism would thus be based on a postulate that violates the constraint.

\item \emph{Wrong propagators}:  When applied to answering the fundamental dynamical question\,---\,`If one finds the system at position $q$ at time $\tau$, what is the probability of finding it at  position $q'$ at time $\tau'$?'\,---\,the conditional probability in Eq.~\eqref{AwhenT}, interpreted in the two-time case as
\begin{widetext}
\begin{align}
\prob\left(q' \ \mbox{when} \ \tau'|q\ \text{when} \ \tau \right)  
=
\frac{\bra{\psi_{\rm phys} }e_T(\tau)\cdot e_T(\tau')\cdot  e_T(\tau)  \otimes e_{q_S}(q) \cdot e_{q_S}(q') \cdot e_{q_S}(q)  \ket{\psi_{\rm phys} }_{\rm kin}}{\bra{\psi_{\rm phys} }e_T(\tau)\otimes e_S(q)  \ket{\psi_{\rm phys} }_{\rm kin}}, \label{AwhenT2} 
\end{align}
\end{widetext}
where $e_{q_S}(q)$ is an improper projector associated with  the particle located at $q$, yields the wrong answer, and prohibits time to flow. This amounts to a  \textit{reductio ad absurdum}. 
 
\end{enumerate}

In a companion article \cite{HLSrelativistic}, in which we treat relativistic settings, we address the first criticism by again choosing a clock POVM, in that case chosen covariant with respect to quadratic clock Hamiltonians, and appropriately adapting the Page-Wootters inner product, Eq.~\eqref{PWinnerproduct}, introduced in \cite{Smith:2017pwx}. We show that conditioning on the covariant clock POVM instead of the Minkowski time operator results in a Newton-Wigner type localization probability commonly used in
relativistic quantum mechanics. By extending the trinity to relativistic systems, this also connects with the treatment of the Klein-Gordon system in \cite{Hartle:1997dc,Hoehn:2018whn}.

The second criticism above has been resolved in the present manuscript. Theorems~\ref{thm_2} and~\ref{thm_3} show that, while the individual kinematical operators $e_T(\tau) \otimes e_{f_S}(f)$ indeed are not Dirac observables on $\ch_{\rm phys}$, the entire conditional probability in Eq.~\eqref{AwhenT} \emph{is} manifestly gauge-invariant and coincides with the expectation value of the corresponding Dirac observables (through the encoding map) in the physical inner product. Hence, the conditional probability in Eq.~\eqref{AwhenT} does not actually violate any constraints. It is just the reduced form (having undergone the quantum analog of gauge-fixing) of a gauge-invariant expression.

The third criticism is also completely resolved by the trinity. This criticism has previously been discussed and proposals for its resolution were put forward in \cite{Dolby:2004ak,Hellmann:2006fd,Gambini:2008ke,giovannettiQuantumTime2015} (see also the recent exposition of the different proposals in the context of the Wigner's friend scenario \cite{baumann2019generalized}). However, the proposed resolution in \cite{Gambini:2008ke} relies on approximations in the limit of ideal clocks, while the proposal in \cite{giovannettiQuantumTime2015} hinges on auxiliary ancilla systems.\footnote{This criticism was also discussed in \cite{Loveridge:2019phw}, however the authors  obtained incorrect propagators. This is a consequence of evaluating invariant observables on kinematical states.} The trinity established in this paper  offers a different route and resolves the two-time conditioning problem arising from Eq.~\eqref{AwhenT2} exactly, and without extra degrees of freedom. 

As Kucha\v{r} emphasized~\cite{kucharTimeInterpretationsQuantum2011a}, the problem has to do with the fact that the (improper) projector $e_T(\tau')\otimes e_{q_S}(q')$ inside Eq.~\eqref{AwhenT2} acts on a state that no longer resides in $\ch_{\rm phys}$. For this two-time conditioning, we have  not established gauge-invariance, since Theorems~\ref{thm_2} and~\ref{thm_3} apply only to the one-time conditioning scenario.  In fact,   Eq.~\eqref{AwhenT2}  is simply the wrong way to express a conditional probability  from the point of view of Dirac quantization;  it evaluates kinematical operators in kinematical states. It is impossible to express Eq.~\eqref{AwhenT2}  purely in terms of gauge-invariant objects. However, the trinity establishes an equivalence between the gauge-invariant quantum theory on $\ch_{\rm phys}$ and the relational Schr\"odinger picture on $\ch_S^{\rm phys}$, suggesting that there must be an alternative.

Indeed, we now propose a new  two-time conditional probability at the level of $\ch_{\rm phys}$, inspired by the usual expression for conditional probabilities~\cite{Isham:1995}. Through the trinity, this proposed conditional probability induces an expression for the two-time conditional probability in terms of the Page-Wootters conditional state, from which we recover the correct propagator. To this end, recall Theorem~\ref{thm_2a}, which establishes that $\hat f_S^{\rm phys}\mapsto \hat F_{f_S^{\rm phys},T}(\tau)$ is an algebra homomorphism. This permits us to generalize Kucha\v{r}'s  conditional probability   question above to: ``If one finds the system in the state corresponding to the observable $\hat A$ taking the value $a$ at clock time $\tau$, what is the probability of finding it in the state corresponding to  observable $\hat B$ taking the value $b$ at clock time $\tau'$?''  
In particular, if $\Pi_{A=a}$ is the (possibly improper) projector on $\ch_{S}^{\rm phys}$  corresponding to the system observable $\hat A$ taking the value $a$, then the relational Dirac observable $\hat F_{\Pi_{A=a},T}(\tau)$ too will act as a (possibly improper) projector on $\ch_{\rm phys}$, however, this time associating the system observable reading $a$  with the clock reading $\tau$. This suggests the following two-time conditional probability on $\ch_{\rm phys}$
\begin{widetext}
\begin{align}
\prob\left(B=b \ \mbox{when} \ \tau'|A=a\ \text{when} \ \tau \right)  
\ce 
\frac{\bra{\psi_{\rm phys} }\hat F_{\Pi_{A=a},T}(\tau)\cdot \hat F_{\Pi_{B=b},T}(\tau')\cdot\hat F_{\Pi_{A=a},T}(\tau)\,\ket{\psi_{\rm phys} }_{\rm phys}}{\bra{\psi_{\rm phys} }\hat F_{\Pi_{A=a},T}(\tau) \ket{\psi_{\rm phys} }_{\rm phys}} \label{AwhenT3},
\end{align}
where we note the evaluation of the expectation values is done using the physical inner product.  In Appendix~\ref{app_kuchar3} we show that this probability can be rewritten as
\begin{align}
\prob\!\left(B=b \ \mbox{when} \, \tau'|A=a\ \text{when} \ \tau\right)  
\! =\!
\frac{\bra{\psi_{\rm phys} }\!\left(e_T(\tau) \otimes\Pi_{A=a}\right)\!\delta(\hat C_H)\!\left( e_T(\tau') \otimes\Pi_{B=b}\right)\!\delta(\hat C_H)\!\left(e_T(\tau) \otimes\Pi_{A=a}\right)\!\ket{\psi_{\rm phys} }_{\rm kin}}{\bra{\psi_{\rm phys} }\left(e_T(\tau) \otimes\Pi_{A=a}\right) \ket{\psi_{\rm phys} }_{\rm kin}} .\label{dolby}
\end{align}
Interestingly this is the generalization of Dolby's two-time conditional probability to the case of constraints which have zero in the continuous part of their spectrum \cite{Dolby:2004ak}.\footnote{In the special case of ideal clocks, this expression was recently studied in the context of the Wigner friend scenario \cite{baumann2019generalized}.} 
In Appendix~\ref{app_kuchar3}, we further demonstrate that this expression simplifies to
\ba
\prob\left(B=b \ \mbox{when} \ \tau'|A=a\ \text{when} \ \tau\right)  = \f{\bra{\psi_S(\tau)}\,\Pi_{A=a}\,U_S^\dag(\tau'-\tau)\,\Pi_{B=b}\,U_S(\tau'-\tau)\,\Pi_{A=a}\,\ket{\psi_S(\tau)}}{\bra{\psi_S(\tau)}\,\Pi_{A=a}\,\ket{\psi_S(\tau)}}.\label{CPIgeneral}
\ea
\end{widetext}
This is the correct propagator associated with transitioning from the system state corresponding to the observable $\hat A$ reading $a$ at Schr\"odinger time $\tau$ to the system state corresponding to the observable $\hat B$ reading $b$ at Schr\"odinger time $\tau'$. Note that the projectors $\Pi_{A=a}$ and $\Pi_{B=b}$ need not necessarily be one-dimensional projectors  and that the two-time conditional probability Eq.~\eqref{CPIgeneral} holds for the entire class of models considered in this manuscript. Moreover, Eq.~\eqref{CPIgeneral} holds in the more general case where $\Pi_{A=a}$ and $\Pi_{B=b}$ are replaced with effect operators corresponding to outcomes of a POVM on $\ch_{S}^{\rm phys}$.

Let us now specialize to the case considered by Kucha\v{r}, where the system $S$ is some particle and $\hat A=\hat B=\hat q_S$ is simply the position operator on $\ch_S^{\rm phys}$. Equation~\eqref{CPIgeneral} then becomes
\begin{align}
\prob\left(q'\  \mbox{when} \ \tau'|q\ \text{when} \ \tau\right) =|\bra{q'}\,U_S(\tau'-\tau)\,\ket{q}|^2,\nn
\end{align}
which is precisely the correct expression for the transition probability of a non-relativistic particle.

It is compelling to observe the conceptual difference between the conditional probabilities in Eq.~\eqref{AwhenT3} at the level of the clock-neutral physical Hilbert space and the equivalent expression Eq.~\eqref{CPIgeneral} at the level of the reduced theory. The latter includes the obvious time evolution in-between the conditionings  expected in a Schr\"odinger picture. These are two conditionings separated by an `external' time. By contrast, the former does not include an evolution operator in-between the conditionings, in line with the often emphasized `timelessness' of what we call the clock-neutral physical Hilbert space $\ch_{\rm phys}$. Instead, the double conditioning in Eq.~\eqref{AwhenT3} can rather be regarded as the probability for ``the event $a$ when $\tau$ AND the event $b$ when $\tau'$'' in the clock-neutral physical state $\ket{\psi_{\rm phys}}$. It makes sense to compute such a two-time joint probability from   the physical state $\ket{\psi_{\rm phys}}$ as it  contains the entire history of the relational dynamics of the composite system $CS$ at once. Recall that the physical state is a description of physics prior to having chosen a temporal reference frame. We are thus asking for the probability that a history contains the two events above, each  being a coincidence between two dynamical degrees of freedom.

We emphasize that our resolution of Kucha\v{r}'s third criticism is qualitatively different from the proposal in \cite{Gambini:2008ke} and does not rely on approximations and ideal clocks. While the authors of \cite{Gambini:2008ke} also evaluate relational Dirac observables in the physical inner product in order to define conditional probabilities, they do so in a very different manner, arguing that the evolution parameter $\tau$ is physically unobservable because it is associated with a kinematical observable. This leads them to instead declare a choice of relational Dirac observable as a gauge-invariant clock and then to ask how other relational observables behave when the gauge-invariant clock has a particular value. In order for this to be possible one has to introduce a \emph{second} clock system in contrast to our setup which thus amounts to a modification of the original problem posed by Kucha\v{r}. In their construction of conditional probabilities directly on the physical Hilbert space, the authors in \cite{Gambini:2008ke} then integrate out the evolution parameter $\tau$ owing to its alleged unobservability. This leads to decoherence effects and modified transition probabilities that only approximate the standard textbook ones for ideal clocks and Gaussian states.

We take a distinct approach, avoiding such an integration because $\tau$ corresponds  to the \emph{reading} of a dynamical clock. While its kinematical time observable is not gauge-invariant, the \emph{values} it can take in fact are in the following sense: in the classical theory the evolution parameter $\tau$ corresponding to a kinematical clock function $T$ also labels the outcomes of gauge-invariant relational observables $F_{T,T'}(\tau')$ asking for the value of $T$ when another kinematical time observable ${T'}$ reads $\tau'$.\footnote{This statement can also be extended to the quantum theory, however, is more complicated to phrase due to the observations in \cite{Dittrich:2007th}.} In particular, it can also be understood as the relational Dirac observable $F_{T,T}(\tau)=\tau$. Gauge invariance thus does not offer a reason \emph{per se} to deem $\tau$ unobservable in principle, nor to integrate it out.\footnote{However, one may justify integrating out clock readings based on epistemic grounds when an observer has  partial knowledge.} Instead, we see that only invoking our manifestly gauge-invariant equivalence of the relational observable and Page-Wootters formalism necessarily recovers the standard transition probabilities without any approximations and additional clock or state choices, nor does a fundamental decoherence mechanism result as a consequence of using realistic (i.e.\ bounded Hamiltonian) clocks. 

Moreover, unlike \cite{giovannettiQuantumTime2015} our resolution (i) does not necessitate auxiliary ancilla systems, (ii) does not depend on ideal clocks, and (iii) is manifestly gauge-invariant thanks to the relational conditional probability in Eq.~\eqref{AwhenT3}. The proposal in \cite{giovannettiQuantumTime2015} extends to an arbitrary number of conditionings of the physical state, however, crucially requiring  the addition of extra ancilla systems for every new conditioning. As such, one has to \emph{modify} the total composite system described by the Hamiltonian constraint with every new conditioning by adding new degrees of freedom in order to describe the corresponding measurement process. While this is an option for (effective) laboratory situations, it is unsatisfactory for more fundamental descriptions in quantum gravity and cosmology where the solution to the Wheeler-DeWitt equation is the quantum state of the entire Universe. In this context, it is not appropriate to keep adding effective ancilla degrees of freedom to the fundamental description.
By contrast, it is clear that our conditional probabilities Eqs.~\eqref{AwhenT3} and~\eqref{CPIgeneral} can be extended to an arbitrary number of conditionings \emph{without} adding new degrees of freedom and one will still always get the correct result, consistent with standard quantum theory. Given the general validity of Eq.~\eqref{CPIgeneral}, we thus regard Eq.~\eqref{AwhenT3} as the proper resolution of Kucha\v{r}'s third criticism.

It is interesting to note that Eq.~\eqref{dolby} is  what Kucha\v{r} had warned against in \cite{kucharTimeInterpretationsQuantum2011a}:
\begin{quote}
Of course, one can try to modify the conditional probability interpretation, say, by projecting the state back into the physical [Hilbert] space [...] each time the measurement of the projector $\hat A,\hat B, \hat C,\ldots$ brings it out of the physical space. I better abstain from analyzing the shortcomings of such a scheme before someone seriously proposes it.
\end{quote}
As noted above, Dolby \cite{Dolby:2004ak} had used the analogous expression to Eq.~\eqref{dolby} in the context of discrete spectrum constraints (which was criticized in \cite{Hellmann:2006fd}), and  despite also considering continuous-spectrum constraints in his paper, did not actually extend his considerations to Eq.~\eqref{dolby} in that case. Both Kucha\v{r} and Dolby were thus agonizingly close to recovering the correct propagator.

Finally, we note that Eq.~\eqref{AwhenT3} is an expression involving only objects from $\ch_{\rm phys}$ (i.e.\ Dynamics I of the trinity in Sec.~\ref{sec_Unifying}), while Eq.~\eqref{CPIgeneral} is written purely in terms of objects from the reduced theory on $\ch_S^{\rm phys}$ (i.e.\ Dynamics II of the trinity in Sec.~\ref{sec_Unifying}). Both of these expressions can be easily justified within either formulation of the relational quantum dynamics. By contrast, Eq.~\eqref{dolby} is somewhat of a hybrid expression, involving structures from both Dynamics I and II, and is difficult to fully justify without Eqs.~\eqref{AwhenT3} and \eqref{CPIgeneral}. This is presumably the origin of Kucha\v{r}'s criticism above. In line with the trinity, we thus propose that the Page-Wootters formalism should really be interpreted in the sense of the reduced Dynamics II alone and not in the hybrid way of conditioning physical states with kinematical operators.

%========================================
%========================================
\subsection{{There is no} normalization ambiguity in the Page-Wootters formalism}
\label{SubSec_InnerProduct}

In further developing the Page-Wootters formalism, it was suggested in~\cite{giovannettiQuantumTime2015}  that the physical states $\ket{\psi_{\rm phys}}$ {should be normalized} with respect to the kinematical inner product.\footnote{In Ref.~\cite{giovannettiQuantumTime2015} this was not explicitly stated, but this observation follows from the authors' choice to normalize their Eq.~(23) in the kinematical inner product induced by $\mathcal{H}_C$ and $\mathcal{H}_S$.} However, the authors remark that this approach is not fully satisfactory because the normalization procedure is completely arbitrary. Indeed, it should be clear from Sec.~\ref{sec_Dirac} that this procedure cannot succeed when physical states are improper eigenstates of the constraint: either one violates the constraint or one obtains a divergent inner product. In~\cite{Smith:2017pwx,Smith:2019} this issue was avoided  by introducing the Page-Wootters inner product, as defined in Eq.~\eqref{PWinnerproduct}, and demanding the physical states are normalized with respect to this inner product as opposed to the kinematical inner product.

By establishing the trinity, in particular Corollary~\ref{InnerProductTheorem},  we prove that the Page-Wootters inner product is equivalent to the standard physical inner product on $\ch_{\rm phys}$ defined by group averaging techniques. This completely resolves the issue of how the physical states should be normalized  within the Page-Wotters formalism: they should be normalized with respect to the  physical inner product, in line with standard methodology used in constraint quantization \cite{marolfRefinedAlgebraicQuantization1995,Hartle:1997dc,Marolf:2000iq, thiemannModernCanonicalQuantum2008}. This is further corroborated in the companion article \cite{HLSrelativistic}, where we extend the Page-Wootters inner product of \cite{Smith:2017pwx,Smith:2019} to the relativistic case, showing that it again agrees with the physical inner product obtained through group averaging.

%========================================
%========================================
\section{Conclusion}
\label{Sec_Conclusion}

 The central result of the manuscript is the establishment of the trinity of relational quantum dynamics: the dynamics defined by relational Dirac observables, the Page-Wootters formalism, and the relational Heisenberg picture obtained via symmetry reduction are all manifestations of the same relational quantum theory. The trinity has been established for clocks whose Hamiltonian has a non-degenerate continuous spectrum, and can be extended to clocks with degenerate spectrum, including a class of relativistic models~\cite{HLSrelativistic}, and periodic (discrete-spectrum) clocks~\cite{HLS2}.

To establish the equivalence of the relational dynamics comprising the trinity, we described the kinematical time observable associated with the clock as a covariant POVM. This constitutes a more general notion of a (kinematical) time observable than that of a self-adjoint operator  canonically conjugate to the clock's Hamiltonian, which is often employed in the context of relational quantum dynamics.  In Sec.~\ref{Covariant time observables} we described in detail the properties of such covariant  POVMs for clocks with continuous and discrete Hamiltonian spectra, and how their spectral properties relate to clock choices in classical relational dynamics. 

This notion of a time observable allowed us to resolve the apparent non-monotonicity issue of self-adjoint observables associated with realistic quantum clocks which Unruh and Wald described in \cite{Unruh:1989db} and used to argue against a relational approach to the problem of time. Indeed, thanks to the covariance property the covariant clock POVM \emph{is} monotonic even for bounded Hamiltonians and still admits a consistent probability interpretation. The price we pay for giving up the orthodox notion of self-adjointness of the time observable is that the possible clock readings over which the probability distribution is defined need not necessarily be perfectly distinguishable. This is, however, common to many quantum measurements and thus does not constitute a fundamental obstacle. Hence, using dynamical clocks \emph{is} a viable approach to address the problem of time.

In Sec.~\ref{sec_Dirac} the Dirac quantization procedure was applied to the the class of theories introduced in Sec.~\ref{Sec2}, which are described by a Hamiltonian constraint associated with a clock and system that do not interact with each other. Using covariant POVMs, we constructed a new quantization  of relational Dirac observables via the $G$-twirl operation~\cite{Bartlett:2007zz}, and described their associated relational dynamics (Dynamics I). In addition to being crucial for establishing the trinity, this construction allowed us to prove in Theorem~\ref{thm_2a} the quantum analog of the classical weak algebra homomorphism between Dirac observables and phase space functions established in \cite{dittrichPartialCompleteObservables2007}. In Sec.~\ref{sec_Unifying} we introduced the Page-Wootters formalism (Dynamics II) and a relational Heisenberg picture obtained via symmetry reduction (Dynamics III), and demonstrated their equivalence with each other, as well as with Dynamics~I.  In Sec.~\ref{sec_reinterpret} we identified the clock-system entanglement appearing in the Page-Wootters formalism as a kinematical structure, and demonstrated that the same relational dynamics can be obtained using the same conditioning procedure, but without such kinematical entanglement.

In establishing the trinity, we constructed invertible  reduction maps between the clock-neutral physical Hilbert space and the reduced Hilbert space associated with Dynamics II and III. This allowed us to extend the perspective-neutral approach to changing quantum reference frames \cite{Vanrietvelde:2018pgb,Vanrietvelde:2018dit,hoehnHowSwitchRelational2018,Hoehn:2018whn} to a more general class of clocks, namely those described by covariant POVMs. These temporal frame changes pass through the clock-neutral physical Hilbert space, and thereby are the quantum analog of coordinate changes on a manifold. Such a form of frame changes is a prerequisite for exploring a quantum notion of general covariance~\cite{Vanrietvelde:2018pgb,Vanrietvelde:2018dit,hoehnHowSwitchRelational2018,Hoehn:2018whn,giacominiQuantumMechanicsCovariance2019}. Specifically, we illustrated how both states and observables transform in the relational Schr\"{o}dinger and Heisenberg pictures naturally arising in Dynamics II and III. This allowed us to demonstrate a clock-dependent temporal nonlocality effect, complementing the recent discussion of the frame dependence of temporal localization in \cite{castro-ruizTimeReferenceFrames2019}. The temporal nonlocality discussed above stemmed from transforming to the perspective of a clock in a superposition of reading different times. In this regard, it will be interesting to investigate whether the quantum equivalence principle put forward in \cite{hardyConstructionInterpretationConceptual2018a,hardyImplementationQuantumEquivalence2019} can be formulated within this  program of quantum reference frame changes.

Finally, we discussed three implications of the trinity in Sec.~\ref{sec_applications}.  The encoding maps in Eqs.~\eqref{encodeQR} and \eqref{encode1} establish the quantum analog of the gauge-invariant extension of a gauge-fixed quantity \cite{Henneaux:1992ig}, a concept central to the classical construction of relational Dirac observables  \cite{dittrichPartialCompleteObservables2007,Dittrich:2005kc, Dittrich:2006ee,  Dittrich:2007jx} (see also \cite{Chataignier:2019kof}). We then resolved Kucha\v{r}'s criticisms of the Page-Wootters formalism, in particular, by recovering the correct propagator via a conditioning of physical states on outcomes of relational Dirac observables. {This resolution does not require auxiliary ancilla systems, ideal clocks, or state dependent approximations  in contrast to previous proposals \cite{Gambini:2008ke,giovannettiQuantumTime2015}.} Lastly,  we pointed out that the normalization issue with physical states in the Page-Wootters formalism reported in \cite{giovannettiQuantumTime2015} does not arise.

Apart from the extension to relativistic models \cite{HLSrelativistic} and periodic clocks \cite{HLS2}, the most pressing generalization of our work is to explore the validity of the trinity in the context of interactions between the chosen clock and the evolving system. As we have emphasized in Appendix~\ref{app_chaos}, interactions will appear in generic models, particularly so in quantum gravity. However, this may lead to serious challenges for relational quantum dynamics, as pointed out in the context of Dynamics I in \cite{Marolf:1994nz,Giddings:2005id,Hohn:2011us,Dittrich:2016hvj,Dittrich:2015vfa,Kiefer:1988tr}. The issue is essentially that interactions will lead to clocks which are non-monotonic, i.e.\ feature turning points. This is known as the global problem of time and leads to a non-unitarity of the relational dynamics in the turning regime of the clock \cite{kucharTimeInterpretationsQuantum2011a,Isham1993,Bojowald:2010xp, Bojowald:2010qw, Hohn:2011us}. 

Given the trinity, these challenges must also appear in the Page-Wootters formalism and the relational Heisenberg picture of the quantum symmetry reduced theory. As shown in \cite{Smith:2017pwx}, certain interactions will lead to a modified Schr\"odinger equation in the Page-Wootters formalism, which still generates an isometry. In more generic situations the global problem of time must also feature in the Page-Wootters formalism and it will be of interest to investigate how it further modifies the Schr\"odinger picture. The results in \cite{Bojowald:2010xp, Bojowald:2010qw, Hohn:2011us}, while using semiclassical methods, suggest that the quantum reduction maps from the clock-neutral  to the relational Schr\"odinger and Heisenberg pictures will need to separate the branches of the relational dynamics before and after a clock's turning point encoded in the physical state. In general, these can be anticipated to only produce approximate Schr\"odinger equations for each branch that fail on approach to the turning point. Such clock pathologies may then be navigated  by an intermediate change to another choice of clock and thereby `patching up' the relational history contained in the physical state with different temporal reference frames, in analogy to covering a manifold with coordinate charts \cite{Bojowald:2010xp, Bojowald:2010qw, Hohn:2011us}.

It will also be interesting to explore the connections with a recent algebraic approach to the problem of time~\cite{Bojowald:2019mas}, which similarly seeks to establish a quantum version of symplectic reduction. In particular, the relation between our trivialization map and their reduction procedure warrants further investigation. In light of the trinity, another line of investigation will be to explore the fundamental decoherence mechanism put forward in~\cite{Gambini:2004a,Gambini:2006yj,Gambini:2006ph}, which originates in the observation that there is a limit to how well one can measure the time indicated by a physical clock.

%========================================
%========================================
\section*{Acknowledgments}
PAH is supported by the Simons Foundation through an `It-from-Qubit' Fellowship and the Foundational Questions Institute through Grant number FQXi-RFP-1801A. He further thanks the Institute for Cross-Disciplinary Engagement at Dartmouth for an ICE Fellowship, which facilitated a visit to Dartmouth College during the final stages of this work. ARHS acknowledges support from the Natural Sciences and Engineering Research Council of Canada and the Dartmouth Society of Fellows. MPEL acknowledges financial support by the ESQ (Erwin Schrödinger Center for Quantum Science \& Technology) Discovery programme, hosted by the Austrian Academy of Sciences (ÖAW), as well as from the Austrian Science Fund (FWF) through the START project Y879-N27. This project was made possible through the support of a
grant from the John Templeton Foundation. The opinions expressed in this
publication are those of the authors and do not necessarily reflect
the views of the John Templeton Foundation.

\bibliography{Trinity}

\onecolumngrid

\appendix

\section{Comment on the validity of the absence of interactions}\label{app_chaos}

In the quantum theory, it has been shown that {\it if} a tensor factorization of the total Hilbert space of the clock and system exists in which the interaction term in the Hamiltonian constraint vanishes, then this factorization is unique~\cite{marlettoEvolutionEvolutionAmbiguities2017}. In the context of the Page-Wootters formalism, this has been used as an argument against the `clock ambiguity problem' (related to the `multiple choice problem' in quantum gravity \cite{kucharTimeInterpretationsQuantum2011a,Isham1993}). According to the argument, that clock-system decomposition, which leads to a tensor factorization without interactions (and which is unique if it exists), singles out a preferred clock among a choice of infinitely many. One might thus wonder whether  such a tensor factorization is always possible. For example, such an interaction-free factorization of the total Hilbert space is possible for homogeneous vacuum cosmologies, leading to $C_H$ in the form of Eq.~\eqref{Constraint}. This has previously been exploited to simplify solving the quantum constraints \cite{Ashtekar:1993wb}.

However, for generic systems such an interaction free decomposition of the total Hilbert space is not possible. The classical analog of a unitary transformation changing the tensor product structure is a symplectic transformation on $\cp_{\rm kin}\simeq \cp_{C} \times \cp_{S}$, leading to (under our assumptions) a new decomposition $\cp_{\rm kin} \simeq \cp_{C'} \times \cp_{S'}$ (possibly only locally). Now suppose $\dim\cp_{C'}=\dim\cp_{S'}=2$, so that $\dim\cp_{\rm kin}=4$, which is the smallest phase space dimension in which chaos can appear for autonomous systems. (For a general relativistic example, see \cite{Page:1984qt,Kamenshchik:1998ue,Cornish:1997ah,Hohn:2011us}.) If $C_H$ did generate  chaotic dynamics, it would have to include a non-vanishing interaction term, say $H_{CS}$, in the original partition because all Hamiltonians of the form of Eq.~(\ref{Constraint}) are completely integrable in four phase space dimensions (they  decouple the dynamics of the two-dimensional $\cp_C,\cp_S$, which, being autonomous, are completely integrable). 
If  a symplectic transformation existed that leads to $H_{C'S'}=0$ in the new partition, it would  change the dynamics from being chaotic to being integrable, which is impossible.

This is a strong indication that for chaotic, or more generally, non-integrable systems (and these are generic), one cannot find a partition such that the interaction term vanishes globally, neither classically, nor in the quantum theory. 

This resonates with the criticism raised in~\cite{Unruh:1989db} on the grounds of complex dynamics against the decompositions used in the Page-Wootters formalism. Note, however, that it may still be possible to define a relational dynamics in non-integrable systems (see \cite{Dittrich:2016hvj,Dittrich:2015vfa} for developments in this direction). Clock-system interactions have recently been consider within the Page-Wootters formalism~\cite{Smith:2017pwx}, leading to a time non-local Schr\"{o}dinger equation satisfied by the system $S$ with respect to the clock $C$
\begin{align}
i \frac{d}{dt} \ket{\psi_S(t)} =  H_S \ket{\psi_S(t)} + \int dt \, K(t,t') \ket{\psi_S(t')}, \nn
\end{align}
where the second term on the right hand side is a self-adjoint integral operator, the kernel of which $K(t,t') \ce \braket{t|H_{\rm int}|t'}$ depends on an interaction Hamiltonian $H_{\rm int}$ appearing in a Hamiltonian constraint. 

\section{Freedom of choice in classical and quantum time observables} \label{GaugeishFreedom}

For a given classical or quantum system, there is a freedom in choosing the time observable (assuming that one exists). In the classical case, given a time observable $T$ satisfying the condition $\{T,H_C\}=1$, an equivalent time observable can be constructed by $\tilde{T}\ce T+ h(H_C)$ for an arbitrary real function $h(H_C)$. In the quantum case, the freedom of choice is represented by the arbitrary real function $g(\varepsilon)$ in Eqs.~\eqref{continuousClockState} and~\eqref{discreteClockState}. We now demonstrate the equivalence of these two freedoms when the quantum clock's Hamiltonian has a continuous spectrum. First, let us assume that $g(\varepsilon)$ is an analytic function, so that $g(\hat{H}_C)$ can be defined via its Taylor series. Now consider two covariant POVMs; the first, denoted $E_T$, with time operator $\hat{T}$, corresponds to the choice $g(\varepsilon)=0$, and the second, denoted $E_{\tilde{T}}$, with time operator $\hat{\tilde{T}}$, corresponds to an arbitrary choice of $g(\varepsilon)$. Using Eqs.~\eqref{covet} and~\eqref{continuousClockState} one can see that $E_{\tilde{T}}=e^{ig(\hat{H}_C)} E_T e^{-ig(\hat{H}_C)}$, and therefore $\hat{\tilde{T}}=e^{ig(\hat{H}_C)} \hat{T} e^{-ig(\hat{H}_C)}$. Using the Baker-Campbell-Hausdorff formula, the latter expression can be written as
\begin{equation} \label{TransformedTimeObs}
\hat{\tilde{T}} = \sum_{n=0}^{\infty} \frac{i^n}{n!} \left[ g(\hat{H}_C) , \hat{T} \right]_n .
\end{equation}
Expressing $g(\hat{H}_C)$ via its Taylor series and using the canonical commutation relation in Eq.~\eqref{contCCR}, after some calculation one finds
\begin{equation}
\left[ g(\hat{H}_C) , \hat{T} \right] = - i \sum_{n=0}^{\infty} \frac{g^{(n+1)}(0)}{n!} \, \hat{H}_{C}^n = - i \sum_{n=0}^{\infty} \frac{h^{(n)}(0)}{n!} \, \hat{H}_{C}^n  = -i \,  h(\hat{H}_C), \nn
\end{equation}
where $g^{(n)}(\varepsilon)$ denotes the $n^\text{th}$ derivative of $g(\varepsilon)$, and we have defined $h(\varepsilon) \ce  g^{(1)}(\varepsilon)$. Consequently, $[ g(\hat{H}_C) , \hat{T} ]_{n}=0$ for $n>1$, and then Eq.~\eqref{TransformedTimeObs} gives $\hat{\tilde{T}} = \hat{T} + h(\hat{H}_C)$, which is exactly the quantization of the classical time observable $\tilde{T}$ above. In other words, the quantum freedom in choosing $g(\varepsilon)$ is equivalent to the classical freedom in choosing $h(H_C)$, the two functions being related by differentiation/integration.

\section{Proofs of lemmas and theorems of Secs.~\ref{sec_nondegtrinity} and~\ref{sec_Unifying}}
\label{Sec_theorems}

\noindent{\bf Theorem 1.}
\emph{$\hat{F}_{f_S,T}(\tau)$ is a (strong) Dirac observable, that is, $\hat{F}_{f_S,T}(\tau)$ commutes with the constraint operator $\hat{C}_H$}
\begin{align}
\left[ \hat{C}_H , \hat{F}_{f_S,T}(\tau)\right] = 0. \nn %\label{RDOcommute1}
\end{align}

\begin{proof}
To prove the first part of the theorem, consider 
\begin{align}
&U_{CS}(s) \hat{F}_{f_S,T}(\tau) \nn \\
 &\quad = \frac{1}{2 \pi} \int _\mathbb{R}dt \, U_{CS}(t+s) \left( \ket{\tau}\!\bra{\tau} \otimes \hat{f}_S\right)  U_{CS}^\dagger(t)  \nn \\
&\quad= \frac{1}{2 \pi} \int_\mathbb{R} dt \, U_{CS}(t)  \left( \ket{\tau}\!\bra{\tau} \otimes \hat{f}_S\right)  U_{CS}^\dagger(t-s)  \nn \\
&\quad=  \hat{F}_{f_S,T}(\tau)\, U_{CS}(s) , \nn%\label{commuteucsc}
\end{align}
where in the first and third equality we used Eq.~\eqref{RelationalDiracObservable} and the second equality follows from changing the integration variable, $t \to t+s$. It follows that 
\begin{align}
\left[ U_{CS}(s) , \hat{F}_{f_S,T}(\tau) \right] =0, \quad \forall s \,\in \mathbb{R}.
\label{commutator1}
\end{align}
Differentiating both sides of Eq.~\eqref{commutator1} with respect to $s$ yields Eq.~\eqref{RDOcommute1}, as desired. 
\end{proof}

\noindent{\bf Lemma 1.}
\emph{Let $\Pi_{\sigma_{SC}}$ be the projector from $\ch_S$ to its subspace spanned by all system energy eigenstates $\ket{E}_S$ with $E\in\sigma_{SC}$, i.e.\ those permitted upon solving the constraint. The quantum relational Dirac observables $\hat F_{f_S,T}(\tau)$ and $\hat F_{\Pi_{\sigma_{SC}} {f}_S\, \Pi_{\sigma_{SC}},T}(\tau)$ are weakly equal, i.e.\ coincide on $\ch_{\rm phys}$. Hence, the relational Dirac observables associated to system observables   form equivalence classes where $\hat F_{f_S,T}(\tau)$ and $\hat F_{g_S,T}(\tau)$ are equivalent if   $\Pi_{\sigma_{SC}} \hat{f}_S\, \Pi_{\sigma_{SC}}=\Pi_{\sigma_{SC}} \hat{g}_S\, \Pi_{\sigma_{SC}}$.}

\begin{proof}
Since $I_C\otimes\Pi_{\sigma_{SC}}\,\ket{\psi_{\rm phys}}=\ket{\psi_{\rm phys}}$ and $[\Pi_{\sigma_{SC}},\hat H_S]=0$, we can write
\ba
\hat F_{f_S,T}(\tau)\,\ket{\psi_{\rm phys}}&=&(I_C\otimes\Pi_{\sigma_{SC}})\,\hat F_{f_S,T}(\tau)\,(I_C\otimes\Pi_{\sigma_{SC}})\,\ket{\psi_{\rm phys}}\nn\\
&=& \frac{1}{2\pi} \int_{\mathbb{R}}  dt\, e^{-i t \hat{C}_H}\left(\ket{\tau}\!\bra{\tau} \otimes \Pi_{\sigma_{SC}}\,\hat{f}_S\,\Pi_{\sigma_{SC}}\right)\, e^{i t \hat{C}_H}\,\ket{\psi_{\rm phys}}\nn\\
&=&\hat F_{\Pi_{\sigma_{SC}}f_S\Pi_{\sigma_{SC}},T}(\tau)\,\ket{\psi_{\rm phys}}\,.\nn
\ea
\end{proof}

\noindent{\bf Theorem 2.} \emph{ Let $\hat f_S\in\cl\left(\ch_S\right)$ and denote by $\hat f_S^{\rm phys}\ce \Pi_{\sigma_{SC}}\,\hat f_S\,\Pi_{\sigma_{SC}}$ its projection to $\ch_S^{\rm phys}$. The map 
\ba
\mathbf{F}_T(\tau):\cl\left(\ch_S^{\rm phys}\right)&\rightarrow&\cl\left(\ch_{\rm phys}\right)\nn\\
\hat f^{\rm phys}_S\,&\mapsto&\hat F_{\hat f^{\rm phys}_S,T}(\tau)\nn
\ea
is  weakly an algebra homomorphism with respect to addition, multiplication and the commutator. That is, the following holds:
\ba
\hat F_{f^{\rm phys}_S+g^{\rm phys}_S\cdot h^{\rm phys}_S,T}(\tau) &\approx& \hat F_{f^{\rm phys}_S,T}(\tau)+\hat F_{g^{\rm phys}_S,T}(\tau)\cdot\hat F_{h^{\rm phys}_S,T}(\tau)\nn\\
\left[\hat F_{f_S^{\rm phys},T}(\tau),\hat F_{g_S^{\rm phys},T}(\tau)\right]&\approx&\hat F_{[f^{\rm phys},g^{\rm phys}_S],T}(\tau)\,,\nn
\ea
where $\approx$ is the quantum weak equality of Eq.~\eqref{qweak}.
}
\begin{proof}
That the map $\mathbf{F}_T(\tau)$ is a homomorphism with respect to addition is evident from the linearity of Eq.~\eqref{RelationalDiracObservable} in $\hat f_S$. Let us now check multiplication. Recalling Eqs.~\eqref{clockStateOverlap} and~\eqref{clockStateOverlap2}, we have
\ba
\hat F_{g^{\rm phys}_S,T}(\tau)\cdot\hat F_{h^{\rm phys}_S,T}(\tau)&=&\f{1}{(2\pi)^2}\int_{\mathbb{R}}dt\,ds\,U_{CS}(t)\left(\ket{\tau}\bra{\tau}\otimes \hat g_S^{\rm phys}\right)U_{CS}(s-t)\left(\ket{\tau}\bra{\tau}\otimes\hat h_S^{\rm phys}\right) U_{CS}^\dag(s)\nn\\
&=&\f{1}{(2\pi)^2}\int_{\mathbb{R}}dt\,ds\,U_{CS}(t)\left(\ket{\tau}\bra{\tau}\otimes (\hat g_S^{\rm phys}\,\chi(t-s)\,U_{S}(s-t)\,\hat h_S^{\rm phys})\right) U_{CS}^\dag(s)\,.\nn
\ea
Since $U_{SC}^\dag(s)\,\ket{\psi_{\rm phys}}=\ket{\psi_{\rm phys}}$, we can write
\ba
\hat F_{g^{\rm phys}_S,T}(\tau)\cdot\hat F_{h^{\rm phys}_S,T}(\tau)\,\ket{\psi_{\rm phys}}&=&\f{1}{(2\pi)^2}\int_{\mathbb{R}}dt\,ds\,U_{CS}(t)\left(\ket{\tau}\bra{\tau}\otimes (\hat g_S^{\rm phys}\,\chi(t-s)\,U_{S}(s-t)\,\hat h_S^{\rm phys})\right)\,\ket{\psi_{\rm phys}}\nn\\
&=&\f{1}{(2\pi)^2}\int_{\mathbb{R}}dt\,ds\,U_{CS}(t)\left(\ket{\tau}\bra{\tau}\otimes (\hat g_S^{\rm phys}\,\chi^*(s)\,U_{S}(s)\,\hat h_S^{\rm phys})\right)\,\ket{\psi_{\rm phys}}\,,\label{lalilu}
\ea
upon a shift of integration variable. 

Next, we show that the operator
\begin{align}
\Pi_{\sigma_{SC}} &\ce {\f{1}{2\pi}} \int_{\mathbb{R}}  dt \, \chi^*(t) U_S(t) \nn \\
&=  {\f{1}{2\pi}}\int_{\mathbb{R}}  dt \, \chi^*(t) \hspace{.75em}{{\intsum}_E} e^{-iEt} \ket{E}\!\bra{E} \nn \\
&=   \hspace{.75em}{{\intsum}_E}  \left( \frac{1}{2\pi} \int_{\mathbb{R}}  dt \, \chi^*(t) e^{-iEt} \right)  \ket{E}\!\bra{E}\label{scproj}
\end{align}
is, in fact, the projector onto the $\hat H_S$ eigenstates compatible with the constraint Eq.~\eqref{WheelerDeWitt}. The integration over $t$ may be performed case by case by using Eq.~\eqref{clockStateOverlap2}
\begin{align}
\frac{1}{2\pi}  \int_{\mathbb{R}}  dt \, \chi^*(t) e^{-iEt}  &=  \frac{1}{2\pi} \int_{\mathbb{R}}  dt \, e^{-iEt}
\begin{cases}
2 \pi \delta(t) , &   \sigma_c \! = \!  \mathbb{R}, \\
e^{{-}i\varepsilon_{\rm min} t} \! \left[ \pi \delta(t) {-} i {\rm P} \frac{1}{t} \right] ,&    \sigma_c   \!= \! (\varepsilon_{\rm min}, \infty), \\
{-} i \frac{e^{{-}i\varepsilon_{\rm min} t}- e^{{-}i\varepsilon_{\rm max} t} }{t}, & \sigma_c  \!=\!  (\varepsilon_{\rm min}, \varepsilon_{\rm max}),
\end{cases} \nn \\
&= 
\begin{cases}
1 ,  &   \sigma_c \! = \!  \mathbb{R}, \\
\frac{1}{2} \left[1 - \sgn (\varepsilon_{\rm min} +E)  \right] ,&    \sigma_c   \!= \! (\varepsilon_{\rm min}, \infty), \\
 \frac{1}{2} \left[\sgn ( \varepsilon_{\rm max} +E) - \sgn ( \varepsilon_{\rm min} +E )  \right] ,& \sigma_c  \!=\!  (\varepsilon_{\rm min}, \varepsilon_{\rm max}),
\end{cases}\nn\\
&= 
\begin{cases}
1  , &   \sigma_c \! = \!  \mathbb{R}, \\
\theta({-\varepsilon_{\rm min}-E}),&    \sigma_c   \!= \! (\varepsilon_{\rm min}, \infty), \\
\theta({-\varepsilon_{\rm min}-E}) - \theta({-\varepsilon_{\rm max}-E}),& \sigma_c  \!=\!  (\varepsilon_{\rm min}, \varepsilon_{\rm max}).\nn
\end{cases}
\end{align}
Hence,
\begin{align}
\Pi_{\sigma_{SC}}=  \hspace{.75em}{{\intsum}_{E\in\sigma_{SC}} }  \ket{E}\!\bra{E}, \nn 
\end{align}
is precisely the projector from the system Hilbert space $\ch_S$ used in kinematical quantization to its subspace compatible with the constraint Eq.~\eqref{WheelerDeWitt}, i.e.\ to its physical subspace.

Accordingly, Eq.~\eqref{lalilu} becomes
\ba
\hat F_{g^{\rm phys}_S,T}(\tau)\cdot\hat F_{h^{\rm phys}_S,T}(\tau)\,\ket{\psi_{\rm phys}}&=&\f{1}{2\pi}\int_{\mathbb{R}}dt\,U_{CS}(t)\left(\ket{\tau}\bra{\tau}\otimes (\hat g_S^{\rm phys}\,\Pi_{\sigma_{SC}}\,\hat h_S^{\rm phys})\right)\,\ket{\psi_{\rm phys}}\nn\\
&=&\f{1}{2\pi}\int_{\mathbb{R}}dt\,U_{CS}(t)\left(\ket{\tau}\bra{\tau}\otimes (\hat g_S^{\rm phys}\cdot\hat h_S^{\rm phys})\right)\,\ket{\psi_{\rm phys}}\nn\\
&=&\f{1}{2\pi}\int_{\mathbb{R}}dt\,U_{CS}(t)\left(\ket{\tau}\bra{\tau}\otimes (\hat g_S^{\rm phys}\cdot\hat h_S^{\rm phys})\right)\,U_{CS}^\dag(t)\,\ket{\psi_{\rm phys}}\nn\\
&=&\hat F_{g_S^{\rm phys}\cdot h_S^{\rm phys},T}(\tau)\,\ket{\psi_{\rm phys}}. \nn
\ea
In the second step we used that $\Pi_{\sigma_{SC}}\,\hat h_S^{\rm phys}=\hat h_S^{\rm phys}$. Recalling the definition of the quantum weak equality in Eq.~\eqref{qweak} yields the desired result.

Since the commutator involves only multiplication and subtraction, the above also implies that $\mathbf{F}_T(\tau)$ is a  homomorphism with respect to the commutator.
\end{proof}

\noindent{\bf Theorem 3.} \emph{
Let $\hat f_S\in\cl\left(\ch_S\right)$. The quantum relational Dirac observable $\hat F_{f_S,T}(\tau)$ acting on $\ch_{\rm phys}$, Eq.~\eqref{RelationalDiracObservable},  reduces under $\calr_{\mathbf S}(\tau)$ to the corresponding projected observable in the relational Schr\"{o}dinger picture on $\ch_S^{\rm phys}$, 
\begin{align}
\calr_{\mathbf S} \left(\tau\right)\,\hat F_{f_S,T}(\tau)\,\calr_{\mathbf S}^{-1}(\tau) = \Pi_{\sigma_{SC}} \hat{f}_S\, \Pi_{\sigma_{SC}}, \nn
\end{align}
where $\Pi_{\sigma_{SC}}$ is the projector so that $\ch_S^{\rm phys}=\Pi_{\sigma_{SC}}(\ch_S)$. Conversely, let $\hat f_S^{\rm phys}\in\cl\left(\ch_S^{\rm phys}\right)$. The encoding operation in Eq.~\eqref{encode1} of system observables coincides \emph{on the physical Hilbert space} $\ch_{\rm phys}$ with the quantum relational Dirac observables in Eq.~\eqref{RelationalDiracObservable}, i.e.\
\begin{align}
\mathcal{E}_{\mathbf S}^{\tau}\left(\hat{f}^{\rm phys}_S\right) \approx\hat F_{f^{\rm phys}_S,T}(\tau), \nn
\end{align}
where $\approx$ is the quantum weak equality of Eq.~\eqref{qweak}.
}
\begin{proof}
Suppose $\hat f_S$ is any linear operator on $\ch_S$.  The first statement is proved by direct computation 
\begin{align}
\calr_{\mathbf S} \left(\tau\right)\,\hat F_{f_S,T}(\tau)\,\calr_{\mathbf S}^{-1}(\tau)  & = \left( \bra{\tau} \otimes I_S \right) \mathcal{G}\left( \ket{\tau}\!\bra{\tau} \otimes \hat{f}_S\right) \delta(\hat{C}_H) \left( \ket{\tau} \otimes I_S \right) \nn \\  
& = \left( \bra{\tau} \otimes I_S \right) \left( \frac{1}{2\pi} \int_{\mathbb{R}}  dt\, e^{-i t \hat{C}_H}\ket{\tau}\!\bra{\tau} \otimes \hat{f}_S e^{i t \hat{C}_H}\right) \frac{1}{2\pi} \int_{\mathbb{R}} ds \, e^{-is \hat{C}_H} \left( \ket{\tau} \otimes I_S \right) \nn \\
& = \frac{1}{(2\pi)^2} \int_{\mathbb{R}}  dt ds\, \braket{\tau | t + \tau}  \!\braket{\tau | \tau + s -t}  U_S(t)\hat{f}_S U_S^\dagger(t-s)\nn \\
& = \frac{1}{(2\pi)^2} \int_{\mathbb{R}}  dt ds\, \chi^*(t) \chi(t-s)   U_S(t)\hat{f}_S U_S^\dagger(t-s)\nn \\
& = \frac{1}{(2\pi)^2}\int_{\mathbb{R}}  dt ds\, \chi^*(t) \chi(s)   U_S(t)\hat{f}_S U_S^\dagger(s)\nn \\
& = \Pi_{\sigma_{SC}} \hat{f}_S \Pi_{\sigma_{SC}}, \nn 
\end{align}
where in the last step we have made use of Eq.~\eqref{scproj}, which defines
precisely the projector from the system Hilbert space $\ch_S$ used in kinematical quantization to the one after Page-Wootters reduction $\ch_S^{\rm phys}$. This proves the first statement.

The second statement is proved by recalling Eqs.~\eqref{rsp} and \eqref{rsp1} and the observation that
\begin{align}
\mathcal{E}_{\mathbf S}^{\tau}\left(\hat{f}^{\rm phys}_S\right) &= \calr_{\mathbf S}^{-1}(\tau)\,\hat{f}^{\rm phys}_S\,\calr_{\mathbf S}(\tau)\nn\\
&= \frac{1}{2 \pi}\int_\mathbb{R} dt\,\ket{t}\! \bra{\tau}\otimes U_S(t-\tau)\,\hat f^{\rm phys}_S\nn\\
&=\frac{1}{2 \pi}\int_\mathbb{R} dt\,U_{CS}(t)\left(\ket{\tau}\! \bra{\tau}\otimes\hat f^{\rm phys}_S\right), \nn
\end{align}
where we used Eq.~\eqref{covariancestate} and a shift of the integration variable. Since $U_{CS}^\dag(t)\,\ket{\psi_{\rm phys}} = \ket{\psi_{\rm phys}}$ we can write
\begin{align}
\mathcal{E}_{\mathbf S}^{\tau}\left(\hat{f}^{\rm phys}_S\right)\ket{\psi_{\rm phys}} &=\frac{1}{2 \pi} \int_\mathbb{R} dt\,U_{CS}(t)\left(\ket{\tau}\! \bra{\tau}\otimes\hat f^{\rm phys}_S\right)\ket{\psi_{\rm phys}} \nn\\
&=\frac{1}{2 \pi} \int_\mathbb{R} dt\,U_{CS}(t)\left(\ket{\tau}\!\bra{\tau}\otimes\hat f^{\rm phys}_S\right)\,U_{CS}^\dag(t)\,\ket{\psi_{\rm phys}}\nn\\
&= \mathcal{G}\left( \ket{\tau}\!\bra{\tau} \otimes \hat{f}^{\rm phys}_S\right) \,\ket{\psi_{\rm phys}}  ,
\nn %\label{Ftilde}
\end{align}
where $\cg$ is the $G$-twirl operation. Comparing with Eq.~\eqref{RelationalDiracObservable} proves the claim.
\end{proof}

\noindent{\bf Theorem 4.}
\emph{ Let $\hat f_S\in\cl\left(\ch_S\right)$ and $\hat f_S^{\rm phys} = \Pi_{\sigma_{SC}}\,\hat f_S\,\Pi_{\sigma_{SC}}$ be its associated operator on $\ch_S^{\rm phys}$. Then
\begin{align}
\braket{\phi_{\rm phys}\,|\,\hat F_{f_S,T}(\tau)\,|\,\psi_{\rm phys}}_{\rm phys}=\braket{\phi_S(\tau)\,|\,\hat f^{\rm phys}_S\,|\,\psi_S(\tau)}=\braket{\phi_{\rm phys}\,|\,\mathcal{E}_{\mathbf S}^{\tau}(\hat{f}^{\rm phys}_S))\,|\,\psi_{\rm phys}}_{\rm PW},\nn
\end{align}
where $\ket{\psi_S(\tau)} = \calr_{\mathbf S}(\tau)\,\ket{\psi_{\rm phys}}$.}

\begin{proof}
Using the definition of the physical inner product Eq.~\eqref{PIP}, Lemma~\ref{lem_1} and Eq.~\eqref{encode2}, we have
\begin{align}
\braket{\phi_{\rm phys}\,|\,\hat F_{f_S,T}(\tau)\,|\,\psi_{\rm phys}}_{\rm phys}& =\braket{\phi_{\rm kin}\,|\,\mathcal{E}_{\mathbf S}^{\tau}(\hat{f}^{\rm phys}_S)\,\delta(\hat C_H)\,|\,\psi_{\rm kin}}_{\rm kin}\nn\\
&\underset{(\ref{encode1})}{=}\braket{\phi_{\rm kin}\,|\,\delta(\hat C_H)\left(\ket{\tau}\!\bra{\tau}\otimes {\hat{f}^{\rm phys}_S}\right)\,\delta(\hat C_H)\,|\,\psi_{\rm kin}}_{\rm kin}\nn\\
&=\braket{\phi_{\rm phys}\,|\,\left(\ket{\tau}\!\bra{\tau}\otimes {\hat{f}^{\rm phys}_S}\right)\,|\,\psi_{\rm phys}}_{\rm kin}\nn\\
&\underset{(\ref{SIP})}{=}\braket{\phi_S(\tau)\,|\, \hat{f}^{\rm phys}_S \,|\,\psi_S(\tau)}. \nn 
\end{align}
To show also equivalence with the expectation value in the Page-Wootters inner product Eq.~\eqref{PWinnerproduct}, we insert an identity in the first line above, yielding
\ba
\braket{\phi_{\rm phys}\,|\,\hat F_{f_S,T}(\tau)\,|\,\psi_{\rm phys}}_{\rm phys}& =&\braket{\phi_{\rm kin}\,|\,\mathcal{E}_{\mathbf S}^{\tau}(\hat{f}^{\rm phys}_S)\,\delta(\hat C_H)\,|\,\psi_{\rm kin}}_{\rm kin}\nn\\
&=& \braket{\phi_{\rm kin}\,|\,I_{\rm phys}\,\mathcal{E}_{\mathbf S}^{\tau}(\hat{f}^{\rm phys}_S)\,\delta(\hat C_H)\,|\,\psi_{\rm kin}}_{\rm kin}\nn\\
&\underset{(\ref{check})}{=}&\braket{\phi_{\rm kin}\,|\,\delta(\hat C_H)\,(\ket{\tau}\!\bra{\tau}\otimes I_S)\,\mathcal{E}_{\mathbf S}^{\tau}(\hat{f}^{\rm phys}_S)\,\delta(\hat C_H)\,|\,\psi_{\rm kin}}_{\rm kin}\nn\\
&=&\braket{\phi_{\rm phys}\,|\,(\ket{\tau}\!\bra{\tau}\otimes I_S)\,\mathcal{E}_{\mathbf S}^{\tau}(\hat{f}^{\rm phys}_S)\,|\,\psi_{\rm phys}}_{\rm kin}\nn\\
&\underset{(\ref{PWinnerproduct})}{=}&\braket{\phi_{\rm phys}\,|\,\mathcal{E}_{\mathbf S}^{\tau}(\hat{f}^{\rm phys}_S)\,|\,\psi_{\rm phys}}_{\rm PW}. \nn
\ea
\end{proof}

\noindent{\bf Lemma 2.}
\emph{The trivialization map given in Eq.~\eqref{trivial1} trivializes the constraint to the clock degrees of freedom
\begin{align}
\ct_T\,\hat C_H\,\ct_T^{-1} = \left(\hat H_C-\varepsilon_*\right)\otimes I_S, \nn
\end{align}
for any $\varepsilon_*\in\mathbb{R}$. Furthermore, for $\varepsilon_*\in \spec(\Hat H_C)$, $\ct_T^{-1}$ is the left inverse of $\ct_T$ \emph{on physical states},
\begin{align}
\ct_T^{-1}\,\circ\,\ct_T\,\approx I_{\rm phys},\label{ttinverse}
\end{align}
and the trivialization transforms physical states into product states with a  fixed and redundant clock factor
\begin{align}
\ct_T\,\ket{\psi_{\rm phys}} = e^{i\,g(\varepsilon_*)}\, \ket{\varepsilon_*}_C\otimes\, \,\,\,{{\intsum}_{E\in\sigma_{SC}}}\,e^{-i\,g(-E)}\,\psi_{\rm kin}(-E,E)\,\ket{E}_S.\label{redundantclockslot}
\end{align}
}

\begin{proof}
First note that after a shift of integration variables
{\begin{align}
U_C(s)\,\hat T^{(n)}\,U_C^\dag(s) = \frac{1}{2 \pi} \int_\mathbb{R}dt\,(t-s)^n\,\ket{t}\!\bra{t}. \nn
\end{align}
Differentiation with respect to $s$ and subsequently setting $s=0$ gives
\begin{align}
[\hat T^{(n)},\hat H_C]=i\,n\,\hat T^{(n-1)}. \nn
\end{align}
Accordingly, 
\begin{align}
[\ct_T,\hat H_C]&=\sum_{n=0}^\infty\,\f{i^n}{n!}\,[\hat T^{(n)},\hat H_C]\otimes\left(\hat H_S+\varepsilon_*\right)^n\nn\\
&=-I_C\otimes\left(\hat H_S+\varepsilon_*\right)\,\ct_T. \nn
\end{align}
Recalling} Eq.~\eqref{WheelerDeWitt}, this directly implies
\begin{align}
\ct_T\,\hat C_H\,\ct_T^{-1} = \left(\hat H_C-\varepsilon_*\right)\otimes I_S. \nn
\end{align}
Note that so far we have not made any assumption about the value of $\varepsilon_*$.

Next, we find
\ba
\ct_T^{-1}\cdot\ct_T &=&\f{1}{(2\pi)^2}\int_\mathbb{R}dt\,ds\,\chi(t-s)\ket{t}\!\bra{s}\otimes e^{-i(t-s)(\hat H_S+\varepsilon_*)}\nn\\
&=&\f{1}{(2\pi)^2}\int_\mathbb{R}dt\,ds\,\chi(t-s)\left(\ket{t}\!\bra{t}\otimes I_S\right)U_{CS}(t-s)e^{-i(t-s)\varepsilon_*}\nn\\
&=&\f{1}{(2\pi)^2}\int_\mathbb{R}dt\,ds\,\chi(s)\left(\ket{t}\! \bra{t}\otimes I_S\right)U_{CS}(s)e^{-is\varepsilon_*}\,,\nn
\ea
upon a change of integration variable. Since $U_{CS}(s)\,\ket{\psi_{\rm phys}}=\ket{\psi_{\rm phys}}$,
\ba
\ct_T^{-1}\cdot\ct_T \,\ket{\psi_{\rm phys}}&=&\f{1}{(2\pi)^2}\int_\mathbb{R}dt\,ds\,\chi(s)\left(\ket{t}\! \bra{t}\otimes I_S\right)e^{-is\varepsilon_*}\,\ket{\psi_{\rm phys}}\nn.
\ea
Now we invoke the assumption that $\varepsilon_*\in\spec(\hat H_C)$ to find 
\ba
\f{1}{2\pi}\,\int_\mathbb{R}ds\,\chi(s)\,e^{-is\varepsilon_*}=\f{1}{2\pi}\,\int_{\sigma_c}\,d\varepsilon\int_\mathbb{R}ds\,e^{-is(\varepsilon_*-\varepsilon)}=1\,.\label{xyzw}
\ea
Recalling that the clock states form a resolution of the identity, Eq.~\eqref{completeness}, yields Eq.~\eqref{ttinverse}.

Finally, using Eq.~\eqref{GAP}, we have
\begin{align}
\ct_T\,\ket{\psi_{\rm phys}}&= \, \,\,\,{{\intsum}_{E\in\sigma_{SC}}}\,\psi_{\rm kin}(-E,E)\,\f{1}{2\pi}\,\int_\mathbb{R}dt\, e^{i\,t(E+\varepsilon_*)}\,\ket{t}\braket{t|-E}_C\,\ket{E}_S.\label{ctphysstate}
\end{align}
Invoking Eq.~\eqref{continuousClockState} yields 
\begin{align}
\f{1}{2\pi} \,\int_\mathbb{R}dt\, e^{i\,t(E+\varepsilon_*)}\,\ket{t}\braket{t|\varepsilon}_C &= \frac{1}{2 \pi} \int_\mathbb{R}dt\,e^{i\,t(E+\varepsilon_*)}\int_{\sigma_c}\,d\varepsilon'\,d\varepsilon''\,e^{i\left[g(\varepsilon'')-g(\varepsilon') \right]}\,e^{-i(\varepsilon''-\varepsilon')t}\,\ket{\varepsilon''}_C\braket{\varepsilon'|\varepsilon}_C\nn\\
&= \frac{1}{2 \pi} \int_\mathbb{R}dt\,\int_{\sigma_c}\,d\varepsilon''\,e^{i\left[g(\varepsilon'')-g(\varepsilon)\right]}\,e^{i(E+\varepsilon_*-\varepsilon''+\varepsilon)t}\,\ket{\varepsilon''}_C\nn\\
&=\int_{\sigma_c}\,d\varepsilon''\,e^{i\left[g(\varepsilon'')-g(\varepsilon)\right]}\,\delta(E+\varepsilon_*-\varepsilon''+\varepsilon)\,\ket{\varepsilon''}_C\nn\\
&=\begin{cases}
    e^{i\,\left[g(E+\varepsilon_*+\varepsilon)-g(\varepsilon) \right]}\,\ket{E+\varepsilon_*+\varepsilon}_C  & \text{if } E+\varepsilon_*+\varepsilon\in\spec(\hat H_C), \\
     0 & \text{otherwise}.
\end{cases}\label{nonunitaryshift}
\end{align}
This makes it clear that $\ct_T$ cannot be a unitary (conditional) shift operator of the clock energy if $\spec(\hat H_C)\neq\mathbb{R}$, which is also when the clock states are non-orthogonal and $T^{(n)}$ are not self-adjoint. But this is not a problem for us, as we need $\ct_T$ for much more restricted purposes. Indeed, applying Eq.~\eqref{nonunitaryshift} to Eq.~\eqref{ctphysstate}, directly yields
Eq.~\eqref{redundantclockslot}, provided $\varepsilon_*\in \spec(\hat H_C)$.
\end{proof}

\noindent{\bf Lemma 3.}
\emph{On physical states, the quantum symmetry reduction map is equal to
\ba
\calr_{\mathbf H}\,\approx \bra{\tau}\otimes U_S^\dag(\tau) \nn
\ea
while its inverse can also be written as
\ba
\calr_{\mathbf H}^{-1}=\delta (\hat{C}_H ) \left( \ket{\tau}\otimes U_S(\tau) \right). \nn  %\label{rhp1}
\ea
Moreover, the two maps are the appropriate inverses of one another: 
\begin{align}
\calr_{\mathbf H}^{-1}\,\circ\,\calr_{\mathbf H}&=I_{\rm phys},\nn\\
\calr_{\mathbf H}\,\circ\,\calr_{\mathbf H}^{-1}&=\Pi_{\sigma_{SC}}. \nn
\end{align}}

\begin{proof}
Invoking the definition Eq.~\eqref{trivial1}, we find
\ba
\calr_{\mathbf H}&=& e^{-i\,\varepsilon_*\,\tau}\,(\bra{\tau}\otimes I_S)\,\ct_T=e^{-i\varepsilon_*\tau}\f{1}{2\pi}\int_\mathbb{R}dt\,\chi(\tau-t)\bra{t}\otimes e^{it(\hat H_S+\varepsilon_*)}\nn\\
&=&(\bra{\tau}\otimes U_S^\dag(\tau))\,\f{1}{2\pi}\int_\mathbb{R}dt\,\chi(t)^*\,e^{i\varepsilon_* t}\,U_{CS}^\dag(t),\label{xyz}
\ea
upon also performing a change of integration variable. Noting that $U_{CS}^\dag(t)\,\ket{\psi_{\rm phys}}=\ket{\psi_{\rm phys}}$ and using Eq.~\eqref{xyzw}, yields
\ba
\calr_{\mathbf H}\,\ket{\psi_{\rm phys}} = \bra{\tau}\otimes U_S^\dag(\tau)\,\ket{\psi_{\rm phys}}\,.\nn
\ea

Next, employing Eq.~\eqref{invtriviala} and the definition Eq.~\eqref{trivial1b} of the inverse trivialization, we compute
\ba
\calr_{\mathbf H}^{-1}&\underset{(\ref{fourier})}{=}&\ct_T^{-1}\f{1}{2\pi}\int_\mathbb{R}dt\,e^{i\varepsilon_* t}\ket{t}\otimes I_S\nn\\
&=&\f{1}{(2\pi)^2}\int_\mathbb{R} ds\,dt\,\chi(s-t)\,e^{i\varepsilon_* t}\,\ket{s}\otimes e^{-i s(\hat H_S+\varepsilon_*)}\nn\\
&\underset{(\ref{xyzw})}{=}&\f{1}{2\pi}\int_\mathbb{R}ds\int_{\sigma_c}d\varepsilon\,\delta(\varepsilon-\varepsilon_*)\,e^{is\varepsilon}\ket{s}\otimes e^{-is(\hat H_S+\varepsilon_*)}\nn\\
&=&\f{1}{{2\pi}}\,\int_\mathbb{R}ds\,\ket{s}\otimes U_S(s) \nn \\
&=& \f{1}{{2\pi}}\,\int_\mathbb{R}ds\,e^{-i\hat{H}_C (s-\tau')}\ket{\tau'}\otimes U_S(s) \nn \\
&=& \delta (\hat{C}_H ) \left( \ket{\tau'}\otimes U_S(\tau') \right), \label{rhp1app}
\ea
where in the last line we have changed integration variables, $s \mapsto s - \tau'$.  

Since $\calr_{\mathbf H}^{-1}$ is independent of the choice of $\tau'$, we can set $\tau'=\tau$ so that
\ba
\calr_{\mathbf H}^{-1}\,\circ\,\calr_{\mathbf H} \,\ket{\psi_{\rm phys}} = \delta (\hat{C}_H ) \left( \ket{\tau}\bra{\tau}\otimes I_S\right)\ket{\psi_{\rm phys}}\,.\nn
\ea
It is thus clear from Eq.~\eqref{check} that $\calr_{\mathbf H}^{-1}\,\circ\,\calr_{\mathbf H}=I_{\rm phys}$ for any $\tau\in\mathbb{R}$. 

Conversely, 
\ba
\calr_{\mathbf H}\,\circ\,\calr_{\mathbf H}^{-1}&\underset{(\ref{xyz})}{=}&(\bra{\tau}\otimes U_S^\dag(\tau))\,\f{1}{2\pi}\int_\mathbb{R}dt\,\chi(t)^*\,e^{i\varepsilon_* t}\,U_{CS}^\dag(t)\f{1}{2\pi}\int_\mathbb{R}ds\,U_{CS}(s)\,\left(\ket{\tau'}\otimes U_S(\tau')\right)\nn\\
&=&\f{1}{(2\pi)^2}\int_\mathbb{R}dt\,ds\,\chi^*(t)\,e^{i\varepsilon_* t}\bra{\tau}\,U_{CS}(s-t)\,\ket{\tau'}\,U_S(\tau'-\tau)\nn\\
&=&\f{1}{(2\pi)^2}\int_\mathbb{R}dt\,ds\,\chi^*(t)\,e^{i\varepsilon_* t}\,\chi(\tau-\tau'-s+t)\,U_S(s-t+\tau'-\tau)\nn\\
&=&\f{1}{(2\pi)^2}\int_\mathbb{R}dt\,du\,\chi^*(t)\,e^{i\varepsilon_* t}\,\chi^*(u)\,U_S(u)\nn\\
&\underset{(\ref{xyzw}, \ref{scproj})}{=}&\Pi_{\sigma_{SC}} \nn .
\ea\end{proof}

\noindent{\bf Theorem 5.}
\emph{
Let $\hat f_S\in\cl\left(\ch_S\right)$. The quantum relational Dirac observables $\hat F_{f_S,T}(\tau)$ on $\ch_{\rm phys}$, Eq.~\eqref{RelationalDiracObservable},  reduce under $\calr_{\mathbf H}$ to the corresponding projected evolving observables of the relational Heisenberg picture on $\ch_S^{\rm phys}$, Eq.~\eqref{F5}, i.e.
\begin{align}
\calr_{\mathbf H}\,\hat F_{f_S,T}(\tau)\,\calr_{\mathbf H}^{-1} = \Pi_{\sigma_{SC}}\,\hat f_S(\tau)\,\Pi_{\sigma_{SC}}.\nn %\label{HPobs}
\end{align}
Conversely, let $\hat f_S^{\rm phys}(\tau)\in\cl\left(\ch_S^{\rm phys}\right)$ be any evolving observable, Eq.~\eqref{F5}. In analogy to Eq.~\eqref{encode2}, 
\begin{align}
\mathcal{E}_{\mathbf H} \left(\hat f^{\rm phys}_S(\tau)\right)\,\approx \hat F_{f^{\rm phys}_S,T}(\tau). \nn
\end{align}
}

\begin{proof}
Direct computation yields for any $\tau'$ 
\ba
\calr_{\mathbf H}\,\hat F_{f_S,T}(\tau)\,\calr_{\mathbf H}^{-1}&\underset{(\ref{rhp1app})}{=}&e^{-i\,\varepsilon_*\,\tau'}\,(\bra{\tau'}\otimes I_S)\,\ct_T\,\hat F_{f_S,T}(\tau)\,\delta (\hat{C}_H ) \left( \ket{\tau''}\otimes U_S(\tau'') \right)\nn\\
&=&\f{1}{(2\pi)^3}\int_\mathbb{R}dt\,ds\,du\,\chi(\tau'-t)\bra{t}\otimes e^{it(\hat H_S+\varepsilon_*)-i\tau'\varepsilon_*}\,U_{CS}(s)\left(\ket{\tau}\bra{\tau}\otimes\hat f_S\right)U_{CS}^\dag(s+u)\left( \ket{\tau''}\otimes U_S(\tau'') \right)\nn\\
&=&\f{1}{(2\pi)^3}\int_\mathbb{R}dt\,ds\,du\,\chi(\tau'-t)\chi(t-\tau-s)\chi(\tau+s-\tau''+u)\,e^{i(t-s)\hat H_S}\,\hat f_S\,e^{i(s+u-\tau'')\hat H_S}\,e^{i\varepsilon_*(t-\tau')}\nn\\
&=&\f{1}{(2\pi)^3}\int_\mathbb{R}dt\,ds\,du\,\chi(\tau'-t)\chi(t-\tau-s)\chi(\tau+s-\tau''+u)\,\nn\\
&&\q\q\q\q\q\q\q\q\q\q\q\times e^{i(t-s-\tau)\hat H_S}\,U_S^\dag(\tau)\,\hat f_S\,U_S(\tau)\,e^{i(s+u-\tau''+\tau)\hat H_S}\,e^{i\varepsilon_*(t-\tau')}\,.\nn
\ea
Performing now in sequence the variable shifts $v=-s-u+\tau''-\tau$, $w=\tau+s-t$ and $x=t-\tau'$, then recalling the definition of the projector $\Pi_{\sigma_{SC}}$ in Eq.~\eqref{scproj} and using Eq.~\eqref{xyzw}, one finally obtains
\ba
\calr_{\mathbf H} \,\hat F_{f_S,T}(\tau)\,\calr_{\mathbf H}^{-1} = \Pi_{\sigma_{SC}}\,\hat f_S(\tau)\,\Pi_{\sigma_{SC}}\,.\nn
\ea

Conversely, employing Lemma~\ref{lem_3}, we find for any $\tau'$ in $\calr_{\mathbf H}$
\begin{align}
\mathcal{E}_{\mathbf H} \left(\hat f^{\rm phys}_S(\tau) \right)\,\ket{\psi_{\rm phys}}&=\calr_{\mathbf H}^{-1}\,\hat f^{\rm phys}_S (\tau)\,\calr_{\mathbf H}\,\ket{\psi_{\rm phys}}\nn\\
&= \delta (\hat{C}_H ) \left( \ket{\tau''}\otimes U_S(\tau'') \right)\,\hat f_S^{\rm phys}(\tau)\,\left(\bra{\tau'}\otimes U_S^\dag(\tau')\right)\,\ket{\psi_{\rm phys}}.\nn
\end{align}
Next, we recall that $\calr_{\mathbf H}^{-1}$ is independent of the choice of $\tau''$ and that likewise $\left(\bra{\tau'}\otimes U_S^\dag(\tau')\right)\,\ket{\psi_{\rm phys}}$ is independent of the choice of $\tau'$. In particular, we are therefore free to  set $\tau''=\tau'=\tau$. In conjunction with Eq.~\eqref{F5}, this yields
\begin{align}
\mathcal{E}_{\mathbf H} \left(\hat f^{\rm phys}_S(\tau)\right)\,\ket{\psi_{\rm phys}}&=\delta(\hat C_H)\left(\ket{\tau}\bra{\tau}\otimes \hat f_S^{\rm phys}\right)\,\ket{\psi_{\rm phys}}\nn\\
&=\hat F_{f^{\rm phys}_S,T}(\tau)\,\ket{\psi_{\rm phys}}, \nn
\end{align}
where in the last line we have made use of Eq.~\eqref{encode1} and Theorem~\ref{thm_2}.
\end{proof}

\noindent{\bf Theorem 6.}
\emph{Let $\hat f_S\in\cl\left(\ch_S\right)$ and $\hat f_S^{\rm phys}(\tau)=e^{i\tau\hat H_S}\,\Pi_{\sigma_{SC}}\,\hat f_S\,\Pi_{\sigma_{SC}}\,e^{-i\tau\hat H_S}$ be its associated evolving Heisenberg operator on $\ch_S^{\rm phys}$. Then
\begin{align}
\bra{\phi_{\rm phys}}\,\hat{ F}_{f_S,T}(\tau)\,\ket{\psi_{\rm phys}}_{\rm phys} = \braket{\phi_S\,|\,\hat{f}^{\rm phys}_{S}(\tau)\,|\,\psi_S},
\end{align}
where $\ket{\psi_S}=
\calr_{\mathbf H}\,\ket{\psi_{\rm phys}}\in\ch_S^{\rm phys}$. 
}

\begin{proof}
Using the second result of Theorem~\ref{thm_4}, Lemma~\ref{lem_1} and the definition of the physical inner product Eq.~\eqref{PIP}, one finds
\begin{align}
\bra{\phi_{\rm phys}}\,\hat{ F}_{f_S,T}(\tau)\,\ket{\psi_{\rm phys}}_{\rm phys} &=\bra{\phi_{\rm phys}}\,\mathcal{E}_{\mathbf H} \left({\hat f^{\rm phys}_S(\tau)}\right)\,\ket{\psi_{\rm phys}}_{\rm phys}\nn\\
&=\bra{\phi_{\rm kin}}\,\mathcal{E}_{\mathbf H} \left({\hat f^{\rm phys}_S(\tau)}\right)\,\ket{\psi_{\rm phys}}_{\rm kin}\nn\\
&=\bra{\phi_{\rm kin}}\,\calr_{\mathbf H}^{-1}\,\hat f^{\rm phys}_S (\tau)\,\calr_{\mathbf H}\,\ket{\psi_{\rm phys}}_{\rm kin}\nn\\
&=\bra{\phi_{\rm kin}}\,\calr_{\mathbf H}^{-1}\,\hat f^{\rm phys}_S (\tau)\,\ket{\psi_S}. \nn
\end{align}
Invoking Eqs.~\eqref{kinstate} and~\eqref{rhp1app}, yields
\begin{align}
\bra{\phi_{\rm kin}}\,\calr_{\mathbf H}^{-1}&=\int_{\sigma_c}d\varepsilon \, \,\,\,{{\intsum}_{E}}\,\phi^*_{\rm kin}(\varepsilon,E)\,{}_C\bra{\varepsilon}{}_S\bra{E}\f{1}{2\pi}\int_\mathbb{R}dt\,\ket{t}\otimes U_S(t)\nn\\
&\underset{(\ref{continuousClockState})}{=} \,\,\,{{\intsum}_{E\in\sigma_{SC}}}\,\phi^*_{\rm kin}(-E,E)\,e^{i\,g(-E)}\,{}_S\bra{E}\nn\\
&\underset{(\ref{ident})}{=}\bra{\phi_S},\label{a25}
\end{align}
where the latter is a dual reduced state on $\ch_S^{\rm phys}$.
Hence,
\begin{align}
\bra{\phi_{\rm phys}}\,\hat{ F}_{f_S,T}(\tau)\,\ket{\psi_{\rm phys}}_{\rm phys} = \braket{\phi_S\,|\, \hat{f}^{\rm phys}_{S}(\tau) \,|\,\psi_S}.\label{a26}
\end{align}
\end{proof}

\noindent{\bf Corollary 3.}
\emph{
The relational Heisenberg picture on $\ch_S^{\rm phys}$, obtained through the quantum symmetry reduction $\calr_{\mathbf H}$, is only equivalent to the relational Heisenberg picture of reduced phase space quantization described in Sec.~\ref{Heisenberg} if $\sigma_{CS}=\sigma_S^{\rm red}$, i.e. if
\begin{align}
\spec(\hat H_S^{\rm red})=\spec (\hat{H}_S) \cap \spec(- \hat{H}_C).\nn %\label{samespec2}
\end{align}
Specifically, in this case,
\begin{itemize}
\item[(i)]  $\ch_S^{\rm red} \simeq \ch_S^{\rm phys}\ce \calr_{\mathbf H}\left(\ch_{\rm phys}\right)$, 
\item[(ii)] $\hat H_S^{\rm red}\equiv\hat H_S^{\rm phys}\ce \calr_{\mathbf H}\,\hat H_S\,\calr_{\mathbf H}^{-1}$, and
\item[(iii)] The set of quantum symmetry reduced evolving observables, Eq.~\eqref{F5}, $\hat f^{\rm phys}_S(\tau)=\calr_{\mathbf H}\,\hat F_{f^{\rm phys}_S,T}(\tau)\,\calr_{\mathbf H}^{-1}$  
coincides with the set of evolving observables $\hat f^{\rm red}_S(\tau)$, Eq.~\eqref{F4}, from reduced phase space quantization. In particular, under the appropriate identifications, $\ket{\psi^{\rm red}_S}\equiv\ket{\psi_S}=\calr_{\mathbf H}\,\ket{\psi_{\rm phys}}$ and $\hat f^{\rm phys}_S(\tau)\equiv\hat f_S^{\rm red}(\tau)$, we have
\begin{align}
\braket{\phi^{\rm red}_S\,|\,\hat{f}^{\rm red}_{S}(\tau)\,|\,\psi^{\rm red}_S} \equiv \braket{\phi_S\,|\, \hat{f}^{\rm phys}_{S}(\tau)\,|\,\psi_S}=\bra{\phi_{\rm phys}}\, \hat{ F}_{f^{\rm phys}_S,T}(\tau) \,\ket{\psi_{\rm phys}}_{\rm phys}.\nn
\end{align}
\end{itemize}
}

\begin{proof}
$\ch_S^{\rm red}$ contains all wave functions $\psi^{\rm red}_S(E)$ which are square-summable/integrable over the spectrum $\sigma_S^{\rm red}$, as evident from Eq.~\eqref{RIP}. Similarly, $\ch_S^{\rm phys}$ contains all wave functions $\psi_S(E)$ which are square-summable/integrable over the spectrum $\sigma_{CS}$, as shown by Eqs.~\eqref{a25}, \eqref{a26}, \eqref{PIP} and~\eqref{ident}. These two sets of wavefunctions coincide if $\sigma_S^{\rm red}=\sigma_{CS}$. Under the identification $\psi^{\rm red}_S(E)=\psi_S(E)$ (and possibly a redefinition of the integration/sum measure in one of the representations depending on whether $\braket{E|E'}_S$ is normalized identically on $\ch_S^{\rm phys}$ and $\ch_S^{\rm red}$), where $\psi^{\rm red}_S(E)$ is taken from the expansion Eq.~\eqref{redstate} and $\psi_S(E)$ is the wave function of the quantum reduced state given in Eqs.~\eqref{redstate2} and \eqref{ident}, we have $\ket{\psi^{\rm red}_S}\equiv\ket{\psi_S}$. Then by corollary~\ref{corol_2} and Eqs.~\eqref{PIP} and~\eqref{RIP}, it follows that $\braket{\phi^{\rm red}_S\,|\,\psi^{\rm red}_S}=\braket{\phi_S\,|\,\psi_S}$.
This proves (i).

Given that $\ch_S^{\rm red}$ and $\ch_S^{\rm phys}$  admit the same energy eigenstates, (ii) immediately follows,
\begin{align}
\hat H_S^{\rm red}\equiv\hat H_S^{\rm phys}\ce\calr_{\mathbf H}\,\hat H_S\,\calr_{\mathbf H}^{-1}.\nn
\end{align}

Lastly, invoking (ii), note that by Eq.~\eqref{F5} $\hat f^{\rm phys}_S(\tau) = e^{i\,\hat H_S^{\rm phys}\,\tau}\,\hat f^{\rm phys}_S\,e^{-i\,\hat H_S^{\rm phys}\,\tau}=e^{i\,\hat H_S^{\rm red}\,\tau}\,\hat f^{\rm phys}_S\,e^{-i\,\hat H_S^{\rm red}\,\tau}$, for any observable $\hat f^{\rm phys}_S$ on $\ch_S^{\rm phys}$, while $\hat f_S^{\rm red}(\tau)$ is given in Eq.~\eqref{F4} and requires $\hat f_S^{\rm red}$ to be any observable on $\ch_S^{\rm red}$. Since  $\ch_S^{\rm red}\simeq\ch_S^{\rm phys}$, we have $\hat f_S^{\rm red}(\tau)\equiv \hat f^{\rm phys}_S(\tau)$ for the appropriate identification of $\hat f^{\rm phys}_S\equiv\hat f_S^{\rm red}$ at $\tau=0$. The rest of statement (iii) is now a direct consequence of Theorem~\ref{thm_5}.   \end{proof}

\noindent{\bf Theorem 7.} 
{\emph{Consider an operator  on  $BS$ from the perspective of $A$ described by  $\hat O_{BS|A}^{\rm phys} \in \mathcal{L}(\mathcal{H}_B^{\rm phys} \otimes \mathcal{H}_S^{\rm phys} )$. From the perspective of $B$,  this  operator is $\tau_B$ independent so that $\hat{O}_{AS|B}^{\rm phys}(\tau_A, \tau_B) = \hat{O}_{AS|B}^{\rm phys}( \tau_A) \in \mathcal{L}(\mathcal{H}_A^{\rm phys} \otimes \mathcal{H}_S^{\rm phys})$ if and only if  
\begin{align}
\hat O_{BS|A}^{\rm phys}  = \sum_i \left( \hat O_{B|A}^{\rm phys} 
\right)_i \otimes \left( \hat f_{S|A}^{\rm phys} \right)_i, \label{adfadf}
\end{align}
where $(\hat f_{S|A}^{\rm phys} )_i$ is an operator on $S$ and $( \hat O_{B|A}^{\rm phys} )_i$ is a constant of motion, $[( \hat O_{B|A}^{\rm phys} )_i,\hat H_B]=0$. Furthermore, in this case
\begin{align}
\hat O_{AS|B}^{\rm phys}(\tau_A) &= \Pi_{\sigma_{ABS}} \Bigg[  \sum_i  \mathcal{G}_{AS} \left( \ket{\tau_A}\! \bra{\tau_A} \otimes 
\left(\hat f_{S|A}^{\rm phys} \right)_i  \right)  \bra{t_B} \left( \hat O_{B|A}^{\rm phys} \right)_i \delta(\hat C_H)\,\ket{t_B} \Bigg] \Pi_{\sigma_{ABS}}, \label{asdfasdf}
\end{align}
where $\Pi_{\sigma_{ABS}}$ is a projection onto the subspace of $\ch_A\otimes\ch_S$ spanned by energy eigenstates whose energy lies in $\sigma_{ABS} \ce \spec (\hat{H}_A + \hat{H}_S) \cap \spec(-\hat{H}_B)$,  $\ket{t_B}$ is an arbitrary clock state of $B$, and $\mathcal{G}_{AS}$ is the $G$-twirl over the group generated by $\hat{H}_A + \hat{H}_S$.}}

\begin{proof}
For simplicity, we drop the `$\rm{phys}$' labels on the operators in the following proof, implicitly assuming that we always work with operators on $\ch_A^{\rm phys},\ch_B^{\rm phys}$ and $\ch_S^{\rm phys}$.
Suppose now that $\hat{O}_{BS|A} = \hat{O}_{B|A} \otimes \hat{O}_{S|A}$. Then
\begin{align}
\hat{O}_{AS|B} &= \bra{\tau_B} \delta(\hat{C}_H)\left( \ket{\tau_A} \! \bra{\tau_A} \otimes \hat{O}_{B|A} \otimes \hat{O}_{S|A}\right) \delta(\hat{C}_H) \ket{\tau_B} \nn \\
&=\f{1}{(2\pi)^2} \int_{\mathbb{R}}ds \int_{\mathbb{R}}  dt \,  \bra{\tau_B} e^{it(\hat{H}_A + \hat{H}_B + \hat{H}_S)}\left( \ket{\tau_A} \! \bra{\tau_A} \otimes \hat{O}_{B|A} \otimes \hat{O}_{S|A}\right) e^{-is(\hat{H}_A + \hat{H}_B + \hat{H}_S)} \ket{\tau_B} \nn \\
&= \f{1}{(2\pi)^2} \int_{\mathbb{R}}ds \int_{\mathbb{R}}  dt \,  \bra{\tau_B+t} \hat{O}_{B|A} \ket{\tau_B + s}  e^{it(\hat{H}_A + \hat{H}_S)}\left( \ket{\tau_A} \! \bra{\tau_A}  \otimes \hat{O}_{S|A}\right) e^{-is(\hat{H}_A + \hat{H}_S)} \nn \\
&= \f{1}{(2\pi)^4}  \int_{\mathbb{R}} du \, dv \, \braket{u | \hat{O}_{B|A} | v } \int_{\mathbb{R}}ds \int_{\mathbb{R}}  dt \,  \chi(\tau_B + t-u) \chi(v-s - \tau_B)   e^{it(\hat{H}_A + \hat{H}_S)}\left( \ket{\tau_A} \! \bra{\tau_A}  \otimes \hat{O}_{S|A}\right) e^{-is(\hat{H}_A + \hat{H}_S)} \nn \\
&= \f{1}{(2\pi)^4} \int_{\mathbb{R}} du \, dv \, \braket{u | \hat{O}_{B|A} | v } \int_{\mathbb{R}}ds \int_{\mathbb{R}}  dt \,  \chi(t) \chi(s)   e^{i(t+ u - \tau_B)(\hat{H}_A + \hat{H}_S)}\left( \ket{\tau_A} \! \bra{\tau_A}  \otimes \hat{O}_{S|A}\right) e^{-i(v-s- \tau_B) (\hat{H}_A + \hat{H}_S)} \nn \\
&= \f{1}{(2\pi)^2} \int_{\mathbb{R}} du \, dv \, \braket{u | \hat{O}_{B|A} | v }  \Pi_{\sigma_{ABS}}     e^{i( u - \tau_B)(\hat{H}_A + \hat{H}_S)}\left( \ket{\tau_A} \! \bra{\tau_A}  \otimes \hat{O}_{S|A}\right) e^{-i(v- \tau_B) (\hat{H}_A + \hat{H}_S)}  \Pi_{\sigma_{ABS}} \nn \\
&= \f{1}{(2\pi)^2} \int_{\mathbb{R}} du \, dv \, \braket{\tau_B - u | \hat{O}_{B|A} | \tau_B - v }  \Pi_{\sigma_{ABS}}     e^{-i u(\hat{H}_A + \hat{H}_S)}\left( \ket{\tau_A} \! \bra{\tau_A}  \otimes \hat{O}_{S|A}\right) e^{i v (\hat{H}_A + \hat{H}_S)}  \Pi_{\sigma_{ABS}}. \nn
\end{align}
In the sixth line we have adapted the definition of the projector Eq.~\eqref{scproj} to our case $\Pi_{\sigma_{ABS}}$.  It is seen from the above expression that $\hat{O}_{AS|B}$ is independent of $\tau_B$ if and only if $\braket{\tau_B - u | \hat{O}_{B|A} | \tau_B - v }$ is independent of $\tau_B$.

If $[\hat{O}_{B|A}, H_B] = 0$, then
\begin{align}
\braket{\tau_B - u | \hat{O}_{B|A} | \tau_B - v } &=  \braket{ - u |  e^{iH_B \tau_B} \hat{O}_{B|A} e^{-iH_B \tau_B} |  - v } \nn \\
&=  \braket{ - u |  \hat{O}_{B|A}  |  - v }, \nn
\end{align}
and thus $\hat{O}_{AS|B}$ is independent of $\tau_B$. If $\hat{O}_{AS|B}$ is independent of $\tau_B$, then
\begin{align}
0 &= \frac{d}{d \tau_B} \braket{\tau_B - u | \hat{O}_{B|A} | \tau_B - v }  \nn \\
&=  \braket{ - u | \frac{d}{d \tau_B} \left(  e^{i \hat H_B \tau_B} \hat{O}_{B|A} e^{-i \hat H_B \tau_B} \right) |  - v } \nn \\
 &= -i \braket{ - u |    e^{i \hat H_B \tau_B}   \left[\hat{O}_{B|A}, \hat H_B  \right] e^{-i \hat H_B \tau_B} |  - v }, \nn 
 \end{align}
which vanishes only if $\hat{O}_{B|A}$ is a constant of motion, $\left[\hat{O}_{B|A}, \hat H_B \right] = 0$. By linearity, it follows that the most general operator relative to clock $A$ which leads  to $\tau_B$ independence relative to clock $B$ is given in Eq.~\eqref{adfadf}. 

If $\hat{O}_{B|A}$ is a constant of motion, then
\begin{align}
\hat{O}_{AS|B}
&= \f{1}{(2\pi)^2} \int_{\mathbb{R}} du \, dv \, \braket{0 | \hat{O}_{B|A} | u - v }  \Pi_{\sigma_{ABS}}     e^{-i u(\hat{H}_A + \hat{H}_S)}\left( \ket{\tau_A} \! \bra{\tau_A}  \otimes \hat{O}_{S|A}\right) e^{i v (\hat{H}_A + \hat{H}_S)}  \Pi_{\sigma_{ABS}} \nn \\
&= \f{1}{(2\pi)^2} \int_{\mathbb{R}} du \, dv \, \braket{0 | \hat{O}_{B|A} | u }  \Pi_{\sigma_{ABS}}     e^{-i (u+v) (\hat{H}_A + \hat{H}_S)}\left( \ket{\tau_A} \! \bra{\tau_A}  \otimes \hat{O}_{S|A}\right) e^{i v (\hat{H}_A + \hat{H}_S)}  \Pi_{\sigma_{ABS}} \nn \\
&= \Pi_{\sigma_{ABS}} \left( \f{1}{2\pi}   \int_{\mathbb{R}} du \, \braket{0 | \hat{O}_{B|A} | u }  e^{-i u(\hat{H}_A + \hat{H}_S)}  \right)   \mathcal{G}_{AS}\left( \ket{\tau_A} \! \bra{\tau_A}  \otimes \hat{O}_{S|A}\right)   \Pi_{\sigma_{ABS}} \nn \\
&= \Pi_{\sigma_{ABS}} \braket{0| \hat{O}_{B|A} \delta(\hat{C}_H )  |0} \mathcal{G}_{AS}\left( \ket{\tau_A} \! \bra{\tau_A}  \otimes \hat{O}_{S|A}\right)   \Pi_{\sigma_{ABS}} \nn \\
&= \Pi_{\sigma_{ABS}} \braket{t_B| \hat{O}_{B|A} \delta(\hat{C}_H )  |t_B} \mathcal{G}_{AS}\left( \ket{\tau_A} \! \bra{\tau_A}  \otimes \hat{O}_{S|A}\right)   \Pi_{\sigma_{ABS}}, \nn
\end{align}
where $\ket{t_B}$ is any clock state of $B$. By linearity, this extends to Eq.~\eqref{asdfasdf}.
\end{proof}

\noindent{\bf Corollary 4.} \emph{ Consider an observable seen from the perspective of $A$ that acts nontrivially only on $S$, 
\begin{align}
\hat O_{BS|A}^{\rm phys} =  I_{B|A}^{\rm phys} \otimes \hat{f}_{S|A}^{\rm phys}. \nn
\end{align}
Under a temporal frame change to the perspective of $B$, such an observable transforms to 
\begin{align}
\hat O_{AS|B}^{\rm phys} =  I_{A|B}^{\rm phys} \otimes \hat{f}_{S|B}^{\rm phys}, \nn 
\end{align}
where $\hat{f}^{\rm phys}_{S|B} =  \hat{f}^{\rm phys}_{S|A}$ if and only if $\hat{f}^{\rm phys}_{S|A}$ is a constant of motion, $ [ \hat{f}^{\rm phys}_{S|A}, \hat{H}_{S} ] = 0$. }

\begin{proof}
If $[ \hat H_S, \hat f_{S|A}^{\rm phys} ] =0$, then  Eq.~\eqref{STransformation} yields
\begin{align}
\hat O_{AS|B}^{\rm phys} &=  \Pi_{\sigma_{ABS}}  \mathcal{G}_{AS} \left( \ket{\tau_A}\! \bra{\tau_A} \otimes \hat f_{S|A}^{\rm phys} \right) \Pi_{\sigma_{ABS}} \nn \\
&=  \Pi_{\sigma_{ABS}}  \mathcal{G}_{AS} \left( \ket{\tau_A}\! \bra{\tau_A} \otimes I_{S|A}^{\rm phys} \right) I_{A|B} \otimes \hat f_{S|A}^{\rm phys} \Pi_{\sigma_{ABS}} \nn \\
&=  \Pi_{\sigma_{ABS}}  I_{A|B}\otimes  \hat f_{S|A}^{\rm phys} \Pi_{\sigma_{ABS}} \nn \\
&=   I_{A|B}^{\rm phys}\otimes  \hat f_{S|A}^{\rm phys}, \nn
\end{align}
from which it follows that $\hat f_{S|B}^{\rm phys} = \hat f_{S|A}^{\rm phys}$.

If $\hat f_{S|B}^{\rm phys} = \hat f_{S|A}^{\rm phys}$, then 
\begin{align}
I_{A|B}^{\rm phys} \otimes \hat{f}_{S|B}^{\rm phys} &= I_{A|B}^{\rm phys} \otimes \hat{f}_{S|A}^{\rm phys} \nn \\
&=  \Pi_{\sigma_{ABS}}  I_{A|B}\otimes  \hat f_{S|A}^{\rm phys} \Pi_{\sigma_{ABS}} \nn \\
&=  \Pi_{\sigma_{ABS}}  \mathcal{G}_{AS} \left( \ket{\tau_A}\! \bra{\tau_A} \otimes I_{S|A}^{\rm phys} \right) I_{A|B} \otimes \hat f_{S|A}^{\rm phys} \Pi_{\sigma_{ABS}}. \nn
\end{align}
However, from Eq.~\eqref{STransformation} we also have that 
\begin{align}
I_{A|B}^{\rm phys} \otimes \hat{f}_{S|B}^{\rm phys} = 
\Pi_{\sigma_{ABS}}  \mathcal{G}_{AS} \left( \ket{\tau_A}\! \bra{\tau_A} \otimes \hat f_{S|A}^{\rm phys} \right) \Pi_{\sigma_{ABS}}. \nn
\end{align}
Upon comparison of this equation with the previous equation, together with the definition of the $G$-twirl we conclude that $[\hat f_{S|A}^{\rm phys} , U_{S}(t)] = 0  \iff [f_{S|A}^{\rm phys} , H_S] = 0$, as desired. 
\end{proof}

\noindent{\bf Corollary 5.} 
{\emph{Consider an operator on  $BS$ from the perspective of $A$ described by $\hat O_{BS|A}^{\rm phys}(\tau_A) \in \mathcal{L}(\mathcal{H}_B^{\rm phys} \otimes \mathcal{H}_S^{\rm phys})$. Under a temporal frame change to the perspective of $B$, this operator transforms to $\hat O_{AS|B}^{\mathbf H}(\tau_A, \tau_B)$  that satisfies the Heisenberg equation of motion in clock $B$ time $\tau_B$ without an explicitly $\tau_B$ dependent term,
\begin{align}
\frac{d}{d \tau_B} \hat O_{AS|B}^{\mathbf H}(\tau_A, \tau_B) = i \left[\hat{H}_A+ \hat{H}_S ,\hat O_{AS|B}^{\mathbf H}(\tau_A, \tau_B)\right],\nn 
\end{align}
if and only if  
\begin{align}
\hat O_{BS|A}^{\rm phys}(\tau_A)  = \sum_i \left( \hat O_{B|A}^{\rm phys} 
\right)_i \otimes \left( \hat f_{S|A}^{\rm phys} (\tau_A)  \right)_i, \nn
%\label{timetranslationinvaraint}
\end{align}
and $\hat O_{B|A}^{\rm phys}$ is a constant of motion, $[ \hat H_B , \hat O_{B|A}^{\rm phys}] =0$.}
\begin{proof}
From Eq.~\eqref{QRobservableTFC}, it follows that
\begin{align}
\hat O_{AS|B}^{\mathbf H}(\tau_A, \tau_B) &= U_{AS}^\dagger(\tau_B) \hat{O}_{AS|B}^{\rm phys} (\tau_A,   \tau_B)    U_{AS}(\tau_B). \nn
\end{align}
Differentiating the above expression with respect to $\tau_B$ yields
\begin{align}
\frac{d}{d \tau_B} \hat O_{AS|B}^{\mathbf H}(\tau_A, \tau_B) = i \left[\hat{H}_A+ \hat{H}_S ,\hat O_{AS|B}^{\mathbf H}(\tau_A, \tau_B)\right] + U_{AS}^\dagger(\tau_B) \left(  \frac{d}{d\tau_B} \hat{O}_{AS|B}^{\rm phys} (\tau_A,   \tau_B)    \right) U_{AS}(\tau_B). \nn
\end{align}
Theorem \ref{tauindependent} then implies that the second term vanishes if and only if 
\begin{align}
\hat O_{BS|A}^{\rm phys}  = \sum_i \left( \hat O_{B|A}^{\rm phys} 
\right)_i \otimes \left( \hat f_{S|A}^{\rm phys} \right)_i, \nn
\end{align}
where $\left( \hat O_{B|A}^{\rm phys} 
\right)_i$ are constants of motion.
Equivalently, this is true if and only if in the relational Heisenberg picture 
\begin{align}
\hat O_{BS|A}^{\rm phys}(\tau_A)  = \sum_i \left( \hat O_{B|A}^{\rm phys}
\right)_i \otimes \left( \hat f_{S|A}^{\rm phys} (\tau_A)  \right)_i. \nn
\end{align}
\end{proof}

%========================================
%========================================
\section{Derivation referenced in Sec.~\ref{TemporalLocality}}
\label{Time nonlocality calculations}

Suppose that from the perspective of $A$ the state of $BS$ is in a product state
\begin{align}
\ket{\psi_{BS|A}(\tau_A)} &= \ket{\psi_{B|A}(\tau_A)}\ket{\psi_{S|A} (\tau_A)}.  \nn
\label{BSwrtA1}
\end{align}
The action of the TFC map $\Lambda_{\mathbf S}^{A \to B}$ on $BS$ yields the state of $AS$ from the perspective of $B$
\begin{align}
\ket{\psi_{AS|B}(\tau_B)} &= \Lambda_{\mathbf S}^{A \to B} \ket{\psi_{BS|A}(\tau_A)} \nn \\
&=  \left(\bra{\tau_B} \otimes I_{AS} \right) \delta(\hat{C}_H) \left(\ket{\tau_A} \otimes I_{BS} \right) \ket{\psi_{B|A}(\tau_A)}\ket{\psi_{S|A} (\tau_A)}. \nn \\
&=\left(\bra{\tau_B} \otimes I_{AS} \right)\f{1}{2\pi}  \int_{\mathbb{R}} dt \, e^{-it(\hat{H}_A + \hat{H}_B + \hat{H}_S)} \ket{\tau_A}\ket{\psi_{B|A}(\tau_A)}\ket{\psi_{S|A} (\tau_A)} \nn \\
&=\left(\bra{\tau_B} \otimes I_{AS} \right) \f{1}{2\pi}  \int_{\mathbb{R}} dt \,\ket{\tau_A + t}\ket{\psi_{B|A}(\tau_A+t)}\ket{\psi_{S|A} (\tau_A +t)}  \nn 
\end{align}
Changing integration variables to $t' \ce \tau_A + t$ and defining $\psi_{B|A}(t-t') \ce \braket{t|\psi_{B|A}(t')}$  yields
\begin{align}
\ket{\psi_{AS|B}(\tau_B)} &=\left(\bra{\tau_B} \otimes I_{AS} \right) \f{1}{2\pi} \int_{\mathbb{R}} dt'  \, \ket{t'}_A \ket{\psi_{B|A}(t')} \ket{\psi_{S|A} (t')} \nn \\
&=\left(\bra{\tau_B} \otimes I_{AS} \right)\f{1}{2\pi} \int_{\mathbb{R}} dt'  \, \ket{t'}_A \left(\f{1}{2\pi} \int_{\mathbb{R}} dt'' \, \psi_{B|A}(t'') \ket{t'' + t'}_B  \right) \ket{\psi_{S|A} (t')} \nn \\
&= \f{1}{(2\pi)^2} \int_{\mathbb{R}} dt' \int_{\mathbb{R}} dt''  \, \psi_{B|A}(t'') \delta(\tau_B - t'' - t') \ket{t'}_A    \ket{\psi_{S|A} (t')} \nn \\
&= \f{1}{2\pi}   \int_{\mathbb{R}} dt'  \, \psi_{B|A}(\tau_B - t')   \ket{t'}_A  \ket{\psi_{S|A} (t')} \nn 
\end{align}
as stated in Eq.~\eqref{AStransform}.

\section{Mathematical details}\label{app_details}

\subsection{Canonical transformation separating gauge and gauge-invariant degrees of freedom}\label{app_canonical}

We now demonstrate that the transformation $\mathfrak{T}_T$ introduced in Sec.~\ref{sec_clanalog}  is a canonical transformation.  Firstly, we know that $\{T,C_H\}=1$ are a canonical pair. It also follows from \cite{dittrichPartialCompleteObservables2007} that 
\ba
f_S\mapsto F_{f_S,T}(\tau)\nn
\ea
is a strong Poisson-algebra homomorphism on $\cp_{\rm kin}$ for the special form of Eq.~\eqref{F2}. Hence, recalling that
\ba
Q^i_S(\tau)=F_{q^i_S,T}(\tau)\,\q\q\q\q P^j_S(\tau)=F_{p^j_S,T}(\tau),\nn
\ea
we have
\ba
\{Q^i_S(\tau),P^j_S(\tau)\}=\{q^i_S,p^j_S\}=\delta^{ij}\,.\nn
\ea
From Eq.~\eqref{F2} it is furthermore obvious that $\{T,F_{f_S,T}(\tau)\}=0$. Finally, we find that the Dirac observables Eq.~\eqref{F2} strongly commute with the constraint $C_H$, since
\ba
\{F_{f_S,T}(\tau),C_H\}
&=&\sum_{n=0}^{\infty}\left[-\f{(\tau -T)^{n-1}}{(n-1)!}\,\{f_S,H_S\}_n +\f{(T-\tau)^n}{n!}\,\{f_S,H_S\}_{n+1}\right]\nn\\
&=&0\,.\nn
\ea
We thus conclude that $\mathfrak{T}_T$ is a canonical transformation on $\cp_{\rm kin}$.

\subsection{Correct propagator from gauge-invariant conditional probability}\label{app_kuchar3}

In this appendix we show how to arrive at the correct propagator from the gauge-invariant conditional probability proposed in Eq.~\eqref{AwhenT3}:
\ba
\prob\left(B=b \ \mbox{when} \ \tau'|A=a\ \text{when} \ \tau \right)  
\ce
\frac{\bra{\psi_{\rm phys} }\hat F_{\Pi_{A=a},T}(\tau)\cdot \hat F_{\Pi_{B=b},T}(\tau')\cdot\hat F_{\Pi_{A=a},T}(\tau)\,\ket{\psi_{\rm phys} }_{\rm phys}}{\bra{\psi_{\rm phys} }\hat F_{\Pi_{A=a},T}(\tau) \ket{\psi_{\rm phys} }_{\rm phys}} \label{ab4}
\ea
Firstly, recall Theorem~\ref{thm_2} and that $\Pi_{A=a},\Pi_{B=b}\in\cl(\ch_S^{\rm phys})$ by assumption (otherwise we would have to conjugate these two projectors by $\Pi_{\sigma_{SC}}$). Since we are always acting on physical states, we can replace every instance of the relational Dirac observables above by the Page-Wootters encoding, Eq.~\eqref{encode1}, of the corresponding reduced observables and projections onto the respective clock readings. Invoking the definition of the physical inner product, Eq.~\eqref{PIP}, this puts Eq.~\eqref{ab4} into the following form:
\begin{align}
\prob\left(B=b \ \mbox{when} \ \tau'|A=a\ \text{when} \ \tau\right)  
\equiv
\frac{\bra{\psi_{\rm phys} }\!\left(e_T(\tau) \otimes\Pi_{A=a}\right)\! \delta(\hat C_H)\! \left(e_T(\tau')\otimes\Pi_{B=b}\right) \! \delta(\hat C_H) \! \left(e_T(\tau) \otimes\Pi_{A=a}\right) \! \ket{\psi_{\rm phys} }_{\rm kin}}{\bra{\psi_{\rm phys} }\left(e_T(\tau) \otimes\Pi_{A=a}\right) \! \ket{\psi_{\rm phys} }_{\rm kin}} .\nn
\end{align}
We note that this is the generalization of Dolby's two-time conditional probability to the case of constraints which have zero in the continuous part of their spectrum \cite{Dolby:2004ak}.
It is clear that the denominator can be rewritten as
\ba
\bra{\psi_{\rm phys} }\left(e_T(\tau) \otimes\Pi_{A=a}\right) \ket{\psi_{\rm phys} }_{\rm kin}&=&\bra{\psi_S(\tau)}\,\Pi_{A=a}\,\ket{\psi_S(\tau)}.\nn
\ea
Let us next rewrite the numerator as
\ba
&& \hspace{-6em} \bra{\psi_{\rm phys} }\left(e_T(\tau) \otimes\Pi_{A=a}\right)\delta(\hat C_H)\left(e_T(\tau')\otimes\Pi_{B=b}\right) \delta(\hat C_H)\left(e_T(\tau) \otimes\Pi_{A=a}\right)\,\ket{\psi_{\rm phys} }_{\rm kin}\nn\\
 &=&\bra{\psi_S(\tau)}\,\Pi_{A=a}\,\f{1}{2\pi}\int_\mathbb{R} dt\,\chi(\tau-\tau'+t)\,U_S^\dag(t)\,\Pi_{B=b}\,\f{1}{2\pi}\int_{\mathbb{R}}ds\,\chi(\tau'-\tau-s)U_S(s)\,\Pi_{A=a}\,\ket{\psi_S(\tau)}\nn\\
&\underset{(\ref{scproj})}{=}&\bra{\psi_S(\tau)}\,\Pi_{A=a}\,\Pi_{\sigma_{SC}}U_S^\dag(\tau'-\tau)\,\Pi_{B=b}\,\Pi_{\sigma_{SC}}\,U_S(\tau'-\tau)\,\Pi_{A=a}\,\ket{\psi_S(\tau)}.\nn
\ea
Recalling that $\Pi_{\sigma_{SC}}\,\Pi_{A=a}=\Pi_{A=a}$, since by assumption $\Pi_{A=a}\in\cl(\ch_S^{\rm phys})$, we thus obtain in conjunction
\ba
\prob\left(B=b \ \mbox{when} \ \tau'|A=a\ \text{when} \ \tau\right)  = \f{\bra{\psi_S(\tau)}\,\Pi_{A=a}\,U_S^\dag(\tau'-\tau)\,\Pi_{B=b}\,U_S(\tau'-\tau)\,\Pi_{A=a}\,\ket{\psi_S(\tau)}}{\bra{\psi_S(\tau)}\,\Pi_{A=a}\,\ket{\psi_S(\tau)}}.\nn
\ea
This is the correct propagator for transitioning from the system state corresponding to the observable $A$ reading $a$ at Schr\"odinger time $\tau$ to the system state corresponding to the observable $B$ reading $b$ at Schr\"odinger time $\tau'$.

\end{document}